\def\dbar{{\mathchar'26\mkern-12mu d}}
\newtheorem{theorem}{Theorem}[section]
\newtheorem{postulate}{Postulate}[section]
\newtheorem{corollary}{Corollary}[postulate]
\newtheorem{lemma}[postulate]{Lemma}
\newtheorem{axiom}{Axiom}
\DeclareSIUnit\angstrom{\text {Å}}
\title{Uncertainty-Aware Liquid State Modeling from Experimental Scattering Measurements}
\author{Brennon L. Shanks \\
Department of Chemical Engineering, University of Utah \\
\texttt{Brennon L. Shanks} {brennon.shanks@chemeng.utah.edu}
}
\begin{document}

\maketitle

\begin{abstract}

This dissertation is founded on the central notion that structural correlations in dense fluids — such as dense gases, liquids, and glasses — are directly related to fundamental interatomic forces. Identified early in the development of statistical theories of fluids through the mathematical formulations of Gibbs in the 1910s, it took nearly 80 years before practical implementations of structure-based theories became widely used for interpreting and understanding the atomic structures of fluids from experimental X-ray and neutron scattering data. Breakthroughs in structure-potential methods can be largely attributed to advancements in molecular mechanics simulations and improving computational resources. Consequently, pioneers in the field, such as Putzai and McGreevy, Schommers, and Soper, were able to develop successful hybrid statistical mechanics and molecular simulation techniques, enabling the analysis of experimental scattering data with physics-guided models. 

Despite advancements in understanding the relationship between structure and interatomic forces, a significant gap remains. Current techniques for interpreting experimental scattering measurements are widely used, yet there is little evidence that they yield physically accurate predictions for interatomic forces. In fact, it is generally assumed that these methods produce interatomic forces that poorly model the atomistic and thermodynamic behavior of fluids, rendering them unreliable and non-transferable. This thesis aims to address these limitations by refining the statistical theory, computational methods, and philosophical approach to structure-based analyses, thereby developing more robust and accurate techniques for characterizing structure-potential relationships.

In summary, the central theme of this dissertation is the idea that rigorously quantifying uncertainty in thermophysical properties can enhance predictive accuracy and deepen our conceptual understanding of the liquid state. This work explores several key concepts:

\begin{enumerate}
    \item Probabilistic Iterative Potential Refinement: Utilizing Gaussian process regression allows us to reconstruct interatomic forces from structural correlations while maintaining thermodynamic consistency (Chapter 2).

    \item Bayesian Uncertainty Quantification and Propagation: An accelerated Bayesian method is proposed and implemented to quantify uncertainty in pair potential reconstructions from scattering measurements (Chapter 3).

    \item Error Propagation in Neutron Scattering Measurements: The application of Bayesian methods demonstrates that even random errors in neutron scattering measurements can impede our ability to accurately infer interatomic forces. Furthermore, that modern neutron instruments can successfully extract forces due to their sufficiently low random noise (Chapter 4).
\end{enumerate}

\noindent Overall, this dissertation asserts that structural analysis is more nuanced and practically useful than previously believed.

\end{abstract}

\tableofcontents

\newpage

\section{Introduction}

\subsection{The Origins of Liquid Structure Analysis}

The liquid state, being the intermediate phase between the well-ordered solid and the chaotic gas, has been described as a "statistical mechanical jungle" reserved for only the most foolhardy of academics \cite{croxton_liquid_2009}. For me, liquid state theory seems at times to be a study in cryptology, riddled with strange symbols that you might find on the walls of a Masonic temple let alone a graduate textbook in statistical physics. One only needs to turn to page 3 of Croxton's \textit{Liquid State Physics} and take a glance at graphical representations of cluster integral expansions (Figure \ref{fig:cluster_integrals}) to see what I mean! Nevertheless, the reward for fighting through these strange notations and difficult concepts is a beautiful and concise liquid state theory. This theory is the foundation for modern molecular simulations, has motivated the design and advancement of multi-billion dollar particle scattering facilities, and forms the cornerstone of modern thermodynamics. 

\begin{figure}
    \centering
    \includegraphics[width = 11 cm]{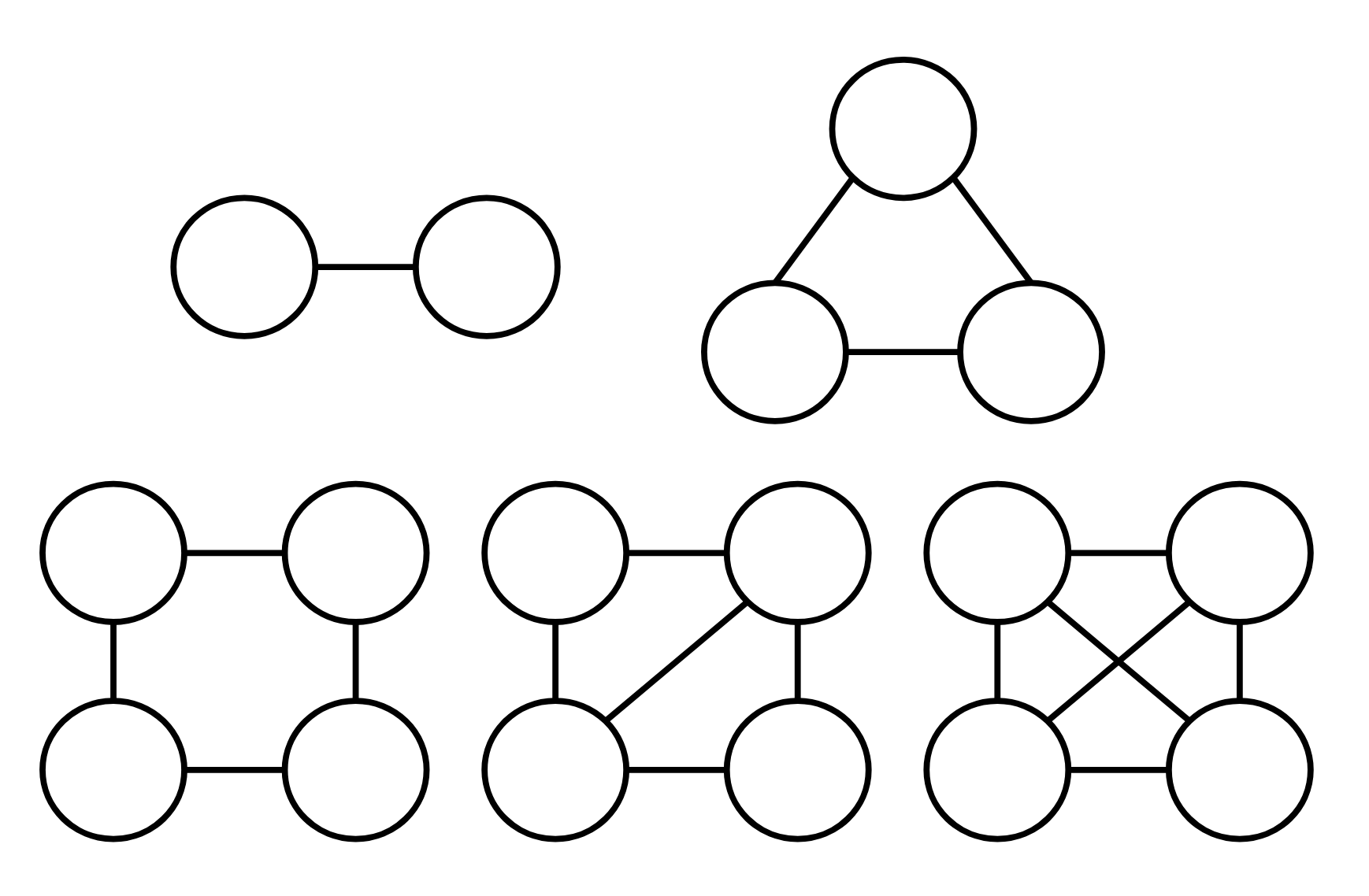}
    \caption{The statistical mechanical version of a ball and stick model.}
    \label{fig:cluster_integrals}
\end{figure}

The most exciting aspect for me (and likely for others in this field) is that there is still so much about liquids that remains unknown. Indeed, liquid state theory is not yet advanced enough to allow one to take a molecular description of a liquid and fully describe its structural, thermophysical, and flow properties. This challenge was recognized by theorists in the early 1960s, and owing to advancements in computing machines and numerical methods, led to the development of computer simulations of the liquid state. Computer simulations of liquids have since become one of the most widely used and successful methods for understanding chemical processes fundamental to energy storage and transfer, biological function and health, and even the behavior of interstellar bodies. Clearly, if a general liquid state theory were discovered, many of us molecular dynamicists might soon find ourselves out of a job!

Despite lacking a unified theory of the liquid state, what we know is that there are fundamental relationships between the arrangement and organization of molecules and emergent thermodynamic properties. The arrangement of molecules, which can be charted by a spherical coordinate system with vectors corresponding to the radial, polar, and azimuthal positions of atoms in the system, will be referred to as the \textit{structure} of a liquid. Of course, our limited senses and technology make it practically impossible to know the structure at any instant, or snapshot, in time. However, measurement techniques such as X-ray and neutron scattering can be used to estimate the structure that the liquid takes on average. It is these averages of the atomic coordinates that can be used within existing liquid state theories to establish a connection to interatomic forces and thermodynamic behavior. 

Correctly linking liquid structure with theoretical statistical mechanics holds great promise in engineering novel liquid state materials and advancing our fundamental understanding of the liquid state. Such a method could obviate the need for designing and training expensive computer simulations, inform engineers on subtle, atomic scale structure-property relationships, and possibly be used in tandem to \textit{ab initio} electronic structure calculations to understand how quantum mechanical effects impact condensed phase properties. Finally, studying the connection between structure, atomic scale forces, and thermodynamics is of great theoretical interest. It pushes the boundaries of what can be learned from existing theories and has the potential to identify key problems with predominant schools of thought on the connection between atomic and thermodynamic behavior. Ultimately, such an approach could form a unified theory of liquids that comprehensively describes molecular, thermophysical, and continuum-scale behavior from a molecular perspective. \textbf{Towards this ultimate goal, this dissertation aims to propose a novel interpretation of structure-potential modeling through the application of Bayesian probability theory.}

To this aim, the remainder of this introduction will be organized as follows:

\begin{enumerate}
    \item Measuring Liquid Structure with Experimental Neutron Scattering - Overview of liquid structure measurement techniques using neutron scattering experiments. It discusses fundamental mathematical relationships between observed quantities (\textit{e.g.}, the structure factor) and derived quantities (\textit{e.g.}, radial distribution functions), essential for interpreting liquid structure. Furthermore, it addresses key experimental and modeling challenges involved in extracting interatomic forces from structural data.

    \item A Comprehensive Review of Structure-Potential Analysis - Literature review of state-of-the-art applications of liquid state theory in the analysis of neutron scattering data. It outlines key methods such as the Ornstein-Zernike integral relation, the Henderson inverse theorem, and empirical potential structure refinement. Additionally, a detailed yet concise proof of the Henderson inverse theorem is included, as it serves as the primary evidence for the hypothesis that a unique pair potential can be derived from scattering measurements for real liquids. This theorem also underpins the variational method of structure optimized potential refinement described in Chapter 2.

    \item Modern Scattering Analysis: A New Perspective - Introduces the central thesis of the dissertation, emphasizing the potential of a novel perspective centered on uncertainty quantification. It defines and justifies the adoption of uncertainty quantification using Bayesian analysis and outlines its application within the text. Specifically, it discusses the implementation of Bayesian techniques for both continuous functions (Gaussian processes - Chapters 2-4) and discrete variables (Bayesian parameter optimization - Chapters 3 and 4).
\end{enumerate}

\subsection{Measuring Liquid Structure with Experimental Neutron Scattering}

In statistical mechanics, the structure of a liquid is characterized using the radial distribution function, $g(r)$. This function is computed by counting the number of particles surrounding a reference particle and constructing a radial shell of thickness $dr$ around this particle. The average value of the radial distribution function at $r+dr$ is the particle number density within the shell divided by the bulk particle density of the material. This process is repeated for each particle in the system and over time, and the results are subsequently averaged. Thus, $g(r)$ represents a radial, time, and particle average density distribution. A visualization of a single snapshot of this computation is shown in Figure \ref{fig:rdfvis} (radial distribution function was taken from Yarnell (1974) \cite{yarnell_structure_1973}).

Neutron scattering is the gold-standard technique to measure radial distribution functions from sub-angstrom to micron length scales for systems with light-atoms (such as hydrogen) \cite{willis_experimental_2017}. When a neutron beam is directed through soft matter, incident neutrons collide with atomic nuclei and scatter in various directions (Figure \ref{fig:scatter}). A detector mounted behind the sample container is designed to quantify single neutron interaction events as a function of momentum transfer, yielding an experimental observable known as the differential scattering cross section, $\frac{d \sigma}{d \Omega}$. The scattering cross section is the ratio of scattered neutrons per second into solid angle $d\Omega$ divided by the number of neutrons incident to $d\Omega$ (which is just $d\Omega$ times the incident flux and has units of barns/steradian) and contains contributions from a wide array of neutron-atom interactions, including elastic, inelastic, incoherent, and multiple scattering \cite{soper_inelasticity_2009}. 

\begin{figure}
    \centering    \includegraphics[width = 14cm]{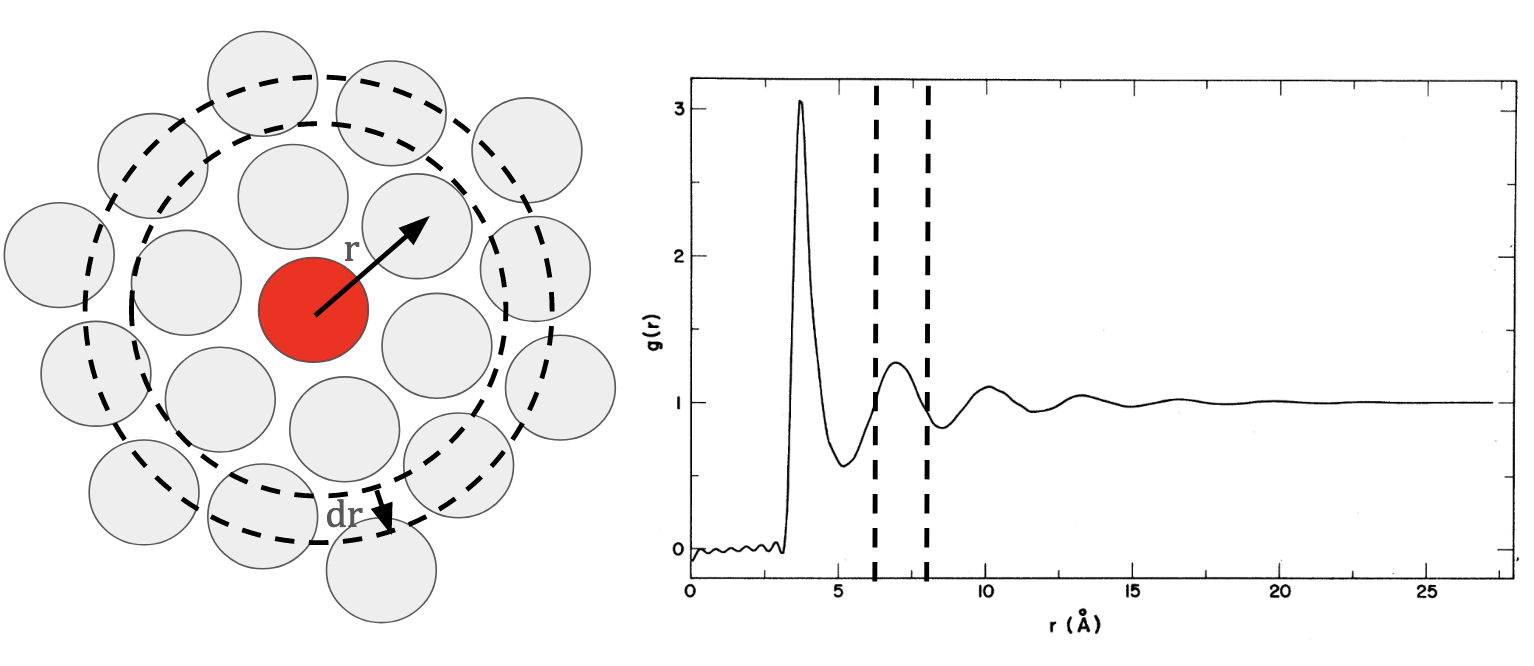}
    \caption{The radial distribution function keeps track of the particle density of a system as a function of radius away from a reference atom. The dashed radial shell in the particle picture (left) is represented as a radial interval in the radial distribution function (right).}
    \label{fig:rdfvis}
\end{figure}

\begin{figure}
    \centering
    \includegraphics[width = 12cm]{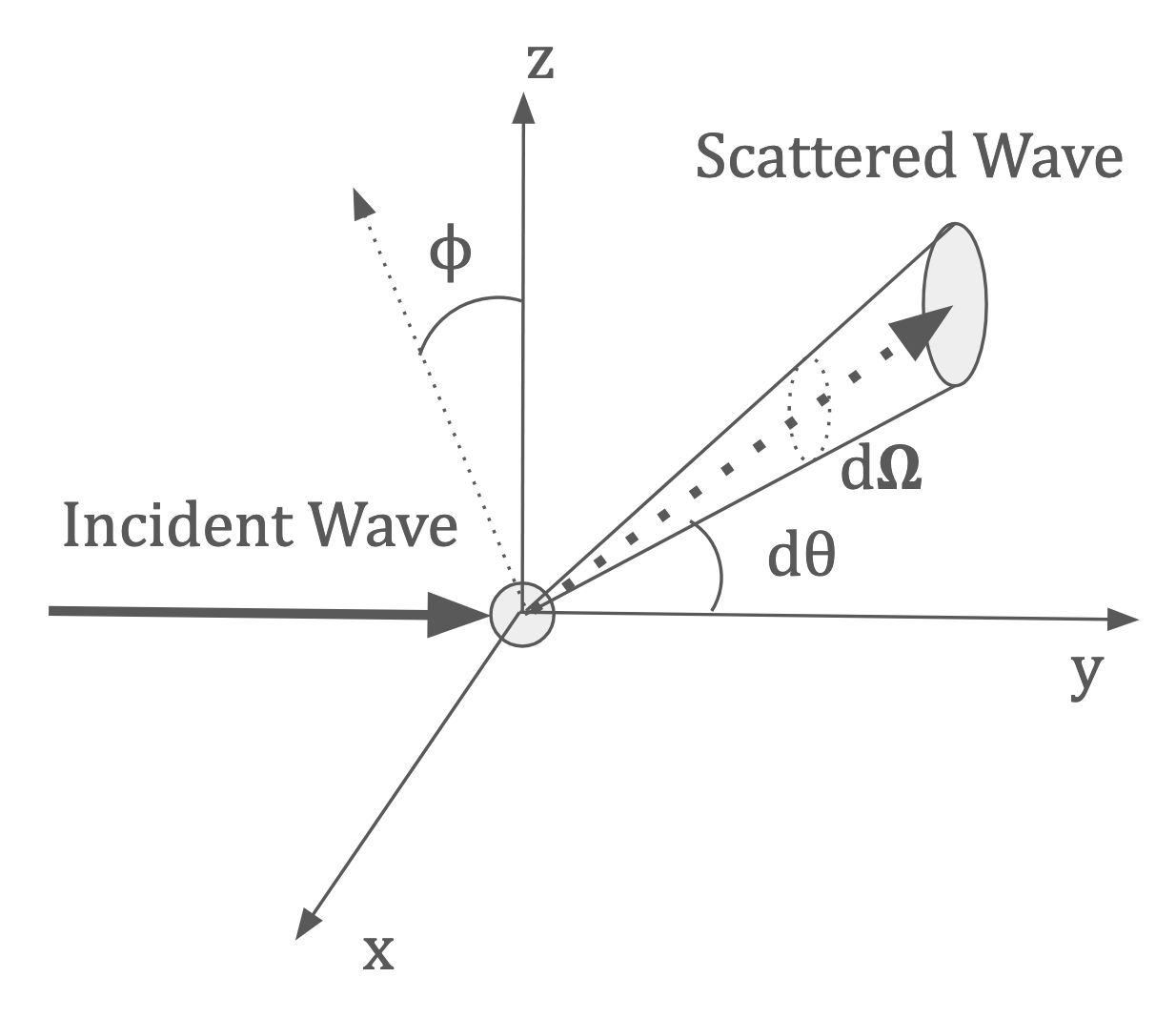}
    \caption{Incident scattering vector $\mathbf{k_i}$ scattered through solid angle $d \Omega$. }
    \label{fig:scatter}
\end{figure}

The time-averaged, elastic contribution to the scattering cross section, named the static structure factor, is the linear density response of the material to the neutron wave propagation momentum, $\hslash \mathbf{q}$. By the fluctuation-dissipation theorem, the linear density response of a perturbed system can be expressed in terms of equilibrium fluctuations in the unperturbed system. Therefore, the static structure factor measures the time-averaged, equilibrium particle density distribution in momentum space \cite{kubo_fluctuation-dissipation_1966}. The static structure factor in a monatomic system with no long-range order is related to the more familiar radial distribution function, $g(r)$, through the radial Fourier transform, 

\begin{equation}\label{eq:radFT}
    S(q) = 1 + \frac{4 \pi}{q \rho} \langle b \rangle^2 \int_0^\infty r [g(r) - 1] \sin (qr) dr 
\end{equation} 

\noindent where $q$ is the momentum transfer, $b$ is the scattering length density, and $\rho$ is the atomic number density \cite{sivia_elementary_2011}.

In mixtures or molecular liquids, the static structure factor, referred to in this case as the total structure factor $F(q)$, can be expressed as a combination of of site-site partial structure factors, $s_{ij}$, between atoms $i,j$ such that,

\begin{equation} \label{eq:faber}
    F(\mathbf{q}) = \sum_{i\geq j} [2 - \delta_{ij}]w_{ij}s_{ij}(\mathbf{q})
\end{equation}

\noindent where $w_{ij}$ is a weighting factor depending on the scattering length density and atomic concentration of the $i,j$ pair and $\delta_{ij}$ is the Kronecker delta. Partial structure factors can be Fourier transformed with Eq. \eqref{eq:radFT} to obtain real space site-site pair distribution functions. These site-site radial distribution functions show how the atomic density of type 1 within a spherical shell around any atom of type 2 in the system changes with respect to the radius of the shell. For example, in liquid water, an O-H partial distribution function describes atomic fluctuations of oxygen atoms around any arbitrary hydrogen atom. Eq. \eqref{eq:faber}, referred to as the Faber-Ziman approximation, is ill-posed (since it has no unique solution) and can only be approximated via iterative molecular simulation approaches.

While eqs \eqref{eq:radFT} and \eqref{eq:faber} hold true in theory, practical implementation of these models face several challenges. First, the finite size of individual neutron detectors constrains structure factor measurements to discrete momentum transfer, $\Delta q = q_i - q_{i-1}$, values which can result in aliasing if the sampling efficiency is $<1$ (\textit{i.e.} the Peterson-Middleton theorem \cite{petersen_sampling_1962}). Second, finite detector coverage windows the measurement to a range between some $q_\text{min}$ and $q_\text{max}$, preventing the evaluation of the full integral specified in Eq. \eqref{eq:radFT}. Finally, measurement uncertainty of neutron counts and momentum transfer positions (\textit{i.e.} time-of-flight uncertainty) introduce noise that can corrupt the underlying signal \cite{neuefeind_nanoscale_2012}. These limitations mean that we can only compute a discrete radial Fourier transform over uncertain observations,

\begin{equation}
    g(r) \approx 1 + \frac{1}{2\pi^2\rho} \sum_{q_i=q_\text{min}+q_1}^{q_\text{max}} \frac{\mathcal{S}(q_{i-1}) - \mathcal{S}(q_{i})}{2}(q_i - q_{i-1}) 
\end{equation}

\noindent where we have introduced the notation $\mathcal{S}(q) = S(q)\frac{\sin(qr)}{qr}q^2$ for brevity. The key problem here is that this discrete Fourier transform can introduce systematic deviations in the predicted $g(r)$ from the ground truth one.

The most well studied issue in prior literature is addressing the $q_\text{max}$ cutoff using so-called modification functions. The essential idea here is to smoothly transition the structure factor from a data dominated section (as measured by the neutron detector) to a model driven section (dictated by prior physical knowledge of the structure factor). Modification functions are designed to force the contribution of the experimental data to 0 near $q_\text{max}$, effectively nullifying any features in the data and strictly relying on the physical model alone. Usually the data is transitioned into is the Poisson point process ideal gas model (i.e. $S_\text{Ideal}(q) = 1$)\cite{torquato_hyperuniform_2018}. Mathematically, this adjusts the integral seen in equation \ref{eq:radFT} into,

\begin{equation}
    g(r) = 1 + \frac{1}{2\pi^2\rho}\int_{0}^{\infty} (S(q)-1)M(q)\frac{\sin(qr)}{qr}q^2 dq \label{eq:modtransform}
\end{equation} 

\noindent where $M(q)$ is the modification function as a function of $q$. Common choices for the modification function are the first Bessel function \cite{lorch_neutron_1969}, second Bessel function \cite{soper_use_2012,soper_extracting_2011}, cosine cutoff \cite{bellissent-funel_neutron_1992}, and dynamic functions \cite{skinner_benchmark_2013}. However, as pointed out by J.E. Proctor and co-workers \cite{proctor_comparison_2023}, this is an approximate Bayesian predictive model where the modification function is used to decide where a prior model of the $S(q)$ should be preferred over the data. This can be seen by rewriting equation $\ref{eq:modtransform}$ so that,

\begin{equation}
    = 1 + \frac{1}{2\pi^2\rho}\int_{0}^{\infty} \bigg(\underbrace{(S(q)-1)M(q)}_\text{Data Driven Predictive}+ \underbrace{(S_\text{Ideal}(q)-1)(1-M(q))}_\text{Model Driven Predictive}\bigg)\frac{\sin(qr)}{qr}q^2 dq.\label{eq:approxbayestransform}
\end{equation} 

If you view $M(q)$ as discrete posterior probability mass, then this expression can be thought of as writing the structure factor as a weighted mixture of the two outcomes, either data or model, at each $q$ value. 

\subsection{A Comprehensive Review of Structure-Potential Analysis}

At this point, it should be clear that the determination of radial distribution functions are obscured by experimental, model, and measurement uncertainty. However, assuming that the radial distribution functions can be determined accurately, there are a few fundamental liquid state theories that allow us to predict interatomic forces between atoms in the system and consequently their thermodynamic behavior. One of the most important and far reaching results is the Ornstein-Zernike (OZ) integral relation \cite{ornstein_accidental_1914}.

The OZ relation can be proven by noticing that the excess part of the free energy functional generates a set of \textit{direct} correlation functions that can be related to the density correlation function \cite{hansen_theory_2013}. Conceptually, all that we are doing is a statistical mechanical "book keeping" of the direct and indirect atomic correlations. Unfortunately, on its own the OZ relation is not enough to take an experimental scattering measurement and determine an interatomic potential. Instead, we introduce some approximation of how the direct correlation function is related to the interatomic potential using a \textit{closure relation}. Practically, OZ integral relation methods generally have slow convergence; and further, owing to their approximate nature, often do not give accurate predictions for interatomic potentials and thermodynamic properties for real systems \cite{levesque_pair_1985,hansen_theory_2013}. It is notable, however, that the emergence of machine learning in liquid state theory has shown promise in mitigating these challenges \cite{wu_perfecting_2023}. For example, neural networks have been implemented to solve both the forward and inverse Ornstein-Zernike integral relations for simple liquids \cite{carvalho_radial_2020,chen_physics-informed_2024}. Further note that there are other integral relations from liquid state theory, including the Yvon-Born-Green equation and the Bogoliubov–Born–Green–Kirkwood–Yvon (BBGKY) hierarchy, but these have seen little use for scattering analysis.

The Henderson inverse theorem, first published in 1974, is another important and more practical result on the relationship between the radial distribution function and pairwise additive potential in a statistical ensemble. The Henderson inverse theorem states that, given a fixed density, homogeneous system with pairwise additive Hamiltonian and the same radial distribution function, their pair potentials can differ by at most a trivial constant \cite{henderson_uniqueness_1974}. The importance of this result lies in the fact that, if one can find a potential that reproduces a target radial distribution function, this must be the unique (up to an additive constant) potential that models that system. Supposing that this pair potential was sufficient to model the interatomic forces, it is possible, in principle, to recover thermodynamic consistency with liquid state theories such as the Ornstein-Zernike relation.

To prove this theorem, we first need to establish the Gibbs and Gibbs-Bogoliubov inequalities for a quantum system. The Gibbs inequality is an important result about the information entropy of a system while the Gibbs-Bogoliubov inequality establishes a relationship between the free energy and entropy in the canonical ensemble.

\begin{lemma}
    Let $\rho_1$ and $\rho_2$ be positive, trace-class, and linear density operators on a Hilbert space, $H$, such that $Tr(\rho_i) = 1$. Then, 
    $$
        Tr(\rho_1 \log (\rho_2)) \leq Tr(\rho_1 \log (\rho_1))
    $$
\end{lemma}

\begin{proof} 

\noindent We can express the states $\rho_1$ and $\rho_2$ in an arbitrary basis of $H$ (\textit{c.f} Riesz's Lemma \cite{riesz_functional_2012}) such that,

$$
\rho_1 = \sum_\alpha p_\alpha \ket{\alpha} \bra{\alpha}
$$

$$
\rho_2 = \sum_\alpha q_\alpha \ket{\alpha} \bra{\alpha}
$$

\noindent We then compute the difference between the cross entropy, $\rho_1 \log (\rho_2)$, and information entropy of $\rho_1$, $\rho_1 \log (\rho_1)$,

$$
\rho_1 \log (\rho_2) - \rho_1 \log (\rho_1) = \sum_{\alpha,\beta} [p_\alpha \ket{\alpha} \bra{\alpha} \log (q_\beta) \ket{\beta} \bra{\beta} - p_\alpha \ket{\alpha} \bra{\alpha} \log (p_\beta) \ket{\beta} \bra{\beta}]
$$

\noindent and since ${\alpha}$ and ${\beta}$ are orthonormal bases,

$$
= \sum_{\alpha} [p_\alpha \log (q_\alpha) - p_\alpha \log (p_\alpha)]\ket{\alpha} \bra{\alpha}.
$$

\noindent Taking the trace of this operator we obtain,

$$
Tr(\rho_1 \log (\rho_2) - \rho_1 \log (\rho_1)) = \sum_{\alpha} [p_\alpha \log (q_\alpha) - p_\alpha \log (p_\alpha)] = \sum_{\alpha} p_\alpha \log \frac{q_\alpha}{p_\alpha}.
$$

\noindent Note that since $\log x \leq x - 1$,

$$
\sum_{\alpha} p_\alpha \log \frac{q_\alpha}{p_\alpha} \leq \sum_{\alpha} [q_\alpha - p_\alpha] = 0
$$

\noindent and finally, since the trace is a linear operator, this means that,

$$
Tr(\rho_1 \log (\rho_2)) \leq Tr(\rho_1 \log (\rho_1))
$$.

\end{proof}

\begin{lemma}

Let $\rho_1$ and $\rho_2$ be positive, trace-class, and linear density operators on a Hilbert space, $H$, such that $Tr(\rho_i) = 1$. Then, in the canonical ensemble where, $\rho = \exp(-\beta \mathcal{H}) / Z$, where $\beta$ is the inverse thermal energy, $\mathcal{H}$ is the Hamiltonian and $Z$ is the canonical partition function, then,  

$$
F_2 \leq F_1 + \langle \mathcal{H_2} - \mathcal{H_1} \rangle_1.
$$

\end{lemma}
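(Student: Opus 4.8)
The plan is to reduce this inequality directly to the Gibbs inequality established in the previous lemma. The essential observation is that in the canonical ensemble the logarithm of each density operator is affine in its Hamiltonian: taking logs of $\rho_i = \exp(-\beta \mathcal{H}_i)/Z_i$ gives $\log \rho_i = -\beta \mathcal{H}_i - \log Z_i$, where $Z_i = Tr(\exp(-\beta \mathcal{H}_i))$ and the free energy is defined by $F_i = -\beta^{-1}\log Z_i$. This converts the two abstract trace quantities appearing in the Gibbs inequality into explicit combinations of energy expectations and free energies, which is exactly what we need to compare.

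First I would compute $Tr(\rho_1 \log \rho_1)$ and $Tr(\rho_1 \log \rho_2)$ by substituting the affine expressions above and using $Tr(\rho_1) = 1$. Writing $\langle \, \cdot \, \rangle_1 = Tr(\rho_1 \, \cdot\,)$ for the expectation in state $1$, this yields $Tr(\rho_1 \log \rho_1) = -\beta \langle \mathcal{H}_1 \rangle_1 - \log Z_1$ and $Tr(\rho_1 \log \rho_2) = -\beta \langle \mathcal{H}_2 \rangle_1 - \log Z_2$. Next I would invoke the Gibbs inequality $Tr(\rho_1 \log \rho_2) \leq Tr(\rho_1 \log \rho_1)$ to obtain $-\beta \langle \mathcal{H}_2 \rangle_1 - \log Z_2 \leq -\beta \langle \mathcal{H}_1 \rangle_1 - \log Z_1$. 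Finally, substituting $\log Z_i = -\beta F_i$ and dividing through by $\beta > 0$ (which preserves the direction of the inequality) rearranges to $F_2 \leq F_1 + \langle \mathcal{H}_2 - \mathcal{H}_1 \rangle_1$, as claimed.

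The calculation itself is short, so the main obstacle is technical rather than computational: one must justify the identity $\log \rho_i = -\beta \mathcal{H}_i - \log Z_i$ at the operator level. This requires $\rho_2$ to be strictly positive so that $\log \rho_2$ is well defined via the spectral calculus, which holds because $\exp(-\beta \mathcal{H})$ is strictly positive for a self-adjoint Hamiltonian, and it requires that each $\rho_i$ be a genuine function of $\mathcal{H}_i$, so that the two operators are simultaneously diagonalizable and the spectral logarithm acts on the same eigenbasis. I would also be careful to note that, although $\mathcal{H}_1$ and $\mathcal{H}_2$ need not commute in the quantum setting, every quantity entering the argument appears only under a trace, so no reordering of noncommuting operators is ever required and the linearity and cyclicity of the trace do all the bookkeeping. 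The only genuine inequality in the entire argument is the one borrowed from the Gibbs lemma; everything else is an exact identity.
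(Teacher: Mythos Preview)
Your proposal is correct and follows essentially the same route as the paper: both arguments insert the canonical form $\log\rho_i=-\beta\mathcal{H}_i-\log Z_i$ into the Gibbs inequality $Tr(\rho_1\log\rho_2)\le Tr(\rho_1\log\rho_1)$ and then identify $-\beta^{-1}\log Z_i=F_i$. The only cosmetic difference is that the paper introduces the entropy $S_1=-k_B\,Tr(\rho_1\log\rho_1)$ as an intermediate quantity and then uses the thermodynamic identity $F_1=\langle\mathcal{H}_1\rangle_1-TS_1$, whereas you substitute the canonical form of $\rho_1$ directly; your version is slightly more symmetric but the content is identical.
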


\begin{proof} 

Suppose we take the state $\rho_2$ in the canonical ensemble so that,

$$
\rho_2 = \exp(-\beta \mathcal{H_2}) / Z
$$

\noindent where $\beta$ is the inverse thermal energy, $\mathcal{H_2}$ is the Hamiltonian, and $Z$ is the partition function. Then for some $\rho_1$ we have,

$$
Tr(\rho_1 (-\beta \mathcal{H_2} - \log Z)) \leq Tr(\rho_1 \log (\rho_1)) = - S_1 / k_B
$$

\noindent where $S_1 = -k_B Tr(\rho_1 \log (\rho_1))$ is the entropy of system 1 and $k_B$ is the Boltzmann constant. Since the trace is a linear operator, we can separate the argument of the trace on the left hand side and divide both sides by the thermodynamic $\beta$ to obtain,

$$
Tr(\rho_1\mathcal{H_2}) + k_BT \log Tr(Z) \geq + T S_1.
$$

\noindent But $Tr(\rho_1\mathcal{H_2})$ is just the expectation of $\mathcal{H_2}$ over system state 1 and $-k_BT \log Tr(Z)$ is the definition of the Helmholtz free energy in the Canonical ensemble. Thus,

$$
F_2 \leq \langle \mathcal{H_2} \rangle_1  - T S_1
$$

\noindent and for system 1, 

$$
F_1 \leq \langle \mathcal{H_1} \rangle_1  - T S_1.
$$

\noindent Combining the two expressions gives us the inequality,

$$
F_2 \leq F_1 + \langle \mathcal{H_2} - \mathcal{H_1} \rangle_1.
$$

\end{proof}

\noindent The content of Henderson's inverse theorem can now be stated as follows:

\begin{theorem}
Two systems with Hamiltonian's of the form,

$$
  \mathcal{H} = \sum_i \frac{p_i^2}{2m} + \frac{1}{2} \sum_{i \neq j} u(|\mathbf{r_i} - \mathbf{r_j}|)
$$

\noindent with the same radial distribution function, $g^{(2)}(\mathbf{r_i},\mathbf{r_j})$,

$$
g^{(2)}(\mathbf{r_i},\mathbf{r_j}) = \frac{1}{\rho^2}\bigg\langle \sum_i \sum_j \delta(\mathbf{r} - \mathbf{r_i}) \delta(\mathbf{r}' - \mathbf{r_j}) \bigg\rangle
$$

\noindent have pair potentials, $u(|\mathbf{r_i} - \mathbf{r_j}|)$, that differ by at most a trivial constant.

\end{theorem}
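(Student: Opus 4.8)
The plan is to prove the theorem by a symmetric application of the Gibbs--Bogoliubov inequality (Lemma 2), exploiting the fact that the two systems are assumed to share the same pair distribution function. Let $\mathcal{H}_1$ and $\mathcal{H}_2$ denote the two Hamiltonians; they carry identical kinetic terms and differ only through their pair potentials $u_1$ and $u_2$. The first key step is to observe that the difference $\mathcal{H}_2 - \mathcal{H}_1 = \tfrac{1}{2}\sum_{i\neq j}[u_2(r_{ij}) - u_1(r_{ij})]$ is a sum of purely pairwise terms, so its canonical average collapses---via the definition of $g^{(2)}$ given in the statement---to a single integral,
$$
\langle \mathcal{H}_2 - \mathcal{H}_1 \rangle_k = \frac{N\rho}{2}\int [u_2(r) - u_1(r)]\, g^{(2)}_k(r)\, d\mathbf{r},
$$
where $k \in \{1,2\}$ labels the ensemble in which the average is evaluated. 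Because the two systems possess the \emph{same} radial distribution function, $g^{(2)}_1 = g^{(2)}_2$, and therefore the average is indifferent to the ensemble in which it is computed:
$$
\langle \mathcal{H}_2 - \mathcal{H}_1 \rangle_1 = \langle \mathcal{H}_2 - \mathcal{H}_1 \rangle_2.
$$

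Next I would apply Lemma 2 in both directions. Taking system $2$ as the reference gives $F_2 \leq F_1 + \langle \mathcal{H}_2 - \mathcal{H}_1 \rangle_1$, while interchanging the labels gives $F_1 \leq F_2 + \langle \mathcal{H}_1 - \mathcal{H}_2 \rangle_2 = F_2 - \langle \mathcal{H}_2 - \mathcal{H}_1 \rangle_2$. Adding these two inequalities and invoking the equality of averages just established, the cross terms cancel exactly and one is left with $F_1 + F_2 \leq F_1 + F_2$. Since this chain of inequalities closes on itself, every inequality employed must in fact be saturated; in particular, both instances of the Gibbs--Bogoliubov bound hold with equality.

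The final step is to trace this saturation back through the proof of Lemma 2 to its origin in the Gibbs inequality (Lemma 1). There the only inequality invoked was $\log x \leq x-1$, which is saturated if and only if $x=1$; applied termwise this forces $p_\alpha = q_\alpha$ for every basis state, that is, $\rho_1 = \rho_2$ as density operators. Writing both in canonical form, $\exp(-\beta\mathcal{H}_1)/Z_1 = \exp(-\beta\mathcal{H}_2)/Z_2$, and taking logarithms then shows that $\mathcal{H}_2 - \mathcal{H}_1 = \beta^{-1}\log(Z_2/Z_1)$ is a configuration-independent constant. Because the kinetic contributions are identical, this forces $\tfrac{1}{2}\sum_{i\neq j}[u_2(r_{ij}) - u_1(r_{ij})]$ to take the same value for every spatial configuration of the particles, which can only happen if $u_2(r) - u_1(r)$ is itself independent of $r$---precisely the trivial additive constant asserted by the theorem.

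I expect the main obstacle to be this last implication: arguing rigorously that a pairwise sum being configuration-independent forces the pairwise difference to be pointwise constant. This is handled by comparing configurations that differ by the displacement of a single particle and noting that the induced variation of the sum must vanish for arbitrary displacements, which pins down $u_2 - u_1$ up to a constant. A secondary subtlety is the bookkeeping of the equality condition in the Gibbs--Bogoliubov step, since Lemma 2 as stated delivers only the inequality; its saturation condition must be extracted by revisiting the Gibbs inequality that underlies it.
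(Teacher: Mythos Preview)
Your proposal is correct and follows essentially the same route as the paper: express $\langle \mathcal{H}_2-\mathcal{H}_1\rangle_k$ as an integral against $g^{(2)}_k$, use $g^{(2)}_1=g^{(2)}_2$ to make the two averages coincide, and then apply the Gibbs--Bogoliubov inequality symmetrically so that the two bounds collapse on each other. The only organizational difference is that the paper runs this as a proof by contradiction (assume $u_1-u_2$ is nonconstant, invoke strict inequality in Gibbs--Bogoliubov, and arrive at $0<0$), whereas you argue directly by forcing saturation and then reading off $\rho_1=\rho_2$ from the equality case of Lemma~1; the paper's version thereby sidesteps the ``pairwise sum constant $\Rightarrow$ pairwise term constant'' step you flagged, but both arguments rest on the same equality condition for the Gibbs inequality.
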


\begin{proof} 

Suppose that two systems with a pairwise additive Hamiltonian have equal radial distribution functions and $u_1 - u_2 \neq c$ where $c$ is some constant. Then,

$$
\langle \mathcal{H_2} - \mathcal{H_1} \rangle_1 \neq c
$$

\noindent and since the Helmholtz free energies are constants,

$$
F_2 - F_1 < \langle \mathcal{H_2} - \mathcal{H_1} \rangle_1
$$

\noindent where we lose the possibility of equality from the Gibbs-Bogoliubov inequality. Now, we can expand the expectation of the Hamiltonian in terms of the radial distribution function (since the system is pairwise additive) so that,

$$
F_2 - F_1 < \frac{n}{2} \int [u_2 - u_1] g_1(\mathbf{r}) d^3\mathbf{r}
$$

\noindent and the same holds for a swap of the indices,

$$
F_1 - F_2 < \frac{n}{2} \int [u_1 - u_2] g_2(\mathbf{r}) d^3\mathbf{r}.
$$

\noindent Combining these two equations gives,

$$
0 < 0
$$

\noindent a contradiction. Therefore, our premise that the radial distribution functions are equal while the pairwise additive potential energies differ by a trivial constant must be false. The only other possible difference between the potential energies is constant, so this must be true to satisfy the Gibbs-Bogoliubov inequality.

\end{proof}

Analogous results can be proven more generally using the methods of relative entropy \cite{shell_relative_2008} and functional analysis (see Appendix D) \cite{hanke_well-posedness_2018,frommer_note_2019,frommer_variational_2022}, although it is notable that there is no guarantee that a unique potential exists for a given radial distribution function. And furthermore, these results take us no further in practice since the radial distribution function can only be inferred and not known exactly. 

The Henderson inverse theorem forms the basis of modern numerical structure inversion methods including Schommers algorithm \cite{schommers_pair_1983}, iterative Boltzmann inversion (IBI) \cite{moore_derivation_2014,shakiba_machine-learned_2024}, and empirical potential structure refinement (EPSR) \cite{soper_empirical_1996}. Currently, IBI is generally used for training coarse-grained force fields from more computationally expensive all-atom force fields models, while EPSR is exclusively used to find molecular configurations that are consistent with experimental neutron scattering patterns \cite{,hammond_effect_2017,fheaden_structures_2018}.

IBI and EPSR are iterative predictor-corrector algorithms that can be summarized as follows: (1) a predictor is applied to 'predict' the underlying interatomic potential given an experimental pair correlation function, (2) a corrector performs a molecular simulation (Monte Carlo molecular mechanics) with the predicted potential to evaluate the quality of fit between the simulated and experimental pair correlation functions, and (3) the corrector results are fed back to the predictor to refine or 'correct' the prior potential. Steps 1-3 are iterated until the simulated and experimental pair correlation functions converge. While these methods can be used for coarse-graining or to determine molecular configurations consistent with a given scattering pattern, there is little evidence that interatomic potentials obtained from these techniques can reliably predict thermodynamic behavior for real liquids. For example, Soper showed that O-O and H-H site-site interatomic potentials derived from EPSR applied to scattering data of liquid water predict a 4 times more negative excess internal energy compared to the experimental value \cite{soper_empirical_1996} and later concluded that EPSR cannot be used to derive a reliable set of site-site pair potentials for a given system \cite{soper_tests_2001}.

In our search for a method that can fit experimental scattering data, estimate interatomic forces and model liquid state thermodynamics, clearly EPSR falls short. First, the molecular configurations and potentials generated from these methods can be wildly non-physical \cite{pruteanu_krypton_2022} and non-unique \cite{soper_uniqueness_2007} (Figure \ref{fig:nonunique}). This lack of robustness maybe due to the specific construction of EPSRs Monte Carlo molecular mechanics step or because its solutions fluctuate on the whim of user-specified model inputs. Second, the fact that the determination of site-site partial structure factors is an ill-posed inverse problem performed on uncertain experimental data means that there is no unique molecular configuration that explains a given scattering measurement. The problem here is quite grave since, even if we did find a molecular configuration that matches the scattering data, there is no guarantee that this configuration makes any sense at all. Finally, it is well-known that there are significant quantum mechanical (both nuclear and electronic) and many-body effects that influence structural behavior in liquids that EPSR does not take into account in its molecular models. It is becoming increasingly evident that such effects significantly influence solvent behavior and self-assembly \cite{pereyaslavets_accurate_2022}. As methods to model these behaviors evolve and become more computationally accessible, the physical models underlying liquid structure prediction should evolve accordingly.

\begin{figure}
    \centering
    \includegraphics[width = 12cm]{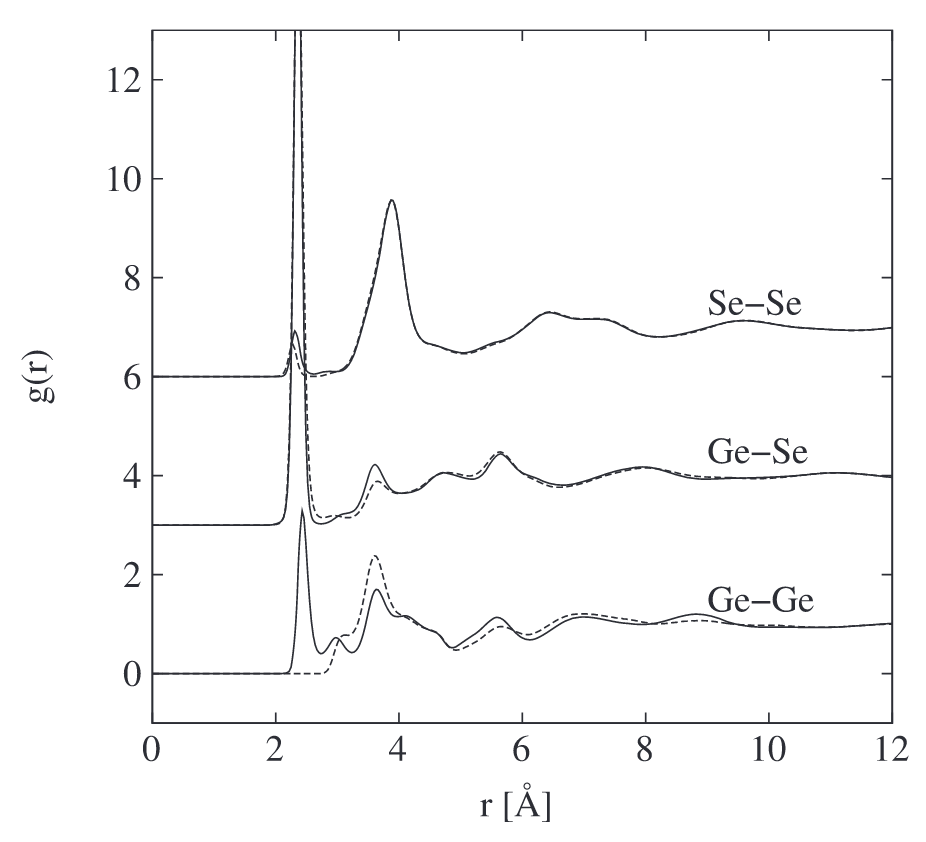}
    \caption{Site-site radial distributions for amorphous $GeSe_2$ determined from different EPSR runs differ by $\sim$16\%. Figure reproduced from Alan Soper, \textit{On the uniqueness of structure extracted from diffraction experiments on liquids and glasses}, Journal of Physics : Condensed Matter, Volume 19, Issue 41, Page 12, 01/01/1989. © IOP Publishing. Reproduced with permission. All rights reserved.}
    \label{fig:nonunique}
\end{figure}

To summarize the current state-of-the-art, what we have is a scientific landscape where we can accurately measure atomic scale correlations between atoms in complex molecular liquids with neutron scattering experiments, along with multiple and sound theoretical results allowing us to connect this information to fundamental interatomic forces and thermodynamics. And yet, a clear demonstration of this connection has evaded the scientific literature for over a century. While interatomic potentials can be derived with existing tools like EPSR, the fact that the results are not robust, can change from run to run, and do not reproduce thermodynamic behavior suggests that existing techniques are generally unreliable and not suitable as a bridge between structure, interatomic forces, and thermodynamic properties. Therefore, what is needed is a careful reexamination of the philosophical, theoretical, and computational approaches to this complex inverse problem.

One could say that the problem of building thermodynamically consistent structure-potential models for liquids is similar to "finding a needle in a hay stack". But in fact, the situation is much worse since we wouldn't be able to tell the difference between the needle and the hay even if we found it! A more accurate description could be summarized as "finding a specific needle in a needle stack".

\subsection{Modern Scattering Analysis: A New Perspective}

Based on the latest advancements in liquid structure modeling, the prevailing challenges that remain in the field can be outlined as follows:

\begin{enumerate}
    \item Experimental scattering data is subject to numerous sources of random errors, arising from uncontrolled effects and improper data corrections and manipulations (e.g., Fourier transforms). Consequently, interpreting scattering measurements in real space is non-trivial and can introduce uncertainty into the observed structures used in liquid state theories.

    \item The non-uniqueness of real-space structure interpretation implies that commonly used scattering analysis tools, such as EPSR, are sensitive to model parameter inputs and do not guarantee the same solution for each run. Additionally, the lack of uncertainty quantification in existing methods means we cannot gauge the reliability of the results.     

    \item Current state-of-the-art methods for the structure-potential inverse problem, such as the Ornstein-Zernike relation and Henderson inverse theorem based methods, are rigorously derived from statistical mechanics. These equations necessitate approximate closure relations, molecular simulations, or advanced machine learning algorithms (e.g., neural networks) to solve. Thus, these methods are often slow to converge and have primarily been successful only in simple fluid models, thereby limiting their practical utility in studying real liquids. 
\end{enumerate}

Clearly, uncertainty is the main thread that occludes every step towards a consistent structure-property model of liquids. First, the scattering data is uncertain, making it impossible to confirm if a 'correct' measurement is being applied as the structure target. The statistical mechanical model and its parameters are also uncertain. For example, the Ornstein-Zernike equation may be rigorous for a system of classical particles, yet contains no description of important quantum mechanical effects of the electrons or nuclei. Moreover, the resulting solution lacks guaranteed uniqueness, leaving us unable to verify its accuracy without external validation through molecular simulations or thermodynamic calculations.

In this dissertation, it is my aim to outline an alternative philosophy to liquid state theory that centers on the key idea of making decisions in the face of uncertainty. Here \textit{uncertainty} will refer to our current state of knowledge (\textit{i.e.} the value of an observable within some credibility interval) given model, parametric, experimental, and computational uncertainties as defined below:

\begin{enumerate}

\item \textit{Model uncertainty} refers to the fact that there is never a 'perfect' model of nature that we can use to predict a quantity-of-interest. In the context of molecular simulations, model uncertainty is associated with choosing a specific force field, or choosing to use a path integral molecular dynamics method rather than an \textit{ab initio} electron structure method. 

\item \textit{Parametric uncertainty} refers to the uncertainty in the parameters we use within a given model. For instance, the Lennard-Jones force field parameters for argon are $\sigma = 3.4$ \AA and $\epsilon = 0.24$ kcal/mol. But how sure can we be that these are exactly correct parameters? What if I choose a slightly smaller or larger $\sigma$? Does it really effect the results of the molecular simulation? In reality, there is a distribution of parameters that can model a given quantity-of-interest, and we can think of this distribution as representing parametric uncertainty. 

\item \textit{Experimental uncertainty} refers to the fact that measurements are subject to numerous sources of error that make the data deviate from the ground truth. A straightforward example is noise in signals or systematic error from a thermocouple being poorly calibrated.

\item \textit{Computational uncertainty} refers to uncertainty in numerical calculations. Although these errors can be quite small, floating point errors or misallocation of memory in parallel algorithms can cause deviations from true solutions, particularly in complicated numerical problems that require multiprocessing. 

\end{enumerate}

Bayesian methods are the gold-standard in treating these obscure types of uncertainty in a mathematically rigorous way. The idea is to learn the \textit{posterior} probability distribution, $p(\theta|D)$ (read probability of $\theta$ given $D$), of some quantity-of-interest $\theta$ (this could be a parameter, model, function, field, etc), given observed data $D$ according to the following equation,

\begin{equation}
    p(\theta|D) = \frac{p(D|\theta)p(\theta)}{p(D)}
\end{equation}

\noindent where $p(D|\theta)$ is the \textit{likelihood} that the data is well modeled by $\theta$, $p(\theta)$ is our \textit{prior}, and $p(D)$ is the probability of $D$ being observed at all \cite{gelman_bayesian_1995}. Note that no restriction has been imposed on the form of $\theta$, $D$, the likelihood or prior aside from the fact that any quantity $p(\cdot)$ must be a probability distribution (\textit{i.e.} it must be non-negative and integrate to one over all possible events). In fact, $\theta$ can be a single quantity, tensor, function, or field and $D$ can be a list of many observations or a combination of completely different observations without loss of generality. If the quantity-of-interest is a function, then stochastic processes (\textit{e.g.} Gaussian processes) are typically invoked using Bayesian nonparametrics \cite{hjort_bayesian_2010} (see Appendix C).

Although Bayes' Theorem is a straightforward statement of conditional probability, it is conceptually powerful. It provides a method for updating our beliefs about a hypothesis based on new evidence by combining prior knowledge with new data to revise the probability of the hypothesis being true. The flexibility in choosing likelihood and prior probability distributions has led some to criticize Bayesian statistics as being too subjective \cite{finetti_theory_1979}. However, this flexibility allows for the incorporation of expert knowledge, ensuring that models remain realistic and physically plausible. This interpretability gives Bayesian inference a significant advantage over other black-box machine learning methods, such as neural networks or variational autoencoders, for function approximation and uncertainty quantification \cite{zuckerman_bayesian_2024}.

Bayesian inference finds diverse applications, ranging from training models to predicting outcomes with credibility estimates, and even forecasting functional and field distributions \cite{bishop_pattern_2006}. In essence, it offers a structured approach to compute both discrete probability mass densities and continuous probability distribution functions across parameters and quantities-of-interest. These probabilistic assessments are instrumental in decision theory applications, facilitating risk and loss quantification, as well as sensitivity analysis of model parameters in predicting outcomes. Moreover, they serve as invaluable guides in research, pinpointing gaps in existing models and directing further investigations. Remarkably, Bayesian field theory even reveals fundamental connections with statistical physics and quantum mechanics, offering insights into the complex phenomena of atomic systems \cite{lemm_bayesian_2003}.

At first, uncertainty can be a confusing concept, as it is not immediately clear how to express a 'lack of knowledge' in a rigorous way. For this, we will need the tools of probability theory \cite{finetti_theory_1979} and its numerical counterpart, probabilistic machine learning \cite{bishop_pattern_2006}. Probabilistic machine learning has experienced a renaissance in the last few decades, with applications becoming commonplace across a broad landscape of contemporary science including astronomy \cite{thrane_introduction_2019}, ecology \cite{tredennick_practical_2021}, and genetics \cite{sivaganesan_bayesian_2008,sivaganesan_improved_2010,russel_model_2019,sivaganesan_fecal_2024}, among others. Liquid structure analysis is no different. In fact, it has been speculated that novel approaches to neutron scattering would likely be Bayesian in nature \cite{soper_uniqueness_2007}, and we are just starting to see probabilistic machine learning methods applied to neutron diffraction \cite{doucet_machine_2020}. Uncertainties have been reported for the estimated partial radial distributions for water \cite{soper_radial_2013} (see Figure \ref{fig:soper_unc}), although this is likely an underestimate since the analysis did not consider experiment, model or parameter uncertainty.

\begin{figure}
    \centering
    \includegraphics[width = 10cm]{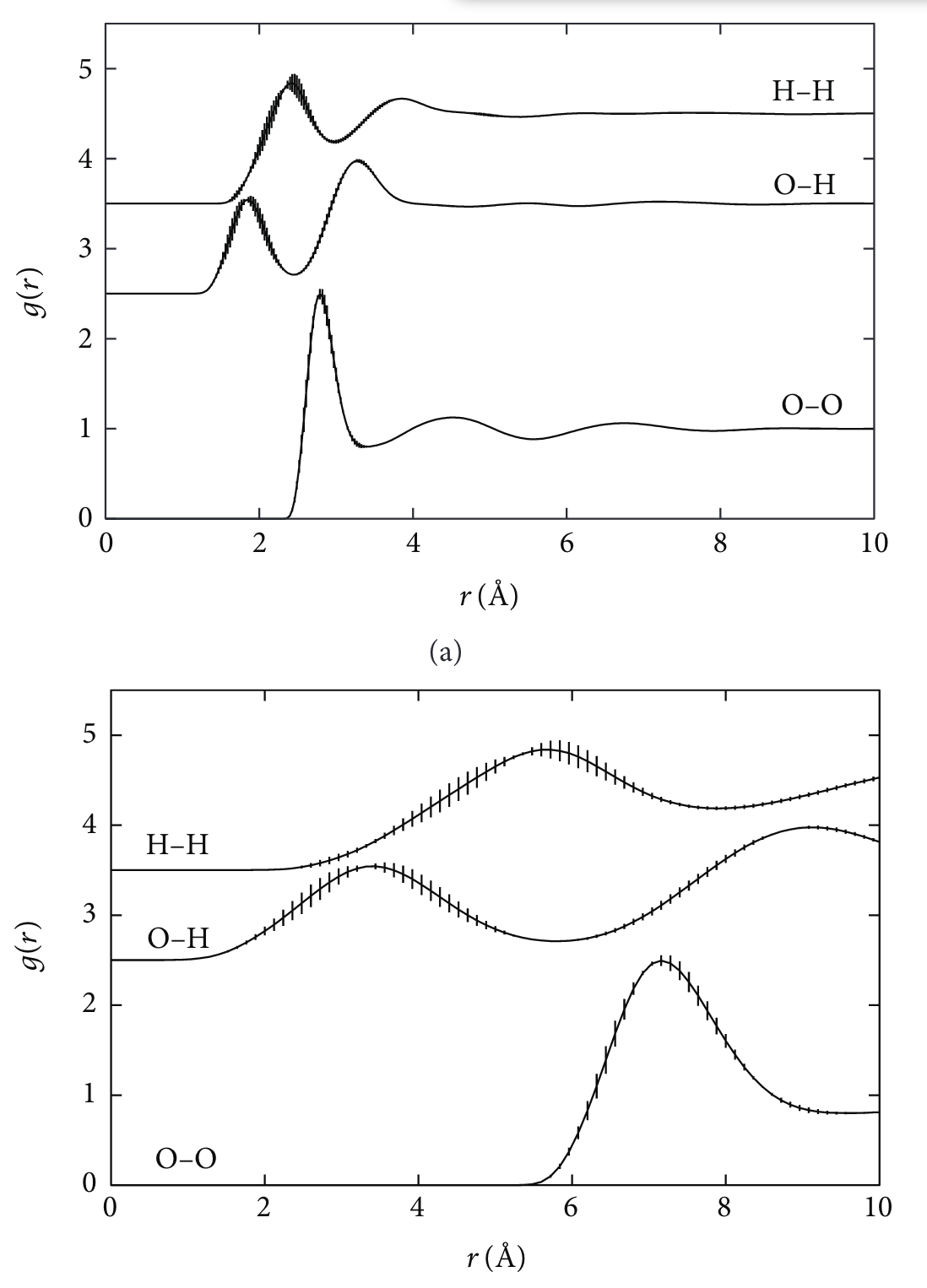}
    \caption{Partial radial distribution functions for HH, OH, OO atom pairs with uncertainty estimates from 6 EPSR runs (vertical bars). Figure reproduced from Alan Soper, \textit{The Radial Distribution Functions of Water as Derived from Radiation Total Scattering Experiments: Is There Anything We Can Say for Sure?}, ISRN Physical Chemistry, Volume 2013, 11/28/2013.}
    \label{fig:soper_unc}
\end{figure}

This dissertation hypothesizes that rigorous uncertainty quantification using Bayesian inference can enhance the capabilities of liquid state theory and address its primary challenges. For instance, applying Bayesian uncertainty quantification to experimental scattering data can reduce the risk of overfitting to poor data. Additionally, recognizing that model results come from a distribution of possible solutions allows for the quantification of non-uniqueness and lack of robustness in a given model. Using Bayesian Gaussian processes to quantify this distribution enables us to learn model outputs with uncertainty, even for complex functions like the pair potential (Chapter 2) and structural correlation functions (Chapters 3 and 4). Furthermore, physics-informed Gaussian process design can ensure that the model predictions abide by physically justified principles such as continuity and differentiability, mitigating non-physical solutions often observed from EPSR. Bayesian optimization can also estimate model and parameter uncertainty within a selected model framework (Chapters 3 and 4), helping to determine whether a model choice is adequate for explaining a given quantity of interest. 

\subsection{Outline and Scope of the Thesis}

The main content of this dissertation includes three chapters in which a novel application of Bayesian inference is applied to advancing the start-of-the-art in structure-property models. Chapter 2 is a reproduction of my first paper, "Transferable force fields from experimental scattering data with machine learning assisted structure refinement" originally published in the Journal of Physical Chemistry Letters in December, 2022. The chapter describes how to implement nonparametric Bayesian methods, specifically Gaussian process regression, within an iterative Boltzmann inversion framework to learn interatomic potentials from neutron scattering data in noble liquids. Chapter 3 shifts gears and focuses on a practical method to train molecular models given experimental scattering data using Bayesian inference. The idea is to use machine learned "surrogate" models to replace the molecular dynamics step in the model parameter training, effectively reducing the force field training time by over a million-fold. Chapter 3 is a reproduction of the paper, "Accelerated Bayesian inference for molecular simulations using local Gaussian process surrogate models", published in the Journal of Chemical Theory and Computation in March, 2024. Chapter 4 explores the questions (1) how accurate does scattering data need to be to learn underlying interatomic forces and (2) do random errors from scattering measurements corrupt the structure measurement to a degree that we can no longer recover interatomic force information? A question originally investigated by Verlet in 1968, this chapter demonstrates that the prior conclusion that it was not possible to quantify interatomic forces from scattering data may need to be overturned. Chapter 5: Conclusions and Future Work outlines key results and motifs from the thesis and expands on ongoing and future research on the horizon of Bayesian liquid state theory. 

Finally, four appendices are included for reference to specific topics and/or mathematical concepts addressed in the main chapters: (A) Quantum and Many-Body Effects from Neutron Scattering expands on ongoing work attempting to decipher what physical phenomenon induce specific features in SOPR potentials. (B) Statistical Mechanics of Liquid Phase Systems is a collection of personal notes starting with classical Hamiltonian mechanics through to statistical mechanics of the liquid state. These notes were compiled and prepared as a lecture series for the course, "Molecular Simulations for Chemical Engineers" taught in Fall, 2023 at the University of Utah and have an accompanying video lecture series on YouTube \hyperlink{https://www.youtube.com/playlist?list=PLmVLYa8E5SDff-uE7ehWBsX5sQzVRYiG6}{here}. The course was taught by the support of the Utah Teaching Assistantship program. (C) Principles of Bayesian Statistics outlines the basics of probability theory and Bayesian methods as implemented in the main chapters. Although this appendix is not suitable to replace a full text on the subject (I would recommend Gelman's \textit{Bayesian Analysis} \cite{gelman_bayesian_1995}), the hope was to briefly outline key terminology and results that were not expanded on in the main chapters. Finally, Appendix (D) Introduction to Functional Analysis is included as a reference to terminology and theorems relevant to understanding some of the most recent results in liquid state theory. Particularly relevant are a series of papers outlining the applicability, limitations, and practical variational implementations of the Henderson inverse theorem \cite{hanke_well-posedness_2018,frommer_note_2019,frommer_variational_2022} as well as providing a basis to understand Bayesian field theory \cite{lemm_bayesian_2003}. Although not directly used in the main chapters, I felt that it was important to compile and include these notes for future reference as the methods of functional analysis may offer exciting new ways of conceptualizing important problems in the field.

\section{Learning Interaction Potentials from Scattering Data}

Structure and interatomic forces are fundamentally linked. Although these relationships can be rigorously established for model fluids, the situation is less optimistic for real systems due to significant many-body interactions and quantum mechanical effects. However, fundamental results from statistical mechanics, such as the Ornstein-Zernike integral relations, inverse Kirkwood-Buff theory, and the Henderson inverse theorem (see Appendix B), hint at the possibility of determining unique and accurate interaction potentials to model thermodynamic and structural property relationships simultaneously. These force fields would be better suited to studying complex behaviors of liquids, such as self-assembly or vapor-liquid equilibrium, and possibly give insight into the nature of physical interactions.

We have developed an algorithm called structure-optimized potential refinement (SOPR) designed to extract interaction potentials from experimental scattering data. SOPR is a numerical method for the Henderson inverse theorem assisted by probabilistic machine learning to address challenges such as over-fitting to uncertain data and numerical instability. The probabilistic machine learning step involves Gaussian process regression with physics-guided priors of the interaction potential estimated with Henderson inverse theorem. SOPR has been applied to radial distribution functions on noble gases and found to be transferable to the prediction of vapor-liquid equilibria, demonstrating for the first that potentials derived directly from a scattering measurements can reproduce thermodynamic properties of real fluids. In this chapter, the original publication describing the foundational principles and SOPR method is reproduced from the Journal of Physical Chemistry Letters, 13 (49), 11512-11520 with permission of the publisher.

\subsection{Abstract}

Deriving transferable pair potentials from experimental neutron and X-ray scattering measurements has been a longstanding challenge in condensed matter physics. State-of-the-art scattering analysis techniques estimate real-space microstructure from reciprocal-space total scattering data by refining pair potentials to obtain agreement between simulated and experimental results. Prior attempts to apply these potentials with molecular simulations have revealed inaccurate predictions of thermodynamic fluid properties. In this letter, a machine learning assisted structure-inversion method applied to neutron scattering patterns of the noble gases (Ne, Ar, Kr, and Xe) is shown to recover transferable pair potentials that accurately reproduce both microstructure and vapor-liquid equilibria from the triple to critical point. Therefore, it is concluded that a single neutron scattering measurement is sufficient to predict macroscopic thermodynamic properties over a wide range of states and provide novel insight into local atomic forces in dense monoatomic systems. 

\subsection{Introduction}

Advances in neutron and X-ray scattering analysis have significantly furthered our understanding of self-assembly and dynamic transport properties in dense fluid systems \cite{glotzer_anisotropy_2007,wu_high-temperature_2017}. Scattering analysis is therefore an important and necessary component in the development and validation of atomistic force fields aimed at predicting both micro- and macroscopic thermodynamic properties over a wide range of states. However, strikingly contradictory predictions between experimental and simulated microstructures have been reported in relatively simple systems, including monoatomic liquid metals \cite{itami_structure_2003}, aromatic hydrocarbons \cite{fheaden_structures_2018}, and water \cite{amann-winkel_x-ray_2016}. Given the proliferation of accessible neutron and X-ray scattering instrumentation, advances in computational analysis, and development of machine learning approaches, it is relevant to revisit whether scattering data can improve force fields for fluid property predictions and provide insight into local atomic forces. 

One approach to benchmark force fields to scattering data is to calculate the underlying interatomic potentials from the experimental pair correlation functions, the so-called \textit{inverse problem} of statistical mechanics. A number of well-established inversion techniques have been proposed, including Ornstein-Zernike (OZ) integral relation methods \cite{ornstein_accidental_1914,percus_analysis_1958,percus_approximation_1962,wang_use_2010,agoodall_data-driven_2021}, Yvon-Born-Green (YBG) theories \cite{mullinax_generalized_2009,mullinax_generalized_2010,ellis_generalized-yvonborngreen_2011,dunn_bottom-up_2016,delyser_bottom-up_2020}, Schommer's algorithm \cite{schommers_pair_1983}, hypernetted chain methods \cite{levesque_pair_1985,delbary_generalized_2020}, the generalized Lyubartsev–Laaksonen approach \cite{lyubartsev_calculation_1995,toth_determination_2001,toth_iterative_2003}, empirical potential structure refinement (EPSR) \cite{soper_empirical_1996}, and a neural network \cite{toth_pair_2005}. However, there is little evidence that interatomic potentials obtained from these techniques can reliably predict thermodynamic behavior for real liquids. For example, Soper showed that O-O and H-H site-site interatomic potentials derived from EPSR applied to scattering data of liquid water predict a 4 times more negative excess internal energy compared to the experimental value \cite{soper_empirical_1996} and later concluded that EPSR cannot be used to derive a reliable set of site-site pair potentials for a given system \cite{soper_tests_2001}. A recent scattering study on supercritical krypton found rapid short-range oscillations in EPSR-derived interatomic potentials that led the authors to conclude that augmentation of the EPSR algorithm is required to obtain a more accurate representation of the real physical system \cite{pruteanu_krypton_2022}. Additionally, the remaining studies on structure-inversion of real liquids reported no validation of the interatomic potentials to predict fluid properties \cite{reatto_iterative_1986,kahl_inversion_1994,soper_partial_2005,youngs_dissolve:_2019,zhao_structural_2021}. Notably, in a review of structure-inversion methods it is opined that the general purpose of these techniques is not to derive or evaluate interatomic potentials, but rather to determine molecular configurations that are consistent with the scattering data \cite{toth_interactions_2007}. Therefore, it remains to be shown if scattering derived potentials can predict atomic trajectories consistent with experimental scattering measurements while also accurately modeling other thermodynamic properties.  

The atomic length scale probed by experimental scattering measurements also confers an additional opportunity, specifically whether it is possible to learn details of the local interactions independent of assumptions on a specific model potential form (e.g., 12-6 Lennard-Jones). For example, the rate of short-range repulsive decay indicates the propensity of an atom to deform in a collision, such that relaxation from an infinitely steep potential wall to a finite exponential or power-law decay represents the transition from hard- to soft-particle collision dynamics. The approximate collision diameter may be estimated by the radial position where the potential energy intersects zero, and the pairwise radial separation of zero force describes the effective dispersion energy. Provided structure-optimized potentials demonstrate the ability to predict emergent thermodynamic properties, their application provides a bridge between local atomic physics and continuum behavior. 

\subsection{Results and Discussion}

In this letter, force fields were determined for four noble liquids (Ne, Ar, Kr, Xe) using a machine-learning augmented Schommer's algorithm, referred to as structure-optimized potential refinement (SOPR), to refine pair potentials and obtain convergence between simulated and experimental pair distribution functions. Modifications to an initial reference potential are informed by numerical implementation of the point-wise Henderson's inverse theorem and augmented via Gaussian process regression with a squared-exponential kernel function described in Equations \eqref{refinement} and \eqref{eq:gp}, respectively. The structure-optimized potentials predict excellent representations of both the experimental pair distribution functions (Figure \ref{fig:rdfs}) and saturated vapor-liquid fluid properties. Consequently, structure-optimized potentials are validated using experimentally-consistent observations on both the micro- and macroscopic length scales, motivating the analysis of specific properties of the generated potentials. Additionally, the monoatomic structure and spherical symmetry of the noble gas system facilitates the comparison of the structure-optimized potentials (Figure \ref{fig:pots}) to reference \textit{ab initio} potentials obtained in the low-density state from coupled cluster theory \cite{hellmann_ab_2008,patkowski_argon_2010,jager_state---art_2016,hellmann_state---art_2017}, referred to as reference quantum dimer potentials. This comparison reveals state-dependent changes of many-body forces present in the experimental systems that were collected at states with varying reduced temperatures ($T_r$) relative to the critical point.

\begin{figure}
    \centering
    \includegraphics[width=14cm]{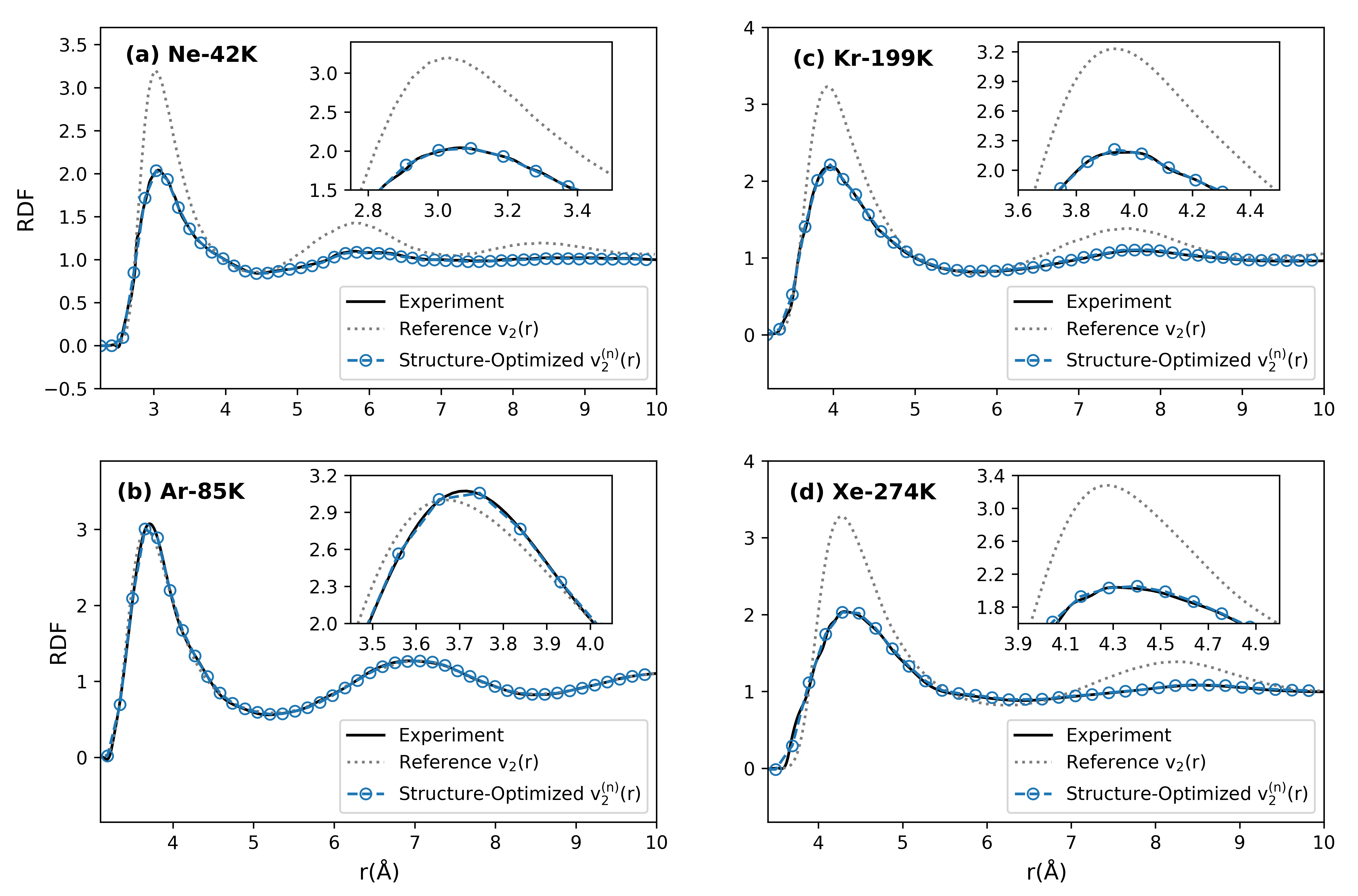}
    \caption{\label{fig:rdf} Reference potential radial distribution functions (grey dotted line) compared to the converged simulated radial distribution functions (blue circles) and experimental radial distribution functions (black line). Inset figures show the first solvation shell of the radial distribution function. }
    \label{fig:rdfs}
\end{figure}

The structure-optimized potentials collected for fluids near their critical point (Ne-42K, Kr-199K, Xe-274K at $T_r = 0.95$) exhibit softer repulsive decay, insignificant change to the collision diameter, and a substantial reduction in dispersion energy with respect to reference quantum dimer potentials. Thus, the ensemble averaged many-body behavior near the critical point results in softer particle collisions with decreased particle attraction. Near the triple point (Ar-85K at $T_r = 0.56$), structure-optimized potentials show no significant change in the repulsive exponent, a 1.5 $\%$ increase in collision diameter, and a reduction in dispersion energy (Figure \ref{fig:pots}(b)) compared to the quantum dimer potential. Many-body effects therefore had a negligible effect on the particle stiffness while decreasing particle attraction near the triple point. The observation that the dispersion energy correction was relatively smaller for the near triple point potential compared to the near critical point potentials suggests that the dispersion energy is a function of the thermodynamic state, which is discussed in the context of temperature-dependent many-body effects later. 

\begin{figure}
    \centering
    \includegraphics[width=15cm]{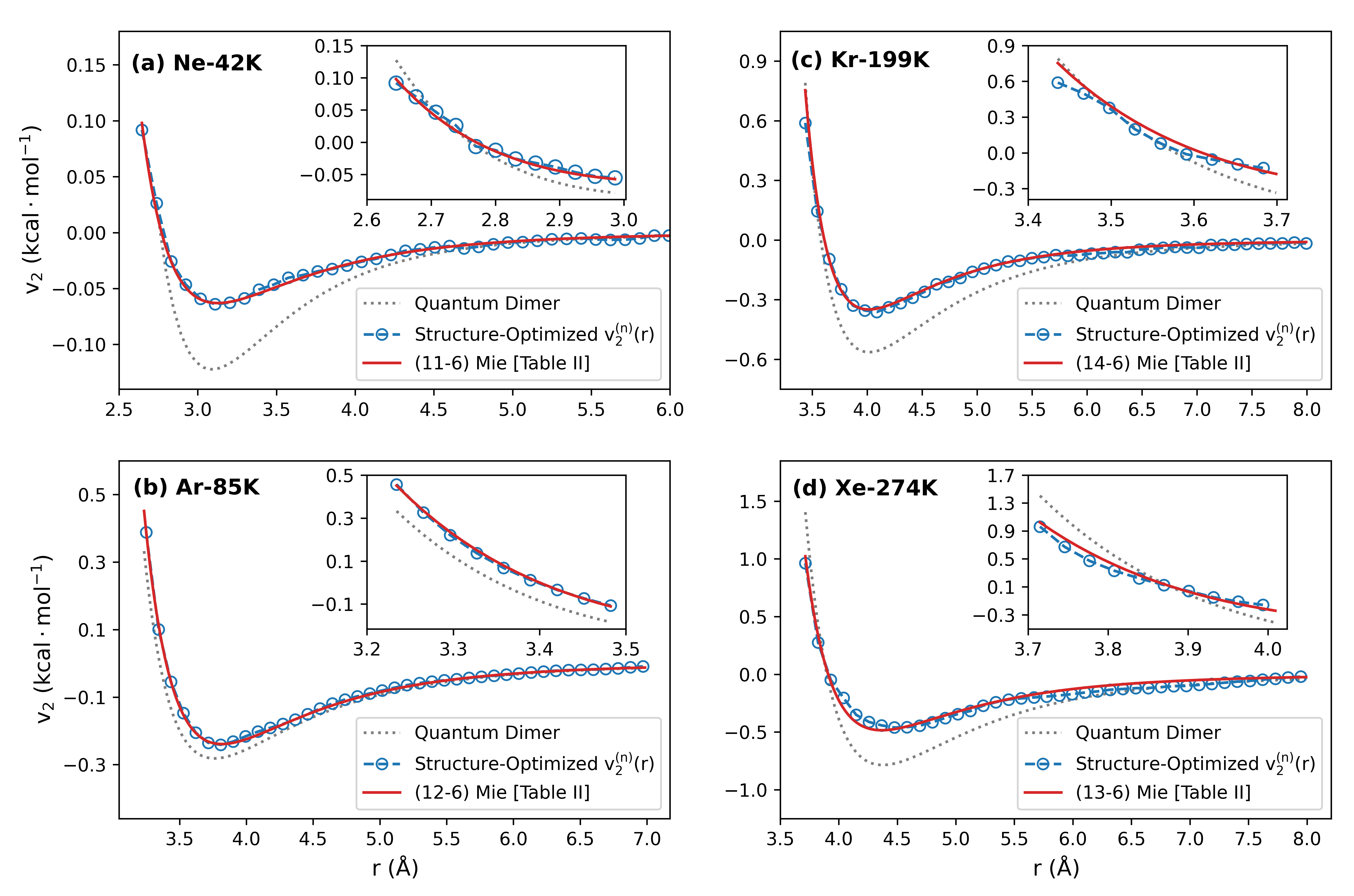}
    \caption{Tabulated structure-optimized potentials (blue) and $(\lambda-6)$ Mie potentials determined with Bayesian regression (red) are shown with reference quantum dimer potentials (grey). Inset figures show the short-range repulsive region of the corresponding structure-optimized potential.}
    \label{fig:pots}
\end{figure}

It is instructive to compare the structure-optimized potentials to widely employed transferable pair potential functions, such as the ($\lambda-6)$ Mie potential, 

\begin{equation}
    v^{Mie}_2(r) = \frac{\lambda}{\lambda-6}\bigg(\frac{\lambda}{6}\bigg)^{\frac{6}{\lambda-6}} \epsilon \bigg[ \bigg(\frac{\sigma}{r}\bigg)^\lambda - \bigg(\frac{\sigma}{r}\bigg)^6 \bigg]
\end{equation}

\noindent where $\lambda$ is the short-range repulsion exponent, $\sigma$ is the collision diameter (\AA), and $\epsilon$ is the dispersion energy (kcal $\cdot$ mol$^{-1}$) \cite{mie_zur_1903}. The ($\lambda-6)$ Mie potential offers increased flexibility over the standard $(12-6)$ Lennard-Jones potential since the repulsion exponent may be varied to produce a wider array of potential shapes. Structure-optimized potentials were fit to the ($\lambda-6)$ Mie function via Bayesian regression and plotted as red lines in Figure \ref{fig:pots}. Note that the excellent quality-of-fit of structure-optimized potentials to the ($\lambda-6)$ Mie function indicates that the fitted parameters (listed in Table \ref{tab:correct}) can closely approximate the thermodynamic predictions of the tabulated structure-optimized potentials.

\begin{table}
\centering
\caption{\label{tab:correct} Summary of ($\lambda-6)$ Mie potential parameters determined by Bayesian linear regression and modifications to the reference quantum dimer potentials in terms of the ($\sigma, \epsilon)$ parameterization. $\Delta \sigma$ and $\Delta \epsilon$ are shown as percent deviations from the \textit{ab initio} parameter values given in Table \ref{tab:refstate}.
}
\begin{tabular}{l c c c c c}
\hline
\textrm{Element}&
\textrm{$\lambda$}&
\textrm{$\sigma$ (\AA)}&
\textrm{$\epsilon$ (kcal/mol)}&
\textrm{$\Delta \sigma$ (\%)}&
\textrm{$\Delta \epsilon$ (\%)} \\
\hline
Ne  & 11 & 2.77 & 0.063 & 0.31  & -48.4\\
Ar  & 12 & 3.40 & 0.239 & 1.50  & -16.7\\
Kr  & 14 & 3.58 & 0.359 & -0.08 & -38.3\\
Xe  & 13 & 3.91 & 0.484 & 0.51 &  -40.3\\
\hline
\end{tabular}
\end{table}

Transferability of the potentials was assessed by performing vapor-liquid equilibrium (VLE) calculations from the triple to critical point using histogram-reweighting grand canonical Monte Carlo (GCMC) simulations in the GPU-Optimized Monte Carlo (GOMC) simulation package \cite{nejahi_gomc_2019} (see Supporting Information). Figure \ref{fig:vle} shows vapor and liquid densities for structure-optimized potentials fit to $(\lambda-6)$ Mie potentials (red triangles) compared with experimental data (black lines) compiled from the National Institute of Standards and Technology (NIST) \cite{lemmon_thermophysical_1998}. The Ne-42K structure-optimized force field predicts liquid densities within 0.1-2.5$\%$ relative error between 30-40K, on par with the top-performing Lennard-Jones force field from Vrabec \textit{et al.} \cite{vrabec_set_2001} and outperforming the next closest model \cite{brown_interatomic_1966} by as much as 10$\%$. The Ar-85K, Kr-199K, and Xe-274K force fields are less accurate, with liquid density relative errors of 0.2-5$\%$ (85-140K), 6.2-10.1$\%$ (120-180K) and 4.7-8.4$\%$ (190-260K), respectively. Simulated critical points determined with the Ising-type critical point scaling law \cite{sengers_thermodynamic_1986} and law of rectilinear diameters \cite{widom_new_1970} are provided in Table \ref{tab:vle}. 

\begin{table}
\centering
\caption{\label{tab:vle} Simulated critical temperatures ($T^{sim}_C$) and densities ($\rho^{sim}_C$) with statistical uncertainty calculated from 5 independent GCMC simulations. Percent error between simulated and experimental critical temperature and density are also shown.}
\begin{tabular}{l r r r r}
\hline
\textrm{Element}&
\textrm{$T^{sim}_C$ (K)}&
\textrm{$T^{err}_C$ (\%)}&
\textrm{$\rho^{sim}_C$ (kg/m$^{3}$)}&
\textrm{$\rho^{err}_C$ (\%)}\\ 
\hline
Ne & 43.84  $\pm$ 0.06 & -1.33 & 488.4  $\pm$ 1.08 & 0.91\\
Ar & 154.27 $\pm$ 0.16 &  2.40 & 528.9  $\pm$ 0.84 & -1.32\\
Kr & 216.58 $\pm$ 0.60 &  3.36 & 952.7  $\pm$ 3.62 & 5.04\\
Xe & 300.99 $\pm$ 0.28 &  3.98 & 1142.9 $\pm$ 1.76 & 4.85\\
\hline
\end{tabular}
\end{table}

\begin{figure}
    \centering
    \includegraphics[width=12cm]{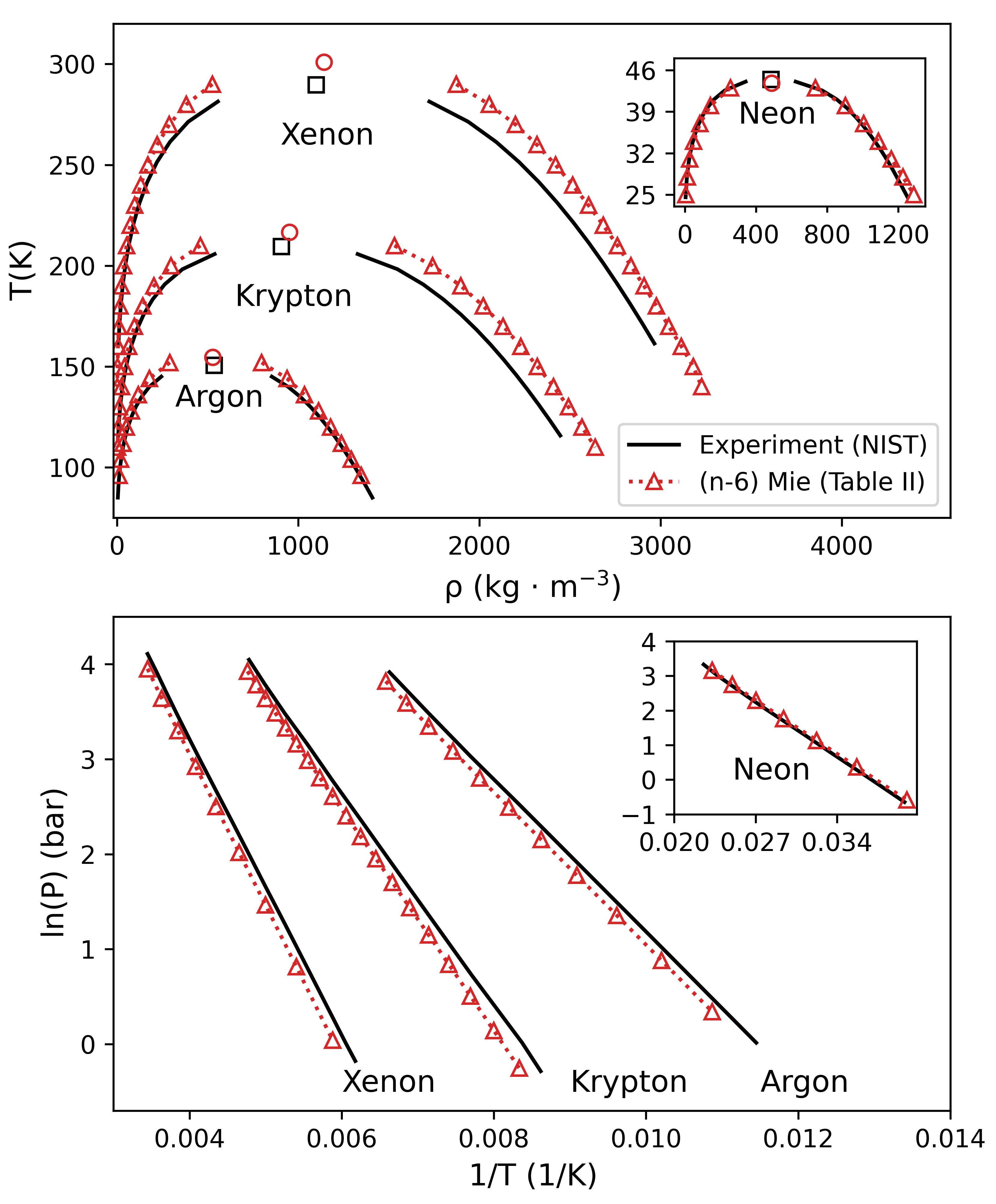}
    \caption{\label{fig:vle} (Top) Simulated phase coexistence curves (red triangles) are shown with experimental phase data (black lines). Simulated and experimental critical points are given by the open red circle and open black square, respectively. (Bottom) Clausius-Clapeyron plots of the simulated pressure (red triangles) compared to experimental pressures (black lines).}
\end{figure}

A recently developed series of $(\lambda-6)$ Mie force fields benchmarked to noble gas vapor-liquid equilibrium (VLE) provides an excellent comparison to the structure-optimized force fields proposed in this work. In general, both force fields predict similar repulsion exponents ($\lambda$) and dispersion energies ($\epsilon$). One interesting observation is that both the structure and VLE-optimized force fields predict an increase in the repulsive exponent with increasing atomic weight. Mick \textit{et al.} \cite{mick_optimized_2015} demonstrated that varying the repulsive exponent improved the simultaneous prediction of saturated densities and vapor pressures, supporting the conclusion that static structure is sensitive to subtle variations in the pair potential. The most consequential difference between the two models is that the structure-optimized force field gives systematically lower predictions for the collision diameter ($\sigma$), causing the simulated liquid density to be overestimated compared to the experimental value. It is notable that the reported differences in the collision diameter (0.01-0.05 \AA) are approximately one order of magnitude smaller than the real-space resolution of the experimental diffraction data (0.4-0.7 \AA). Modern, high-flux scattering instruments can achieve real-space resolution of approximately 0.05 \AA, on the same order as the error in the collision diameter, suggesting that more accurate potentials may be determined from repeating neutron scattering measurements on noble gases with modern instruments \cite{neuefeind_nanoscale_2012}.

The observation that structure-optimized potentials can estimate vapor-liquid coexistence behavior from a single neutron scattering measurement suggests that structure-inversion may be a promising approach to develop force fields for materials where experimental phase behavior is absent or impractical to obtain. The phase behavior and criticality of liquid uranium (U) is one of many important and unresolved examples relating to nuclear reactor design and safety analysis. Neutron diffraction data on solid $\beta$-U exists to temperatures as high as 1045K \cite{lawson_structure_1988} and X-ray diffraction patterns of $\gamma$-U close to the melting point (1405K) are well-characterized \cite{wilson_structures_1949}, but the phase coexistence line and critical point remain unknown, with critical temperature predictions ranging from 5000-13000K \cite{iosilevskiy_uranium_2005}. However, our results suggest that scattering measurements of liquid U may enable estimation of vapor-liquid phase coexistence via GCMC simulations with a structure-optimized embedded atom model or set of state dependent structure-optimized potentials.

Structure-inversion may also enable quantification of liquid state many-body effects. Quantum calculations on Kr-Ar-Ar and Ar-Ar-Kr trimers have revealed that noble gases experience two important many-body effects: (1) 3-body exchange repulsion and (2) the exchange/dispersion quadropole induced dipole \cite{elrod_many-body_1994}. The averaged pairwise influence of these many-body interactions decreases the short-range repulsion exponent and the dispersion energy \cite{guillot_theoretical_1988}, which is in agreement with the behavior observed in the structure-optimized potentials obtained near the critical point. Self-consistent field calculations of electron distributions in Ar clusters have shown that the electron cloud is compressed at higher densities, and that this compression reduces the probability of exchange repulsion \cite{lesar_ground-_1983}. Additionally, experimental results from collision-induced depolarized light scattering on compressed H$_2$ demonstrated that exchange-dependent many-body interactions become more prominent with increasing temperature \cite{bafile_third_1991}. It is therefore expected that many-body effects in noble gases should be less dominant near the triple point, explaining why the short-range decay rate for the Ar-85K structure-optimized potential was unchanged and the well-depth correction smaller than in the near critical point states. This conclusion is also supported by analyzing trends in the structural many-body correction for each fluid at near triple and critical point conditions (see Supporting Information). Further analysis with quantum mechanical and explicit 3-body dispersion models, such as the Axilrod-Teller potential \cite{axilrod_interaction_1943,sadus_exact_1998,marcelli_molecular_1999}, are reserved for future study on high-resolution scattering data sets obtained with state-of-the-art neutron techniques. 

We demonstrate that transferable pair potentials can be reconstructed from a single neutron scattering measurement for monoatomic liquids, and further; that structure-inversion techniques have fundamental and interdisciplinary applications bridging experimental scattering, molecular simulation, and quantum mechanics. Of particular interest is the prediction of thermodynamic properties at extreme conditions, such as high temperature and pressure materials, molten salts, and liquid metals. The inclusion of experimental diffraction results for optimizing effective pair potentials may also facilitate improvements to local structure predictions for fluid mixtures and molecular liquids. However, incoherent and inelastic scattering corrections \cite{soper_inelasticity_2009}, as well as non-uniqueness of the partial structure factor decomposition, will need to be addressed to extend the presented techniques to complex liquids. Finally, the methods presented in this letter may be applied to benchmark force fields for coarse-grained simulations, which has seen a growing interest in structure-inversion techniques \cite{rosenberger_comparison_2016,bernhardt_iterative_2021}.

\subsection{Theory and Computational Methods} \label{methods}

The following section provides the relevant definitions, statistical mechanics, and necessary computational details of the proposed machine learning assisted structure refinement method. First, the microstructure is considered as the local atomic density correlation and is formalized by counting the number of atomic neighbors as a function of position with respect to a reference atom and taking the ensemble average,

\begin{equation}
    g(\mathbf{r}) = \frac{1}{\rho} \bigg \langle \frac{1}{N} \sum_{i=1}^N \sum_{j=1}^N \delta^3(\mathbf{r} - \mathbf{r}_j + \mathbf{r}_i) \bigg \rangle
\end{equation}

\noindent where $g(\mathbf{r})$ is referred to as the radial distribution function, $\delta^3$ is the three-dimensional Dirac delta function, $\rho$ is the thermodynamic density and $N$ is the total number of particles in the system. The radial distribution function and pair correlation function, $h(\mathbf{r})$, are related by, $h(\mathbf{r}) = g(\mathbf{r}) - 1$. Due to the lack of long-range order in liquids, the isotropically-averaged radial distribution function is related to the experimentally observed structure factor, $S(Q)$, which for a monoatomic liquid is given by, 

\begin{equation}
    S(Q) = 1 + \frac{4 \pi}{Q \rho} \frac{\langle b \rangle^2 }{\langle |f(Q)|^2 \rangle} \int_0^\infty r [g(r) - 1] \sin (Qr) dr 
\end{equation} 

\noindent where $Q$ is the momentum transfer, $b$ is the scattering length density, $f(Q)$ is the form factor, and $\rho$ is the atomic number density \cite{sivia_elementary_2011}. 

The potential energy can be written as a sum of $n$-body potential energy terms such that,   

\begin{equation}
\label{exp}
    U(\mathbf{r}) = \overbrace{\sum_i^N v_1(\mathbf{r}_i)}^{\text{external field}} + \overbrace{\sum_{i = 1}^N \sum_{j \neq i}^N v_2(\mathbf{r}_{ij})}^{\text{two-body}} + \overbrace{\sum_{i}^N \sum_{j \neq i}^N \sum_{k \neq j}^N v_3(\mathbf{r}_{ijk}) + ...}^{\text{many-body}}
\end{equation}

\noindent where $v_p(\mathbf{r}_{1,...,p})$ is a position dependent function that assigns a potential energy to a subset containing $p \leq N$ atoms for a given configuration $\mathbf{r}_{i,...,p}$ \cite{allen_computer_2017}. We further simplify this expression by neglecting the external field contribution ($p=1$) and averaging higher-order many-body terms ($p \geq 3$) into a state-dependent pair term,

\begin{equation} \label{trunc}
\begin{split}
    U(\mathbf{r}; \rho,T) & =  \sum_{i = 1}^N \sum_{j \neq i}^N \bigg[ v_2(\mathbf{r}_{ij}) +  v_2^{m}(\mathbf{r}_{ij}; \rho, T) \bigg]
\end{split}
\end{equation}

\noindent such that $v_2^{m}(\mathbf{r}_{ij}; \rho, T)$ is explicitly dependent on the atomic positions and implicitly dependent on the physical state (temperature, density, etc). The bracketed quantity in Equation \eqref{trunc} is defined as the effective pair potential,

\begin{equation} \label{decomp}
    v_2^{eff}(\mathbf{r}_{ij}; \rho, T) = v_2(\mathbf{r}_{ij}) +  v_2^{m}(\mathbf{r}_{ij}; \rho, T)
\end{equation} 

\noindent which cannot be determined exactly for a state-dependent ensemble \cite{louis_beware_2002} but can be optimized to reproduce a set of experimentally observed thermodynamic properties, such as structure \cite{barocchi_neutron_1993}, heat of vaporization \cite{robertson_improved_2015}, or vapor-liquid equilibrium \cite{nejahi_gomc_2019}. The pair potential defined in Equation \eqref{decomp} is the most common non-bonded term in the Hamiltonian of classical force fields and is typically modeled as a hard-particle, Lennard-Jones, $(\lambda-6)$ Mie, Buckingham, or Yukawa potential. 

Pairwise additivity imposes important theoretical constraints on the relationship between the potential energy and pair correlation function. Henderson proved that for pairwise additive, constant density ensembles that there exists a one-to-one map between the effective pair potential and the radial distribution function up to an additive constant \cite{henderson_uniqueness_1974,zwicker_when_1990,frommer_note_2019}. In monoatomic liquids with spherical symmetry, the structure-potential uniqueness theorem on a finite radial interval $[r', r'']$ such that $r'' > r'$ and $r',r'' \in \mathbb{R}_0^+$, is equivalent to,

\begin{equation} \label{uniqueness}
    \int_{r'}^{r''} \Delta v_2(r)\Delta g(r) \, dr \leq 0 
\end{equation}

\noindent where $\Delta v(r)$ and $\Delta g(r)$ are the difference between a model (M) and target (T) pair potential and radial distribution function, respectively \cite{gray_theory_1990}.

\begin{equation}
  \begin{aligned}
  \Delta v_2(r) &= v_2^{M}(r) - v_2^{T}(r)  \\
  \Delta g(r) &= g^M(r) - g^T(r) 
  \end{aligned}
\end{equation}

\noindent Note that the structure-potential uniqueness theorem in the form of Equation \eqref{uniqueness} is written in terms of the $r$-coordinate only due to spherical symmetry. Initially, Equation \eqref{uniqueness} appears uninformative since the inequality prevents direct calculation of the target potential at any $r$ value. The situation is amenable under the assumption that the integrand is continuous and differentiable, so that Equation \eqref{uniqueness} can be rewritten using the mean value theorem,

\begin{equation}
   (r'' - r') \bigg \langle \Delta v_2(r)\Delta g(r) \bigg \rangle_{r'}^{r''} \leq 0 \label{average}  
\end{equation}

\noindent where the bracketed quantity represents the average of $\Delta v_2(r)\Delta g(r)$ over finite interval $[r', r'']$. Notice that for this inequality to be satisfied in the limit $(r'' - r') \to 0$, it must hold at any point $r_o \in r$ so that,

\begin{equation}
    \Delta v_2(r_o)\Delta g(r_o) \leq 0 \label{converge}
\end{equation}

\noindent which is true only when $\Delta v_2(r_o) \neq 0$ and $\Delta g(r_o) \neq 0$. The practicality of this point-wise structure-potential uniqueness theorem is now clear, since Equation \eqref{converge} prescribes what direction that an initial guess for the model potential should be corrected given the difference between the model and target experimental radial distribution function at any point $r_o \in r$; namely, by decreasing the potential if $\Delta g(r_o)$ is negative and increasing the potential if $\Delta g(r_o)$ is positive. While Henderson's structure-potential uniqueness condition has been implemented previously to obtain empirical estimates of pair potential functions in Schommer's algorithm and EPSR, this derivation demonstrates the validity of its use at an arbitrary point without the potential of mean force approximation $g(r) = \exp [-\beta v_2^{eff}(r_{ij}; \rho, T)]$ where $\beta = \frac{1}{k_BT}$, which only holds in the dilute limit \cite{hansen_theory_2013}. 

The structure-potential uniqueness condition is implemented via iterative refinement of a reference potential, $v_2^{0}(r_i)$, with an energy scaled, continuous sum of the radial distribution function error such that,

\begin{equation} \label{refinement}
    v_{2}^{(n)'}(r_i) = v_2^{0}(r_i) +  \gamma \beta^{-1} \sum_n \Delta g^{(n)'} (r_i)
\end{equation}

\noindent where $i$ is the radial index of the tabulated potential, $v_{2}^{(n)'}(r_i)$ is the predictor estimated pair potential at iteration $n$, $\beta$ is the inverse thermal energy $(k_B T)^{-1}$, and $0 < \gamma \leq 1$ is an empirical scaling constant to dampen the potential correction. Comparing the refinement Equation \eqref{refinement} to Equation \eqref{decomp}, it is clear that if $v_2^{0}(r_i)$ is selected as the quantum dimer potential that $v_{2}^{(n)'}(r_i)$ is the estimated effective pair potential and $\gamma \beta^{-1} \sum_n \Delta g^{(n)'} (r_i)$ is the pair averaged many-body term. Note that the prime notation in $v_{2}^{(n)'}(r_i)$ denotes that the pair potential is the predictor estimate before smoothing and treatment of numerical and experimental uncertainty.

In a standard Schommer's algorithm, the potential predicted by Equation \eqref{refinement} is passed to the next iteration without smoothing or uncertainty quantification, which has been shown to reduce the methods robustness \cite{levesque_pair_1985,toth_interactions_2007}. Here a squared-exponential kernel Gaussian process (GP) is applied to the predictor estimate to account for numerical fluctuations arising from the molecular dynamics simulations as well as systematic over-fitting to uncertain experimental data. A GP is a non-parametric stochastic process, equivalent to an infinitely wide neural network of a single layer, that generalizes the concept of probability distributions to functions \cite{rasmussen_gaussian_2006}. In this implementation, the GP takes the potential estimated by Equation \eqref{refinement} as an input and returns a Gaussian probability distribution of continuous and infinitely differentiable functions fitting the predictor estimate \cite{gelman_bayesian_1995}. Thus, a GP acts as an uncertainty propagator and smoothing function that, by nature of its Gaussian form, inherits an analytical Fourier transform that equivalently represents the data in real- or reciprocal-space without introducing significant truncation error \cite{ambrogioni_integral_2018}. Parallel techniques to enhance the accuracy of Fourier transforms in inverse problems, such as fitting structure factors to Poisson series expansions implemented in the EPSR and \textit{Dissolve} packages and T\'{o}th's Gauss-Newton parameterization and Golay–Savitzky smoothing \cite{toth_determination_2001,toth_iterative_2003}, can therefore be replaced with GP regression. Notably, SE-GP regression can be integrated into any existing iterative predictor-corrector without modifications to the base algorithm.

The predicted structure-optimized potential is then expressed as a $k$-multivariate normal distribution ($\mathcal{N}$) of random variables such that,

\begin{equation} \label{random}
    v^{(n)'}_{2}(r_1), v^{(n)'}_{2}(r_2), ... v^{(n)'}_{2}(r_k) \sim \mathcal{N}(\mu(\mathbf{r}), \mathbf{K}(r_i,r_j))
\end{equation}

\noindent where $r_1 < r_2 < ... < r_k$ are the radii positions for the potential, $k$ is the number of points in the structure-optimized potential, $\mu(\mathbf{r})$ is a mean function, and $\mathbf{K}(r_i,r_j)$ is a squared-exponential covariance function (or kernel) describing the relatedness of observations $v^{(n)'}_{2}(r_i)$ on $v^{(n)'}_{2}(r_j)$. Here the squared-exponential kernel is applied,    

\begin{equation}
    \mathbf{K}(r_i,r_j) = \bar{\sigma}^2e^{-\frac{(r_i - r_j)^2}{2\ell^2}} +\delta_{ij}\sigma_{noise}^2
\end{equation}

\noindent where $\bar{\sigma}^2$ is the expected variance of the interatomic potential, $\ell$ is the correlation length and $\sigma_{noise}^2$ is the variance due to numerical latent effects. Notice that if the distance between two points $r_i$ and $r_j$ is small, $\exp(r_i - r_j)^2/2\ell^2$ approaches unity and $v^{(n)'}_{2}(r_i)$ and $v^{(n)'}_{2}(r_j)$ are strongly correlated. As the distance between $r_i$ and $r_j$ increases, $\exp (r_i - r_j)^2/2\ell^2$ vanishes such that $v^{(n)'}_{2}(r_i)$ and $v^{(n)'}_{2}(r_j)$ are uncorrelated. The hyperparameters ($\bar{\sigma}$, $\ell$, $\sigma_{noise}$) are optimized by maximizing the marginal likelihood (model evidence) $p(v^{(n)'}_{2}(r) | r, \bar{\sigma}$, $\ell$, $\sigma_{noise})$.

Regression of $v_{2}^{(n)'}(r)$ over an arbitrary set of radii $r' = \{r_i'\} \ni r_1 \leq r'_1 < r'_2 < ... < r'_m \leq r_k$ is equal to the mean of the $k$-variate normal distribution,

\begin{equation}\label{eq:gp}
    v^{(n)}_{2}(r') = [\mathbf{K}^T_{r',r} -  \sigma_{noise}^2\mathbf{I}]\mathbf{K}_{r,r}^{-1}v^{(n)'}_{2}(r)
\end{equation}

\noindent where $v^{(n)}_2(r')$ is the final structure-optimized potential at iteration $n$ and $\mathbf{K}_{r',r}$ is the squared-exponential covariance matrix between coordinate representations $r'$ and $r$. Figure \ref{fig:gp} shows that GP regression smooths numerical artifacts in the interatomic force when the length scale hyperparameter is on the order of $\ell \sim 1$ \AA. A detailed comparison between a standard and GP assisted Schommer's algorithm is provided in the Supporting Information.

\begin{figure}
    \centering
    \includegraphics[width=14cm]{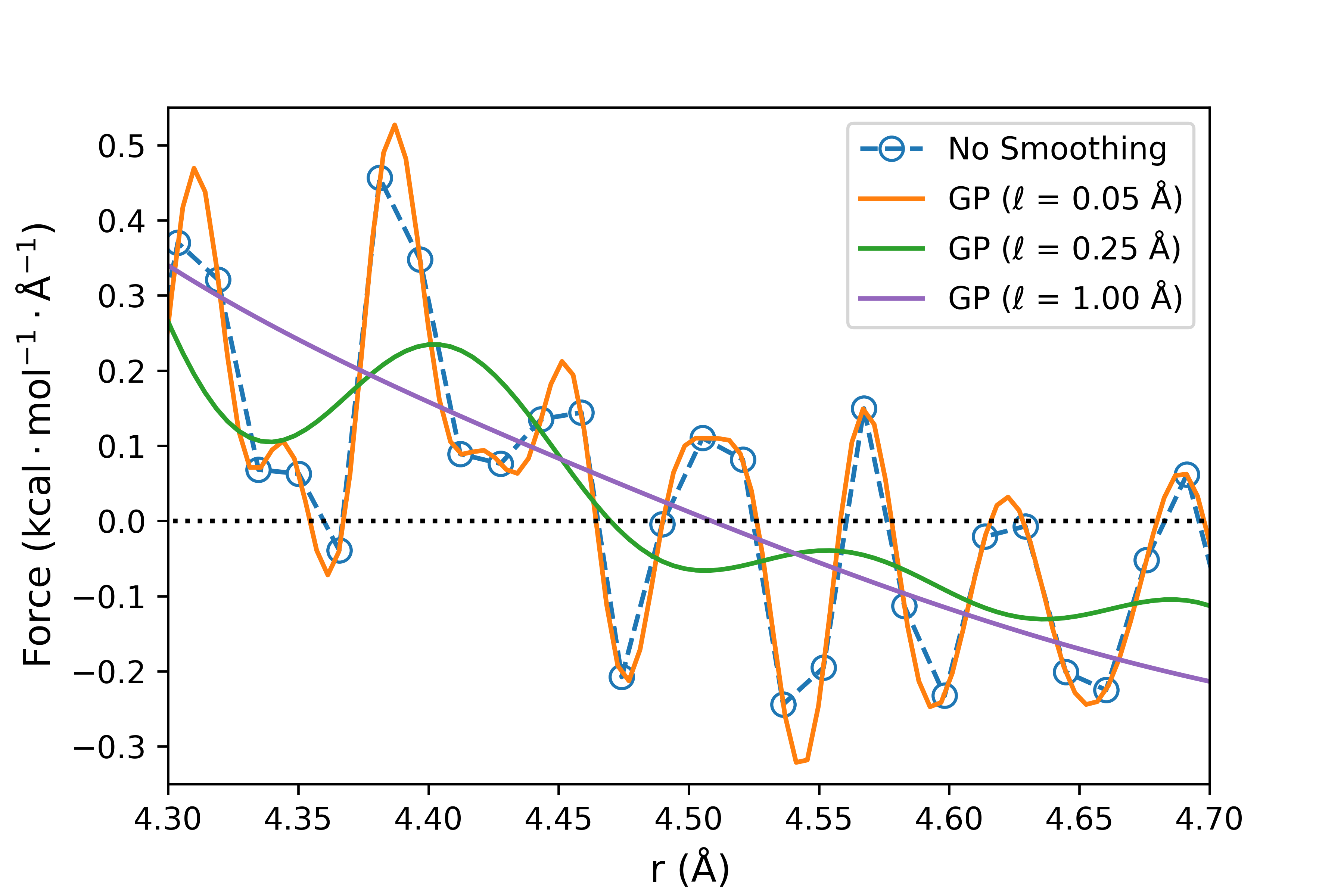}
    \caption{The force calculated from the Xe-274K tabulated potential (blue circles) compared to a non-parametric GP regressed potential at three different length-scale parameters ($\ell$ = 0.05, 0.25, 1.00 \AA) (orange, green, and purple line).}
    \label{fig:gp}
\end{figure}

The GP regressed structure-optimized potential is then applied in the molecular simulation corrector to calculate a simulated radial distribution function, $g^{(n)}(r')$. The molecular simulation corrector is a Canonical ($NVT$) molecular dynamics simulation performed in HOOMD-Blue \cite{anderson_hoomd-blue_2020}. MD simulations were initiated with a 500 atom fcc-lattice at the experimental density and equilibrated with Langevin dynamics for $5\times10^3$ timesteps ($dt=5$ femtoseconds). Tabulated structure-optimized potentials were truncated at $3 r_{vdW}$ with analytical tail corrections, and simulated radial distribution functions were calculated with MDAnalysis \cite{michaud-agrawal_mdanalysis_2011,gowers_mdanalysis_2016} from $1 \times 10^5$ timestep trajectories sampled at $100$ timestep intervals. Convergence is checked against the average squared error between the simulated and experimental radial distribution function such that $\langle [\Delta g^{(n)}(r')]^2 \rangle < 10^{-4}$, which generally is satisfied within 5-10 iterations at scaling constant $\gamma = 0.2$. 

\begin{table}
\centering
\caption{\label{tab:refstate}
Reduced temperature ($T_r = T/T_c$) and atomic density ($\rho$) are listed for the neutron scattering experimental conditions. Reference van der Waal radii ($r_{vdW}$) are used to define pair potential truncation in molecular dynamics simulations. $\sigma_{ai}$ is defined as the radius where the quantum dimer pair potential transitions from positive to negative potential energy and $\epsilon_{ai}$ is the potential minimum.
}
\begin{tabular}{l c c c c c}
\hline
\textrm{Element}&
\textrm{$T_r$}&
\textrm{$\rho$ (1/\AA$^{3}$)}&
\textrm{$r_{vdW}$ (\AA)}&
\textrm{$\sigma_{ai}$ (\AA)}&
\textrm{$\epsilon_{ai}$ (kcal/mol)}\\
\hline
Ne & 0.95 & 0.02477 & 2.91 & 2.76 & 0.122\\
Ar & 0.56 & 0.02125 & 3.55 & 3.35 & 0.287\\
Kr & 0.95 & 0.01187 & 3.82 & 3.58 & 0.582\\
Xe & 0.95 & 0.00881 & 4.08 & 3.89 & 0.811\\
\hline
\end{tabular}
\end{table}

Structure-inversion was initiated with a target experimental radial distribution function and a reference (or model) pair potential, $v_2^{0}(r_i)$. Experimental radial distribution data determined with elastic neutron scattering \cite{bellissent-funel_neutron_1992,yarnell_structure_1973,barocchi_neutron_1993} were compiled at the thermodynamic state conditions listed in Table \ref{tab:refstate}. Reference quantum dimer potentials were obtained via couple-cluster theory/t-aug-cc-pV6Z quality basis sets with spin-orbit relativistic corrections. In practice, any of the numerous existing pair potentials for the noble gases may be applied as a reference potential with equivalent outcomes for the structure-optimized potential (see Supporting Information). However, selecting the quantum dimer pair potential as the reference guarantees that the structure-optimized refinement correction is equal to the pairwise many-body contribution to the effective pair potential, $v_2^{m}(\mathbf{r}_{ij}; \rho, T) = \gamma \beta^{-1} \sum_n \Delta g^{(n)} (r'_i)$. 

\subsection{Supporting Information}

\subsubsection{Grand Canonical Monte Carlo Simulations}

Vapor-liquid coexistence curves and vapor pressures were determined from histogram-reweighting Monte Carlo simulations in the grand canonical ensemble\cite{ferrenberg_new_1988,ferrenberg_optimized_1989,panagiotopoulos_monte_1999}. Simulations were performed with GPU Optimized Monte Carlo (GOMC), version 2.70\cite{nejahi_gomc_2019}. All calculations were performed in a cubic cell with a side length of 25 Å.  Initial configurations were generated with Packmol \cite{martinez_packmol_2009}. Psfgen was used to generate coordinate (*.pdb) and connectivity (*.psf) files\cite{humphrey_vmd_1996}. The Mie potentials were truncated at 10 Å and analytical corrections were applied to the energy and pressure\cite{mick_optimized_2017}.  A hard inner cutoff was used to reject any MC moves that placed atom centers closer than 0.5 Å. A move ratio of 40\% displacements and 60\% molecule transfers was used.  Configurational-bias Monte Carlo (CBMC) was used to improve the acceptance rate for molecule transfers \cite{martin_novel_1999}. Three trial locations were used for simulations near the critical temperature, while 8 trial sites were used for the lowest temperature simulations (Tr=0.65). Acceptance rates for molecule insertions in liquid phase simulations were between 0.5\% and 6.2\%, depending on, chemical potential, and temperature.

To generate the phase diagrams predicted by each parameter set, 9 to 10 simulations were performed; one simulation to bridge the gas and liquid phases near the critical temperature, four in the gas phase, and 5 to 6 liquid simulations. For all noble gases, 2x106 Monte Carlo steps (MCS) were used for equilibration, followed by a data production period of 1.8x107 steps or 4.8x107 steps for gas and liquid phase simulations, respectively.  Histogram data were collected as samples of the number of molecules in the simulation cell and the non-bonded energy of the system. Samples were taken on an interval of 500 MCS.  Histograms from the GCMC simulations were reweighted and properties calculated as described by Messerly \cite{messerly_histogram-free_2019}. Averages and statistical uncertainties were determined from five independent sets of simulations, where each simulation was started with a different random number seed. Phase coexistence data for each noble gas is provided in \cref{tab:ne_phase,tab:ar_phase,tab:kr_phase,tab:xe_phase} and compared to existing force field models in Figure \ref{fig:rel_err}.

\subsubsection{Tabulated SOPR Results}

Tab-delimited text files (.txt) for experimental radial distribution functions \cite{yarnell_structure_1973,barocchi_neutron_1993,bellissent-funel_neutron_1992} and tabulated structure-optimized potentials obtained in this study are included as supplemental documents. Units for the radial distribution function are in angstrom (Å) in the radius and dimensionless in the g(r). Units for the provided structure-optimized potentials are angstrom (Å) in the radius and kcal/mol in the potential energy.

\begin{figure}
    \centering
    \includegraphics[width = 15cm]{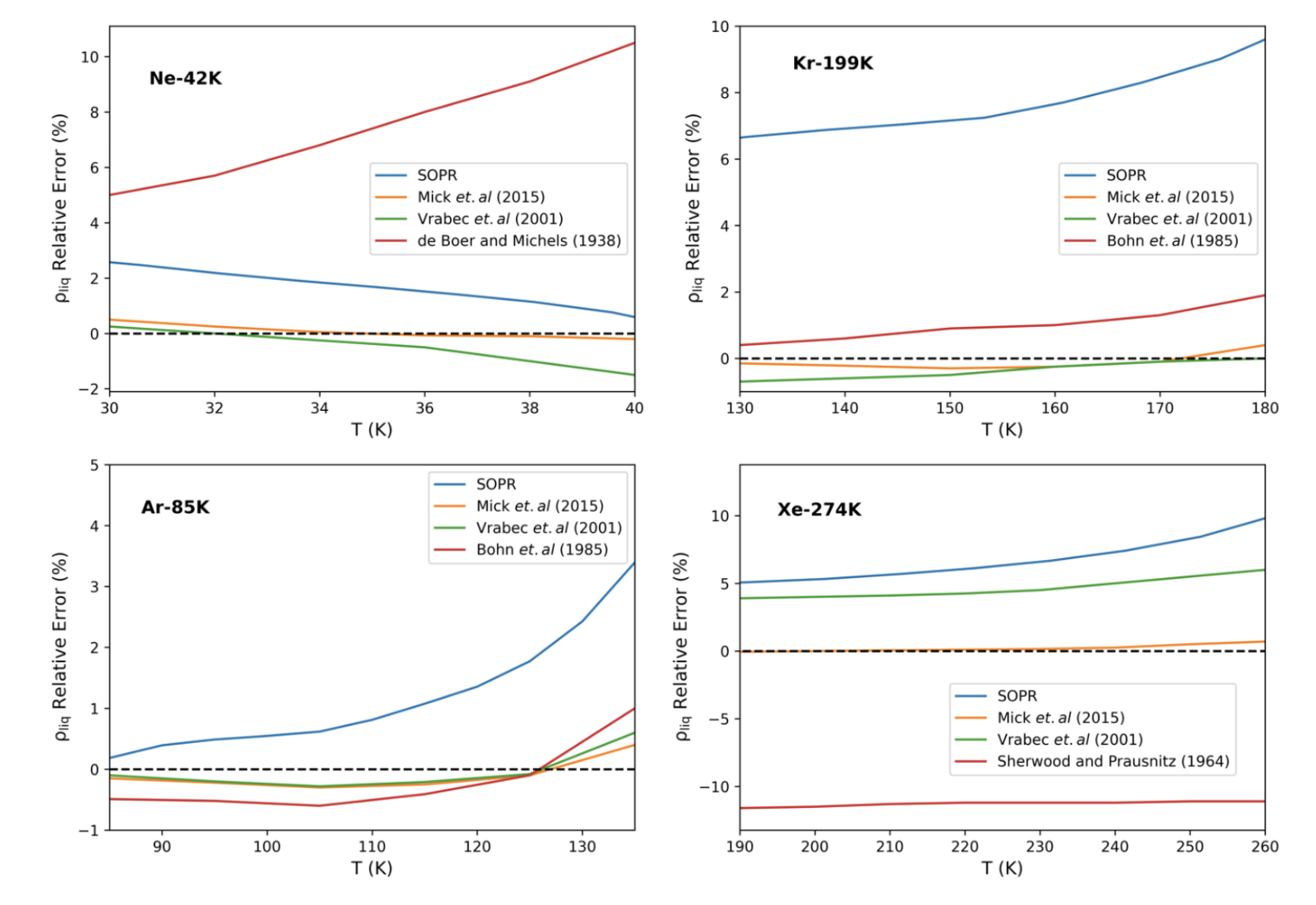}
    \caption{ Relative error in liquid density for the noble liquids. Relative error plots were generated with SOPR VLE data and compared to the reported error from other force fields.}
    \label{fig:rel_err}
\end{figure}

\begin{table}
\centering
\caption{Phase coexistence data from SOPR Ne-42K potential.}
\begin{tabular}{l c c c c c}
\hline
\textrm{T (K)}&
\textrm{$\rho_{liq}$ (kg/m$^3$)}&
\textrm{$\rho_{vap}$ (kg/m$^3$)}&
\textrm{P (bar)}&
\textrm{dHv (kJ/mol)}&
\textrm{Z} \\
\hline
43 & 733.714264 & 256.047536 & 23.57491 & 0.655222  & 0.519732\\
42 & 802.903123 & 211.968635 & 20.710539 & 0.809184 & 0.564664\\
41 & 858.831461 & 173.977971 & 18.089103 & 0.94601 & 0.615533  \\
40 & 902.873606 & 144.342911 & 15.718855 & 1.057575 & 0.660804 \\
39 & 940.575247 & 120.887088 & 13.581881 & 1.150318 & 0.699232\\
38&974.392979&101.635719&11.659854&1.230199&0.732772\\
37&1005.361735&85.477948&9.937776&1.300566&0.762674\\
36&1034.327748&71.748811&8.402114&1.363704&0.789545\\
35&1061.845082&59.979902&7.040359&1.421216&0.814005\\
34&1088.031234&49.834866&5.8411&1.473944&0.836737\\
33&1113.06391&41.095363&4.79364&1.522527&0.857957\\
32&1137.482411&33.601697&3.886811&1.567896&0.877381\\
31&1161.202543&27.203598&3.109184&1.609999&0.894878\\
30&1183.701132&21.764617&2.45&1.648361&0.910751\\
29&1205.099967&17.173856&1.898661&1.683608&0.925314\\
28&1225.73425&13.340107&1.444291&1.716293&0.938529\\
27&1246.346124&10.178756&1.075976&1.747296&0.950288\\
26&1267.238749&7.607599&0.782745&1.777414&0.960517\\
25&1287.667888&5.551135&0.554253&1.806016&0.969351\\
\hline
\end{tabular}
\label{tab:ne_phase}
\end{table}

\begin{table}
\centering
\caption{Phase coexistence data from SOPR Ar-85K potential.}
\begin{tabular}{l c c c c c}
\hline
\textrm{T (K)}&
\textrm{$\rho_{liq}$ (kg/m$^3$)}&
\textrm{$\rho_{vap}$ (kg/m$^3$)}&
\textrm{P (bar)}&
\textrm{dHv (kJ/mol)}&
\textrm{Z} \\
\hline
152&798.196968&291.533313&45.571786&2.243789&0.494141\\
150&836.656053&257.222877&42.268671&2.591589&0.526387\\
148&874.397214&226.119637&39.149769&2.934793&0.562104\\
146&908.89679&199.775981&36.212629&3.246104&0.596555\\
144&939.435052&177.960207&33.447597&3.515884&0.627142\\
142&966.572627&159.643406&30.843602&3.749191&0.65375\\
140&991.203415&143.860244&28.391126&3.955419&0.677329\\
138&1014.0377&129.965504&26.082236&4.142081&0.698752\\
136&1035.52168&117.564032&23.910432&4.313827&0.718554\\
134&1055.91477&106.402261&21.869793&4.473382&0.737011\\
132&1075.37469&96.299118&19.954885&4.62246&0.754289\\
130&1094.01521&87.114077&18.160762&4.762317&0.770526\\
128&1111.9393&78.733568&16.482801&4.894038&0.785862\\
126&1129.25415&71.065144&14.916567&5.01865&0.800437\\
124&1146.0727&64.035994&13.457917&5.137141&0.814362\\
122&1162.49057&57.588697&12.10266&5.250322&0.827692\\
120&1178.56105&51.677614&10.846588&5.358677&0.840417\\
118&1194.2921&46.262093&9.685515&5.462372&0.852513\\
116&1209.64317&41.304773&8.615326&5.561308&0.863972\\
114&1224.53168&36.770416&7.632034&5.65525&0.874829\\
112&1238.91774&32.627499&6.731743&5.744209&0.885141\\
110&1252.88688&28.848214&5.910461&5.828782&0.894943\\
108&1266.59482&25.408149&5.164272&5.909884&0.904268\\
106&1280.1958&22.28512&4.48897&5.988369&0.913082\\
104&1293.80881&19.457582&3.880413&6.064929&0.921381\\
102&1307.41921&16.905474&3.334503&6.139582&0.929153\\
100&1320.88493&14.609372&2.84726&6.211769&0.936441\\
98&1334.0903&12.551765&2.414737&6.281104&0.943246\\
96&1347.00602&10.715882&2.033112&6.347769&0.949619\\
94&1359.65999&9.086014&1.698543&6.412183&0.955572\\
92&1372.01604&7.647196&1.407326&6.474152&0.961156\\
\hline
\end{tabular}
\label{tab:ar_phase}
\end{table}

\begin{table}
\centering
\caption{Phase coexistence data from SOPR Kr-199K potential.}
\begin{tabular}{l c c c c c}
\hline
\textrm{T (K)}&
\textrm{$\rho_{liq}$ (kg/m$^3$)}&
\textrm{$\rho_{vap}$ (kg/m$^3$)}&
\textrm{P (bar)}&
\textrm{dHv (kJ/mol)}&
\textrm{Z} \\
\hline
210&1531.11817&459.974951&50.626437&3.765603&0.528218\\
205&1645.79806&366.381497&43.961845&4.596145&0.589887\\
200&1742.68564&297.413939&37.995176&5.276635&0.64375\\
195&1825.46461&245.594213&32.649925&5.824705&0.687084\\
190&1897.63413&204.286876&27.866868&6.283993&0.723559\\
185&1961.92976&170.227413&23.603813&6.681208&0.75537\\
180&2021.7309&141.580536&19.823791&7.036668&0.78395\\
175&2078.16333&117.248277&16.492711&7.359511&0.810074\\
170&2131.4784&96.538731&13.578873&7.653746&0.833855\\
165&2181.75136&78.900464&11.051246&7.921878&0.855509\\
160&2228.9557&63.882448&8.880364&8.166045&0.8756\\
155&2274.54894&51.154757&7.03666&8.393506&0.894387\\
150&2320.48457&40.456479&5.488944&8.611949&0.911566\\
145&2366.20448&31.546962&4.206358&8.819254&0.926747\\
140&2409.88941&24.20158&3.159392&9.010171&0.93975\\
135&2451.57431&18.215968&2.319986&9.186547&0.950773\\
130&2491.53902&13.40713&1.660498&9.350053&0.960133\\
125&2529.71653&9.614336&1.154729&9.501783&0.968318\\
120&2566.36151&6.690208&0.777101&9.643374&0.975484\\
115&2603.29991&4.497391&0.503879&9.781788&0.981806\\
110&2639.67485&2.90476&0.313072&9.915437&0.9874\\
\hline
\end{tabular}
\label{tab:kr_phase}
\end{table}

\begin{table}
\centering
\caption{Phase coexistence data from SOPR Xe-274K potential.}
\begin{tabular}{l c c c c c}
\hline
\textrm{T (K)}&
\textrm{$\rho_{liq}$ (kg/m$^3$)}&
\textrm{$\rho_{vap}$ (kg/m$^3$)}&
\textrm{P (bar)}&
\textrm{dHv (kJ/mol)}&
\textrm{Z} \\
\hline
290&1872.4037&526.262111&52.004613&5.440892&0.538123\\
285&1967.09269&447.677473&46.976381&6.239123&0.581441\\
280&2054.03333&382.799284&42.336581&6.963763&0.623764\\
275&2131.14867&330.772546&38.056685&7.58385&0.660695\\
270&2199.59475&288.319254&34.106135&8.11039&0.691874\\
265&2261.17401&252.55593&30.461055&8.56687&0.718742\\
260&2317.52979&221.658863&27.102719&8.97288&0.742652\\
255&2370.16211&194.549802&24.015796&9.342204&0.764464\\
250&2420.26327&170.55061&21.186091&9.684&0.784675\\
245&2468.44759&149.194894&18.600206&10.003237&0.803583\\
240&2514.7754&130.143936&16.245757&10.301796&0.821372\\
235&2559.24147&113.137502&14.110993&10.581039&0.838146\\
230&2602.16918&97.96734&12.183996&10.843475&0.853924\\
225&2643.96701&84.455192&10.452821&11.091671&0.86869\\
220&2684.68136&72.445131&8.905584&11.326431&0.882415\\
215&2724.1085&61.796925&7.530531&11.547532&0.895084\\
210&2762.2453&52.38534&6.31611&11.755633&0.906699\\
205&2799.37292&44.095262&5.25085&11.952533&0.917324\\
200&2835.94066&36.824133&4.323573&12.140646&0.927067\\
195&2872.3748&30.47986&3.522885&12.322267&0.935986\\
190&2908.51521&24.978722&2.837801&12.497539&0.944192\\
185&2943.61226&20.248163&2.257546&12.663904&0.951618\\
180&2977.54768&16.217716&1.771668&12.821135&0.958263\\
175&3011.05962&12.81936&1.369788&12.972135&0.964041\\
170&3045.27725&9.987163&1.041737&13.121276&0.968718\\
165&3080.69488&7.655546&0.777787&13.270025&0.972114\\
160&3115.40577&5.762415&0.568922&13.411029&0.974169\\
155&3148.6791&4.250562&0.40665&13.544108&0.974402\\
150&3180.41128&3.065916&0.283242&13.67061&0.97227\\
145&3207.49035&2.156153&0.191456&13.774698&0.966664\\
140&3227.16642&1.474877&0.125163&13.840011&0.956789\\
\hline
\end{tabular}
\label{tab:xe_phase}
\end{table}

\subsubsection{Convergence Stability}

In general, the reference potential can impact the stability of the molecular simulation as well as the interpretation of the structural correction term. In this study, it was possible to directly equate the structure refinement term and the ensemble averaged many-body term since an accurate quantum dimer reference potential was applied. However, if one requires only a structure-optimized potential and not a quantification of the many-body effects, the reference potential can be arbitrarily selected if it is stable within the molecular simulation. For example, Figure \ref{fig:robustness} shows Xe-274K structure- optimized potentials given five different reference potential conditions; namely, LJ parameters for Ne, Ar, Kr, Xe, and Ra \cite{gopal_intermolecular_1962,rutkai_how_2017}. Clearly, the predicted structure-optimized potential is independent of the reference potential in this system, although in principle this may not hold in complex liquids or for thermodynamic states near the amorphous glass transition where the structure does not explicitly depend on the non-bonded potential energy (e.g. as T$\to$ 0).

In addition to the reference potential, the scaling coefficient defined in Equation \eqref{refinement} can also impact the convergence rate and stability. For example, structure inversion runs for the Xe-274K system at varying scaling coefficient demonstrates that a scaling coefficient of 0.6 is ideal for rapid convergence and low relative deviation from the experimental structure at high iteration number (Figure \ref{fig:convergence}). Therefore, scaling coefficients can significantly impact computational performance and should be optimized for physical systems where molecular simulation is computationally expensive (e.g. high molecular weight liquids).

\begin{figure}
    \centering
    \includegraphics[width = 15cm]{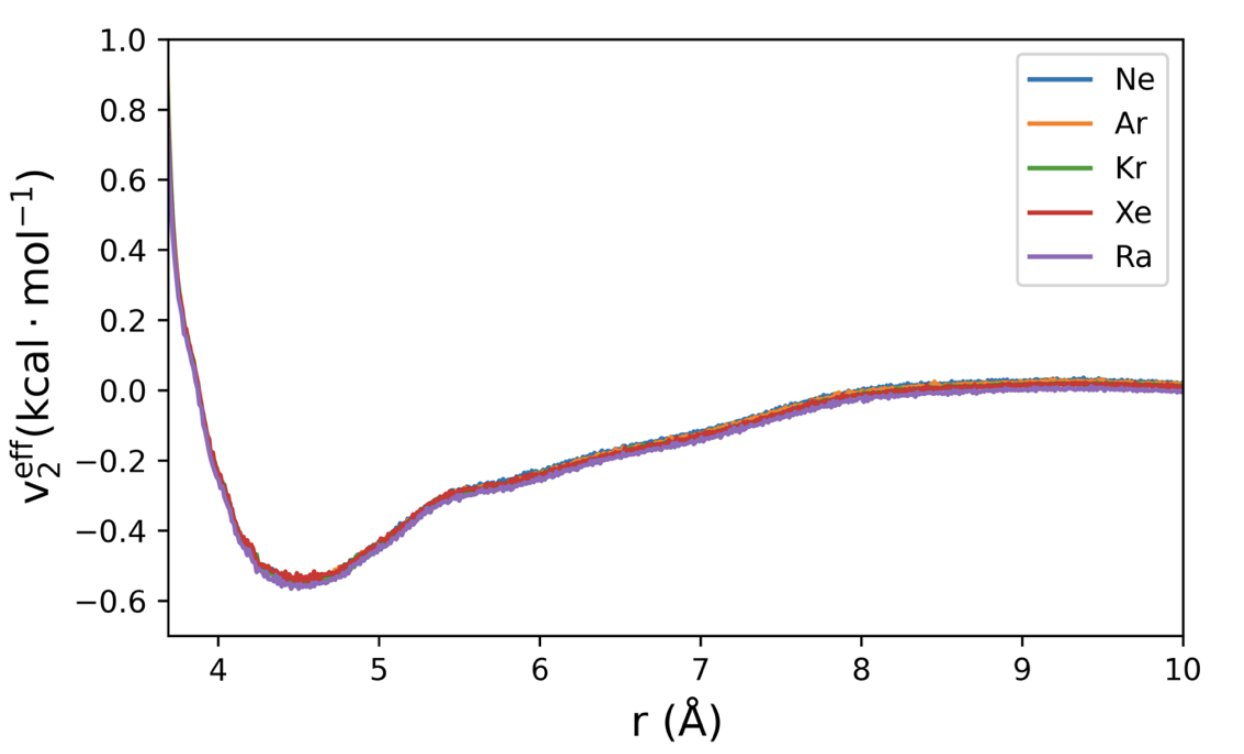}
    \caption{Structure-optimized potentials are unique up to negligible numerical fluctuations for the Xe-274K scattering data given a broad range of reference potentials (LJ parameters for Ne, Ar, Kr, Xe, and Ra).}
    \label{fig:robustness}
\end{figure}

\begin{figure}
    \centering
    \includegraphics[width = 15cm]{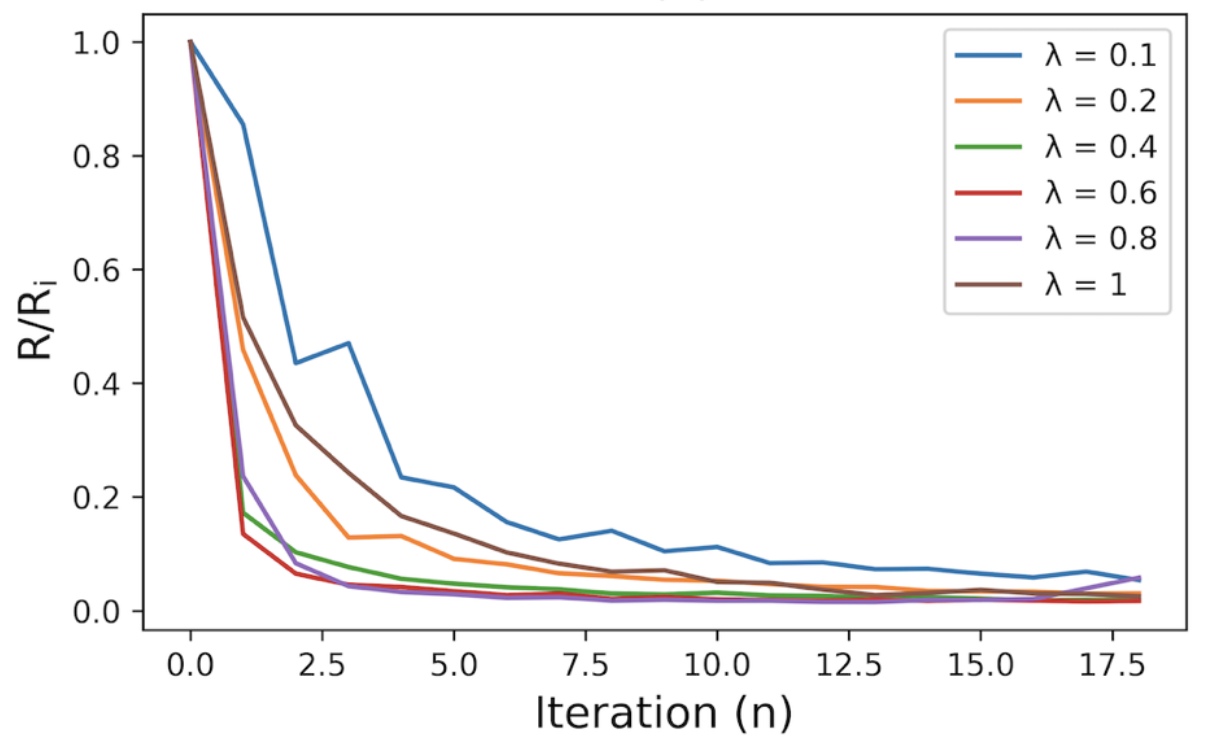}
    \caption{Normalized relative error in the simulated and experimental radial distribution function, defined as the sum-of-square residual (ssr) at iteration n divided by the initial ssr, R/Ri.}
    \label{fig:convergence}
\end{figure}

\subsubsection{Gaussian Process Regression}

Gaussian process (GP) regression is a powerful and robust method to perform non-parametric regression of complex functions. One of the main benefits of GP regression is that it has an assumption-free, analytical Fourier transform (FT) that is more accurate than a discrete Fourier transform (DFT) in noise-free and noise-corrupted signals. In fact, it has been shown that conventional DFT is a special case of the more general Bayesian formulation of Fourier transforms that is only possible with a GP regression \cite{ambrogioni_integral_2018}. With respect to scattering analysis, techniques such as fitting to Poisson series expansions in EPSR and Dissolve simulation packages \cite{soper_empirical_2017,youngs_dissolve:_2019} were implemented since they had analytical FTs that improve the convergence and accuracy of the inverse algorithm. Additionally, GP regression probabilistically smooths the predicted potential, which was shown in the main text to eliminate numerical fluctuations in the interatomic force that can propagate between refinement iterations. Figure \ref{fig:gpsupp} shows that structure-optimized potentials in a standard Schommer’s algorithm exhibit sub-angstrom fluctuations near equilibrium separation that are likely caused by the propagation of numerical/experimental noise introduced in the refinement procedure. However, the GP assisted Schommer’s algorithm smooths the potentials so that they adhere more to transferable potential forms while still providing an excellent quality-of- fit to the radial distribution function.

\begin{figure}[H]
    \centering
    \includegraphics[width = 15cm]{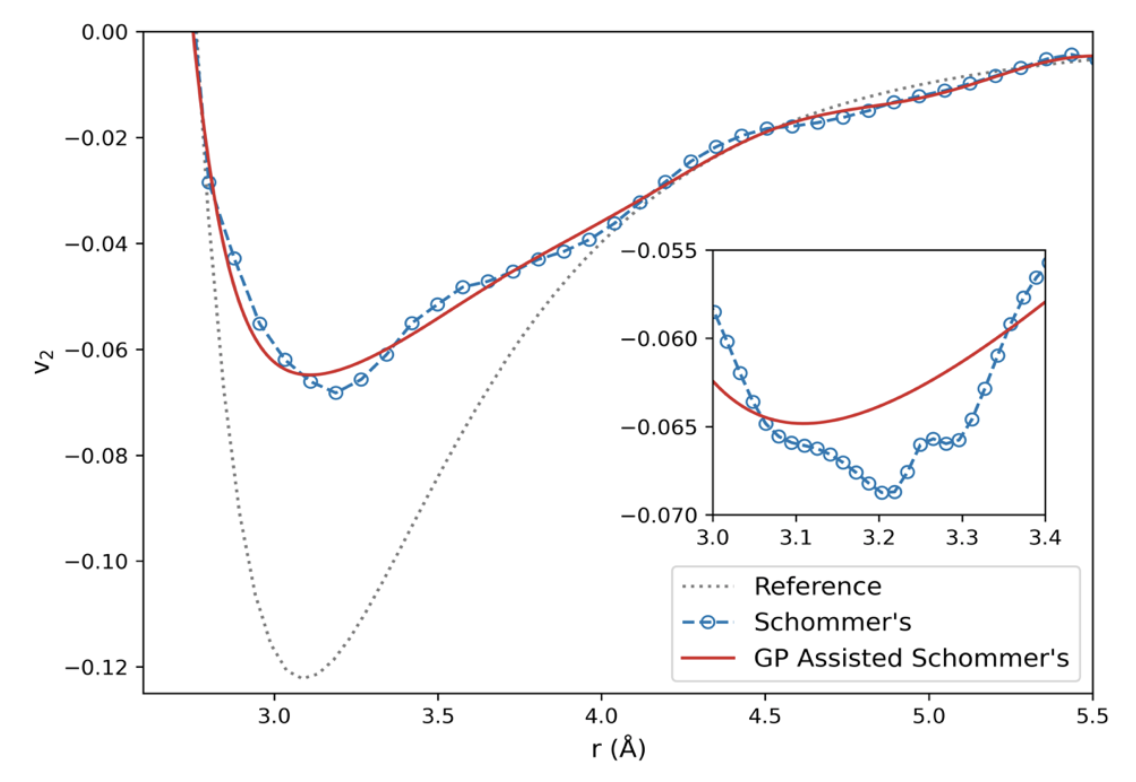}
    \caption{Structure-optimized potentials with and without GP regression.}
    \label{fig:gpsupp}
\end{figure}

\subsubsection{Triple and Critical Point Structural Many-Body Corrections}

SOPR potentials were generated at near triple and critical point conditions for each fluid where scattering data was available (Ne, Kr, and Xe). For each fluid, the dispersion energy decreased with increasing temperature (Figure \ref{fig:manybodysopr}) consistent with the expected increase in exchange repulsion effects, and the repulsive exponent decreased with increasing temperature (Figure \ref{fig:nearmanybodysopr}). The change in the short-range repulsion rate between the near triple and critical point states was found to be negligible in Ne, likely because the difference in absolute temperature is smaller (16 K) compared to the difference in absolute temperature for Kr (69 K) and Xe (105 K). Additionally, inelastic and incoherent scattering corrections have been shown to broaden the radial distribution function in hard spheres \cite{soper_inelasticity_2009} resulting in a decrease in both the height and slope of the first solvation shell. Furthermore, post- processing and low instrument resolution for the available scattering data propagate non-negligible error to the determined structure factors which is further confounded by discrete Fourier transform truncation error \cite{soper_partial_2005}. While these realities motivate the use of uncertainty quantification in structure inversion, it is likely that new scattering measurements with state-of-the-art neutron instruments must be obtained to draw significant, quantitative conclusions on the many-body corrections.

\subsection{Python Notebook Tutorial}

In addition to the manuscript, an example code for running SOPR in python was created and posted on GitHub \hyperlink{https://github.com/hoepfnergroup/SOPR}{here}. The notebook contains a detailed description of the code functions, basic theory, and results of the method. The example is for liquid neon which is a traditionally challenging liquid to model with molecular dynamics due to its low temperature and significant quantum mechanical effects. The notebook provides relevant background on iterative refinement algorithms, Gaussian process regression, simple liquid molecular simulations in HOOMD-Blue, and visualization of the SOPR results.

\clearpage

\begin{figure}[H]
    \centering
    \includegraphics[width = 12cm]{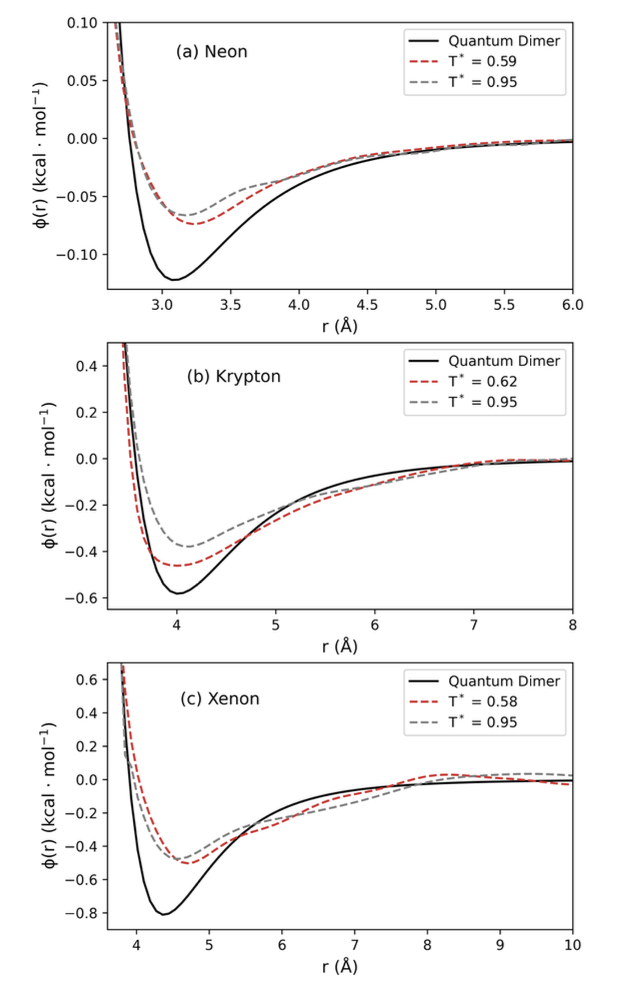}
    \caption{SOPR potentials for near triple and critical point experimental scattering data.}
    \label{fig:manybodysopr}
\end{figure}

\begin{figure}[H]
    \centering
    \includegraphics[width = 12cm]{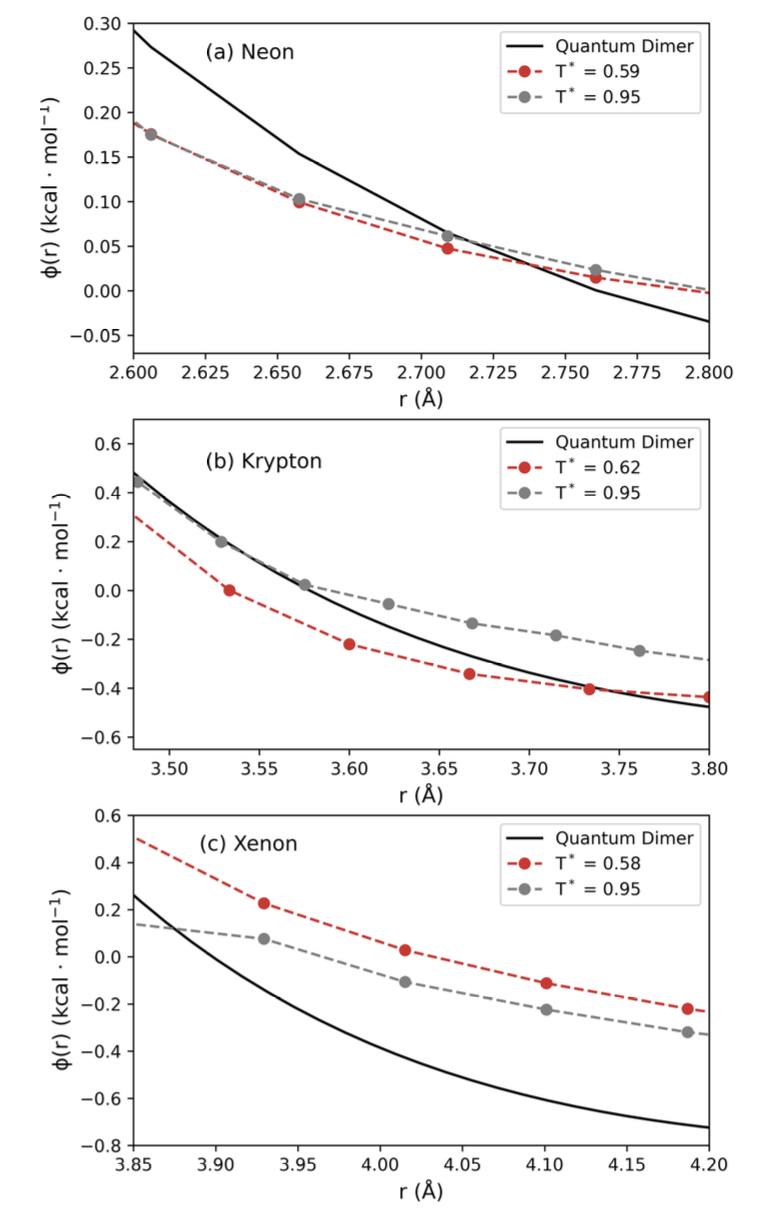}
    \caption{Repulsive region of the SOPR potentials for Ne, Kr, and Xe.}
    \label{fig:nearmanybodysopr}
\end{figure}

\section{Accelerated Bayesian Force Field Uncertainty Quantification for Structural Correlations}

The previous chapter explored the estimation of pair interactions from experimental neutron scattering data using Gaussian process (GP) priors conditioned on iterative potential refinement guided by the Henderson inverse theorem. GPs were shown to provide \textit{non-parametric} interaction potentials with uncertainty quantification, enabling the construction of flexible yet physically-justified pair potentials. However, there are instances where it is desirable to reproduce scattering data as accurately as possible given a \textit{parametric} potential form. For instance, parametric models are more widely adopted, easily implemented in existing software packages, and are faster to compute than non-parametric potentials. In this case, Bayesian inference over the model parameters can be performed with neutron data as the target quantity-of-interest (QoI). 

The main challenge with this approach is not in its construction, but rather the exceptional computational cost of Bayesian inference due to the curse of dimensionality. The main pinch point is that molecular simulations take a long time to perform (ranging from minutes to days) and Bayesian inference necessitates performing thousands to hundreds of thousands of such calculations. In practice, we can alleviate this problem by training machine learning \textit{surrogate} (or meta-) models on a smaller subset of simulations and then use this surrogate model in place of the more expensive molecular simulation. Some commonly used methods for designing surrogate models include polynomial chaos expansions, neural networks, and GPs. 

The built-in uncertainty quantification of GP surrogate models makes them particularly attractive for molecular systems, particularly when the amount of training data is small. GPs excel with little training data because a well-posed prior and kernel function can guide the prediction and enforce physically-justified behavior \textit{a priori}. Of course, these advantages come with a drawback, namely that the GP evaluation is slow compared to other methods. The content of this chapter is to describe a "greedy" approximation to a GP that provides highly accurate estimations of the molecular model prediction at a fraction of the computational time. In summary, the approximation involves splitting a GP along its independent variable inputs into a subset of GPs, resulting in a reduction in call time-complexity from cubic to linear in the number of independent variables (turning a standard GP into a speedy race car shown in Figure \ref{fig:racer}). The original publication describing a reliable and robust method for uncertainty quantification and propagation to complex experimental data (\textit{i.e.} radial distribution functions or electromagnetic spectra) using Bayesian inference is reproduced from the Journal of Chemical Theory and Computation, 2024, 20, 9, 3798–3808 with permission of the publisher. The article is also available on arXiv with article identifier arXiv:2310.19108 \cite{shanks_accelerated_2023}.

\begin{figure}
    \centering
    \includegraphics{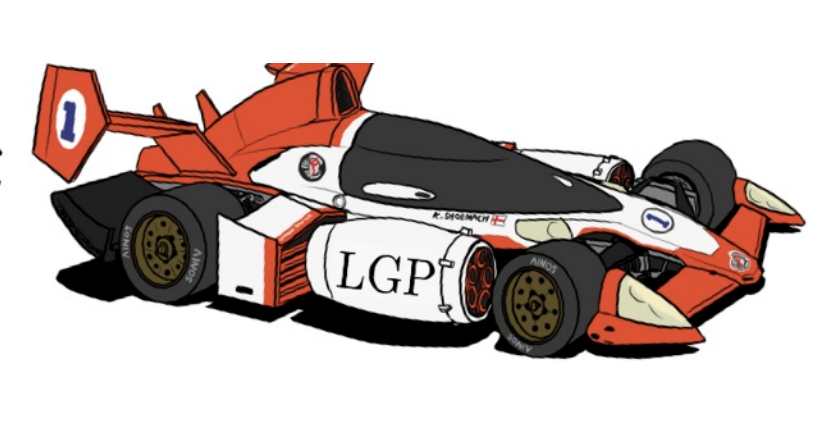}
    \caption{Local Gaussian processes are like a Formula 1 racer in comparison to standard GPs.}
    \label{fig:racer}
\end{figure}

\subsection{Abstract}

While Bayesian inference is the gold standard for uncertainty quantification and propagation, its use within physical chemistry encounters formidable computational barriers. These bottlenecks are magnified for modeling data with many independent variables, such as X-ray/neutron scattering patterns and electromagnetic spectra. To address this challenge, we employ local Gaussian process (LGP) surrogate models to accelerate Bayesian optimization over these complex thermophysical properties. The time-complexity of the LGPs scales linearly in the number of independent variables, in stark contrast to the computationally expensive cubic scaling of conventional Gaussian processes. To illustrate the method, we trained a LGP surrogate model on the radial distribution function of liquid neon and observed a 1,760,000-fold speed-up compared to molecular dynamics simulation, beating a conventional GP by three orders-of-magnitude. We conclude that LGPs are robust and efficient surrogate models, poised to expand the application of Bayesian inference in molecular simulations to a broad spectrum of experimental data.

\subsection{Introduction}

Molecular simulations are able to estimate a broad array of complex experimental observables, including scattering patterns from neutron and X-ray sources and spectra from near-infrared \cite{czarnecki_advances_2015}, terahertz \cite{schmuttenmaer_exploring_2004}, sum frequency generation \cite{nihonyanagi_ultrafast_2017,hosseinpour_structure_2020}, and nuclear magnetic resonance \cite{mishkovsky_principles_2009}. Recent interest in these experiments to study hydrogen bonding networks of water at interfaces \cite{roget_water_2022,li_hydrogen_2022}, electrolyte solutions \cite{wang_hydrogen-bond_2022}, and biological systems \cite{meng_modeling_2023} has motivated the continued advancement of simulations to calculate these properties from first-principles \cite{bally_quantum-chemical_2011,thomas_computing_2013,gastegger_machine_2017}. However, the ability to estimate these complex properties comes with a high computational cost. This barrier greatly limits our ability to quantify how experimental, model, and parametric uncertainty impact molecular simulation predictions, making it difficult to know whether a model is an appropriate representation of nature or if it is simply over-fitting to a given training set. Therefore, what is needed is a computationally efficient and rigorous uncertainty quantification/propagation (UQ/P) method to link molecular models to large and complex experimental datasets.

Bayesian methods are the gold standard for these aims \cite{psaros_uncertainty_2023}, with examples spanning from neutrino and dark matter detection \cite{eller_flexible_2023}, materials discovery and characterization \cite{todorovic_bayesian_2019,zuo_accelerating_2021,fang_exploring_2023,sharmapriyadarshini_pal_2024}, quantum dynamics \cite{vkrems_bayesian_2019,deng_bayesian_2020}, to molecular simulation \cite{frederiksen_bayesian_2004,cooke_statistical_2008,cailliez_statistical_2011,farrell_bayesian_2015,wu_hierarchical_2016,patrone_uncertainty_2016,messerly_uncertainty_2017,dutta_bayesian_2018,wen_uncertainty_2020,bisbo_efficient_2020,xie_uncertainty-aware_2023}. The Bayesian probabilistic framework is a rigorous, systematic approach to quantify probability distribution functions on model parameters and credibility intervals on model predictions, enabling robust and reliable parameter optimization and model selection \cite{gelman_bayesian_1995,shahriari_taking_2016}. Interest in Bayesian methods and uncertainty quantification for molecular simulation has surged \cite{musil_fast_2019,cailliez_bayesian_2020,vandermause_--fly_2020,kofinger_empirical_2021,vandermause_active_2022,blasius_uncertainty_2022} due to its flexible and reliable estimation of uncertainty, ability to identify weaknesses or missing physics in molecular models, and systematically quantify the credibility of simulation predictions. Additionally, standard inverse methods including relative entropy minimization, iterative Boltzmann inversion, and force matching have been shown to be approximations to a more general Bayesian field theory \cite{lemm_bayesian_2003}.

The biggest problem plaguing Bayesian inference is its massive computational cost. The two major pinch points are (1) sampling in high-dimensional spaces, commonly known as the "curse of dimensionality", and (2) the large number of model evaluations required to get accurate uncertainty estimates. In computational chemistry, these bottlenecks are magnified since these models are typically expensive. Therefore, rigorous and accurate uncertainty estimation is challenging, or even impossible, without accelerating the simulation prediction time. One way to achieve this speed-up is by approximating simulation outputs with an inexpensive machine learning model. These so-called surrogate models have been developed from neural networks \cite{wen_uncertainty_2020,li_rapid_2023}, polynomial chaos expansions \cite{ghanem_stochastic_2003,jacobson_how_2014}, configuration-sampling-based methods \cite{messerly_configuration-sampling-based_2018} and Gaussian processes \cite{rasmussen_gaussian_2006,nguyen-tuong_model_2009,burn_creating_2020}. 

Gaussian processes (GPs) are a compelling choice as surrogate models thanks to several distinct advantages. GPs are non-parametric, kernel-based function approximators that can interpolate function values in high-dimensional input spaces. GPs with an appropriately selected kernel also have analytical derivatives and Fourier transforms, making them well-suited for physical quantities such as potential energy surfaces \cite{dai_interpolation_2020,yang_local_2023}. Additionally, kernels can encode physics-informed prior knowledge, alleviating the "black box" nature inherent to many machine learning algorithms. In fact, a comparison of various nonlinear regressors for molecular representations of ground-state electronic properties in organic molecules demonstrated that kernel regressors drastically outperformed other techniques, including convolutional graph neural networks \cite{faber_prediction_2017}. 

Perhaps the most widely adopted application of GP surrogate models in computational chemistry is for model optimization. In the last decade, GP surrogates of simple thermophysical properties including density, heat of vaporization, enthalpy, diffusivity and pressure have been used for force field design 
\cite{angelikopoulos_bayesian_2012,cailliez_calibration_2014,kulakova_data_2017,befort_machine_2021,cmadin_using_2023,wang_machine_2023}. However, to our knowledge there are no Bayesian optimization studies that apply GP surrogate models to thermophysical properties with many independent variables, such as structural correlation functions or electromagnetic spectra. In this work, independent variables (IVs) are defined as the fixed quantities over which a measurement is made (\textit{e.g.} frequencies along a spectrum or radial positions along a radial distribution function) and the outcomes of those measurements are referred to as quantities-of-interest (QoIs). 

Measurements of complex QoIs with many IVs are often available or easily obtained, yet are rarely included as observations in Bayesian optimization of molecular models. One reason why this may be the case is that previous literature has not outlined accurate and robust approaches to design Gaussian process surrogates for such data. For example, Angelikopoulous and coworkers did not use GP surrogate models for their Bayesian analysis on the radial distribution function (RDF) of liquid Ar \cite{angelikopoulos_bayesian_2012}, despite the fact that doing so would significantly reduce computation time. It is likely that GPs have not been previously used for complex QoIs due to high training and evaluation costs. Specifically, GPs have a cubic time-complexity in the number of IVs, which quickly becomes prohibitively expensive as experimental measurements obtain higher ranges and resolutions. 

Local Gaussian processes (LGPs) are an emerging class of accelerated GP methods that are well-equipped to handle large sets of experimental data. These so-called "greedy" Gaussian process approximations are constructed by separating a GP into a subset of GPs trained at distinct locations in the input space \cite{nguyen-tuong_model_2009,das_block-gp_2010,park_patchwork_2018,terry_splitting_2021}. Computation on the LGP subset scales linearly with the number of IVs, is trivially parallelizable, and easily implemented in high-performance computing (HPC) architectures \cite{gramacy_local_2015,broad_parallel_2023}. State-of-the-art LGP models have been used to design Gaussian approximation potentials (GAPs) \cite{deringer_gaussian_2021}, a type of machine learning potential used to study atomic \cite{caro_growth_2018,deringer_realistic_2018,deringer_towards_2018} and electron structures \cite{cheng_evidence_2020,deringer_gaussian_2021}, as well as nuclear magnetic resonance chemical shifts \cite{paruzzo_chemical_2018} with uncertainty quantification \cite{musil_fast_2019}. However, to our knowledge LGPs have not been applied as surrogate models for UQ/P on complex experimental data in computational chemistry.

In this study, we detail a simple and effective surrogate modeling approach for complex experimental observables common in physical chemistry. LGPs unlock the capability for existing Bayesian optimization schemes to incorporate complex data efficiently and accurately at a previously inaccessible computational scale. The key feature of the LGP surrogate model is the reduction in time-complexity with respect to the number of QoIs from cubic to linear, resulting in orders-of-magnitude speed-ups to evaluate complex observable surrogate models and perform posterior estimation. The computational speed-up results from reducing the dimensionality of matrix operations and therefore enables Bayesian UQ/P on experimental data with many IVs. For illustration, consider that a typical Fourier transformed infrared spectroscopy (FT-IR) measurement may contain data between 4000-400 cm$^{-1}$ at a resolution of 2 cm$^{-1}$, giving a total number of QoIs around $\eta = 1800$. According to the time-complexity scaling in $\eta$, a LGP is estimated to accelerate this computation compared to a standard GP by approximately 3,240,000x. Source code and a tutorial on building LGP surrogate models is provided on GitHub.

To demonstrate the method, we trained a LGP surrogate model on the RDF of the ($\lambda$-6) Mie fluid and performed Bayesian optimization to fit the parameters of the Mie fluid model to a neutron scattering derived RDF for liquid neon (Ne). The LGP was found to accelerate the $\eta=73$ independent variable surrogate model calculation approximately 1,760,000x faster than molecular dynamics (MD) and 2100x faster than a conventional GP with accuracy comparable to the uncertainty in the reported experimental data. Bayesian posterior distributions were then calculated with Markov chain Monte Carlo (MCMC) and used to draw conclusions on model behavior, uncertainty, and adequacy. Surprisingly, we find evidence that Bayesian inference conditioned on the radial distribution function significantly constrains the ($\lambda$-6) Mie parameter space, highlighting opportunities to improve force field optimization and design based on neutron scattering experiments.      

\subsection{Computational Methods}

In the following sections, an outline of standard approaches for Bayesian inference and surrogate modeling with Gaussian processes is presented. Then, we describe the local Gaussian process approximation and highlight key differences in their implementation and computational scaling.  

\subsubsection{Bayesian Inference}

Bayes' law, derived from the definition of conditional probability, is a formal statement of revising one's prior beliefs based on new observations. Bayes' theorem for a given model, set of model input parameters, $\boldsymbol{\theta}$, and set of experimental QoIs, $\mathbf{y}$, is expressed as,  

\begin{equation}\label{eq:bayes}
    p(\boldsymbol{\theta}|\mathbf{y}) \propto p(\mathbf{y}|\boldsymbol{\theta}) p(\boldsymbol{\theta})
\end{equation}

\noindent where $p(\boldsymbol{\theta})$ is the 'prior' probability distribution over the model parameters, $p(\mathbf{y}|\boldsymbol{\theta})$ is the 'likelihood' of observing $\mathbf{y}$ given parameters $\boldsymbol{\theta}$, and $p(\boldsymbol{\theta}|\mathbf{y})$ is the 'posterior' probability that the underlying parameter $\boldsymbol{\theta}$ models or explains the observation $\mathbf{y}$. Equality holds in Eq. \eqref{eq:bayes} if the right-hand-side is normalized by the 'marginal likelihood', $p(\mathbf{y})$, but including this term explicitly is unnecessary since the posterior probability distribution can be normalized \textit{post hoc}. In molecular simulations, $\boldsymbol{\theta}$ is the set of unknown parameters in the selected model, usually the force field parameters in the Hamiltonian, to the experimental QoI that the simulation estimates. The observations, $\mathbf{y}$, can be any QoI or combination of QoIs (\textit{e.g.} RDFs, spectra, densities, diffusivities, etc). This construction, known as the standard Bayesian scheme, is generalizable to any physical model and its corresponding parameters including density functional theory (DFT), \textit{ab initio} molecular dynamics (AIMD), and path integral molecular dynamics (PIMD).

Calculating the posterior distribution then just requires prescription of prior distributions on the model input parameters and evaluation of the likelihood function. In this work, Gaussian distributions are used for both the prior and likelihood functions, which is a standard choice according to the central limit theorem. The Gaussian likelihood has the form,

\begin{equation}\label{eq:likelihood}
    p(\mathbf{y}|\boldsymbol{\theta}) = \bigg(\frac{1}{\sqrt{2 \pi}\sigma_n}\bigg)^\eta \exp\bigg[-\frac{1}{2\sigma_n^2}\sum_{i=1}^{\eta}\ [\mathbf{y}_{\boldsymbol{\theta}} - \mathbf{y}]^2\bigg]
\end{equation}

\noindent where $\eta$ is the number of observables in $\mathbf{y}$, $\mathbf{y}_{\boldsymbol{\theta}}$ is the model predicted observables at model input $\boldsymbol{\theta}$, and $\sigma_n$ is a nuisance parameter describing the unknown variance of the Gaussian likelihood. Cailliez and coworkers choose the nuisance parameter as the sum of simulation and experiment variances ($\sigma_n^2 \approx \sigma_{sim}^2 + \sigma_{exp}^2$) \cite{cailliez_calibration_2014}; however, if these variances are unknown or one wishes to explore the distribution of variances, the nuisance parameter can be inferred via the Bayesian inference. Hence, the resulting posterior distribution on the nuisance parameter includes the unknown uncertainty arising due to the sum of the model and the experimental variances. In this work, the nuisance parameter is treated as an unknown to be inferred along with the explicit model parameters. Note that in some cases a different likelihood function may be more appropriate based on physics-informed prior knowledge of the distribution of the observable of interest (\textit{e.g.} the multinomial likelihood in relative entropy minimization between canonical ensembles \cite{shell_relative_2008}).

The computationally expensive part of calculating Eq. \ref{eq:likelihood} is determining $\mathbf{y}_{\boldsymbol{\theta}}$ at a sufficient number of points in the parameter space. Generally, this can be achieved by calculating $\mathbf{y}_{\boldsymbol{\theta}}$ at dense, equally spaced points in the parameter space of interest (grid method), sampling the parameter space with Markov chain Monte Carlo (MCMC) to estimate the posterior with a histogram (approximate sampling method), or assuming that the posterior distribution has a specific functional form (\textit{i.e.} Laplace approximation). Regardless of the selected method, each of these posterior distribution characterization techniques require a prohibitive number of molecular simulations to adequately sample the parameter space (often on the order of $10^5-10^6$), which is infeasible for even modest sized molecular systems.

\subsubsection{Gaussian Process Surrogate Models}

Gaussian processes accelerate the Bayesian likelihood evaluation by approximating $\mathbf{y}_{\boldsymbol{\theta}}$ with an inexpensive matrix calculation. A Gaussian process is a stochastic process such that every finite set of random variables (position, time, etc) has a multivariate normal distribution \cite{rasmussen_gaussian_2006}. The joint distribution over all random variables in the system therefore defines a functional probability distribution. The expectation of this distribution maps a set of model parameters, $\boldsymbol{\theta}^*$, and IVs, $\mathbf{r}$, to the most probable QoI given the model parameters, $S(\mathbf{r} | \boldsymbol{\theta}^*)$, such that,

\begin{equation}
    \mathbb{E}[GP] : \boldsymbol{\theta}^* \times \mathbf{r} \mapsto S(\mathbf{r} | \boldsymbol{\theta}^*)
\end{equation}

\noindent where the expectation operator is written in terms of a kernel matrix, $\mathbf{K}$, training set parameter matrix, $\mathbf{\hat{X}}$, and training set output matrix, $\mathbf{\hat{Y}}$, according to the equation,

\begin{equation}\label{eq:surrogate}
    \mathbb{E}[\textit{GP}(\boldsymbol{\theta}^*, \mathbf{r})] = \mathbf{K}_{(\boldsymbol{\theta}^*, \mathbf{r}),\mathbf{\hat{X}}} [\mathbf{K}_{\mathbf{\hat{X}}, \mathbf{\hat{X}}} + \sigma_{noise}^2 \mathbf{I}]^{-1} \mathbf{\hat{Y}}
\end{equation}

\noindent where $\sigma_{noise}^2$ is the variance due to noise and $\mathbf{I}$ is the identity matrix. Note that in general the IVs, $\mathbf{r}$, can be multidimensional. As an example, consider the case a GP maps a set of force field parameters to the angular RDF of a liquid. We now have a 2-dimensional space of IVs since the angular RDF gives the atomic density along the radial and angular dimensions. In the following mathematical development, it is assumed that the QoI is 1-dimensional for sake of convenience and note that extending the method to higher-dimensional observables just requires redefining the IVs in accordance with Eq. \eqref{eq:surrogate}.

The kernel matrix, $\mathbf{K}$, quantifies the relatedness between input parameters and can be selected based on prior knowledge of the physical system. A standard kernel for physics-based applications is the squared-exponential (or radial basis function) since the resulting GP is infinitely differentiable, smooth, continuous, and has an analytical Fourier transform \cite{ambrogioni_integral_2018}. The squared-exponential kernel function between input points $(\boldsymbol{\theta}_m,
r_m)$ and $(\boldsymbol{\theta}_n, r_n)$ is given by,

\begin{equation}\label{eq:kernel}
    K_{mn} = \alpha^2 \exp\bigg(-\frac{(r_{m} - r_{n})^2}{2\ell_{r}^2} - \sum_{o=1}^{\text{dim}(\boldsymbol{\theta})}\frac{(\theta_{o,m} - \theta_{o,n})^2}{2\ell_{\theta_o}^2} \bigg)
\end{equation}

\noindent where $o$ indexes over dim($\boldsymbol{\theta}$) and the hyperparameters $\alpha^2$ and $\ell_A$ are the kernel variance and correlation length scale of parameter $A$, respectively. Hyperparameter optimization can be performed by log marginal likelihood maximization, $k$-fold cross validation \cite{rasmussen_gaussian_2006} or marginalization with an integrated acquisition function \cite{snoek_practical_2012}, but can be computationally expensive and is usually avoided if accurate estimates of the hyperparameters can be made from prior knowledge of the chemical system.

To train a standard GP surrogate model, $N$ training samples are generated in the input parameter space and a molecular simulation is performed for each training set sample to calculate $N$ predictions over the number of target QoIs, $\eta$. The training set, $\mathbf{\hat{X}}$, is then a ($N\eta$ $\times$ dim($\boldsymbol{\theta}$) + 1) matrix of the following form,

\begin{equation}
    \mathbf{\hat{X}} = 
        \begin{bmatrix}
        \theta_{1,1} & \theta_{2,1} & \hdots & r_1\\
        \theta_{1,1} &  \theta_{2,1} & \hdots & r_2\\
        \vdots & \vdots & \vdots & \vdots\\
        \theta_{1,1} & \theta_{2,1} & \hdots & r_\eta\\
        \theta_{1,2} & \theta_{2,2} & \hdots & r_1\\
        \vdots & \vdots & \vdots & \vdots\\
        \theta_{1,N} & \theta_{2,N} & \hdots & r_\eta\\
        \end{bmatrix}
\end{equation}

\noindent where the $\theta_{i,j}$ are the $i^{th}$ model parameter for sample index $j$ and $r_k$ are the IVs of the target QoI. Note that the training sample index, $j = 1,...,N$, is updated in the model parameters only after $\eta$ rows spanning the domain of the observable, giving $N\eta$ total rows. Therefore, the training set matrix represents all possible combinations of the training parameters in the $\boldsymbol{\theta}$ parameter input space.     
The training set observations, $\mathbf{\hat{Y}}$, are a ($N\eta$ $\times$ 1) column vector of the observable outputs from the training set,

\begin{equation}
    \mathbf{\hat{Y}} = [S(\boldsymbol{\theta}_1,r_1), ..., S(\boldsymbol{\theta}_1,r_\eta), S(\boldsymbol{\theta}_2,r_1), ..., S(\boldsymbol{\theta}_N,r_\eta)]^T
\end{equation}

\noindent where $S(\boldsymbol{\theta}_j,r_k) = y(\boldsymbol{\theta}_j,r_k) - \mu_{GP}^{prior}(\boldsymbol{\theta}_j,r_k)$ is the difference between the training set observation of model parameters $\theta_j$ at IV $r_k$ and a GP prior mean function. Of course, the GP prior mean, $\boldsymbol{\mu}_{GP}^{prior}$, is the same shape as the training set observations matrix,

\begin{equation}
    \boldsymbol{\mu}_{GP}^{prior} := [\mu(\boldsymbol{\theta}_1,r_1), ..., \mu(\boldsymbol{\theta}_1,r_\eta), \mu(\boldsymbol{\theta}_2,r_1), ..., \mu(\boldsymbol{\theta}_N,r_\eta)]^T
\end{equation}

\noindent where $\mu(\boldsymbol{\theta}_j,r_k)$ is the GP prior mean for parameter set $\boldsymbol{\theta}_j$ at $r_k$. Note that the selection of a prior mean can impact the quality of fit of the GP surrogate model and should reflect physically justified prior knowledge of the physical system. 

Conceptually, since a Gaussian process is a Bayesian model, the prior serves as a current state of knowledge that can encode an initial guess for the QoI before the GP sees any training data. The subtraction of the GP prior mean from the model output effectively shifts the QoI by this pre-specified mean function. Hence, the GP is trained on these mean shifted observations rather than the observations themselves. Although shifting the data by another function seems like it shouldn't change the ability of the GP to estimate the QoI, it actually can have an important impact on the stochastic properties of the data as a function of the IVs. By construction, GPs are stationary, meaning that the means, variances, and covariances are assumed to be equal along all QoI. But for complex data, this is often not the case. For example, it is known that the RDF is zero for small $r$ values and has asymptotic tailing behavior to unity at long-range. The GP prior mean effectively shifts this non-stationary data and makes it behave as if it were stationary by removing any $r$ dependencies.

The expectation of the GP for a new set of parameters, $S^*(\mathbf{r}|\boldsymbol{\theta}^*)$, is then a ($\eta$ x 1) column vector calculated with Eq. \eqref{eq:surrogate},

\begin{equation}
    S^*(\mathbf{r}|\boldsymbol{\theta}^*) = [S^*(r_1|\boldsymbol{\theta}^*), ..., S^*(r_\eta|\boldsymbol{\theta}^*)]^T
\end{equation}

\noindent where $S^*(\mathbf{r}|\boldsymbol{\theta}^*)$ is the most probable difference function between the model and GP prior mean. Hence, to obtain a comparison to the experimental QoI you simply add the GP prior mean at $\boldsymbol{\theta}^*$, $\boldsymbol{\mu}_{GP}^{*,prior}(\mathbf{r}|\boldsymbol{\theta}^*)$, back to $S^*(\boldsymbol{\theta}^*, \mathbf{r})$.

The GP expectation calculation is burdened by the inversion of the training-training kernel matrix with $\mathcal{O}(N^3 \eta^3)$ time complexity and the ($\eta$ $\times$ $N\eta$) $\times$ ($N\eta$ $\times$ $N\eta$) $\times$ ($N\eta$ $\times$ 1) matrix product with $\mathcal{O}(N^2 \eta^3)$ time complexity. Note that these estimates are for naive matrix multiplication. Regardless, the cubic scaling in $\eta$ dominates the time-complexity for observables with many QoIs. For example, to build a GP surrogate model for the density of a noble gas ($\eta = 1$) with Lennard-Jones interactions (dim($\boldsymbol{\theta}$) = 2) would give a training set matrix of ($2N \times 3)$. Similarly, a surrogate model for an infrared spectrum of water from 600-4000 cm$^{-1}$ at a resolution of $4$ cm$^{-1}$ ($\eta = 850$) estimated with a 3 point water model of Lennard-Jones type interactions (dim($\boldsymbol{\theta}$) = 6) would generate a training set matrix of size (850$N$ $\times$ 7). Clearly, the complexity of the output QoI causes a significant increase in the computational cost of the matrix operations.

\subsubsection{The Local Gaussian Process Surrogate Model}

The time-complexity of the training-kernel matrix inversion and the matrix product can be substantially reduced by fragmenting the full Gaussian process of Eq. \eqref{eq:surrogate} into $\eta$ Gaussian processes. This method is also referred to as the subset of regressors approximation \cite{silverman_aspects_1985} and is considered a "greedy" approximation \cite{rasmussen_gaussian_2006}. Under this construction, an individual $GP_k$ is trained to map a set of model parameters to an individual QoI,

\begin{equation}
    \mathbb{E}[GP_k] : \boldsymbol{\theta} \mapsto S(r_k)
\end{equation}

\noindent where $\mathbf{r}$ is no longer an input parameter. The training set matrix, $\mathbf{\hat{X'}}$, is now a ($N$ $\times$ dim($\boldsymbol{\theta}$)) matrix, 

\begin{equation}\label{eq:subsurrogate_training}
    \mathbf{\hat{X'}} = 
        \begin{bmatrix}
        \theta_{1,1} & \theta_{2,1} & \hdots\\
        \theta_{1,2} & \theta_{2,2} & \hdots\\
        \vdots & \vdots & \vdots\\
        \theta_{1,N} & \theta_{2,N} & \hdots\\
        \end{bmatrix}
\end{equation}

\noindent while the training set observations, $\mathbf{\hat{Y'}}_k$, is a ($N$ $\times$ 1) column vector of the QoIs from the training set at $r_k$,

\begin{equation}\label{eq:subsurrogate_observation}
    \mathbf{\hat{Y}'}_k = [S(\boldsymbol{\theta}_1,r_k), ..., S(\boldsymbol{\theta}_N,r_k)]^T
\end{equation}

\noindent where $S(\boldsymbol{\theta}_j,r_k) = y(\boldsymbol{\theta}_j,r_k) - \boldsymbol{\mu}_{LGP, k}^{prior}(r_k)$ and $k$ indexes over IVs. The LGP prior mean $\boldsymbol{\mu}_{LGP, k}^{prior}(r_k)$ is now,

\begin{equation}\label{eq:subsurrogate_mean}
    \boldsymbol{\mu}_{LGP, k}^{prior} := [\mu(\boldsymbol{\theta}_1,r_k), ..., \mu(\boldsymbol{\theta}_N,r_k)]^T
\end{equation}

\noindent such that $\mu(\boldsymbol{\theta}_j,r_k)$ is the GP prior mean for parameter $\boldsymbol{\theta}_j$ at $r_k$. The squared-exponential kernel function is now,

\begin{equation}\label{eq:LGPkernel}
    K_{mn} = \alpha^2 \exp\bigg(- \sum_{o=1}^{\text{dim}(\boldsymbol{\theta})}\frac{(\theta_{o,m} - \theta_{o,n})^2}{2\ell_{\theta_o}^2} \bigg).
\end{equation}

\noindent The LGP surrogate model expectation for the observable at $r_k$, at a new set of parameters, $\boldsymbol{\theta}^*$, is just the expectation of the $k^{th}$ Gaussian process given the training set data,

\begin{equation}\label{eq:subsurrogate}
    S_{loc}^*(r_k|\boldsymbol{\theta}^*) = \mathbb{E}[\textit{GP}_k(\boldsymbol{\theta}^*)] = \mathbf{K}_{\boldsymbol{\theta}^*,\mathbf{\hat{X'}}} [\mathbf{K}_{\mathbf{\hat{X'}}, \mathbf{\hat{X'}}} + \sigma_{noise}^2 \mathbf{I}]^{-1} \mathbf{\hat{Y'}}_k.
\end{equation}

\noindent We then just combine the local results from the subset of $\eta$ GPs to obtain a prediction for the difference between the model and LGP prior mean,

\begin{equation}\label{eq:subQOI}
    S_{loc}^*(\mathbf{r}|\boldsymbol{\theta}^*) = [S_{loc}^*(r_1|\boldsymbol{\theta}^*), ..., S_{loc}^*(r_\eta|\boldsymbol{\theta}^*)]^T.
\end{equation}

\noindent and subsequently add back the LGP prior mean to obtain the estimated QoI, $y_{loc}^*(\mathbf{r}|\boldsymbol{\theta}^*) = S_{loc}^*(\mathbf{r}|\boldsymbol{\theta}^*) + \boldsymbol{\mu}_{LGP, k}^{prior}(\boldsymbol{\theta}^*,\mathbf{r})$.

By reducing the dimensionality of the relevant matrices, the time complexity of the matrix calculations are drastically reduced compared to a standard GP. The single step inversion of the training-training kernel matrix is now of $\mathcal{O}(N^3)$ time complexity while the $\eta$ step (1 $\times$ $N$) $\times$ ($N$ $\times$ $N$) $\times$ ($N$ $\times$ 1) matrix products are reduced to $\mathcal{O}(N^2 \eta)$ time complexity. If the number of training samples, $N$, the number of IVs, $\eta$, and the number of model evaluations, $G$, are equal between the full and LGP algorithms, then a LGP approximation reduces the evaluation time complexity in a standard GP from cubic-scaling, $\eta^3$, to embarrassingly parallelizable linear-scaling, $\eta$.

In summary, a local Gaussian process is an approximation in which the QoIs are modeled as independent random variables, each described by their own Gaussian process. This amounts to assuming that the random variables are stochastically independent. For time-independent data including scattering measurements and spectroscopy, this approximation is appropriate since each observation is an independent measurement at each independent variable. Finally, it is well-established that low rank approximations of Gaussian processes can compromise the accuracy of the estimated uncertainty, so the use of LGP regressors should be carefully scrutinized based on the risk/consequences of misrepresenting the resulting functional distributions.

Complex experimental observables can be reconstructed by this set of LGPs through a series of relatively straightforward matrix operations with linear time-complexity in the number of IVs. Furthermore, the LGP has all of the primary advantages of Bayesian methods, including built-in UQ and analytical derivatives and Fourier transforms. In the following section, we demonstrate the computational enhancement and accuracy of the LGP approach by modeling the RDF of neon at 42$K$. The LGP surrogate model is then implemented within a Bayesian framework to exemplify the power of UQ/P for molecular modeling.  

\subsection{A Local Gaussian Process Surrogate for the RDF of Liquid Ne}

To explore the computational advantages of LGP surrogate models for Bayesian inference, we studied the experimental RDF of liquid Ne \cite{bellissent-funel_neutron_1992} under a ($\lambda$-6) Mie fluid model. The ($\lambda$-6) Mie force field is a flexible Lennard-Jones type potential with variable repulsive exponent, 

\begin{equation}
    v^{Mie}_2(r) = \frac{\lambda}{\lambda-6}\bigg(\frac{\lambda}{6}\bigg)^{\frac{6}{\lambda-6}} \epsilon \bigg[ \bigg(\frac{\sigma}{r}\bigg)^\lambda - \bigg(\frac{\sigma}{r}\bigg)^6 \bigg]
\end{equation}

\noindent where $\lambda$ is the short-range repulsion exponent, $\sigma$ is the collision diameter (\AA), and $\epsilon$ is the dispersion energy (kcal/mol) \cite{mie_zur_1903}.  

MD simulations were performed from a Sobol sampled set spanning a prior range based on existing force field models \cite{vrabec_set_2001,mick_optimized_2015,shanks_transferable_2022} ($\lambda = [6.1,18]$, $\sigma = [0.88, 3.32]$, and $\epsilon = [0, 0.136]$) to generate a RDF training set matrix of the form in Eq. \ref{eq:subsurrogate_training}. Prior parameter ranges were selected so that training samples were restricted to the liquid regime of the ($\lambda$-6) Mie phase diagram \cite{widom_new_1970,ramrattan_corresponding-states_2015}. A sequential sampling approach was used in which we Sobol sample the prior range of parameters, calculate the training sample with the best-fit to the experimental data (lowest root mean squared error), center the new space on this training sample, and then narrow the sample range around this center point by a user selected ratio $\gamma$. This procedure was repeated three times with 320 samples per round (960 total training simulations) with $\gamma = 0.8$. This ratio was selected so that the final range would span >3 standard deviations of the posterior distributions estimated in prior literature \cite{angelikopoulos_bayesian_2012,mick_optimized_2015}. Subsequently, 320 test simulations were randomly sampled from the final range and used to determine whether or not the surrogate model provides accurate model predictions. A visualization of this procedure is provided in the Supporting Information. 

The number of observed points $\eta$ in the radial distribution function was calculated by dividing the reported $r_{max} - r_{min} \approx 15.3$ by the effective $r$-space resolution given by, $\Delta r = \pi/Q_{max}$, where $\Delta r = 0.21$ \AA for reported $Q_{max} = 15$ \AA$^{-1}$. This relation indicates that the appropriate number of observed independent $r$-values in the RDF is $\eta = 73$.

The training set matrix and training observation matrix were then constructed from the 960 training samples according to eqs \eqref{eq:subsurrogate_training} and \eqref{eq:subsurrogate_observation}, respectively. As a prior mean, we selected the RDF determined analytically from the dilute limit potential of mean force (PMF),

\begin{equation}
    \boldsymbol{\mu}_{PMF, k}^{prior}(\boldsymbol{\theta}_j,r_k) := g(\boldsymbol{\theta}_j,r_k) = \exp{[-\beta V(\boldsymbol{\theta}_j,r_k)]}
\end{equation}

\noindent where $g(\boldsymbol{\theta}_j,r_k)$ and $V(\boldsymbol{\theta}_j,r_k)$ are the analytical dilute limit RDF and ($\lambda$-6) Mie potential for parameters $\boldsymbol{\theta}_j$ at $r_k$, respectively. A PMF prior mean yields physically realistic short-range ($g(r) = 0$) and long-range behavior ($g(r) \to 1$). The PMF prior had improved RMSE compared to an ideal gas prior ($\forall r \in \mathbb{R}_0^+$, $g(r) = 1$), but this difference did not significantly impact the Bayesian posterior estimate (see Supporting Information). Finally, LGP hyperparameter optimization was performed using brute force to minimize the LOO error \cite{sundararajan_predictive_2001} over the training set.

Quantitative analysis of model sensitivity can be performed with probabilistic derivatives of the QoI with respect to model parameters (see Supporting Information) and subsequently related to temperature derivatives of radial distribution functions \cite{piskulich_temperature_2020}.

\subsubsection{Computational Efficiency and Accuracy}

Now that we have constructed the training set matrix, we simply evaluate the expectation at each $r_k$ according to Eq. \eqref{eq:subsurrogate} and combine the results into a single array as in Eq. \eqref{eq:subQOI}. The average computational time to invert the training set matrix and evaluate the surrogate model for both a standard GP and LGP are shown in Table \ref{tab:speed}. The LGP surrogate accelerates the RDF evaluation time compared to molecular dynamics by a factor of 1,700,000 for the $\eta = 73$ independent variable QoI with 960 training simulations. This 6 orders-of-magnitude speed-up beats a standard GP by 3 orders-of-magnitude (2141x). With respect to the training-training kernel matrix inversion, the LGP wins out on the standard GP by a factor of 31,565.

\begin{table}
\centering
\caption{Average relative time and speed-up to QoI evaluation and training set matrix inversion for a standard and local Gaussian process for 960 training samples and a RDF with $\eta = 73$ points.}
\begin{tabular}{| l | c | c | c | c|}
\hline
\textrm{Model}&
\textrm{QoI Eval. Time (s)}&
\textrm{Speed Up ($t/t_{sim}$)}&
\textrm{Inv. Time (s)}\\
\hline
Simulation  & 1,251  & 1      & - \\
GP          & 1.52   & 822     & 355\\
LGP         & 0.0007 & 1,760,267 & 0.01\\
\hline
\end{tabular}
\label{tab:speed}
\end{table}

In summary, the LGP significantly accelerates both computational bottlenecks for Gaussian process surrogate modeling; namely, the training set matrix inversion and surrogate model evaluation time. Of course, the exact speed-ups depend on numerous factors including the number of IVs $\eta$, the number of training samples used to construct the training set matrix $N$, the level of code parallelization, and hyperparameter optimization procedure. Which step is rate limiting depends on the surrogate modeling application. For instance, if the surrogate model doesn't need to be evaluated a large number of times, the training set generation, matrix inversion and hyperparameter optimization will be the rate limiting steps. On the other hand, applications that require a large number of model evaluations, such as uncertainty quantification and propagation, result in the surrogate model evaluation time being rate limiting. Typically, designing a surrogate model is only necessary in the latter case.  

Clearly the LGP is fast, but is it accurate? In other words, does the LGP provide QoI predictions that are within a reasonable level of accuracy to serve as a true surrogate model for the molecular dynamics predictions? To evaluate the accuracy of the local predictions, a test set of 320 ($\lambda$-6) Mie parameters was randomly sampled from the final range of the sequential sampling method (see Supporting Information) and the RMSE computed between simulated and LGP predicted radial distribution functions along all radial positions, $r$. The results are summarized in Figure \ref{fig:rmse}.

\begin{figure}
    \centering
    \includegraphics[width = 14 cm]{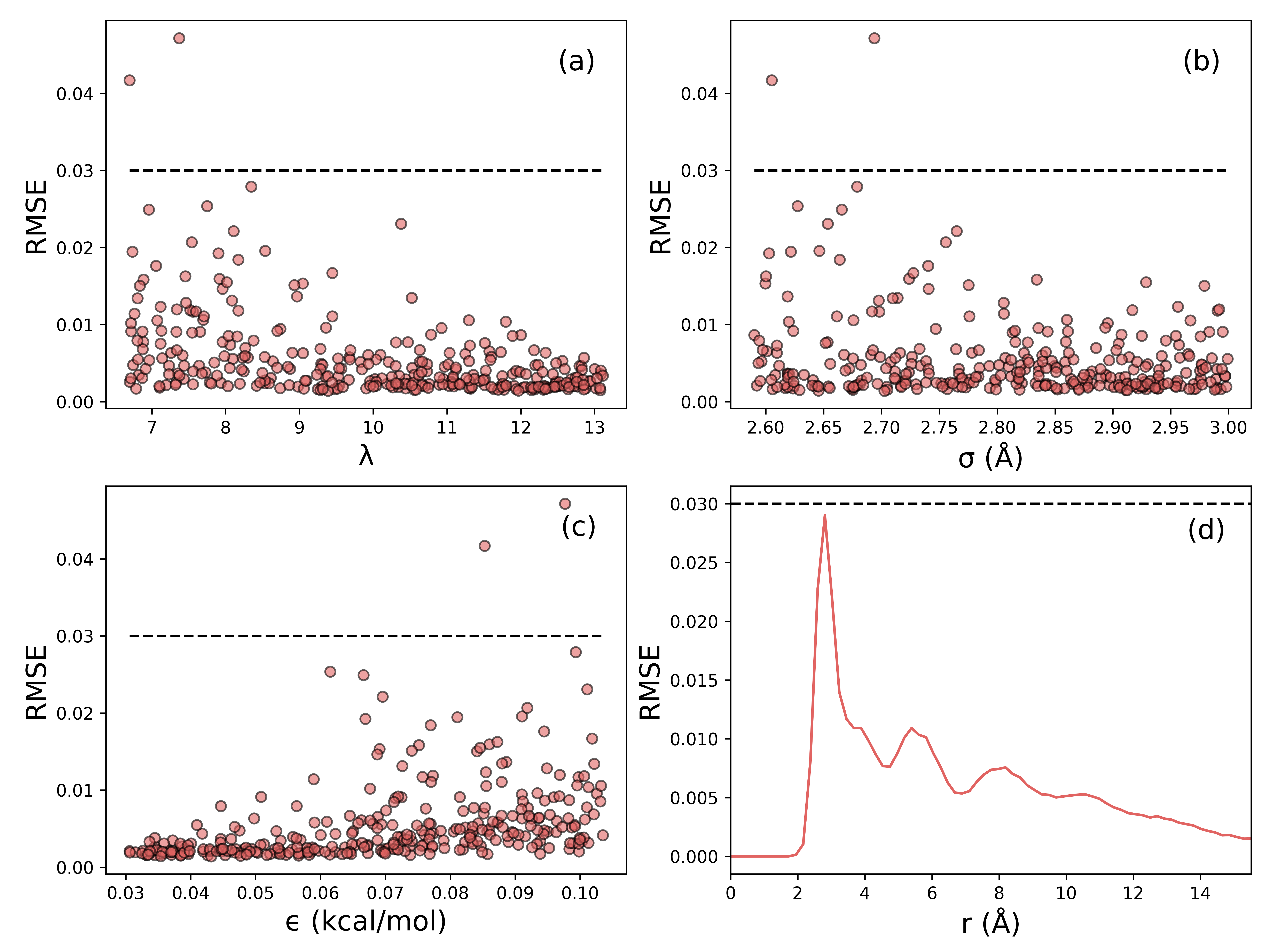}
    \caption{Visualization of LGP surrogate model testing and validation. (a)-(c) Test set samples over each parameter plotted against the RMSE between simulated and LGP data. (d) Average RMSE over the 320 test set samples as a function of $r$. The dashed line represents the reported error from the experiment.}
    \label{fig:rmse}
\end{figure}

\noindent The RMSE for all radial positions is less than 0.03, which is excellent considering that this error is smaller than the reported experimental uncertainty ($\sim$0.03). Of course, the acceptable RMSE over the QoI is user-defined and largely subjective based on the surrogate model application, but can be improved with additional training and hyperparameter optimization if necessary (an example is included in the Supporting Information). 

\subsubsection{Learning from the Ne RDF Surrogate Model with Bayesian Analysis}

Our fast and accurate LGP surrogate model now allows us to explore the underlying probability distributions on the ($\lambda$-6) Mie parameter space. This example is provided to show how one can use Bayesian analysis to learn about correlations and relationships between model parameters as well as model adequacy. This analysis can provide robust insight into the nature of the model and provide quantifiable evidence for whether or not the model is appropriate for a target application.    
Bayesian inference yields a probability distribution function over the model parameters called the joint posterior probability distribution. The maximum of the joint posterior, referred to as the \textit{maximum a posteriori} (\textit{MAP}), represents the set of parameters with the highest probability of explaining the given experimental data. In force field design, the \textit{MAP} would be an appropriate choice for an optimal set of model parameters. However, the power of the Bayesian approach lies in the fact that, not only can we identify the optimal parameters, but we can also examine the probability distribution of the parameters around these optima. For instance, the width of the distribution provides evidence for how important a parameter in the model is for representing the target data. For a given parameter, a wide distribution indicates that the parameter has little influence on the model prediction. On the other hand, a narrow distribution indicates that the parameter is critical to the model prediction. Additionally, the joint posterior may exhibit multiple peaks, or modes. A multimodal joint posterior suggests that there are multiple sets of model parameters that reproduce the target data, which may be a symptom of model inadequacy. Finally, the symmetry of the distribution provides information on relationships and correlations between parameters, providing a framework to diagnose subtle relationships that may otherwise go unnoticed.

Usually, the joint posterior distribution is a high-dimensional quantity that cannot be visualized directly. However, we can visualize the joint posterior along one dimension by integrating out the contributions over all other parameters. The resulting distributions are called marginal distributions. Marginal distributions computed over the ($\lambda$-6) Mie potential parameters optimized to the RDF of liquid Ne are shown in Figure \ref{fig:posterior}.

\begin{figure}
    \centering
    \includegraphics[width = 15cm]{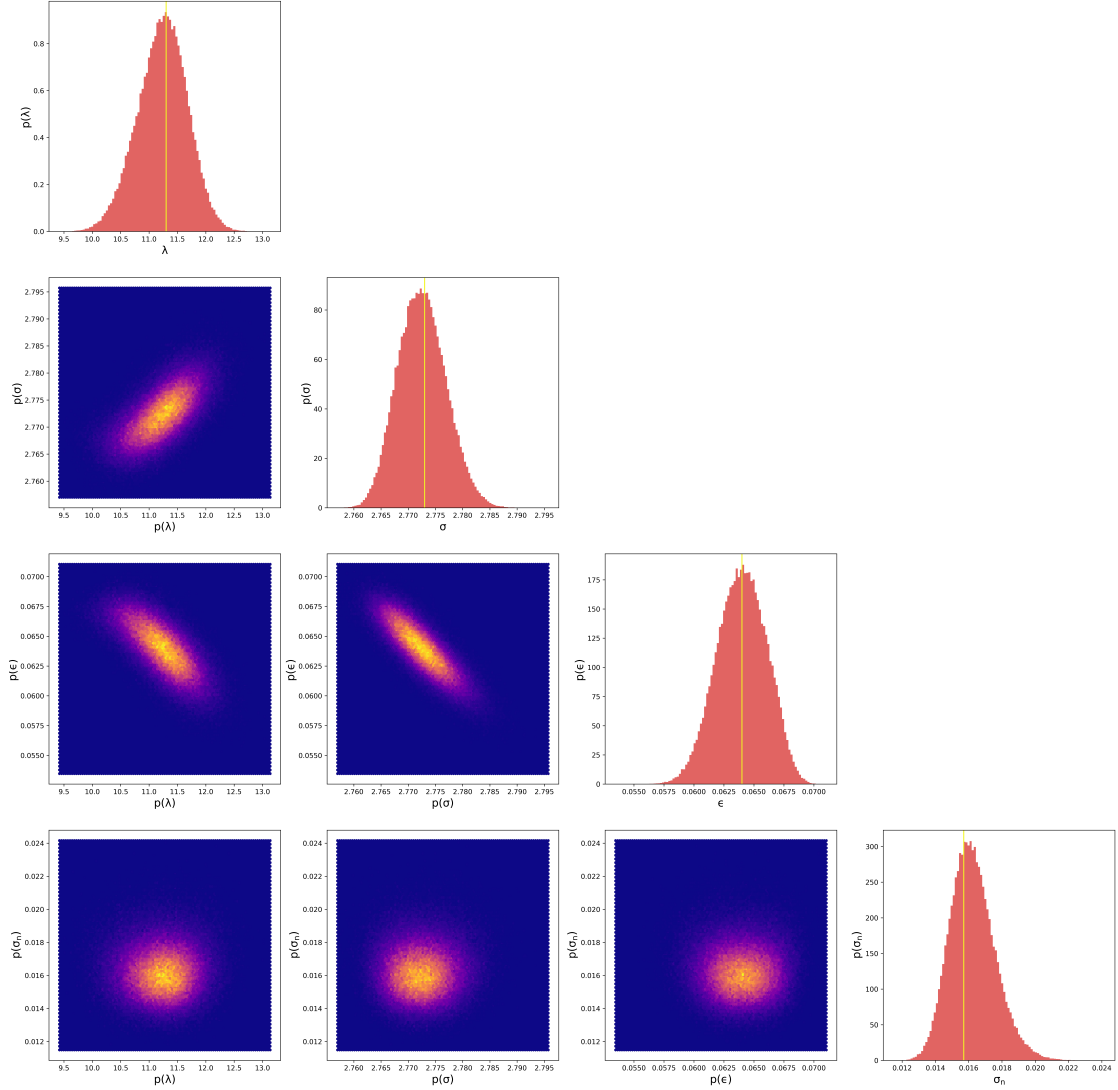}
    \caption{(Diagonal) 1D marginal distributions for the ($\lambda$-6) Mie fluid parameters. Prior distributions are not depicted since they are flat lines near 0 probability. Yellow vertical lines represent the \textit{maximum a posterior} (\textit{MAP}) estimate. (Off-Diagonal) 2D marginal histograms showing parameter correlations.}
    \label{fig:posterior}
\end{figure}

\noindent For each parameter, the resulting marginal posterior distributions are unimodal and symmetric. This result is not surprising in the context of recent results that show iterative Boltzmann inversion, which is a maximum likelihood approach to the structural inverse problem, is convex for Lennard-Jones type fluids \cite{hanke_well-posedness_2018}. Observing the 2D marginal distributions in Figure \ref{fig:posterior}, we can also see that each of the parameters are correlated to one other. For example, the negative correlation between $\sigma$ and $\epsilon$ suggests that increasing the size of the particle should be accompanied by a decrease in the effective particle attraction. Conceptually this makes sense, if the particles are larger, then they would need to have a weaker attractive force to give the same atomic structure. This result is consistent with Bayesian analysis on liquid Ar \cite{angelikopoulos_bayesian_2012}. The nuisance parameter distribution shows that the unknown standard deviation between the LGP surrogate model and the experimental data is around 0.016.

One surprising characteristic of the posterior distribution is that it is extremely narrow. Recall that narrow distributions indicate that the parameters are important, or have tight control, over the model quality-of-fit to the experimental data. From our Bayesian analysis, we can therefore confidently conclude that detailed interatomic force information is contained within the experimental RDF. This observation is in stark contrast to over 60 years of prior literature which has unanimously asserted that only the excluded volume or collision diameter can be ascertained from experimental scattering data \cite{clayton_neutron_1961,jovari_neutron_1999,hansen_theory_2013}. In fact, the Bayesian analysis shows that it is possible to determine values for $\lambda$, $\sigma$, and $\epsilon$ within $\pm 2$, $\pm 0.02 $ \AA, and $\pm 0.0075$ kcal/mol with 95\% certainty. This result leads to two important conclusions: (1) Scattering data can effectively constrain the force field model parameter space and (2) the data must be sufficiently accurate to do so. These results provide evidence that scattering data could be invaluable to inform accurate force fields, particularly for structure and self-assembly applications.

The joint posterior can also be used for model parameter selection given the experimental observation. Specifically, the optimal parameters are given by the \textit{MAP}, corresponding to the maximum of the joint posterior distribution. The \textit{MAP} is presented in Table \ref{tab:params} along with two other existing force fields for liquid Ne.

\begin{table}
\centering
\caption{Summary of ($\lambda-6)$ Mie potential parameters optimized for Ne. Values for the repulsive exponent parameter are rounded to the nearest integer.}
\begin{tabular}{| l | c | c | c | c | c |}
\hline
\textrm{Force Field}&
\textrm{QoI}&
\textrm{$\lambda$}&
\textrm{$\sigma$ (\AA)}&
\textrm{$\epsilon$ (kcal/mol)}\\
\hline
Mick (2015) & VLE & 11 & 2.794 & 0.064\\
SOPR (2022) & RDF & 11 & 2.778 & 0.063\\
This Work   & RDF & 11 & 2.773 & 0.064\\
\hline
\end{tabular}
\label{tab:params}
\end{table}

\noindent The estimated Mie parameters are in agreement with the Mick \cite{mick_optimized_2015} and structure optimized potential refinement (SOPR) \cite{shanks_transferable_2022} models. This result confirms that the radial distribution function contains sufficient information to determine transferable force field parameters in simple liquids. 

Some interesting questions arise considering that both the Mie fluid model and SOPR, which is a probabilistic iterative Boltzmann method for experimental scattering data, give similar predictions for the structure-optimized potentials. The key difference between the Bayesian optimization performed in this work and SOPR is that the former is parametric while the latter is non-parametric, both of which have strengths and weaknesses. Specifically, parametric models are less complex but may not be flexible enough to describe subtle details of the experimental observation. On the other hand, non-parametric models can describe nuanced experiments but may over-fit to non-physical features of the data. It is then natural to wonder: Is a ($\lambda$-6) Mie model adequate to describe the experimental scattering data? Or does the scattering data complexity necessitate the use of non-parametric iterative potential refinement techniques like SOPR?

We can investigate the first question of model adequacy by propagating parameter uncertainty through the LGP to construct a distribution of RDF predictions - referred to as the posterior predictive. The posterior predictive can be estimated by evaluating the LGP for all MCMC samples and computing the mean,

\begin{equation}
        \mathbb{E}[{S}_{loc}^*(\mathbf{r}_k)] \approx \frac{1}{N}\sum_{i=1}^N {S}_{loc}^*(\mathbf{r}_k|\boldsymbol{\theta}_i)
\end{equation}

\noindent and variance,

\begin{align}
    \mathbb{V}[{S}_{loc}^*(\mathbf{r}_k)] \approx \frac{1}{N}\sum_{i=1}^N ({S}_{loc}^*(\mathbf{r}_k|\boldsymbol{\theta}_i) - \mathbb{E}[{S}_{loc}^*(\mathbf{r}_k)])^2 
\end{align} 

\noindent of the resulting QoI predictions. Recall that the nuisance parameter distribution is also sampled to account for unknown uncertainties in the LGP surrogate model and experimental data. The posterior predictive therefore quantifies of how accurately we know the QoI given experimental, model, and parametric uncertainty estimated with Bayesian inference. If the model is adequate, the Bayesian credibility interval ($\mu \pm 2 \sigma$) should contain approximately 95\% of the experimental data. The posterior predictive and residuals ($g_{exp}(r) - \mu(r)$) estimated for the liquid Ne RDF are shown in Figure \ref{fig:post_pred}.

\begin{figure}
    \centering
    \includegraphics[width = 14cm]{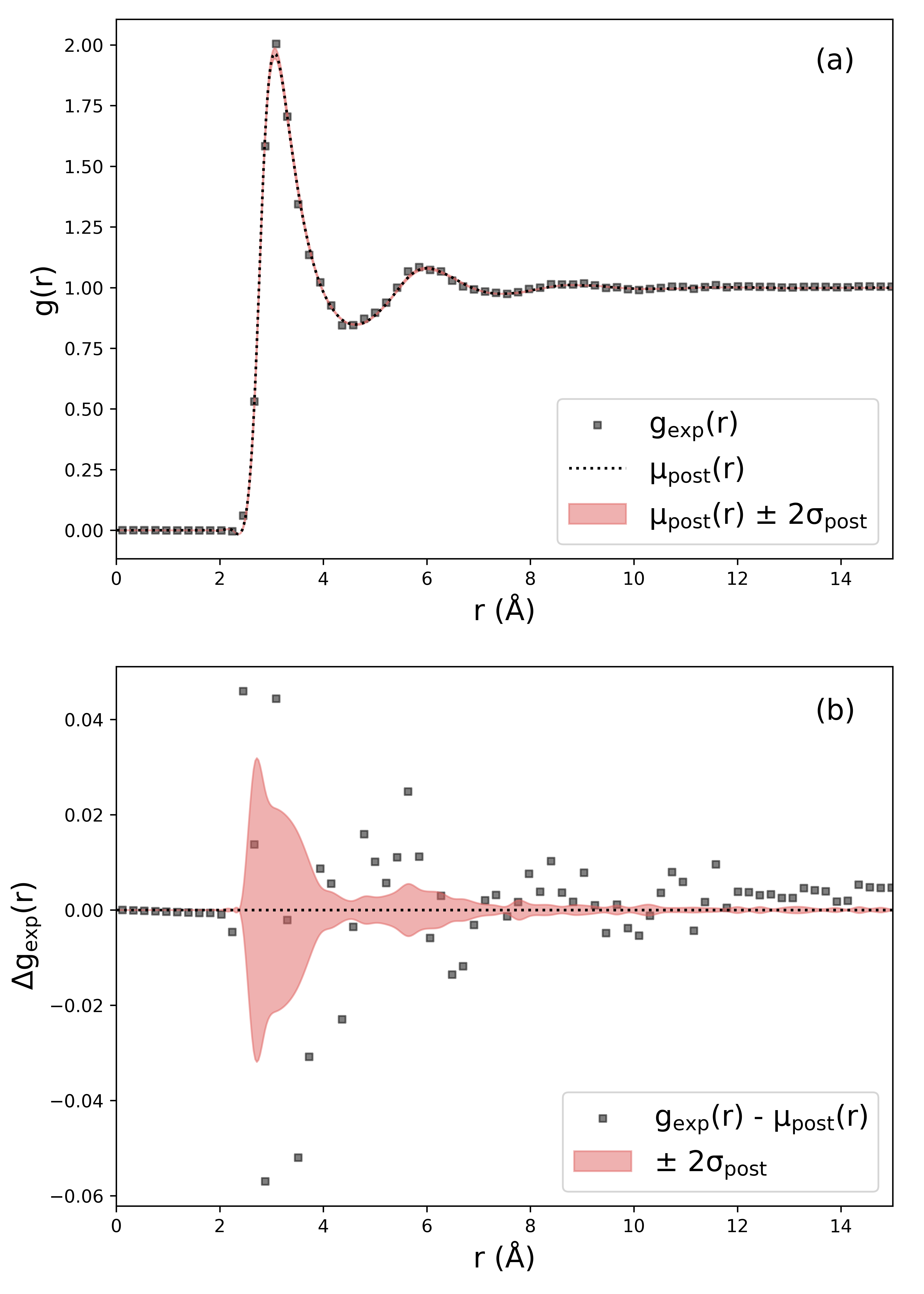}
    \caption{Posterior predictive plots for the radial distribution function. (a) RDF mean and credibility interval propagated from the parameter uncertainty quantified with Bayesian inference. (b) Residual analysis comparing the experimental data with the posterior predictive distribution.}
    \label{fig:post_pred}
\end{figure}

\noindent Clearly, the agreement between the posterior predictive mean and the experimental data is excellent. However, the residuals often lie outside of the $2\sigma_{post}$ credibility interval. These differences between the experiment and model could be explained by a number of different factors, including errors arising from Fourier transform truncation, background scattering corrections or model inadequacy, among others. However, without rigorous uncertainty quantification on the experimental scattering data, it is currently not possible to determine which factor or combination of factors results in the model disagreement. We argue that this knowledge gap necessitates rigorous UQ/P studies on scattering data as well as iterative potential refinement methods. Combining these approaches with Bayesian inference on molecular dynamics models could then shed light on what physical interactions can be learned from scattering experiments.

In summary, we have shown that a LGP surrogate model enables rapid and accurate uncertainty quantification and propagation with Bayesian inference. We then showed how the posterior distribution is an indispensable tool to learn subtle relationships between model parameters, identify how important each model parameter is to describe the outcome of experiments, and quantify our degree of belief that our model adequately describes our observations. The power of Bayesian inference is evident.

\subsection{Conclusions}

We have shown that local Gaussian process surrogate models trained on an experimental RDF of liquid neon improves the computational speed of QoI prediction 1,760,000-fold with exceptional accuracy from only 960 training simulations. The 3 orders-of-magnitude evaluation time speed-up for a local versus standard Gaussian process was shown to accelerate Bayesian inference without the need for advanced sampling techniques such as on-the-fly learning. Furthermore, since the LGP linearly scales with the number of output QoIs, significantly higher speed-ups are expected for more complex data, such as infrared spectra or high resolution scattering experiments, or for multiple data sources simultaneously (\textit{e.g.} scattering, spectra, density, diffusivity, etc). We conclude that local Gaussian processes are an accurate and reliable surrogate modeling approach that can accelerate Bayesian analysis of molecular models over a broad array of complex experimental data. 

\subsection{Supporting Information}

\subsubsection{Molecular Dynamics Simulation of Mie Fluids}

Computer generated radial distribution functions (RDFs) were calculated using molecular dynamics (MD) simulations in the HOOMD-Blue package \cite{anderson_hoomd-blue_2020}. Simulations were initiated with a lattice configuration of 864 particles and compressed to a reduced density of $\rho = 0.02477$ atom/\AA$^3$ and thermal energy $T = 42.2$ K. The HOOMD NVT integrator was used for a 0.25 nanosecond equilibration step and a 0.25 nanosecond production step (dt = 0.5 femtosecond). Potentials were truncated at $3\sigma$ with an analytical tail correction, and RDFs were calculated using the Freud package \cite{ramasubramani_freud_2020}.

\subsubsection{Training and Test Set Generation}

The first step to design a LGP surrogate model is to generate a training set of model input parameter input and QoI outputs. To generate the training set, we need dense samples of model parameters in the region of the parameter space that well-represents the target experimental data. In general, it is not known \textit{a priori} where this region is, particularly if there is no prior knowledge of what model parameters are best with respect to an experimental observation, $\mathbf{y}$. However, there are parameter regions that we can exclude \textit{a priori} based on the physics of the ($\lambda$-6) Mie fluid. For instance, Ne is a liquid at the experimental thermodynamic conditions, so we can use well-established ($\lambda$-6) Mie fluid phase diagrams and vapor-liquid transitions \cite{widom_new_1970} to restrict the parameter ranges to the liquid phase only. Specifically, given a fixed temperature (T = 42.2K) and density ($\rho$ = 0.024 \AA$^{-3}$), it is trivial to determine the $\sigma$ and $\epsilon$ parameter ranges reported in the manuscript via relations for the scaled temperature ($T^* = k_bT/\epsilon$) and scaled density ($\rho^* = \rho \sigma^3$). The parameter ranges determined using the Mie fluid phase diagram are presented in Table \ref{tab:ranges}. Restricting the parameter to physically justified ranges is important to avoid a "garbage in, garbage out" scenario for an LGP surrogate model. Given this prior range, we then performed the sequential sampling approach outlined in the manuscript. A visualization of this procedure is shown in Figure \ref{fig:trainingset}.

\begin{table}
\centering
\caption{Estimated boundaries for physics-constrained prior space based on the ($\lambda$ - 6) Mie fluid phase diagram. $m = 6$ is the attractive tail exponent of the ($\lambda$ - 6) Mie potential. $*$) The maximum $\lambda$ was selected to be substantially larger than previously reported values.}
\begin{tabular}{| c | l | l | l | l |}
\hline
\textrm{Param.}&
\textrm{Min.}&
\textrm{Min. Criteria}&
\textrm{Max.}&
\textrm{Max. Criteria}
\\
\hline
$\lambda$  & 6.1  & $m=6 \implies \lambda>6$  &  18 & Literature$^*$ \\
$\sigma$   & 2.55 & Vapor-Liquid Equil. &  3.32 & Solid-Liquid Equil.  \\
$\epsilon$ & 0.00 & $\epsilon<0$ undefined  &  0.136 & Vapor-Solid Equil. \\
\hline
\end{tabular}
\label{tab:ranges}
\end{table}

\begin{figure}
    \centering
    \includegraphics[width = 15cm]{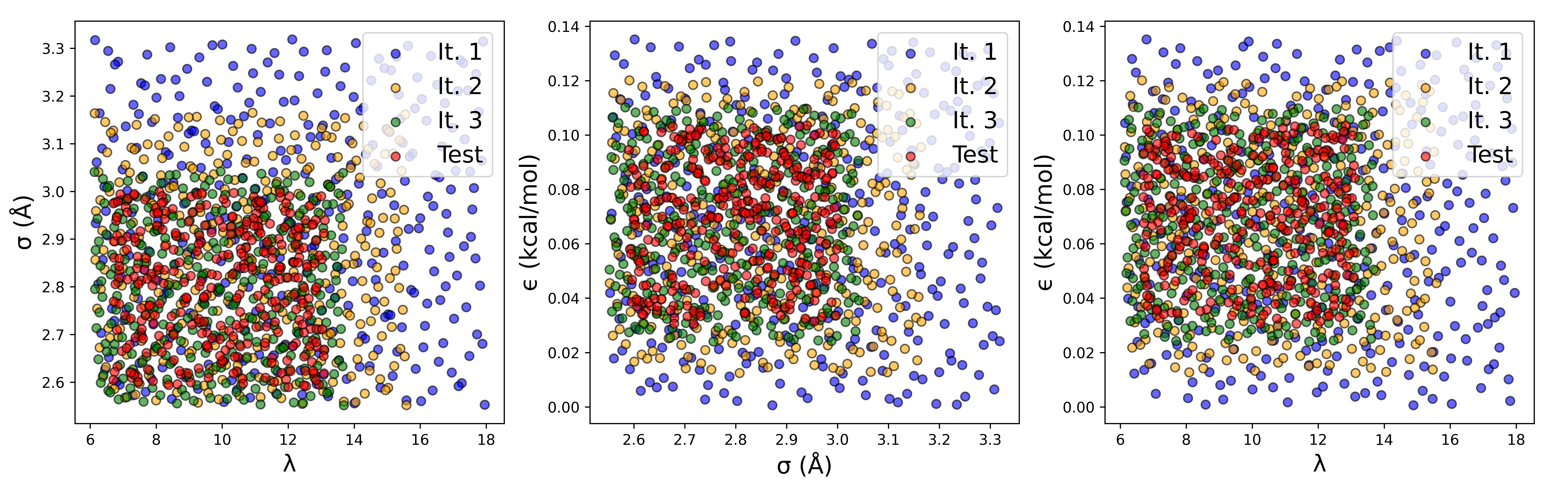}
    \caption{2D training and sample parameter set used to train and test the LGP surrogate model.}\label{fig:trainingset}
\end{figure}

\subsubsubsection{The Gaussian Process Prior Mean}

In this manuscript, an analytical solution for the RDF based on the dilute limit potential of mean force (PMF) was used as the GP prior mean. This choice is appropriate as an RDF prior since it will have the same features that we expect a liquid RDF to have, \textit{i.e.} RDF values of zero at low $r$ and a long-range tail that asymptotically approaches unity. However, note that even a prior guess that doesn't encode this information can still produce accurate LGP surrogate models for RDFs. For example, in Figure \ref{fig:rmse2} we can see that an ideal gas RDF prior, which amounts to approximating that the RDF is unity everywhere ($g^{IG}(r) = 1$), can still be learned by the local Gaussian processes with RMSE values close to the more physically justified PMF.

\begin{figure}
    \centering
    \includegraphics[width = 14cm]{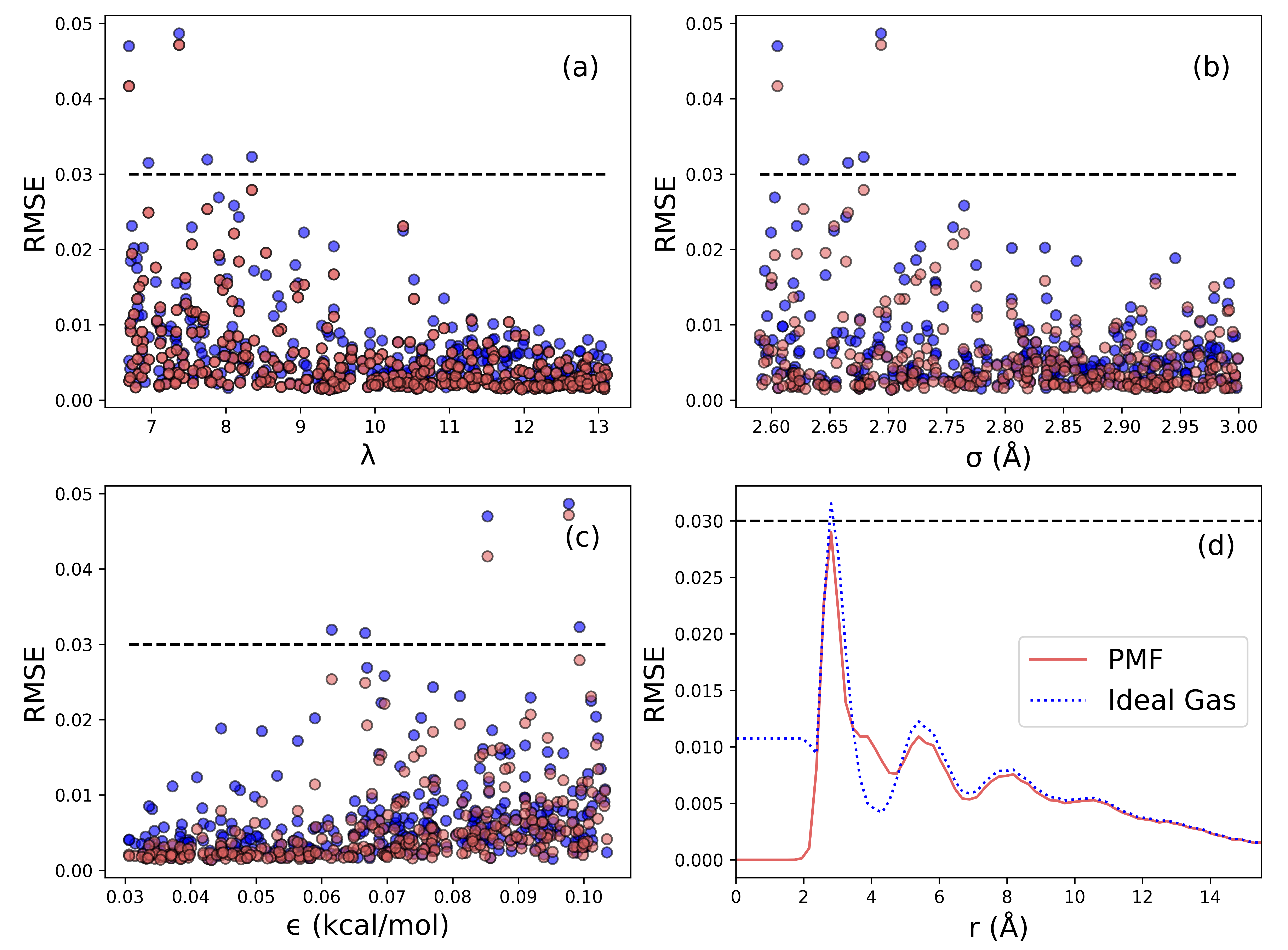}
    \caption{RMSE as a function of $r$ for an ideal gas (blue) and potential of mean force (red) prior.}
    \label{fig:rmse2}
\end{figure}

Clearly, the RMSE along $r$ is a similar magnitude for the ideal gas and PMF prior, but the $r$-dependent behavior is noticeably different. For the ideal gas prior, we see that there is high RMSE at low $r$, which is inconsistent with our intuition for a liquid RDF due to the excluded volume of atoms. What is occurring here is that the GP estimate is being "pulled" towards the prior at low $r$. On the other hand, the PMF prior exhibits behavior in line with our physical intuition; namely, near zero error at $r$ values smaller than the relative diameter of the atom. Perhaps surprisingly, we see in Figure \ref{fig:post_compare} that the choice of prior mean doesn't have a large impact on the posterior distribution or MAP estimates. We attribute this to the fact that the RMSE is sufficiently small for both the ideal gas and PMF priors that the posterior distribution isn't significantly modified. However, it does influence the posterior predictive distribution as evidenced by Figure \ref{fig:postpredictive_compare}. Specifically, note  that there is uncertainty at low $r$ for the ideal gas prior, whereas this uncertainty vanishes for the PMF prior.

\begin{figure}
    \centering
    \includegraphics[width = 15cm]{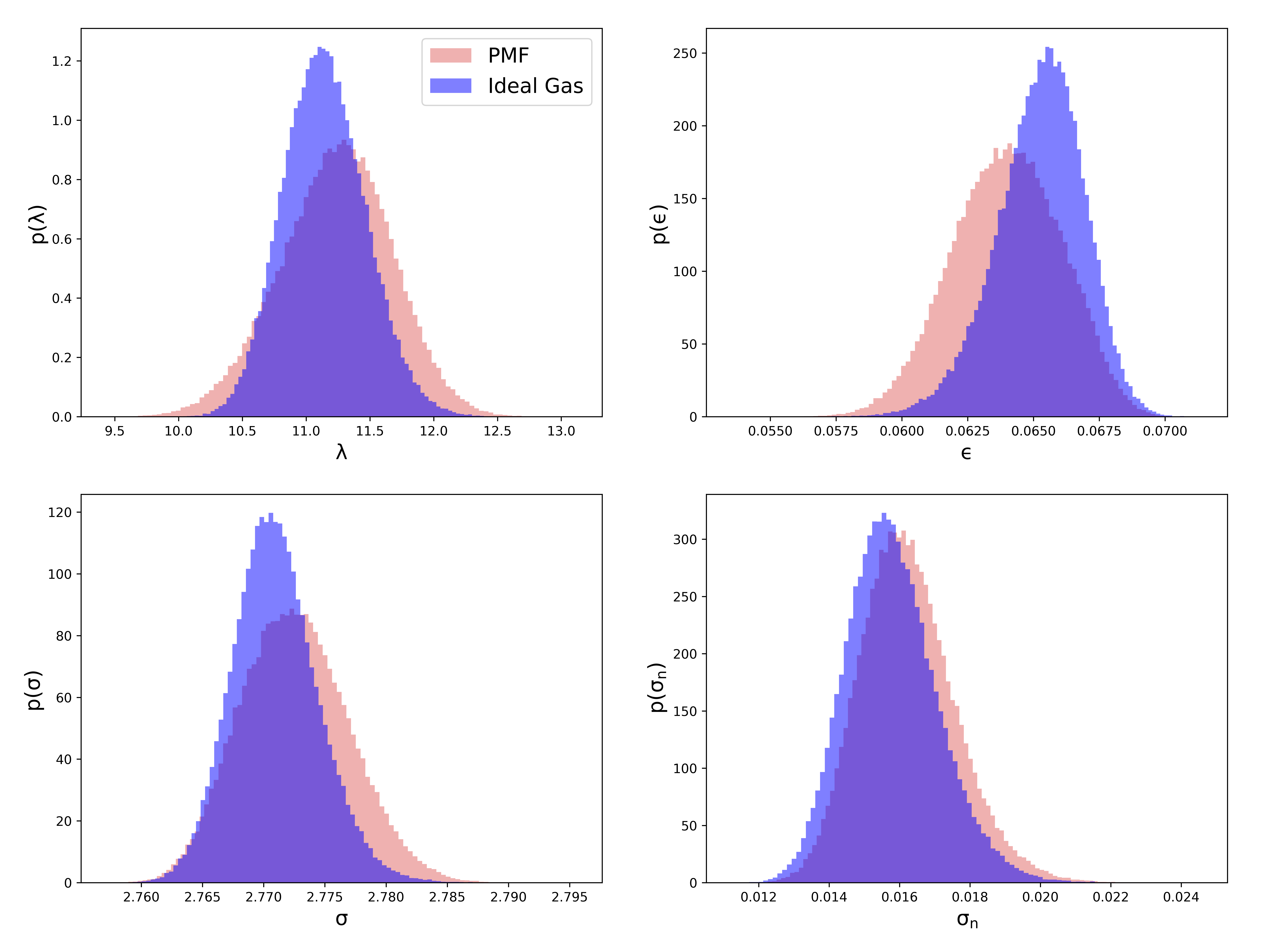}
    \caption{Marginal posteriors for the ideal gas PMF (red) and ideal gas (blue) priors.}
    \label{fig:post_compare}
\end{figure}

\begin{figure}
    \centering
    \includegraphics[width = 12cm]{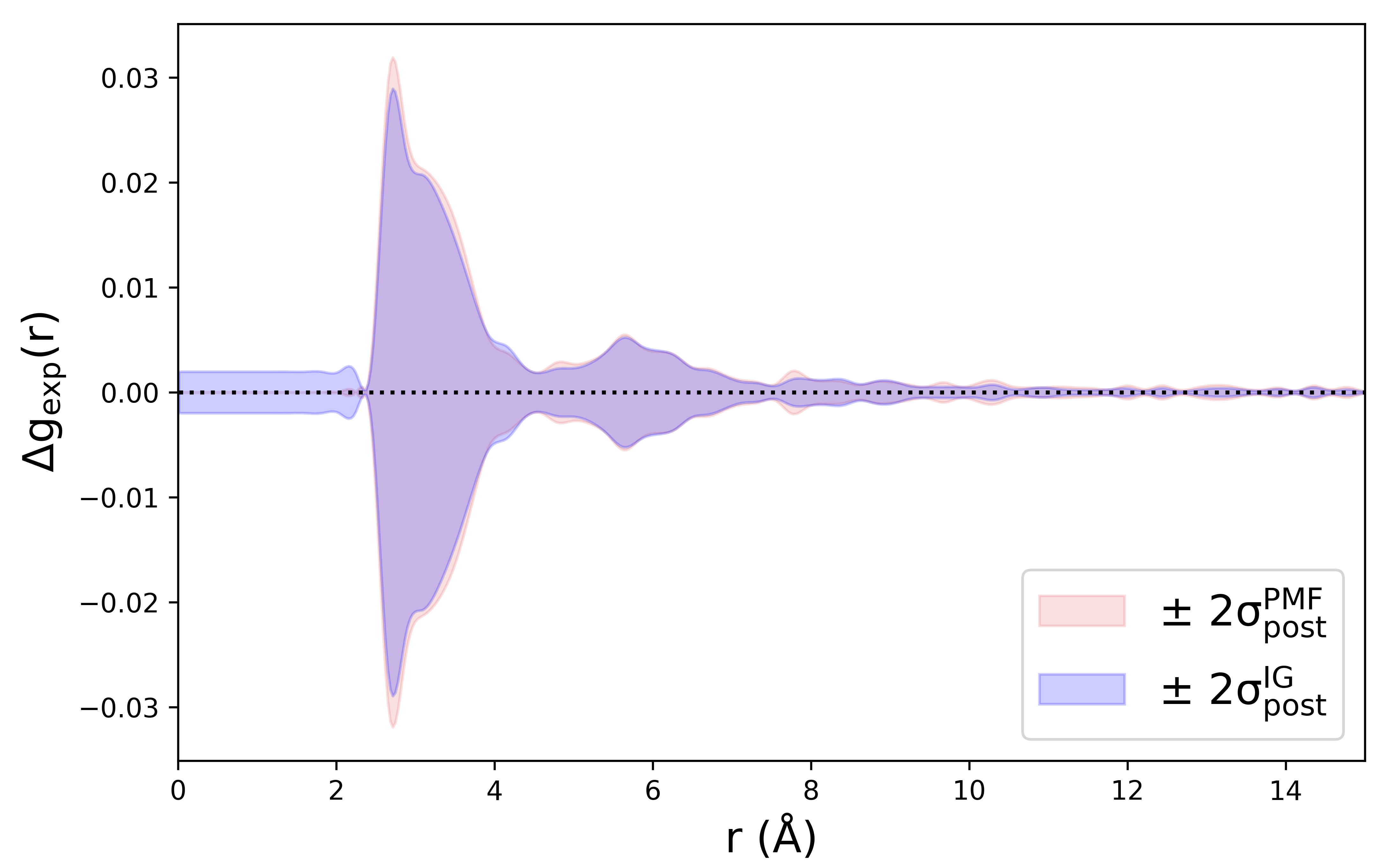}
    \caption{Posterior predictives for the ideal gas PMF (red) priors ideal gas (blue) priors.}
    \label{fig:postpredictive_compare}
\end{figure}

\subsubsubsection{Hyperparameter Selection}

The final step is to learn a set of LGP hyperparameters that provide accurate estimates of the target QoI. A standard approach to selecting hyperparameters is to maximize the model evidence \cite{rasmussen_gaussian_2006} or apply an expected improvement criterion based on an integrated acquisition function \cite{snoek_practical_2012}. Here we applied a brute force search based on minimizing the leave-one-out (LOO) error for 25,000 hyperparameter options randomly sampled over a prior range using the method of Sundararajan and coworkers \cite{sundararajan_predictive_2001} (Table \ref{tab:hyperparams}). This method gives relatively similar hyperparameter estimations for both an ideal gas and PMF GP prior. The prior range was selected based on the ($\lambda$-6) Mie parameter sensitivity analysis of Mick and coworkers \cite{mick_optimized_2015}.  

A limitation of the brute force approach to hyperparameter selection is that we don't account for potential hyperparameter uncertainty in the LGP prediction. However, the self-consistency of our predictions with existing literature on liquid neon \cite{mick_optimized_2015,shanks_transferable_2022} suggests that this uncertainty is likely insignificant. Note that one could propagate hyperparameter uncertainty by performing Bayesian optimization over the hyperparameters, sampling the resultant hyperparameter posterior distribution, and propagating the samples through the posterior predictive estimation step. Finally, hyperparameter optimization for the LGP model is non-trivial since the LGP is an approximation to a non-stationary stochastic process \cite{heinonen_non-stationary_2016}. 

What if the previously described method fails to yield an accurate surrogate model? In this case, one can repeat the sequential sampling by adding more training simulations at each range to retrain the LGP until the RMSE is sufficiently small. As an example, Figure \ref{fig:rmsevssims} demonstrates that surrogate model accuracy improves as more samples are added at each range. Note that the accuracy of the surrogate will not improve beyond the statistical uncertainty of the underlying model.

\begin{table}
\centering
\caption{Optimum hyperparameter values under the ideal gas and PMF prior computed from 25000 random samples over the reported test range.}
\begin{tabular}{| c | l | r | r |}
\hline
\textrm{Name}&
\textrm{Test Range}&
\textrm{Ideal Gas}&
\textrm{PMF}\\
\hline
$\ell_{\lambda}$  & 0.5-4      &  3.31   & 3.58\\
$\ell_{\sigma}$   & 0.01-0.05  &  0.046  & 0.048\\
$\ell_{\epsilon}$ & 0.001-0.01 &  0.0098 & 0.0093\\
$\alpha$          & 1E-4-0.1   &  0.094  & 0.095\\
$\sigma_{noise}$  & 1E-4-0.01  &  7.2E-4 & 8.3E-4\\
\hline
\end{tabular}
\label{tab:hyperparams}
\end{table}

\begin{figure}
    \centering
    \includegraphics[width = 12cm]{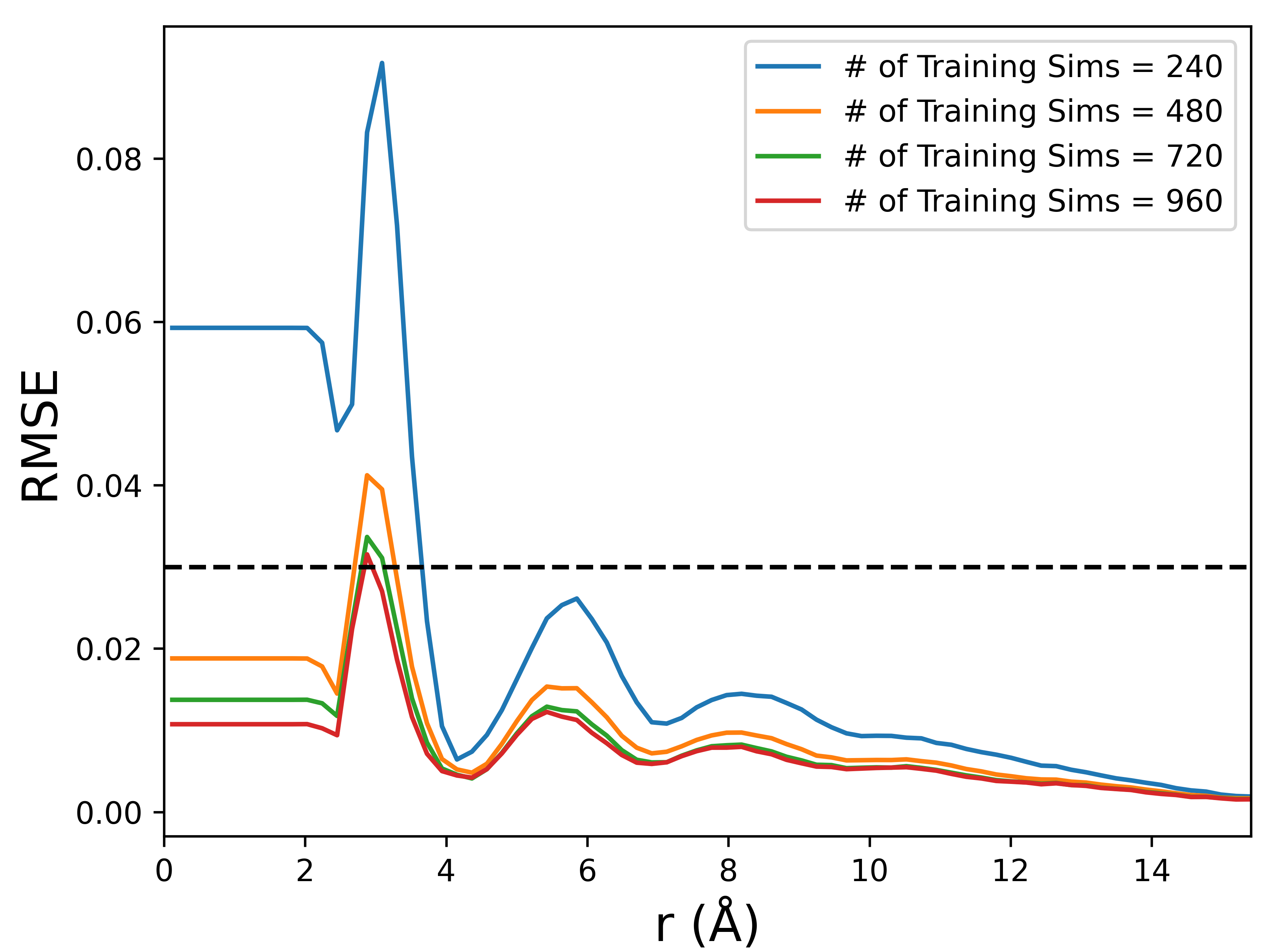}
    \caption{RMSE along $r$ for varying numbers of training simulations under an ideal gas prior.}
    \label{fig:rmsevssims}
\end{figure}

A more rigorous, but non-trivial method for surrogate model training, is to use adaptive or on-the-fly learning, in which the uncertainty in the LGP prediction is used to decide whether or not a new simulation is needed in the training set. This approach has been used in prior work\cite{angelikopoulos_bayesian_2012,angelikopoulos_x-tmcmc_2015} but was found to be unnecessary for our purposes due to the efficiency and accuracy of the LGP with relatively few training samples.

\subsubsection{Using the LGP Surrogate Model for Parameter Sensitivity Analysis}

Sensitivity of a QOI to a model parameter, $\theta_i$, can be quantified using the analytical derivative of the local GP surrogate model according to the following equation, 

\begin{equation}\label{eq:derivs}
    \frac{\partial \mathbb{E}[\textit{GP}_k(\boldsymbol{\theta}^*)]}{\partial \theta_i} = \bigg(\frac{ \theta_i - \theta_i^*}{\ell_{\theta_i}^2} \bigg) \mathbf{K}_{\boldsymbol{\theta}^*,\mathbf{\hat{X'}}} [\mathbf{K}_{\mathbf{\hat{X'}}, \mathbf{\hat{X'}}} + \sigma_{noise}^2 \mathbf{I}]^{-1}\mathbf{\hat{Y'}}_k
\end{equation}

\noindent where $k$ is the QoI index. The Gaussian process derivative is a quantitative measure of the influence of a perturbation in $\theta_i$ to the expectation of the observable $\mathbf{\hat{Y'}}_k$. Therefore, Eq. \eqref{eq:derivs} can approximate the impact of changing a molecular simulation parameter on the QOI (\textit{e.g.} how much does changing the effective particle size impact the RDF at any $r$). Figure \ref{fig:derivs} shows probabilistic local GP derivatives calculated for the ($\lambda$-6) Mie parameters. 

\begin{figure}
    \centering
    \includegraphics[width = 12cm]{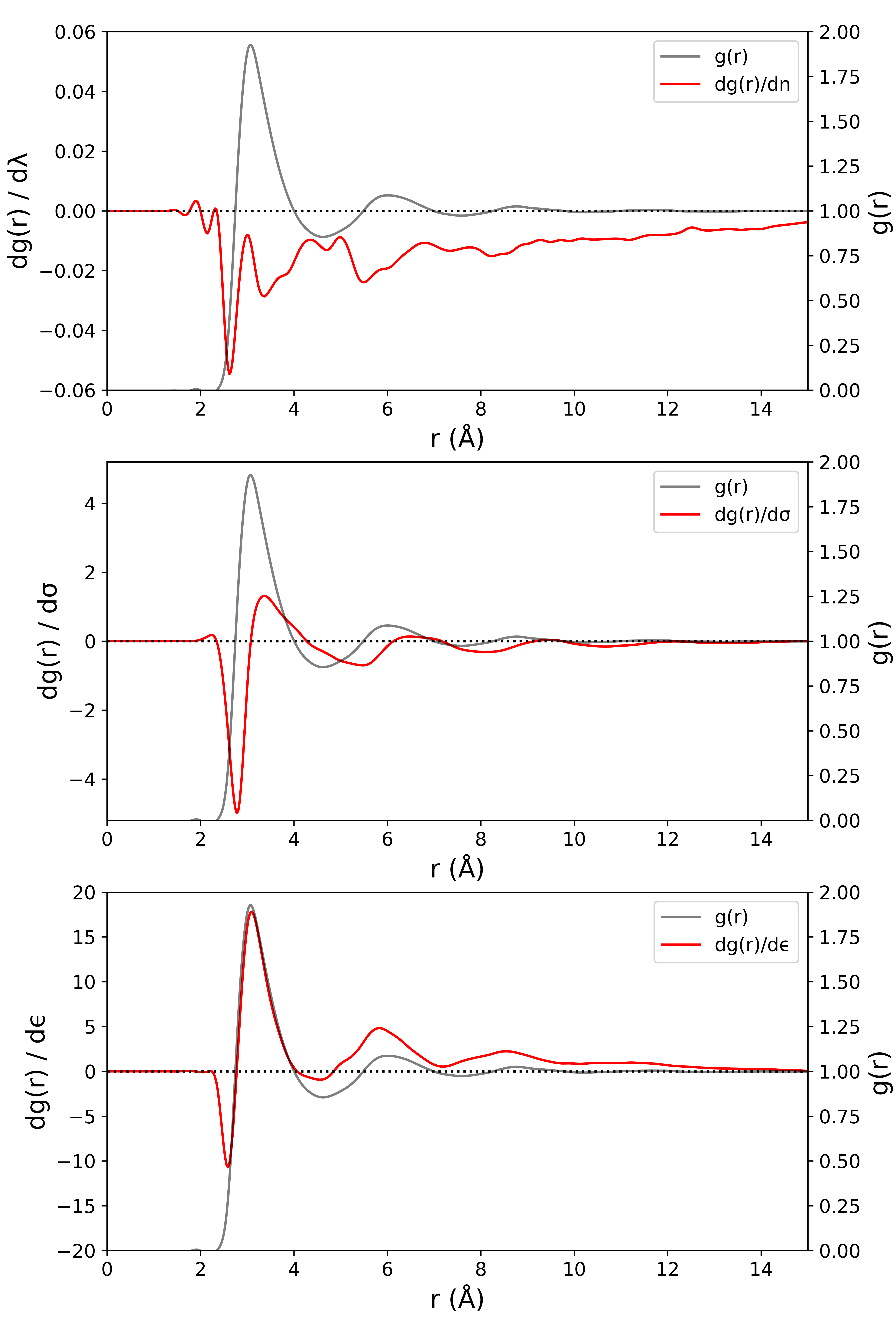}
    \caption{Derivatives of the local GP along the RDF calculated from Eq. \eqref{eq:derivs}.}
    \label{fig:derivs}
\end{figure}

The repulsive exponent derivative exhibits a small magnitude and has a minimum at the RDF half maximum. This behavior suggests that increasing the repulsive exponent, which determines the "hardness" of the particles, steepens the slope of the first peak in the RDF. This result is intuitive considering that in a hard-particle model there is a discontinuous jump at the hard-particle radius (infinite slope) that progressively softens with the introduction of an exponential repulsive decay function. In the case of the collision diameter, zeros of the derivative occur at RDF peaks and troughs, while local extrema align with the half-maximum positions. Consequently, increasing the effective particle size shifts the RDF to the right while maintaining relatively constant peak heights. Regarding the dispersion energy, its derivative displays zeros at the half-maximum positions of the RDF and local extrema at peaks and troughs. This behavior indicates that an increase in the dispersion energy leads to an increased magnitude of the RDF peaks and greater liquid structuring. 

Derivatives of structure with respect to thermodynamic state variables ($T$, $P$, $\mu$, etc) can be computed with fluctuation theory. Let's now take as an example the $\epsilon$-derivative of the RDF in Ne. We find that an increase in the dispersion energy deepens the interatomic potential well, resulting in greater attraction and a more structured liquid. Noting that the reduced temperature, $T^*$, is inversely related to $\epsilon$ by,

\begin{equation}
    T^* = \frac{k_B T}{\epsilon}
\end{equation}

\noindent then the $g(r)$ derivative with respect to $\epsilon$ at constant $T$, is equal to the $g(r)$ derivative with respect to the reduced thermodynamic beta,

\begin{equation}
    \frac{\partial g(r)}{\partial \epsilon} = \frac{\partial g(r)}{\partial \beta^*}
\end{equation}

\noindent where $\beta^* = T^*/k_BT$. In summary, an increase in $\epsilon$ is equivalent to a decrease in temperature. It is therefore expected that the $\epsilon$ derivative and temperature derivative behave the same; specifically, a decrease in temperature should increase result in greater fluid structuring without significantly impacting peak positions. Unsurprisingly, this behavior is exactly what was observed in recent work that computed temperature derivatives of the O-O pair RDF in water using a fluctuation theory approach \cite{piskulich_temperature_2020}.

\subsubsection{The Standard Bayesian Framework}

For simplicity of notation, let $\theta = \{\lambda, \sigma, \epsilon, \sigma_n\}$ represent the model parameters and $\mathcal{Y} = S_d(Q)$ be the RDF observations. The nuisance parameter, $\sigma_n$, represents the width of the Gaussian likelihood and is considered a model parameter since nothing is known about this parameter \textit{a priori}. Calculating the posterior probability distribution with Bayesian inference then requires two components: (1) prescription of prior distributions on the model parameters, $p(\theta)$, and (2) evaluation of the RDF likelihood, $p(\mathcal{Y}|\theta)$. The prior distribution over the $(\lambda-6)$ Mie parameters is assumed to be a multivariate normal distribution,

\begin{equation}
    \theta \sim \mathcal{N} (\mu_\theta, \sigma^2_{\theta})
\end{equation}

\noindent where $\mu_\theta$ and $\sigma^2_\theta$ are the prior mean and variance of each $(\lambda-6)$ Mie parameter in $\theta$, respectively. A wide, multivariate normal distribution was selected because it is non-informative and conjugate to the Gaussian likelihood equation. The prior on the nuisance parameter is assumed to be log-normal,

\begin{equation}
    \log \sigma_{n} \sim \mathcal{N} (\mu_{\sigma_n}, \sigma_{\sigma_n}^2)
\end{equation}

\noindent where $\mu_{\sigma_n}$ and $\sigma_{\sigma_n}$ are the prior mean and variance of the nuisance parameter. The log-normal prior imposes the constraint that the nuisance parameter is non-negative, which is obviously true because a negative variance in the observed data is undefined. For reference, the prior parameters used in this study are summarized in Table \ref{tab:priors}.

\begin{table}[H]
\centering
\caption{Prior parameters on the ($\lambda$-6) Mie model parameters.}
\begin{tabular}{| c | l | r r |}
\hline
\textrm{Parameter}&
\textrm{Distribution}&
\textrm{$\mu$}&
\textrm{$s$}\\
\hline
$\lambda$  &                   &  12.0    & 9\\
$\sigma$   & \text{Normal}     &  2.7     & 1.8\\
$\epsilon$ &                   &  0.112   & 0.225\\ \cline{1-4}
$\sigma_n$      & \text{Log-Normal} &  1       & 1\\
\hline
\end{tabular}
\label{tab:priors}
\end{table}

The likelihood function is assumed to be Gaussian according to the central limit theorem,

\begin{equation}
    p(\mathcal{Y}|\theta) \propto \frac{1}{\sigma_n^{n_{samples}}}\exp\bigg[-\frac{1}{2\sigma^2_{n}}\sum_i\ [{S}_{\theta_i}(Q_j) - S_d(Q_j)]^2\bigg]
\end{equation}

\noindent where ${S}_\theta(Q_i)$ is the molecular simulation predicted RDF and $j$ indexes over discrete points along the momentum vector. Bayes' theorem is then expressed as,

\begin{equation}\label{eq:inference}
    p(\theta|\mathcal{Y}) \propto p(\mathcal{Y}|\theta) p(\theta)
\end{equation}

\noindent where equivalence holds up to proportionality. This construction is acceptable since the resulting posterior distribution can be normalized \textit{post hoc} to find a valid probability distribution. 

\subsubsection{Markov Chain Monte Carlo}

To populate the Bayesian likelihood distribution, Markov Chain Monte Carlo (MCMC) samples over the model parameters $\theta = \{\lambda, \sigma, \epsilon, \sigma_n\}$  are passed to the surrogate model, evaluated, and compared to the experimental RDF. 960,000 MCMC samples were calculated using the emcee package \cite{foreman-mackey_emcee_2013} from 160 walkers (5000 samples/walker) with a 1000 sample burn-in per walker. The MCMC moves applied were differential evolution (DE) at a 0.8 ratio and DE Snooker at a 0.2 ratio, which is known to give good results for multimodal distributions. The acceptance ratio obtained from this sampling procedure was $\sim$0.27 and the autocorrelation between steps was 16 moves. 

\subsubsection{Python Notebook Tutorial}

An example code for creating a local Gaussian process surrogate model in python was created and posted on GitHub \hyperlink{https://github.com/hoepfnergroup/LGPMD}{here}. The notebook is reproduced as a pdf below and contains a detailed description of the code functions, basic theory, and results of the method. The example is for liquid neon for the same reasons as the tutorial written for the first chapter. The notebook covers how to construct a training dataset, design a kernel, include a Gaussian process prior mean, code the Gaussian process surrogate model, perform hyperparameter optimization, and validate the model with a root-mean-square-error test. 

\section{Impact of Experiment Uncertainty on Potential Estimation}

In the previous chapters, it has been shown that non-parameteric (Chapter 2) and parametric (Chapter 3) approaches can be used to learn uncertainty-aware force field parameters from liquid structure. However, one key question that still remains is whether or not experimental uncertainty significantly impacts the forces estimated from these approaches. Historically, giants in the field of liquid state theory including Levesque and Verlet \cite{levesque_note_1968} noted that systematic error in scattering data can drastically effect the force parameters that best represent a given scattering measurement, while Weeks, Chandler and Anderson concluded that the repulsive part of the interatomic forces was the key to matching structure factors in simple liquids \cite{weeks_role_1971}. However, a notable gap in the literature is the rigorous investigation of random noise in structure factor data and whether this uncertainty can hinder our ability to reconstruct local interatomic forces. Here we revisited the important question of experimental noise on force parameter reconstruction using the tools of Bayesian inference. The theme of the work is to see whether or not noise in experimental data can overwhelm our ability to reconstruct forces and identify how accurate the scattering data needs to be to prevent such pathological cases. Updated versions of this work can be accessed on arXiv with article identifier 2407.04839 here \hyperlink{https://arxiv.org/abs/2407.04839}{here} \cite{shanks_bayesian_2024} or as a published article at J. Phys. Chem. Lett. 2024, 15, 51, 12608–12618 \cite{shanks_bayesian_2024-1}.

\subsection{Abstract}

The inverse problem of statistical mechanics is an unsolved, century-old challenge to learn classical pair potentials directly from experimental scattering data. This problem was extensively investigated in the 20th century but was eventually eclipsed by standard methods of benchmarking pair potentials to macroscopic thermodynamic data. However, it is becoming increasingly clear that existing force field models fail to reliably reproduce fluid structures even in simple liquids, which can result in reduced transferability and substantial misrepresentations of thermophysical behavior and self-assembly. In this study, we revisited the structure inverse problem for a classical Mie fluid to determine to what extent experimental uncertainty in neutron scattering data influences the ability to recover classical pair potentials. Bayesian uncertainty quantification was used to show that structure factors with noise smaller than 0.005 barnes/steradian to $\sim30$ \AA$^{-1}$ are required to accurately recover pair potentials from neutron scattering. Notably, modern neutron instruments can achieve this precision to extract classical force models to within approximately $\pm$ 1.3 for the repulsive exponent, $\pm$ 0.068 \AA for atomic size, and 0.024 kcal/mol in the potential well-depth with 95\% confidence. Our results suggest the exciting possibility of improving molecular simulation accuracy through the incorporation of neutron scattering data, advancement in structural modeling, and extraction of model-independent measurements of local atomic forces in real fluids. 

\subsection{Introduction}

Reconstructing interatomic potentials from experimental scattering data is a historic inverse problem in statistical mechanics, motivated by the idea that complete knowledge of the effective interatomic potential with the atomic correlation functions allows for all thermodynamic properties of a classical liquid to be calculated \cite{kirkwood_statistical_1951}. While it has become widely accepted that liquid state systems exhibit significant many-body and quantum mechanical (both electronic and nuclear) interactions \cite{hansen_theory_2013} that influence molecular dynamics, the fact that empirical molecular simulations remain the gold-standard for efficient and accurate liquid state materials modeling has maintained the significance and impact of the inverse problem in contemporary physics. However, despite over a century of research, with seminal works by Ornstein and Zernike \cite{ornstein_accidental_1914,percus_analysis_1958,percus_approximation_1962}, Yvon, Born, and Green \cite{born_kinetic_1947}, Schommer \cite{schommers_pair_1983}, and Lyubartsev and Laaksonen \cite{lyubartsev_calculation_1995}, there is surprisingly little to no evidence that these techniques can reliably extract force field parameters from experimental scattering data \cite{toth_interactions_2007}. Furthermore, there is growing evidence that existing force fields provide inaccurate representations of fluid structure when compared to experimental estimates \cite{amann-winkel_x-ray_2016,fheaden_structures_2018}. With the advent of state-of-the-art diffractometers and the rise of machine learning and high-performance computing for robust uncertainty quantification, it is relevant to revisit and contextualize prior and current work to better understand how to resolve this longstanding challenge.

Serious attempts at determining the interatomic potential from experimental scattering data began in the 1950's. Henshaw (1958) \cite{henshaw_atomic_1958} and later Clayton and Heaton (1961) \cite{clayton_neutron_1961} speculated that the ratio between the atomic collision radius and first solvation shell radius was related to the approximate width of the interatomic potential bowl. While this concept cannot directly extract the interatomic potential from the radial distribution function, it was used to conclude that argon and krypton could be reasonably represented by a (12-6) Lennard-Jones potential. Weeks, Chandler, and Anderson then introduced a separation of the pair potential into repulsive and attractive parts, in which they concluded that the repulsive part alone produces structure factors nearly identical to the repulsive and attractive parts taken together \cite{weeks_role_1971}. Henderson (1974) then proved that for a pairwise additive and homogeneous system with equal radial distribution functions that the effective interatomic potential was unique up to an additive constant \cite{henderson_uniqueness_1974}, which was later implemented numerically by Schommers (1983) \cite{schommers_pair_1983} to study liquid gallium. Around the same time, Levesque (1985) \cite{levesque_pair_1985} proposed a modified hypernetted chain closure to the Ornstein-Zernike integral relation to calculate interatomic potentials for liquid aluminum with fast convergence. Both studies were highly influential in the study of liquid metals, but offered little in resolving the inverse problem in general since interatomic potentials derived from these methods were only shown to accurately reproduce the diffusion coefficient and not other thermodynamic properties. 

The most recent inverse methods applied to experimental data are Soper's (1996) \cite{soper_empirical_1996} empirical potential structure refinement (EPSR) and Lyabartsuv and Laaksonen's (1995) \cite{lyubartsev_calculation_1995,toth_determination_2001,toth_iterative_2003} inverse Monte Carlo (IMC). EPSR is an iterative potential refinement method that is primarily used to determine real-space structures consistent with reciprocal space scattering data in fluid and glass systems. However, Soper's work on liquid water revealed that EPSR could not be reliably implemented to determine pair interaction potentials for molecular simulation applications \cite{soper_tests_2001}. On the other hand, IMC methods have been widely adopted for coarse-graining, in which the number of degrees-of-freedom of a molecular model are reduced by mapping atomic coordinates to "beads" of atom clusters. While both methods have attracted significant research interest in recent years, with the creation of an improved EPSR software package \cite{youngs_dissolve:_2019} and applications of IMC in complex biological systems \cite{lyubartsev_systematic_2010} such as DNA \cite{sun_bottom-up_2021} and nucleosomes \cite{sun_bottom-up_2022}, the extraction of reliable and transferable interatomic potentials from experimental scattering data remains widely under-reported and unresolved, even for simple fluids such as noble gases. 

We recently proposed structure-optimized potential refinement (SOPR) \cite{shanks_transferable_2022} as an alternative approach to extract pair potentials from scattering data (2022). SOPR is a probabilistic iterative Boltzmann inversion (IBI) algorithm that uses Gaussian process regression to address challenges such as numerical instability and over-fitting to uncertain experimental data. SOPR derived potentials have demonstrated remarkable accuracy in predicting both the structural correlation functions and vapor-liquid equilibria of noble gases. Furthermore, the short-range repulsive decay rate determined with SOPR coincides with predictions from an independently optimized ($\lambda-6$) Mie force field for vapor-liquid equilibria \cite{mick_optimized_2015}. This finding represents the most complete example of the inverse problem in real systems and highlights the potential of scattering data in studying macroscopic thermophysical properties of liquids.    

The transferability of SOPR potentials raises intriguing questions regarding what factors are most important to accurately extract local forces from scattering data. One possible explanation is that the reliability of structure inversion techniques hinges on the quality of the experimental scattering data \cite{soper_uniqueness_2007}. Levesque and Verlet (1968) speculated that experimental scattering error of $<1\%$ was required to determine the interaction potential within an error of $10\%$ \cite{levesque_note_1968}, but ultimately concluded that it is not possible to obtain quantitative information on the potential from scattering data due to systematic error in the experiments. However, it is currently unclear to what extent these previous attempts have been impeded by experimental uncertainty. The reason for this knowledge gap is that rigorous uncertainty quantification and propagation (UQ/P) is computationally demanding and requires the use of machine learning surrogate models and advanced sampling methods \cite{angelikopoulos_bayesian_2012,kulakova_experimental_2017,shanks_accelerated_2024} that were not available to liquid state theorists when these questions were first investigated. To test the hypothesis that neutron instrument accuracy is essential in force field extraction therefore lies at the crossroads of theoretical statistical physics, machine learning, and high-performance computing. 

Here we assess how scattering measurement uncertainty impacts our ability to learn interatomic forces using a dataset of \textit{in silico} experimental structure factors with varying levels of noise. Bayesian optimization with a local Gaussian process (LGP) surrogate model was then applied to extract the underlying probability distributions on the force field parameters. Gaussian noise was introduced to a reduced Mie model structure factor with standard deviation $\delta S$, corresponding to data collected on various neutron instruments from 1973-2022. Constant noise at six different standard deviations, consistent with a reactor source neutron instrument \cite{willis_experimental_2017}, spanning from low to high uncertainty was added to a model structure factor (Figure \ref{fig:corrupt}). 

\begin{figure}[H]
    \centering
    \includegraphics[width = 12 cm]{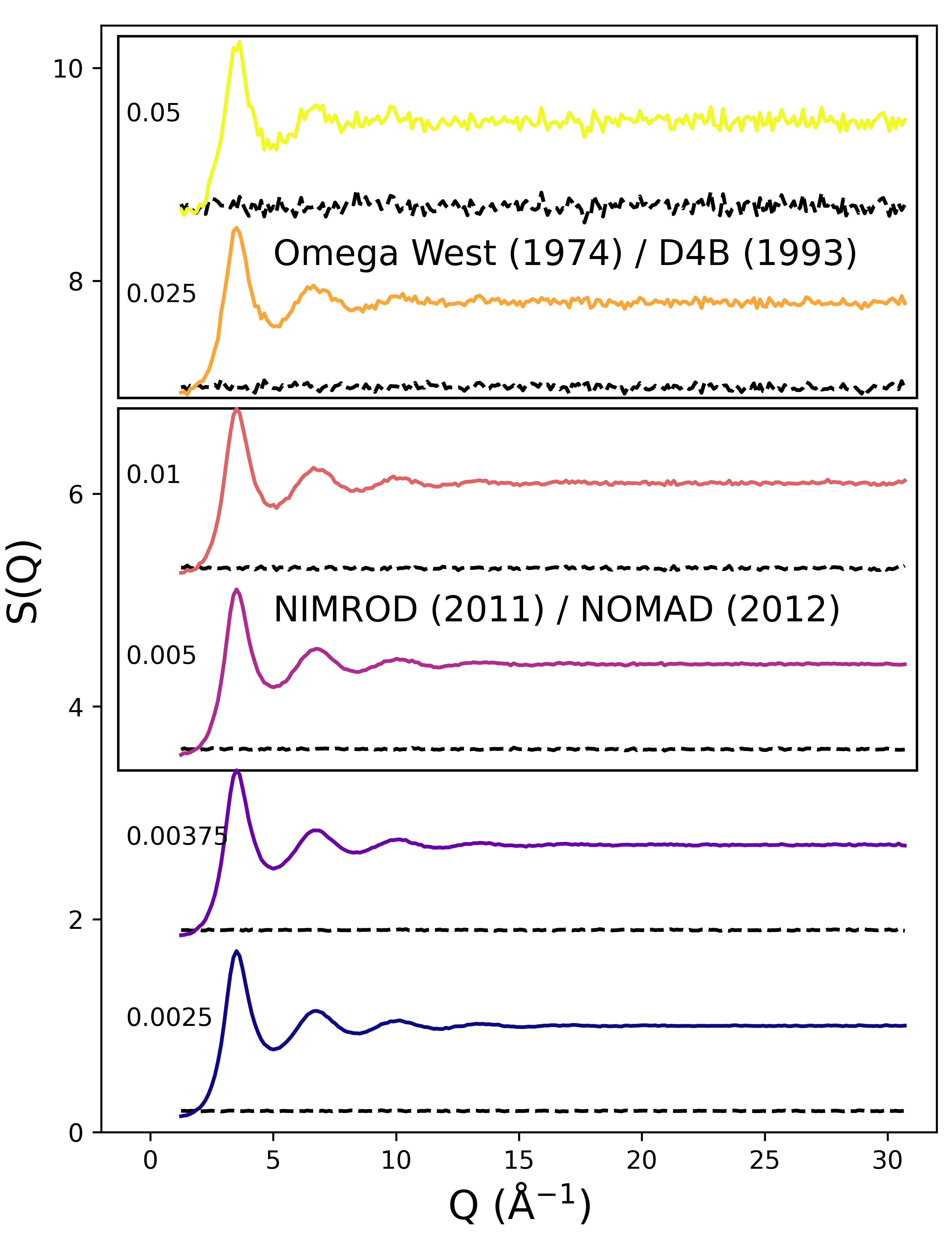}
    \caption{Static structure factors (colored lines) with introduced uncertainty (dotted black lines) for uniformly distributed noise. Measurement standard deviations $\delta S$ are labeled to the left of the structure factor.}
    \label{fig:corrupt}
\end{figure}

By studying the parameter posterior distributions as a function of introduced uncertainty, we aimed to challenge the assertion that structure factors are insensitive to the detailed form of interatomic interactions \cite{jovari_neutron_1999}. Using the ($\lambda$-6) Mie parameter Bayesian posterior distributions, we quantify how interatomic interactions such as short-range repulsion, excluded volume, and dispersion energy affect measured structure factors, shedding light on the intricate relationship between pair potentials and structural features. Surprisingly, we find that the conclusions from prior literature stating that details of the interatomic interaction could not be extracted from experimental structure factors were likely justified given the data quality available at the time, but that modern neutron instruments exceed a precision threshold where this conclusion could be overturned. These findings suggest that experimental inverse techniques were prematurely abandoned and should be revisited.

According to our results, neutron scattering measurements determined within a standard deviation of $0.005$ barnes/steradian to a $Q_{max}\sim 30$ \AA$^{-1}$ are sufficient for force parameter recovery. Fortunately, this level of precision is already available at modern diffractometers, such as the Nanoscale Ordered MAterials Diffractometer (NOMAD) \cite{neuefeind_nanoscale_2012} or at other modern instruments for sufficiently long run times. Method advancements in structure inversion, along with the improvement of neutron facilities and measurement accuracy, may therefore be the key to unlock a wealth of opportunities for improving molecular models, characterizing local atomic forces, and understanding the dynamics of atoms and molecules in relation to complex and emergent physical phenomena. 

\subsection{Computational Methods}

In this study, we aimed to model how uncertainty propagates from neutron scattering data to the estimation of force field parameters. The impact of measurement uncertainty was isolated by constraining the Bayesian analysis to a classical model fluid. While real physical systems behave quantum mechanically and are inherently many-body in nature, classical pairwise additive model fluids continue to be studied due their low computational cost and accurate predictions of complex thermodynamic properties. Furthermore, our prior work has shown that SOPR potentials exhibit potential corrections consistent with quantum mechanical calculations \cite{shanks_transferable_2022}, suggesting that effective pair interactions could be found that capture many-body and quantum mechanical contributions.

The ($\lambda$-6) Mie fluid model was selected since it is a flexible and widely successful classical model with numerous existing and developing applications for materials modeling. The pairwise, non-bonded potential energy term of the ($\lambda$-6) Mie fluid is,

\begin{equation}
    v^{Mie}_2(r) = \frac{\lambda}{\lambda-6}\bigg(\frac{\lambda}{6}\bigg)^{\frac{6}{\lambda-6}} \epsilon \bigg[ \bigg(\frac{\sigma}{r}\bigg)^\lambda - \bigg(\frac{\sigma}{r}\bigg)^6 \bigg]
\end{equation}

\noindent where $\lambda$ is the short-range repulsion exponent, $\sigma$ is the collision diameter (distance), and $\epsilon$ is the dispersion energy (energy) \cite{mie_zur_1903}. The ($\lambda$-6) Mie fluid is a flexible model potential that has been shown to accurately reproduce thermodynamic properties in real fluids \cite{mick_optimized_2015,mick_optimized_2017}. 

\subsubsection{Modeling Neutron Measurement Uncertainty in a Mie Fluid Model}

To model experimental uncertainty, a set of Mie fluids was simulated with sufficient sampling statistics to calculate a highly-accurate static structure factor ($\delta S(Q) < 0.001$) to $Q_{max} = \sim 30$ \AA$^{-1}$). Computer generated atomic trajectories were calculated in HOOMD-Blue \cite{anderson_hoomd-blue_2020}. MD simulations were initiated with a random configuration of 500 particles at reduced density $\rho^* = 0.1$ and reduced temperature $T^* = 1$ and equilibrated with Langevin dynamics for $1\times10^5$ timesteps ($dt=10$ femtoseconds). Potentials were truncated at $3\sigma$ with an analytical tail correction, and radial distribution functions were calculated with Freud \cite{ramasubramani_freud_2020}. Static structure factors are calculated via radial Fourier transform of the radial distribution function. 

Experimental measurements of structure factors are subject to uncertainty arising from various factors, including experimental, model, and numerical sources. Uncertainties in neutron flux, energy, time-of-flight, minimum and maximum momentum transfer ($Q_{min}$, $Q_{max}$), and data collection time contribute to uncertainty in neutron counting statistics and the effective resolution of the instrument. Post-processing corrections for inelastic, incoherent, multiple, and low momentum transfer scattering further contribute to uncertainty in the structure factor form \cite{soper_inelasticity_2009}. These effects result in variations in the neutron intensity that are not necessarily normally distributed \cite{barron_entropy_1986}; however, errors from neutron detection are normal due to the limiting behavior of the Poisson distribution for large number of counts \cite{heybrock_systematic_2023}. For reactor source instruments, the variance due to these random errors remains approximately constant to a limited $Q$-max (10-20 \AA$^{-1}$), while for spallation sources, the variance increases proportionally to the square of the momentum transfer to a higher $Q$-max of 50-125 \AA$^{-1}$ \cite{neuefeind_nanoscale_2012}. Currently, the extent to which this uncertainty influences the accuracy and reliability of force field reconstructions remains unknown.  

Uncertainty quantification was performed for reactor type neutron instruments by adding constant noise to the true structure factor signal with variations approximately equal to the values indicated in Figure \ref{fig:corrupt}. Bayesian analysis was performed for 16 \textit{in silico} experimental conditions over a $4 \times 4$ equal spaced grid with $\sigma = [1.85, 1.89, 1.93, 1.97]$, $\epsilon = [0.86, 0.80, 0.74, 0.70]$ and fixed $\lambda=12$. Since spallation type neutron instruments have a $Q^2$-dependent random error, UQ on the constant error can be interpreted as an upper bound for spallation instruments. Uncertainty levels were selected based on published data of structure factors measured on neutron instruments from the early 1970's to 2022. Notably, a classic argon data set collected on Omega West (1973) \cite{yarnell_structure_1973} is well approximated by constant noise with variations in $S(Q)$ of approximately $\delta S = 0.05$, as noted in Figure \ref{fig:corrupt}. Similarly, krypton data collected on D4B (1993)\cite{barocchi_neutron_1993} is approximated by the constant $\delta S = 0.025$ case, while modern instruments such as NOMAD and NIMROD (>2010) exhibit uncertainty distributions similar to the $\delta S = 0.005-0.01$ cases depending on the data collection time \cite{bowron_nimrod_2010,neuefeind_nanoscale_2012}. Two low uncertainty extremes were chosen beyond these reported values to identify measurement precision thresholds and model trends in the predictability of force parameters. Finally, to reduce the possibility of over-fitting to a single randomly generated structure factor, we used four structure factor replicates with twice the uncertainty of the reported value. This approach is consistent with the well-known relation that the standard deviation in the neutron counts is approximately proportional to the square-root of the number of counts \cite{heybrock_systematic_2023}. Assuming that neutron count is directly proportional to time, four measurements of twice the standard deviation would be approximately equivalent to one measurement with half the standard deviation.  

\subsubsection{Bayesian Uncertainty Quantification for Force Field Reconstruction}

According to the Henderson inverse theorem, it is theoretically possible to uniquely recover the underlying potential in a pairwise additive, homogeneous fluid \cite{henderson_uniqueness_1974}. In the context of Bayesian optimization, Henderson's theorem requires that there should be a global maximum in the posterior probability distribution at the unique force parameters. However, as the uncertainty in the structure factor signal increases, deviations from this unique potential are expected, causing the probability distributions to broaden. This broadening indicates a decrease in confidence in the estimation of the model parameters. In other words, as structural uncertainty increases, our ability to accurately predict the potential energy decreases, leading to a wider range of possible parameter values to explain the data. 

Bayesian inference was implemented to calculate parameter probability distributions as a function of structure factor uncertainty. For simplicity of notation, let $\boldsymbol{\theta} = \{\lambda, \sigma, \epsilon, \sigma_n\}$ represent the model parameters and $\mathcal{Y} = S_d(Q)$ be the structure factor observations. The nuisance parameter, $\sigma_n$, represents the width of the Gaussian likelihood and captures uncertainty from the experimental data and Gaussian process model, which is not known \textit{a priori}. Calculating the posterior probability distribution with Bayesian inference then requires two components: (1) prescription of prior distributions on the model parameters, $p(\theta)$, and (2) evaluation of the structure factor likelihood, $p(\mathcal{Y}|\boldsymbol{\theta})$. The prior distribution over the $(\lambda-6)$ Mie parameters is assumed to be a multivariate log-normal distribution,

\begin{equation}
     \theta - \gamma_\theta \sim \log \mathcal{N} (\mu_{\boldsymbol{\theta}} + \gamma_\theta, \sigma^2_{\boldsymbol{\theta}})
\end{equation}

\noindent where $\mu_{\boldsymbol{\theta}}$ and $\sigma^2_{\theta}$ are the prior mean and variance of each parameter in $\boldsymbol{\theta}$ and $\gamma_\theta \in \mathbb{R}$ is a real-valued parameter shift that enforces a lower bound. A wide, shifted multivariate log-normal distribution was selected because it is non-informative and imposes non-negativity constraints on the model parameters. Specifically, $\lambda-6$ (defined by the Mie type fluid), $\sigma$, $\epsilon$, and $\sigma_n$ must be positive. For reference, the prior parameters used in this study and sample range is summarized in Table \ref{tab:priors1}. Wide and flat prior distributions were selected to stress the assessment of the uncertainty contributions from the introduced structure factor noise.

The likelihood function is a Poisson distribution of the neutron counts, but we can approximate this distribution as a Gaussian because a Poisson distribution approaches a Gaussian distribution in the high count limit,

\begin{equation}\label{eq:gausslikelihood}
    p(\mathcal{Y}|\boldsymbol{\theta}) = \bigg(\frac{1}{\sqrt{2 \pi}\sigma_n}\bigg)^\eta \exp\bigg[-\frac{1}{2\sigma^2_{n}}\sum_j\ [{S}_{\boldsymbol{\theta}_i}(Q_j) - S_d(Q_j)]^2\bigg]
\end{equation}

\noindent where ${S}_{\boldsymbol{\theta}}(Q_i)$ is the molecular simulation predicted structure factor, $\eta$ is the number of observed points in the structure factor, and $j$ indexes over these points along the momentum vector. Bayes' theorem is then expressed as,

\begin{equation}\label{eq:inference1}
    p(\boldsymbol{\theta}|\mathcal{Y}) \propto p(\mathcal{Y}|\boldsymbol{\theta}) p(\mathbf{\boldsymbol{\theta}})
\end{equation}

\noindent where equivalence holds up to proportionality. This construction is acceptable since the resulting posterior distribution can be normalized \textit{post hoc} to find a valid probability distribution. For further details see the following excellent reviews of Bayesian inference \cite{gelman_bayesian_1995,bishop_pattern_2006}. 

The Bayesian likelihood distribution is estimated using Markov Chain Monte Carlo (MCMC) samples over the model parameters $\theta = \{\lambda, \sigma, \epsilon, \sigma_n\}$. Computationally, a sample of the model parameters is drawn from a Metropolis-Hastings type algorithm, passed to the surrogate model, evaluated, and compared to the \textit{in silico} structure factor. MCMC samples were calculated using the emcee package \cite{foreman-mackey_emcee_2013} from 160 walkers with dynamic burn-in and sample time based on the autocorrelation convergence criterion used in the emcee package and default stretch move with dynamic tuning.

\begin{table}[H]
\centering
\caption{\label{tab:priors1}
Prior parameters on the ($\lambda$-6) Mie model parameters.}
\begin{tabular}{| c | r | r | r |}
\hline
\textrm{Parameter}&
\textrm{$\mu$}&
\textrm{$\sigma (std.)$}&
\textrm{$\gamma_\theta$}\\
\hline
$\lambda$  &  3    & 1 & 6 \\
$\sigma$   &  2    & 1 & 0\\
$\epsilon$ &  0.7  & 1 & 0\\
$\sigma_n$ &  0.1  & 3 & 0\\
\hline
\end{tabular}
\end{table}

\subsubsection{Local Gaussian Process Surrogate Models for Structure Factors}

The process of populating the posterior distribution function necessitates the evaluation of likelihood for each condition of interest within the model parameter space, which, in turn, requires conducting an infeasible number of molecular dynamics simulations. The computational burden associated with this procedure renders the Bayesian framework impractical even for a relatively small number of samples. To illustrate this, consider the task of obtaining a collision diameter posterior distribution with a grid resolution of approximately 2\% across a wide prior range of $m_{\sigma}$ (ranging from 0.5 to 1.5). Achieving such resolution for just the $m_{\sigma}$ parameter alone would demand a minimum of 50 samples. If the same level of resolution is desired for the remaining two parameters, a staggering 125,000 molecular simulations are required to comprehensively quantify the posterior distribution space. Clearly, there is a substantial computational challenge involved in obtaining accurate and comprehensive posterior distributions within the Bayesian framework.   

To expedite the evaluation of the Bayesian likelihood, a local Gaussian process (LGP) surrogate model was trained to generate structure factors based on a training set of 960 randomly sampled ($\lambda$-6) Mie parameters in a prior range specified in Table \ref{tab:ranges1} \cite{shanks_accelerated_2024}. This range of parameters was selected to correspond with the liquid phase region of the Mie phase diagram and avoid pathological simulations that can occur near phase transitions. Note that this range will change based on the arbitrary choice of temperature and density for the Mie fluid simulation; but, since the reduced phase diagram is simply scaled from these values, the molecular dynamics simulation will have the same dynamics, average thermodynamic properties and structure.

\begin{table}[H]
\centering
\caption{Estimated boundaries for physics-constrained prior space based on the ($\lambda$ - 6) Mie fluid phase diagram. $m = 6$ is the attractive tail exponent of the ($\lambda$ - 6) Mie potential. $*$) The maximum $\lambda$ was selected to be substantially larger than previously reported values \cite{vrabec_set_2001,mick_optimized_2015,shanks_transferable_2022}.}
\begin{tabular}{| c | l | l | l | l |}
\hline
\textrm{Param.}&
\textrm{Min.}&
\textrm{Min. Criteria}&
\textrm{Max.}&
\textrm{Max. Criteria}
\\
\hline
$\lambda$  & 6.1  & $m=6 \implies \lambda>6$  &  18 & Literature$^*$ \\
$\sigma$   & 1.5 & Vapor-Liquid Equil. &  2 & Solid-Liquid Equil.  \\
$\epsilon$ & 0.2 & $\epsilon<0$ undefined  &  1.1 & Vapor-Solid Equil. \\
\hline
\end{tabular}
\label{tab:ranges1}
\end{table}

LGP surrogate models reduce the computational time complexity of standard GP regression with little loss in predictive accuracy \cite{das_block-gp_2010,chen_parallel_2013,gramacy_local_2015}. Hyperparameter selection was performed using leave-one-out marginal likelihood Bayesian optimization \cite{sundararajan_predictive_2001} with Markov chain Monte Carlo (Figure \ref{fig:hyperparams}) and surrogate model accuracy (Figure \ref{fig:rmse1}) was determined to have a root-mean-square error (RMSE) of 0.0036 for a 160 randomly sampled test set within the surrogate parameter range.

\begin{figure}
    \centering
    \includegraphics[width = 15cm]{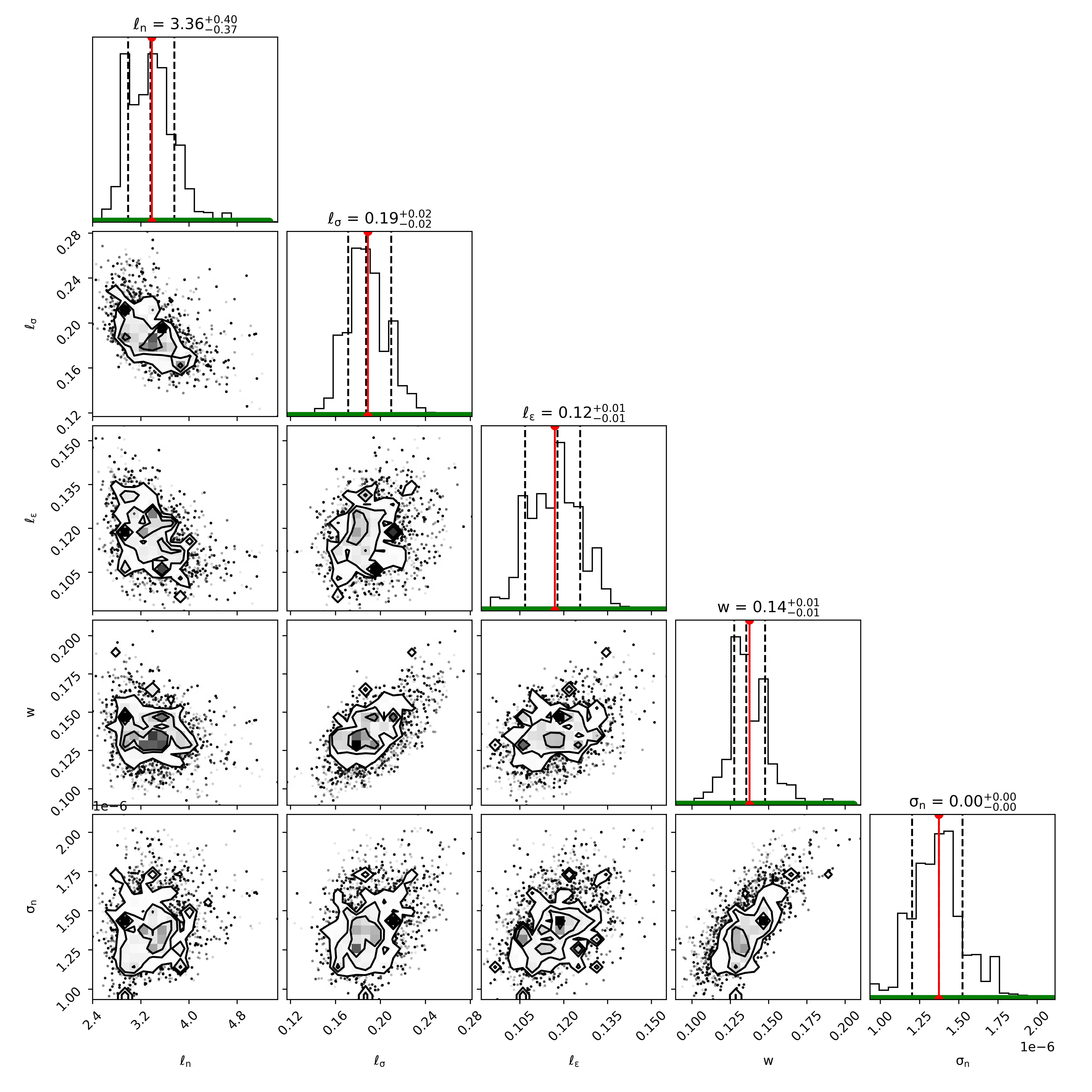}
    \caption{Bayesian posterior distribution for LGP hyperparameters according to the leave-one-out marginal likelihood.}
    \label{fig:hyperparams}
\end{figure}

\begin{figure}
    \centering
    \includegraphics[width = 15cm]{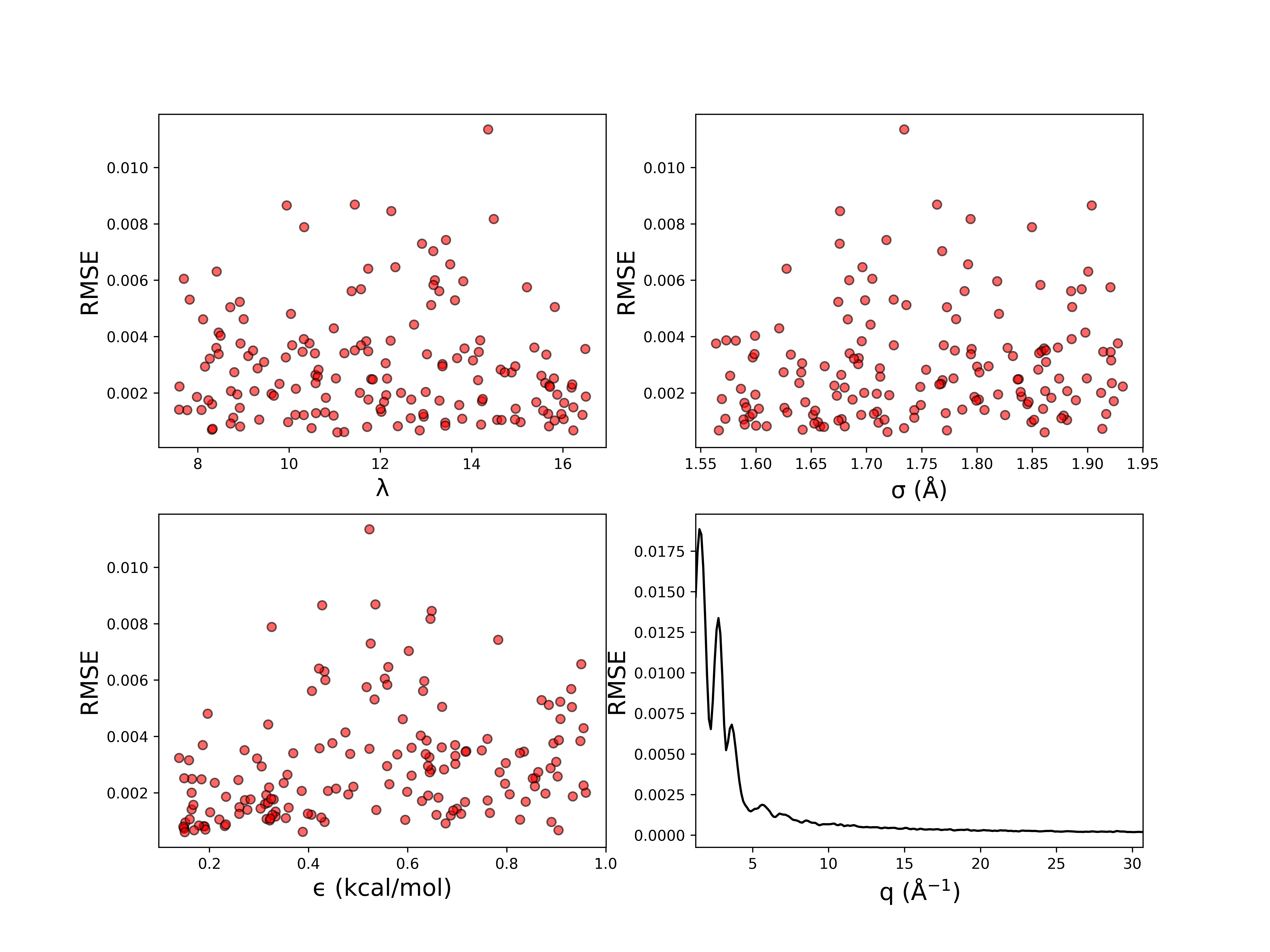}
    \caption{Root-mean-square error between LGP surrogate model prediction and training set points.}
    \label{fig:rmse1}
\end{figure}

\subsection{Results and Discussion}

Bayesian analysis was performed on structure factors with introduced uncertainty at 16 reference conditions. The aim was to determine the requisite threshold precision for extracting force field parameters from structure factors with a high level of fidelity. This threshold should allow for a clear choice of force parameters, exemplified by a sharp and unimodal distribution.  

\subsubsection{Sensitivity Analysis with Local Gaussian Process Derivatives}

The LGP surrogate model allows us to quantify the impact of varying ($\lambda$-6) Mie parameters on the structure at specific $Q$ values. One notable advantage of GPs is that they possess an analytical derivative, enabling the estimation of quantities such as $\frac{\partial S(Q)}{\partial \theta}$, where $\theta$ is some force model parameter ($\lambda, \sigma, \epsilon$) excluding the nuisance parameter $\sigma_n$. By examining the zeros and extrema of the LGP derivative, we can identify regions within the structure factor that are least and most affected by changes in the force parameters. This analysis provides valuable insight into the relationship between force parameters and the structure factor.

Figure \ref{fig:derivative} illustrates the derivatives of the surrogate model-predicted structure factor with respect to each parameter of the ($\lambda-6$) Mie force field. We find that even small changes in the Mie parameters results in substantial modification of the structure factor patterns. Consequently, provided experimental scattering results meet a necessary accuracy threshold, all three of the Mie model parameters (short-range repulsion, size, and dispersive attraction) could be extracted. Changes to the structure factor; however, do not impact all force parameters equally. The repulsive exponent derivative exhibits a small magnitude and undergoes sign changes near the full-width half maximum of the structure factor peaks. This behavior suggests that increasing the repulsive exponent, which determines the "hardness" of the particles, causes a slight increase in height and narrowing of the structure factor peaks without significantly affecting their location. In the case of the collision diameter, zeros of the derivative occur at structure factor peaks and troughs, while local extrema align with the half-maximum positions. Consequently, increasing the effective particle size shifts the structure factor towards lower $Q$ values while maintaining relatively constant peak heights. Regarding the dispersion energy, its derivative displays zeros at the half-maximum positions of the structure factor and local extrema near the peaks and troughs. This behavior indicates that an increase in the dispersion energy leads to an increased magnitude and sharpening of the structure factor peaks similar to the repulsive exponent.

\begin{figure}
    \centering
    \includegraphics[width = 15cm]{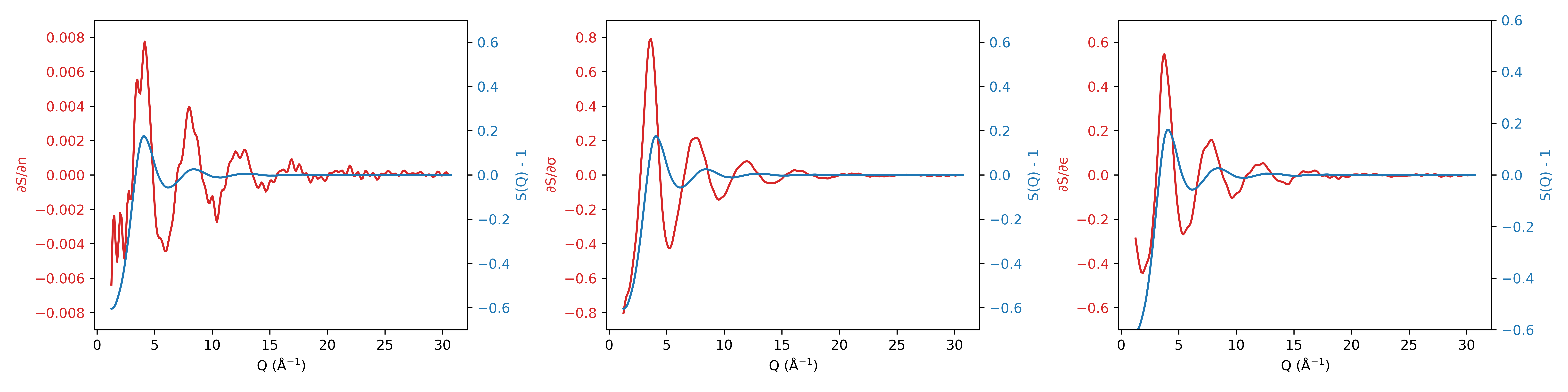}
    \caption{Derivatives of the surrogate model-predicted structure factor, S(Q), with respect to each parameter of the ($\lambda-6$) Mie force field (red line) plotted with the given structure factor (blue line).}
    \label{fig:derivative}
\end{figure}

\begin{figure}
    \centering
    \includegraphics[width = 15cm]{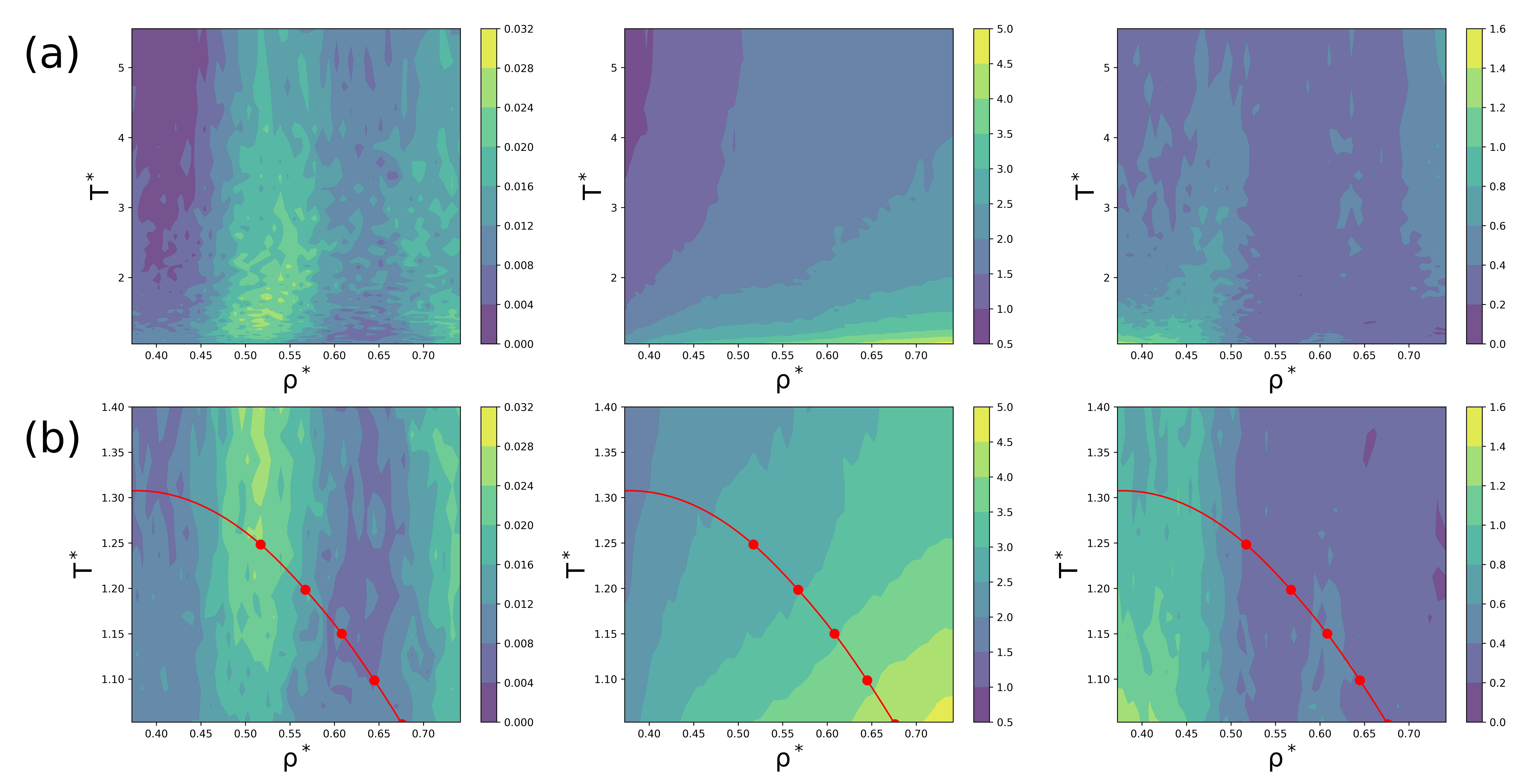}
    \caption{(a) Heat map of the maximum absolute value of the structure factor derivative with respect to each parameter of the ($\lambda-6$) Mie force field for the validated range of the surrogate model and (b) near the vapor-liquid coexistence line (red line).}
    \label{fig:derphasediag}
\end{figure}

Computation of the LGP derivative can also be performed over the entire validated range of the surrogate model. To visualize the results, we present a heat map of the maximum of the absolute value of the derivative with respect to reduced temperature $T^* = T/\epsilon$ and reduced density $\rho^* = \rho \sigma^3$ in Figure \ref{fig:derphasediag}. Higher values (yellow regions) indicate a high sensitivity of the structure factor relative to lower values (blue regions). First note that the maximum derivative estimates vary in magnitude significantly, with a two orders-of-magnitude smaller value for the repulsive exponent (0.032) compared to the collision diameter (5.0) and dispersion energy (1.6). The repulsive exponent $\lambda$ exhibits biomodality as a function of $\rho^*$, with higher sensitivity regions tending towards higher densities. This behavior could be explained by the fact that high density systems tend to collide more frequently at close range which is where the repulsive exponent strongly influences the potential energy function. The collision diameter $\sigma$ has a clear sensitivity trend with more sensitive regions being higher density and lower temperature. Observing the sensitivity over the full range (Figure \ref{fig:derphasediag}a), we can see that there also appears to be asymptotic behavior near specific densities, suggesting that the higher sensitivities correspond to closer proximity of the atoms where excluded volume effects can dominate the structure. Finally, the dispersive forces appear to become more significant to the structure within the vapor-liquid phase envelope, which makes sense considering that vapor-liquid equilibrium involves the interplay between cohesive attraction holding together the liquid phase and the equality of chemical potential driving the formation of the gas phase. The trend of higher sensitivity of the dispersive attraction towards lower densities could also be due to the fact that attractive forces play a more significant role in this region of the phase diagram in comparison to excluded volume effects. Of course, the validated range of the surrogate model is limited by the computational cost of running training simulations and further investigation of Mie parameter sensitivity to other regions of the phase diagram is reserved for future study.

\subsubsection{Force Field Parameter Posterior Distributions as a Function of Uncertainty}

Armed with a precise and rapid surrogate model for the fluid structure factor, we can now proceed to evaluate the likelihood function and, consequently, derive Bayesian posterior distributions. Figure \ref{fig:marginals} illustrates experiment averaged marginal probability distributions of $(\lambda, \sigma, \epsilon)$ as a function of noise. Note that these distributions represent an average over MCMC samples from the 16 \textit{in silico} experiments. Conceptually, these posterior distributions are an approximation to the marginal parameter posterior distributions over the joint probability density containing the model parameters, structure factor, and thermodynamic state, $p(\lambda,\sigma,\epsilon,S(Q), T^*, \rho^*)$, where $T^* = k_bT/\epsilon$ and $\rho^* = \rho \sigma^3$ are the reduced temperature and density, respectively.

\begin{figure}
    \centering
    \includegraphics[width = 15 cm]{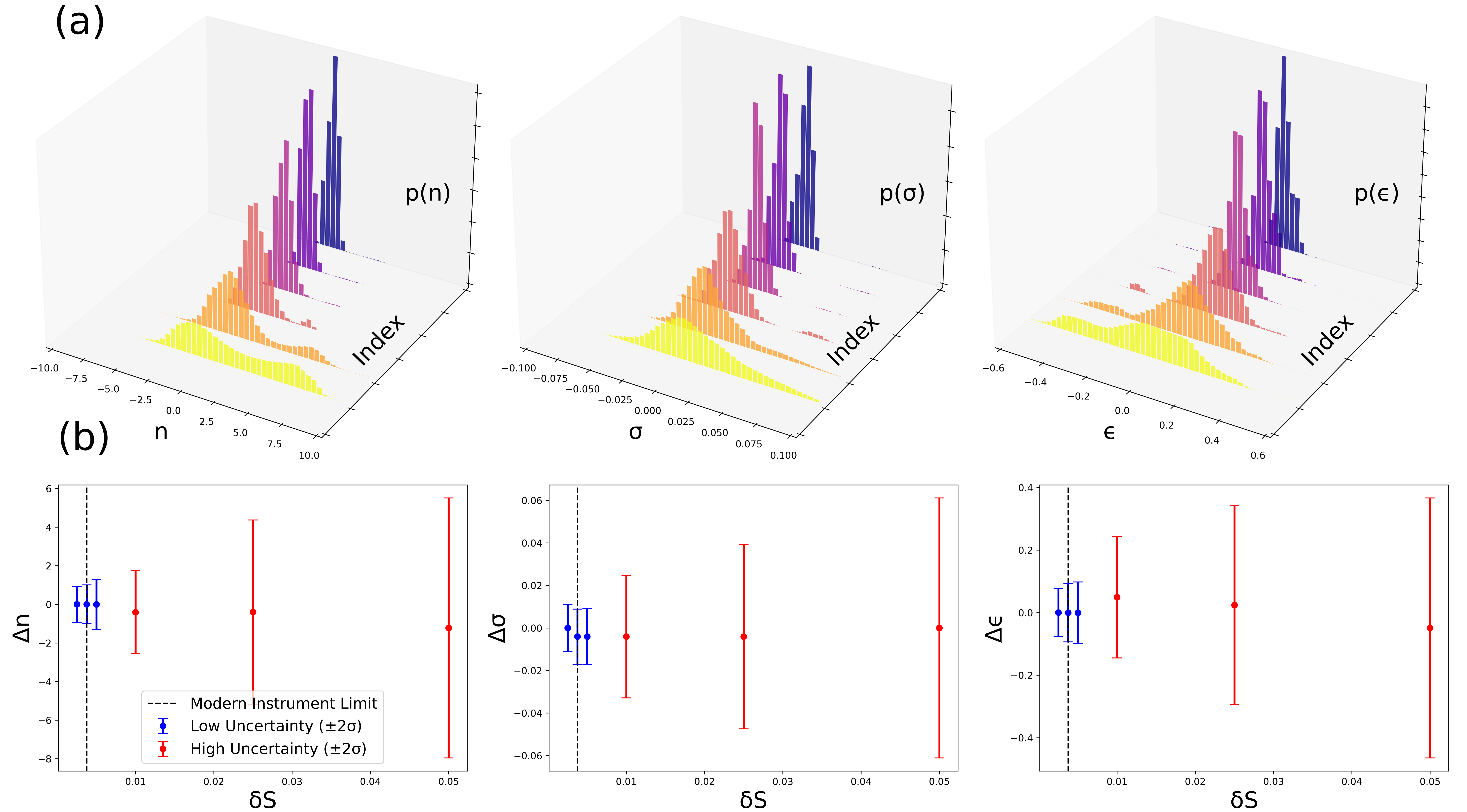}
    \caption{Mie parameter veraged marginal posterior distributions at varying structure factor noise levels. (a) 3D histogram of the average marginal distributions on the ($\lambda-6$) Mie force field parameters as a function of uncertainty in the structure factor ($\delta S$). (b) MAP estimates (points) are plotted with 2 std. dev. error bars as a function of noise. Low parameter uncertainty cases (blue) are compared to high uncertainty cases (red) and the lower limit precision of current neutron instruments (black dashed line).}
    \label{fig:marginals}
\end{figure}

The 1D Marginal distributions in Figure \ref{fig:marginals} are obtained by integrating the joint posterior probability distribution over all but one model parameter. The mode of the marginal distribution corresponds to the marginal \textit{maximum a posteriori} (\textit{MAP}). It is worth noting that as the uncertainty in the structure factor increases, the marginal distributions become wider and flatter. This behavior is expected, as greater uncertainty in the observation leads to increased uncertainty in the parameter distribution. In cases where the structure factors exhibit low uncertainty, the \textit{MAP} estimates accurately recover the unknown force field parameters. Deviation between the \textit{MAP} estimate and true parameter value is calculated as a function of uncertainty and presented in Table \ref{tab:maps}. 

First, note the drastic difference in the accuracy of the \textit{MAP} estimates for the repulsive exponent and dispersion energy parameters as we transition from an uncertainty level of $\delta S = 0.025$ to $\delta S = 0.05$. The $\sigma$ parameter is accurately estimated in both scenarios, demonstrating its reliable prediction even for low quality scattering data. In the $\delta S = 0.025$ case, the $\lambda$ and $\epsilon$ parameters are also accurately predicted. However, for the $\delta S = 0.05$ case, the $\lambda$ and $\epsilon$ parameters become unlearnable with \textit{MAP} deviations of -1.173 and -0.05, respectively. This large deviation clearly indicates that the data quality is too low to reliably extract subtle details of the potential energy function, which recent work has shown to be imperative to model thermodynamic behavior at extreme pressures \cite{messerly_uncertainty_2018}. This observation is critical in the context of prior studies in which it has been concluded that the structure factor is insensitive to the interatomic interactions beyond the excluded volume \cite{jovari_neutron_1999,hansen_theory_2013} or that uncertainty in the structure measurement impeded prediction of transferable potentials \cite{schommers_pair_1983,levesque_pair_1985,lyubartsev_calculation_1995}. In these studies, the instrument uncertainty ranged from values of 0.03-0.07, exceeding the threshold identified by our model. Barocchi's (1993) study on liquid krypton was unique in the conclusion that the neutron instrument accuracy was now high enough to elucidate detailed many-body interactions \cite{barocchi_neutron_1993}, which was concluded based on structure factors measured to a precision $\leq 0.025$. Based on our structure factor precision model, we find that these conclusions were likely appropriate for the data quality available.

The data also shows a significant change in the width of the distributions at critical uncertainty levels. The standard deviation effectively doubles for the Mie parameters between $\delta S = 0.005$ and $\delta S = 0.01$. This rapid increase in width of the posterior distribution is significant since it becomes exceedingly more difficult to estimate the potential parameters using optimization techniques. Based on these shifts in the standard deviations, we recommend that neutron scattering experiments for liquids not exceed random errors of $\delta S = 0.005$ out to $\sim$ 30 \AA$^{-1}$ if attempting to extract pair potential information from the structure factor. This level of precision is achievable on modern instruments but may require longer run times than standard neutron scattering measurements.

\begin{table}
\centering
\caption{\label{tab:maps} Error in ($\lambda-6$) Mie force field parameters determined by Bayesian inference on the structure factor. $\Delta p$ is the difference between the \textit{MAP} estimate and the underlying parameter set.}
\begin{tabular}{| c | r r r r |}
\hline
\textrm{Comparable Neutron Instrument}&
\textrm{$\delta S$}&
\textrm{$\Delta \lambda \pm 2\sigma_\lambda$}&
\textrm{$\Delta \sigma \pm 2\sigma_\sigma$}&
\textrm{$\Delta \epsilon \pm 2\sigma_\epsilon$}\\
\hline
-                                               & 0.0025  & -0.051 $\pm$ 0.9 &  0.000  $\pm$ 0.01 &  0.00 $\pm$ 0.07 \\
-                                               & 0.00375 & -0.255 $\pm$ 1.0 & -0.002  $\pm$ 0.01 &  0.00 $\pm$ 0.09 \\
NOMAD (2012)     & 0.005   &  0.051 $\pm$ 1.3 & -0.002  $\pm$ 0.01 &  0.02 $\pm$ 0.10 \\
\hline
NIMROD (2010)         & 0.01    & -0.561 $\pm$ 2.1 & -0.006  $\pm$ 0.02 &  0.03 $\pm$ 0.19 \\
D4B (1993)          & 0.025   & -0.561 $\pm$ 4.8 & -0.004  $\pm$ 0.03 &  0.03 $\pm$ 0.29 \\
Omega West (1973)  & 0.05    & -1.173 $\pm$ 6.7 & -0.002  $\pm$ 0.04 & -0.05 $\pm$ 0.36 \\
\hline
\end{tabular}
\end{table}

\subsubsection{Uncertainty Quantification on a State-of-the-Art Neutron Instrument Model}

We have demonstrated through uncertainty quantification that prior conclusions to inverse problem solutions were likely limited based on the experimental accuracy of the scattering instruments of their time. Furthermore, modern diffractometers are sufficiently precise to provide reliable inverse problem solutions to assess a variety of atomic force properties. Consequently, we further analyze the posterior distributions for structure factor results that are consistent with modern diffractometers. The highest flux instruments are spallation sources, which can measure structure factors with constant standard deviations of $\delta S = 0.005$ out to 30 \AA$^{-1}$. This condition well-represents an upper bound of  uncertainty in a structure factor measurement on the state-of-the-art NOMAD and NIMROD instruments. Posterior marginals, Markov chain Monte Carlo (MCMC) samples and heat maps of the joint posterior distribution are illustrated in Figure \ref{fig:histograms}.

The marginal \textit{MAP}, corresponding to the global maximum of the marginal distribution, accurately predicts the true parameter values (indicated by red dashed lines) with exceptional precision, exhibiting error rates below 1\% for all force parameters. The shape and width of the marginal distributions offer valuable insights into the influence of each parameter on the ensemble fluid structure. The collision diameter marginal exhibits a narrow and symmetric shape, characterized by a probability density at the \textit{MAP} that surpasses the repulsive exponent and dispersion energy by factors of 80 and 3, respectively. This symmetry and high probability density suggest a remarkable sensitivity of the structure factor to changes in the effective particle size which is consistent with the observations of Weeks, Chandler and Anderson \cite{weeks_role_1971}. However, the seemingly small difference between the repulsive structure factor alone and the true structure factor clearly contains sufficient information to determine the potential well-depth as well as the repulsive exponent and collision diameter. Therefore, we contend that the structure factor of liquids contains more information than previously believed. 

\begin{figure}
    \centering
    \includegraphics[width = 15 cm]{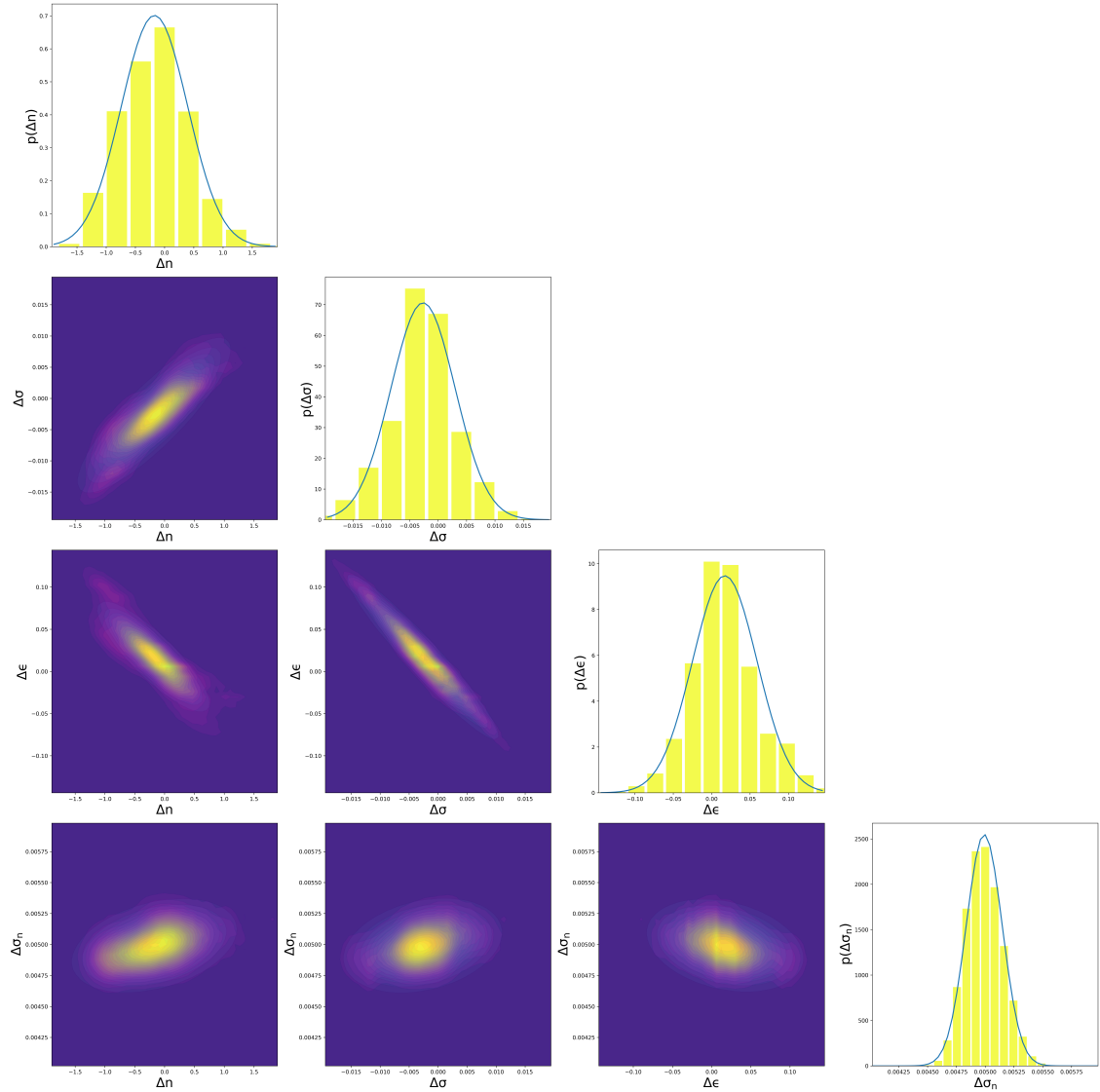}
    \caption{Marginal distributions on the ($\lambda-6$) Mie force field parameters for $\delta S = 0.005$ at 30 \AA $^{-1}$ with variance sampled with MCMC plotted with known parameter values (red dashed line).}
    \label{fig:histograms}
\end{figure}

Two standard deviations of the posterior distribution can be used as an estimate of our confidence in recovering the force parameter from the structure factor with $\sim$95\% confidence. Using this metric, we find that the force parameters can be recovered with 95\% confidence within $\pm$1.3 for the repulsive exponent, $\pm$0.02 $\sigma$, and $\pm$0.1 $\epsilon$. Of course, the posterior distributions computed here are in reduced Lennard-Jones (LJ) units and must be scaled by a known reference to approximate the credibility intervals in real units. For example, taking the LJ parameters for argon ($\sigma = 3.4$ \AA, $\epsilon = 0.24$ kcal/mol \cite{shanks_transferable_2022}) would give a real unit estimate of $\lambda \pm$1.3, $\sigma\pm$0.068 \AA, and $\epsilon\pm$0.024 kcal/mol with 95\% confidence.

Uncertainty quantification and propagation of the potential in relation to the structure factor holds the key to unlocking several capabilities of neutron scattering, including force field design, elucidation of many-body interactions, and improved understanding of structural properties in fluid systems \cite{terban_structural_2022}. While these aims have motivated research on the inverse problem for over a century, we are only now seeing evidence that accurate structure inversion on experimental data is a possibility. We argue, despite having presented a study on a simple model, that our results warrant the recommendation of revisiting inverse methods for real fluids. 

One exciting prospect for inverse problem methods is that interaction potentials derived from structure can serve as an external validation to computationally expensive bottom-up atomistic models. One example is electron structure calculations, in which a highly accurate quantum mechanical treatment of the electron structure can reveal insights into potential energy surfaces and reaction mechanisms. Electron structure methods have become faster and more robust due to quantum computing \cite{motta_emerging_2022}, clever basis set selection \cite{ratcliff_flexibilities_2020}, and machine learning \cite{duan_learning_2019,rath_framework_2023}. As these more fundamental theories of atomic structure and motion become commonplace, experimental neutron scattering data will be a crucial validate of their predictions. Indeed, we have already shown that many-body interactions in noble gases are consistent with electron structure calculations in trimeric systems \cite{guillot_triplet_1989,shanks_transferable_2022}. However, further advancement of inverse methods can provide quantifiable validation of many-body interactions in progressively complex systems.

In contrast to experimental analysis, inverse methods for coarse-graining have thrived in contemporary chemical physics. In coarse-graining, structure factors are generated from a known model where the uncertainty fluctuations are significantly smaller than that of experimental data. Bayesian analysis indicates the presence of global maxima in the posterior distributions of such systems, suggesting that well-behaved optimization schemes should be capable of reliably identifying these maxima. Since global maxima are also expected in experimental scattering measurements conducted using state-of-the-art neutron instruments, there is an opportunity to employ these maximum-likelihood methods for developing novel force fields directly from experimental data. With this evidence, we hope to bridge the gap between experiment and simulation-based inverse techniques and foster closer collaboration between these two communities.

It is important to acknowledge certain limitations in the previous analysis when considering the extension of the results to other physical systems. First, the sensitivity of more complex systems than the ($\lambda$-6) Mie model may differ from the estimates reported in our study. Therefore, it is cautioned to interpret the results of this example as a conceptual exploration of how classical two-body interactions impact the atomic organization in fluids. Hence, the specific response of complex systems to variations in interatomic forces should be studied individually. Second, if a fluid cannot be adequately described by a ($\lambda$-6) Mie model, the resulting posterior distribution may exhibit flatness or multimodality, indicating a high level of uncertainty in both the structure and model parameters. In such cases, a more accurate model of the system should be adopted to facilitate reliable parameter inference. Furthermore, systematic errors were not investigated and are certainly significant to the potential reconstruction. Therefore, further work should explore how to eliminate systematic errors in neutron scattering analysis through physics-based Gaussian process regression or analogous approaches.

\subsection{Conclusions}

Rigorous uncertainty quantification and propagation analysis has shown that modern neutron diffractometers have attained the necessary accuracy for reliable force field reconstruction. It has also been shown that neutron scattering measurements within $\leq 0.005$ at $\sim$30 \AA$^{-1}$ are sufficient for force parameter recovery in simple liquids. We stress that the structure factor contains information on the force field parameters that control the attractive as well as the repulsive part of the interatomic potential. This study highlights the exciting possibility of using neutron scattering to predict the potential energy function of Mie-type fluids, emphasizes the critical role of experimental precision in extracting potentials from scattering data, and offer svaluable insights into the nature of interatomic forces in liquids. 

The significance of these results extends beyond the field of neutron scattering analysis. They hold great potential in advancing force field design and optimization, enabling the development of effective coarse-graining techniques, and facilitating the exploration of many-body effects in fluid ensembles. The far-reaching impact of machine learning-accelerated methods in predicting interatomic forces from experimental structure measurements is evident. In summary, this research establishes the transformative potential of machine learning in extracting interatomic forces from experimental structure measurements with uncertainty quantification.

\section{Conclusions and Future Work}

\subsection{General Conclusions}

In this dissertation, it was shown that adopting a Bayesian philosophy to contemporary problems in condensed matter physics can unlock novel applications of liquid state theory to analyze scattering experiments.  For example, we have shown that function space Bayesian inference using Gaussian processes can allow for the extraction of transferable force fields for real fluids, the first result of its kind in the 100 year history of structure inversion methods. Furthermore, a practical and easy-to-implement Bayesian approach was outlined to train classical force field models on complex experimental data at a previously inaccessible computational scale. Finally, ultra-fast Bayesian methods were leveraged to revisit prior literature claiming that scattering data is unsuitable for force extraction and showed that this conclusion is likely no longer true for modern neutron instruments and computer simulations. The primary aim of this dissertation was to present novel applications of Bayesian analysis within liquid structure analysis. It is my hope that these results can form the foundation of a general Bayesian theory for the liquid state, in which we leverage all available experimental data, physical knowledge, and computational resources to link the quantum mechanical scale all the way through macroscopic scale thermodynamics.

Chapter 2 applied a Gaussian process regressor as a probabilistic interpreter of a Henderson inverse theorem potential refinement scheme, which we called structure optimized potential refinement (SOPR). Philosophically, the approach amounts to assuming that the output from iterative refinement is an observation of a potential from a functional distribution with physically-justified properties; namely, the potential function is continuous, differentiable, and varies over a length scale of $\sim$1\AA. It was shown that under these loose requirements that a nonparametric potential derived from noble gas radial distribution functions was consistent with pair potentials trained on detailed thermodynamic data with the flexible ($\lambda-6$) Mie functional form. The remarkable consistency of a fully nonparametric Bayesian scheme with a flexible parametric model indicates that there is detailed interatomic force information contained in the structure of simple liquids and that these forces can be used to predict complex thermophysical properties of the liquid state. 

One important limitation of the method from Chapter 2 is that there is a fundamental lack of uniqueness in site-site partial radial distributon functions in molecular liquids and mixtures. Without a proper characterization of these distribution functions, a nonparametric approach is difficult to implement and potentially unreliable as demonstrated by previous work \cite{soper_radial_2013}. However, providing a stronger restriction on the potential function space to an established parametric form could eliminate some of these challenges and enforce consistency with existing molecular simulation methods. Therefore, Chapter 3 explored the incorporation of Bayesian uncertainty quantification for parametric potentials trained on radial distribution functions. The main idea was to create a fast and reliable uncertainty quantification framework that could be used to train parametric force fields to structural data. In this process, we were able to speed-up the Bayesian analysis by over a million-fold by using local Gaussian processes as a surrogate model to replace the expensive molecular simulation training step. This method holds promise as a tool to learn parametric force fields in more complex liquids as it can be trained on relatively few simulations ($\sim$1000). Finally, $(\lambda-6)$ Mie potentials recovered for liquid neon, a traditionally difficult to model liquid, were consistent with both SOPR potentials and a vapor-liquid equilibrium trained model \cite{mick_optimized_2015}. Consistency across multiple training schemes bolsters the validity of the method as a force field development tool.

Chapter 4 leverages the local Gaussian process surrogate model developed in Chapter 3 to probe the importance of neutron scattering measurement accuracy on the extraction of forces from structural data. This question has deep roots in the liquid state theory community and was studied by Verlet \cite{levesque_note_1968} and Soper \cite{soper_uniqueness_2007,soper_radial_2013} over decades of structure research. The importance of this question lies in the fact that we need to be able to identify what available data is suitable for force field modeling and whether or not new neutron experiments are necessary. Furthermore, since the work uses a Bayesian uncertainty quantification framework, we identified that it is possible to recover force parameters with high certainty from neutron measurements collected on state-of-the-art neutron diffractometers. 

Taken together, the results of this dissertation paint an optimistic picture of scattering analysis in the context of liquid state theory. It has been shown that adopting a rigorous mathematical framework for uncertainty quantification can unlock novel capabilities of scattering analysis and inform modeling decisions and fundamental insight into the natural world. Finally, the methods and philosophical approaches that underscore the novel research contributions of this dissertation are already showing progress on more complex systems. Based on these preliminary successes and the strong evidence provided in this dissertation, I speculate that a Bayesian approach to liquid state theory will continue to revolutionize the field of atomic scale chemistry and physics well into the future.

\subsection{On the Horizon}

Described below are a selection of ongoing and future research directions suggested by the contents of this dissertation. The list is not exhaustive as the use of Bayesian methods in computational chemistry and materials engineering provide a lifetime of potential scientific study.

\subsubsection{Quantifying Many-Body and Quantum Effects with Neutron Scattering}

The biggest limitation of the SOPR method is its restriction to classical pairwise additive potentials. Real physical systems behave quantum mechanically and are inherently many-body in nature, leading to effects that cannot be directly modeled in a classical representation. Additionally, we have not yet verified whether decomposing the SOPR potentials into a quantum dimer and effective many-body term can capture higher-order effects (Figure \ref{fig:triplet}). One proposed test is to run a model system with 2- and 3-body interactions and then attempt SOPR on the resulting radial distribution function with a 2-body reference potential. If SOPR could recover the effect of the 3-body term, then we could verify whether or not the decomposition is appropriate for measuring many-body interactions.

\begin{figure}
    \centering
    \includegraphics[width = 14cm]{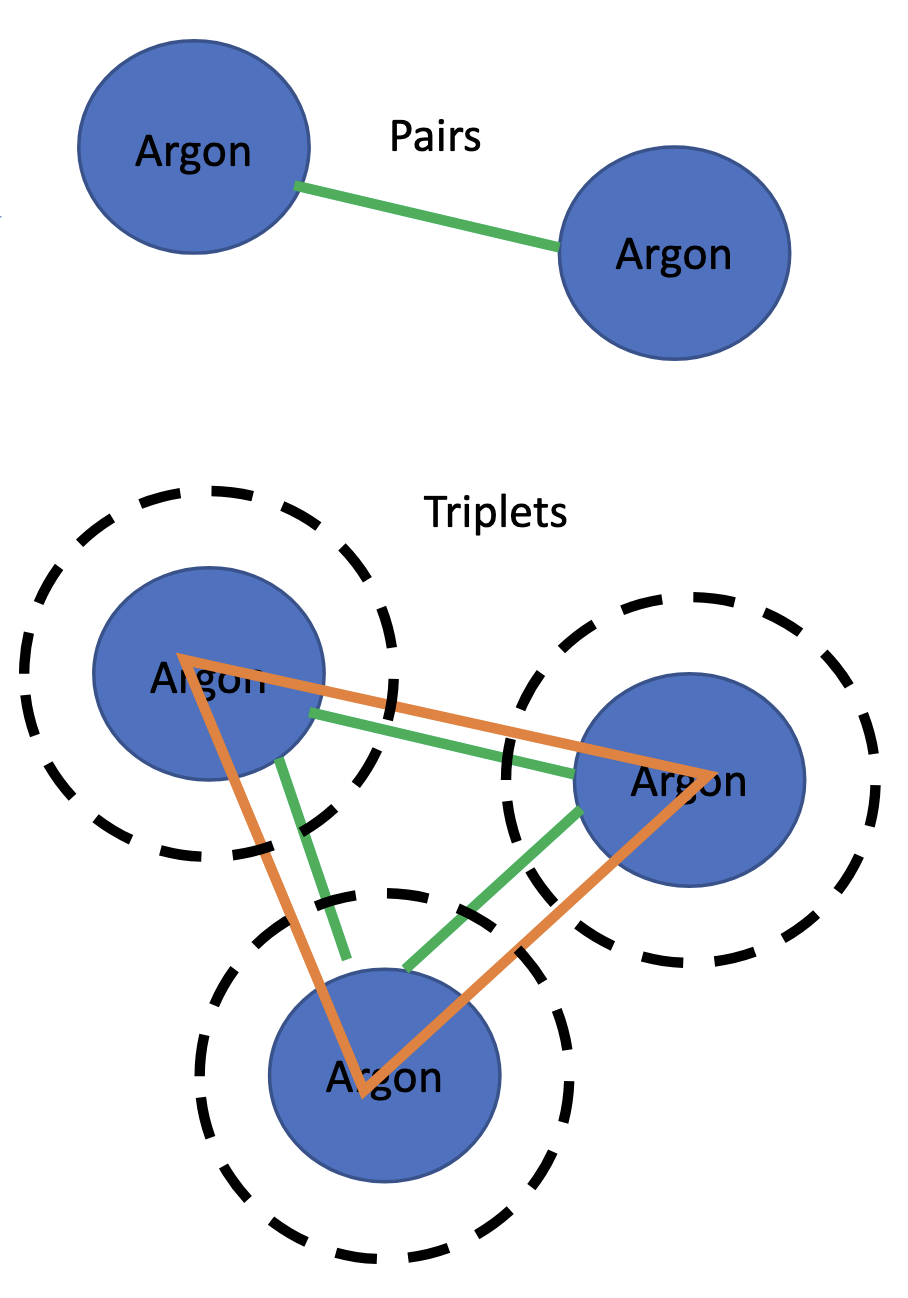}
    \caption{Recovering pair and triplet terms of the many-body potential energy expansion needs to be further explored.}
    \label{fig:triplet}
\end{figure}

Recently, we have been further exploring how SOPR potentials change at different thermodynamic states using a neutron scattering dataset of supercritical krypton at varying pressures. As discussed in Chapters 2 and 4, we have some evidence that the interatomic forces change at different thermodynamic states, and that those changes may correlate to many-body and/or quantum mechanical effects. Finally, we have preliminary results showing that SOPR corrections to \textit{ab initio} pair potentials in noble gases show changes in effective particle size consistent with the polarizability of the electron cloud, a result recently demonstrated theoretically \cite{fedorov_quantum-mechanical_2018}. An expanded discussion on SOPR corrections to quantum pair potentials and its connections to electronic polarization is included in Appendix A.

\subsubsection{Structure Optimized Potential Refinement for Molecular Liquids}

The continued success of structure optimized potential refinement hinges on its extension to molecular liquids and mixtures. The primary challenge here is that updating multiple pair potentials simultaneously can result in a loss in uniqueness of the resulting potentials and lead to unstable convergence. However, the Gaussian process can mitigate these problems by acting as a probabilistic regressor over the pair potential estimation. For example, the GP kernel can be modified to enforce a power-6 attractive tail behavior, as predicted from quantum mechanics, while allowing flexibility at short-range to accommodate subtle features present in the neutron scattering data. One kernel function that enables this flexibility is an inverse squared-exponential kernel given by,

\begin{equation}
    k(x_i, x_j) = \alpha \exp{-2l^2\sqrt{{\bigg({\frac{1}{{x_i}^2}-\frac{1}{{x_j}^2}}\bigg)^2}}}
\end{equation}

\noindent where $\alpha$ is the variance in the pair potential space and $\ell$ is a length-scale parameter that approximately controls when the kernel transitions from short to long range covariance. The GP then enforces physically-justifiable behavior, such as continuity, differentiability, and a general functional form consistent with quantum mechanics. 

\subsubsection{Spectral Gaussian Process Fourier Transforms}

As discussed in the Introduction, radial distribution functions cannot be extracted perfectly from a neutron scattering measurement due to the finite nature of the Fourier transform between momentum and real space. The biggest gap in the current literature on structure factor Fourier transforms is a lack of robustness and uncertainty quantification. Essentially, we can get estimates of the pair and total radial distribution functions, but we cannot quantify how confident to be in the results. Clearly, this is a serious problem for using scattering data to inform liquid state models as evidenced by the uncertainty quantification performed in Chapters 3 and 4 of this dissertation. Furthermore, it is equally concerning that molecular dynamicists have no credibility intervals to compare their force field structure predictions to experiments. 

To address this challenge, we have recently been applying Gaussian processes to the structure factor function space. The idea is to construct a Gaussian process regressor with a physics-informed kernel that can learn the mean and variance of the structure factor functional distribution. Using either a sampling approach or the analytical Fourier transform of the Gaussian process \cite{ambrogioni_integral_2018}, one can then construct a functional distribution of radial distribution functions in real space. Alternatively, one could try and design non-stationary spectral kernels \cite{remes_non-stationary_2017}, although we have found that this method is not restrictive enough to enforce known physical behavior.

\subsubsection{Bayesian Uncertainty Quantification for Machine Learned Potentials}

Machine learning potentials (MLPs) are becoming ubiquitous in the simulation of functional materials, with contemporary applications spanning from the design of organic semiconductor photovoltaic cells and battery electrolytes \cite{deringer_realistic_2018,zhang_why_2023} to metal organic frameworks and super low temperature hydrogen storage \cite{cheng_evidence_2020}. The main idea behind MLPs is to take state-of-the-art \textit{ab initio} electron structure methods, such as density functional theory (DFT), and learn surrogate models of interatomic forces that can be evaluated at the computational expense of classical molecular models \cite{behler_perspective_2016}. In turn, this enables the investigation of molecular dynamics at quantum mechanical accuracy with relatively low computational effort, accelerating the innovation of novel materials to address contemporary challenges from energy storage to carbon capture.

However, strong criticisms of MLPs largely focus on a lack of interpretability of the resulting force fields and a failure to include experimental data into the model training \cite{deringer_machine_2019}. Indeed, most MLP models rely solely on force estimations from electronic structure calculations and neglect important experimental information that may contain nuanced information on interatomic interactions, such as structural correlation functions and electromagnetic spectra \cite{matin_machine_2024}. Furthermore, black-box machine learning methods such as neural network potentials are often criticized as being non-interpretable and, lacking built-in uncertainty quantification and propagation (UQ/P), are too uncertain to be reliable in drawing physics based conclusions on the resulting potentials \cite{wen_uncertainty_2020}. Therefore, incorporating experimental data into an MLP framework with UQ may yield substantial improvements in molecular model predictions, enhance intepretability, and cement adoption of MLPs within the computational chemistry community at large.

Bayesian inference could be used to address these criticisms by including experimental data into MLP training. Specifically, a Bayesian committee, which is essentially just a collection of informed "voters" on the outcome of the MLP prediction, will be employed to estimate interatomic forces based on a combination of DFT calculations and experimental data \cite{willow_active_2024}. The Gaussian approximation potential (GAP) framework \cite{deringer_gaussian_2021}, a machine learning surrogate model approach that employs sparse Gaussian processes to learn interatomic forces from DFT calculations, will be the ideal foundation to incorporate the Bayesian committee over the potential field. The end product will be a method to build physics-guided and interpretable MLP potentials for contemporary problems in functional material modeling and design that will represent the state-of-the-art in both quantum physics and experiment. 

\section{Appendix}
\appendix
\section{Quantum and Many-Body Effects from Neutron Scattering}

SOPR potentials represent an estimate of the interatomic force between atoms based on a sub-angstrom length scale scattering experiment. Furthermore, since the scattering experiment obeys quantum mechanical laws and is the time-averaged measurement of a large number of atoms, we expect the measured structures to reflect both quantum mechanical and many-body interactions occurring in the liquid phase system. The hypothesis, then, is that SOPR potentials can capture these effects in a pairwise additive, mean field approximation. Although Chapter 2 provided speculative evidence that corrections to the quantum pair potential were consistent with quantum mechanical arguments proposed in prior literature, it still remains to show whether SOPR potentials definitively provide meaningful information on quantum and many-body effects in real fluids.

\subsection{Relations Between Atom Size and Electron Polarization}

Atom size is a fairly ambiguous quantity, as you can prescribe a number of different criteria to map the volume occupied by an atom or molecule to some radius. Some examples include the Van der Waal radius, defined as the radius of a hard-sphere representing the distance of "closest approach" for a neighboring atom, or the atomic radius which specifies the distance from the nucleus to the most probable outermost electron. While atomic size is not really that interesting of a property, it is still critical for atomic models of liquids and has fundamental relationships to electronic properties of atoms. For example, a recent study \cite{fedorov_quantum-mechanical_2018} showed that the van der Waal radius, $R_{vdW}$, in a quantum Drude oscillator model was related to the polarizability, $\alpha$, according to the following equation,

\begin{equation}\label{eq:vdwrad}
    R_{vdW} = 2.54 \alpha^{1/7}
\end{equation}

\noindent which deviates from the classical result of $R_{vdW}^{classical} = 1.62 \alpha^{1/3}$. Fedorov and coworkers demonstrated that this relation well-reproduced the van der Waal radius of 72 elements, providing a means to estimate the approximate hard-sphere diameter of even complex atoms such as metals. However, it is notable that the radii computed using this approach would not be suitable for liquid state modeling for more complex thermodynamic properties that require a detailed description of the short-range repulsive exponent, collision diameter and dispersion energy as shown for high pressure systems \cite{messerly_uncertainty_2018} and vapor-liquid equilibrium \cite{mick_optimized_2015}. For example, the van der Waal radius for neon calculated with \ref{eq:vdwrad} gives a value of 2.91 \AA, which is significantly outside of the parameter distributions estimated for neon for structural correlations and vapor-liquid coexistence. 

The radial distribution function provides a notion of atom size by definition. By taking any arbitrary particle in the system and counting the number of particles within a spherical shell of thickness $dr$ (spherical surfaces at $r$ and $r+dr$), we will always observe that there is an excluded volume region where the shell is inside of the arbitrarily chosen particle. At some $r$, the shell will become larger than the particle size, begin to count other atoms, and the radial distribution function will spike up and become non-zero. By virtue of structure-potential relationships in classical mechanics, this region will correlate with the steep rise of the short-range repulsive part of the interatomic potential. 

The question then is, given that we can learn the interatomic potential from a radial distribution function using SOPR, is there a way to estimate the atomic size and relate it back to fundamental quantum mechanical results on more interesting properties. One such approach was potential perturbation method proposed by Weeks, Chandler, and Anderson known as the Weeks-Chandler-Anderson (WCA) separation. The WCA separation splits the potential into a short-range repulsive wall and a long-range attractive tail at the minimum of the potential well. Figure \ref{fig:wca} shows the WCA separation for the SOPR potential derived for liquid Ne at 42K. 

\begin{figure}
    \centering
    \includegraphics[width = 14cm]{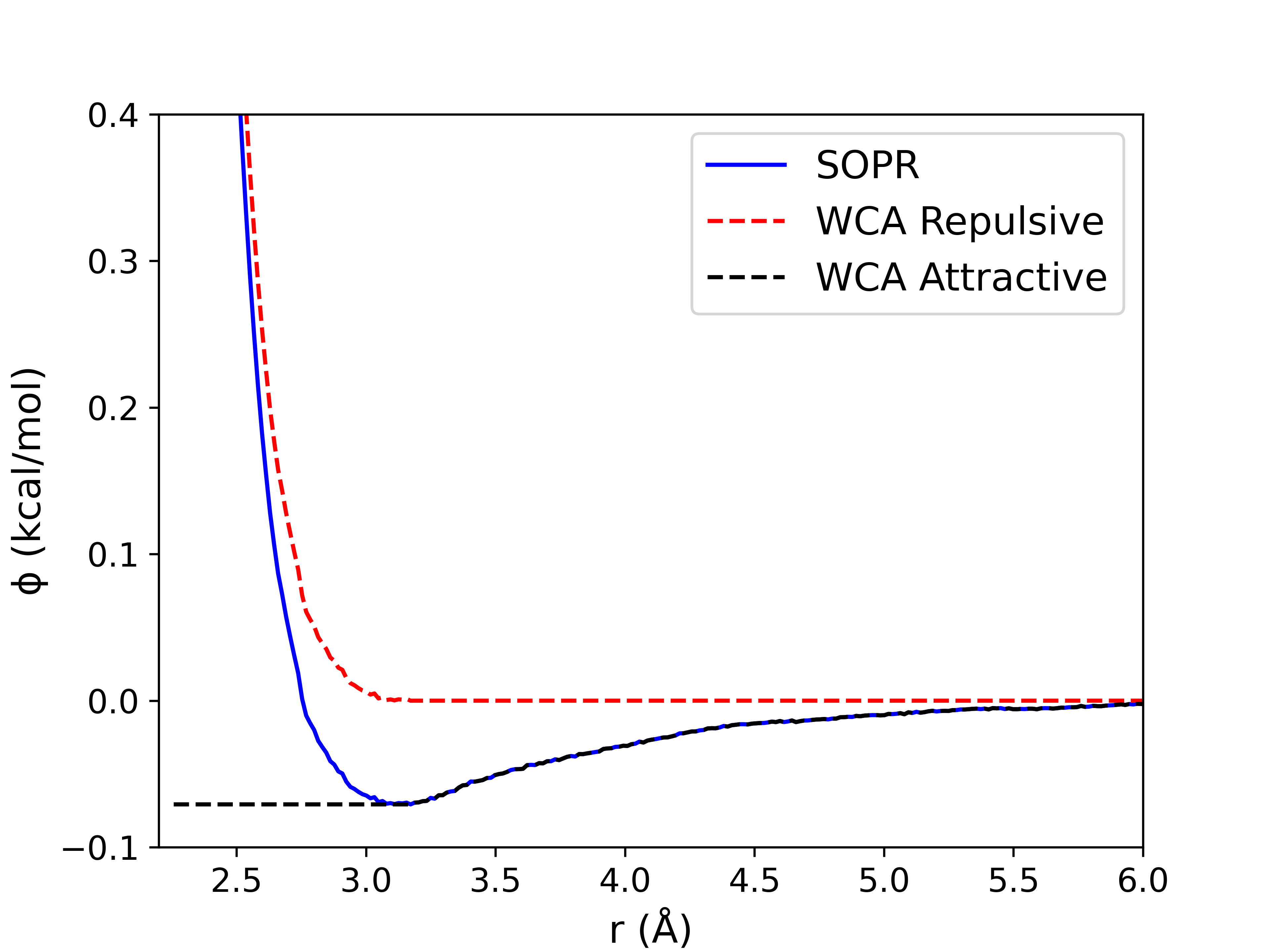}
    \caption{WCA separation of the SOPR potential for liquid Ne.}
    \label{fig:wca}
\end{figure}

Once the potentials are decomposed using the WCA separation, we can apply perturbation theory to estimate an equivalent hard-sphere diameter, $d$, from the repulsive part of the potential \cite{weeks_role_1971,hansen_theory_2013} according to the following equation,

\begin{equation}\label{eq:hardparticle}
    d = \int_0^\infty [1 - \exp(-\beta v_0(r))] dr
\end{equation}

\noindent where $v_0(r)$ is the repulsive part of the WCA separation. The integrand $[1 - \exp(-\beta v_0(r))]$ is unity at low $r$ since $\exp(-\beta v_0(r))$ is negligibly small due to the y-asymptotic behavior of $v_0(r)$. At longer range, $v_0(r)$ is zero by definition, which sends the integrand to zero as well. A plot of the integrand for Ne, Ar, Kr, and Xe WCA separated SOPR potentials is shown in Figure \ref{fig:hp}. Integrating this quantity provides an estimate of the equivalent hard-sphere diameter that can be used to study how particle sizes change with polarization of the electron cloud.

\begin{figure}
    \centering
    \includegraphics[width = 14cm]{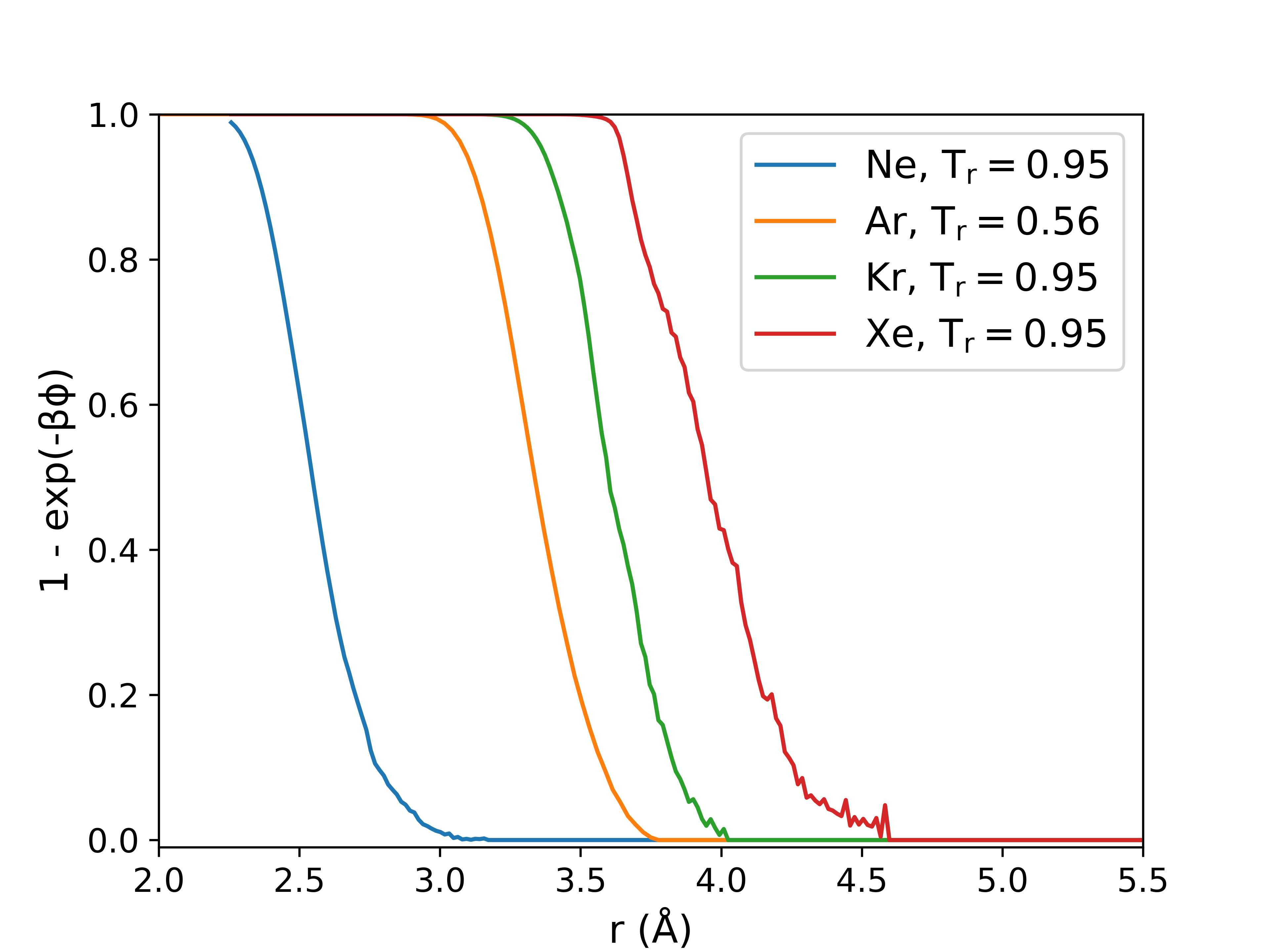}
    \caption{$[1 - \exp(-\beta v_0(r))]$ for WCA separated SOPR potentials.}
    \label{fig:hp}
\end{figure}

Figure \ref{fig:rvpolar} shows the hard-particle diameter calculated from \eqref{eq:hardparticle} versus the electron polarizability along with results compiled from Fedordov \textit{et. al.} and collision diameter estimates from quantum pair potentials. It is immediately striking that the classical prediction exhibits significantly different behavior than the other three methods. Furthermore, we can see that the two methods of estimating the atom size with quantum mechanical methods, \textit{i.e.} the quantum Drude oscillator approximation in red and advanced electron structure calculations in black, give the same scaling behavior estimated from the neutron data. This result is exciting since it provides experimental validation for the quantum mechanics calculations and demonstrates that SOPR potentials can provide estimates of fundamental atomic properties consistent with quantum theory despite the method being semi-classical. The results from this preliminary analysis can be found in a preprint arXiv article \hyperlink{https://arxiv.org/abs/2501.06501}{here} \cite{shanks_experimental_2025}.

\begin{figure}
    \centering
    \includegraphics[width = 12cm]{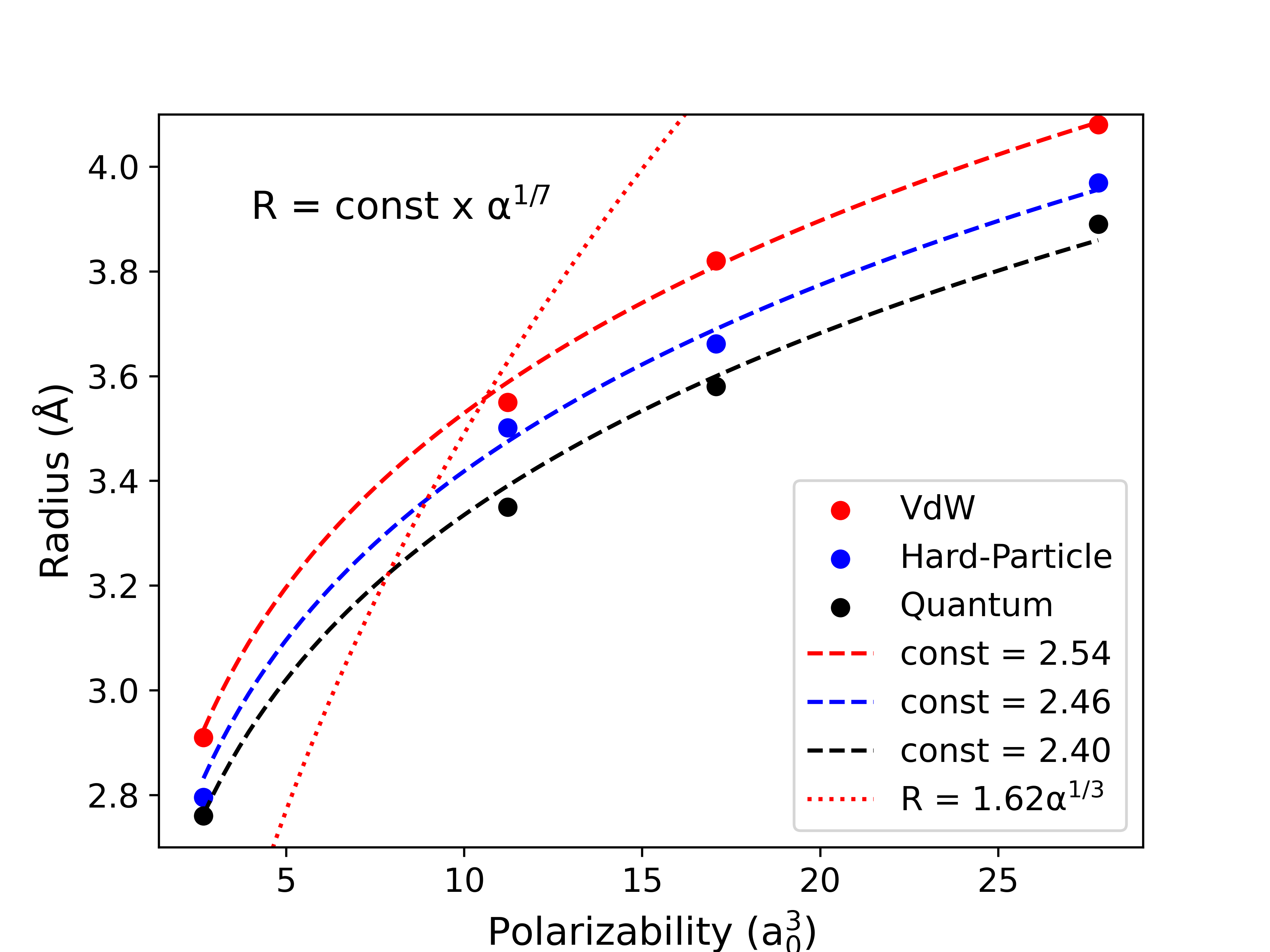}
    \caption{Particle radius determined from the van der Waal radius criterion (red dashed line and points), hard-particle radius derived from SOPR potentials (blue dashed line and points), and the quantum dimer potentials (black dashed line and points).}
    \label{fig:rvpolar}
\end{figure}

\subsection{State-Dependent Corrections in Supercritical Krypton}

Supercritical fluids have been the subject of intense investigation in recent years, owing largely to their applications as chemical solvents in industrial processes and recurrence in the physics of extreme environments \cite{mcmillan_going_2010}. At subcritical temperatures, violations of classical thermodynamic stability criterion characterize phase transitions. Specifically, vapor-liquid phase separation initiated via nucleation occurs at thermodynamic paths of constant chemical potential (binodal) while spontaneous vapor-liquid separation occurs when the curvature of the free energy with respect to composition becomes negative (spinodal). The critical point is defined as the extremum of the spinodal, beyond which there is no observable thermodynamic vapor-liquid phase separation. However, the fact that material properties of supercritical fluids are highly sensitive to changes in thermodynamic state (pressure, temperature, density, etc.) and have been shown to exhibit sharp inflections in such properties has motivated the search for a universal theory of fluids extending beyond the critical point \cite{bolmatov_phonon_2012,bolmatov_thermodynamic_2013,yoon_two-phase_2018,ha_universality_2020,cockrell_transition_2021}. Transitions in supercritical fluids have been proposed based on the behavior of thermodynamic and dynamical properties, including thermal conductivity, density fluctuations \cite{tucker_solvent_1999}, density distribution functions \cite{nishikawa_inhomogeneity_2000,cipriani_orientational_2001}, velocity autocorrelation functions, specific heat, speed of sound, and diffusion. 

One proposed theory of supercritical fluids asserts that inflections in the observed material properties can be attributed to the ability of the fluid to sustain significant transverse excitation modes \cite{brazhkin_two_2012,bryk_behavior_2017,fomin_dynamics_2018}. Conceptually, this transition occurs at the Frenkel time scale $\tau_F$, defined as the average time required for the mean squared displacement of a particle to equal its effective particle size. This transition is referred to as the Frenkel line. While a precise theoretical framework for defining the Frenkel line has not been clearly identified, simulation and experimental evidence of inflections in model and real fluid properties near the Frenkel line have been observed. Of considerable interest is a recent neutron scattering study on supercritical krypton that reported the emergence of medium range order ($\sim$ 7-10\AA) in the radial distribution function at $\sim$ 110 MPa along the 310K isotherm, indicating a transition from non-rigid to rigid dynamics that was attributed to crossing the Frenkel line \cite{pruteanu_krypton_2022}. The authors noted that empirical potentials derived from the neutron scattering analysis technique empirical potential structure refinement (EPSR) significantly deviated from a standard (12-6) Lennard-Jones potential, a model potential that is often used to approximate the Frenkel line with molecular simulation techniques. However, EPSR potentials are known to be unreliable in the prediction of thermodynamic properties as the technique was designed to determine molecular configurations consistent with experimental scattering rather than estimate physically accurate or transferable potentials. Thus, systematic analysis of the state-dependence of the interatomic potentials for supercritical krypton remains unresolved.

In addition to the proposed Frenkel line transition, there has also been work investigating supercritical phase transitions based on a maximum in the local correlation length in the pressure-temperature diagram (the Widom line) \cite{xu_relation_2005,simeoni_widom_2010,brazhkin_two_2012,banuti_crossing_2015} as well as a change in the oscillatory behavior of the radial distribution function (the Fisher-Widom line) \cite{fisher_decay_1969}. Here we will focus on supercritical transitions defined by the radial distribution function; namely, the Fisher-Widom and pair correlation definition of the Frenkel transition. Figures \ref{fig:krphase} and \ref{fig:krrdfs} show what neutron scattering data is available on the krypton phase diagram as well as a plot of radial distribution functions reported in the literature \cite{barocchi_neutron_1993,pruteanu_krypton_2022}.

\begin{figure}
    \centering
    \includegraphics[width = 11.5cm]{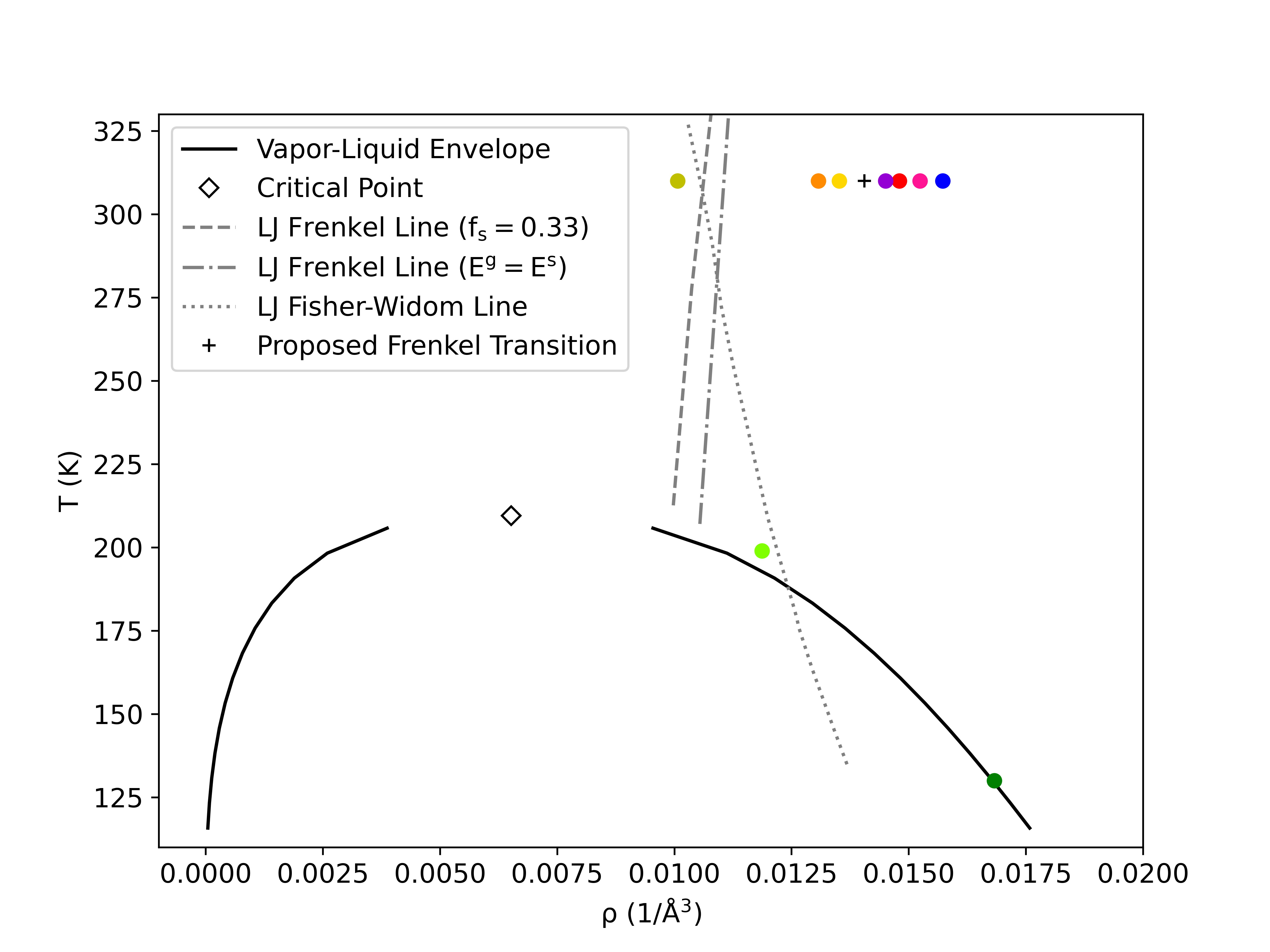}
    \caption{Phase diagram of krypton with neutron scattering experiments (colored points) and proposed supercritical transition lines drawn in based on results for the Lennard-Jones fluid.}
    \label{fig:krphase}
\end{figure}

\begin{figure}
    \centering
    \includegraphics[width = 11.5cm]{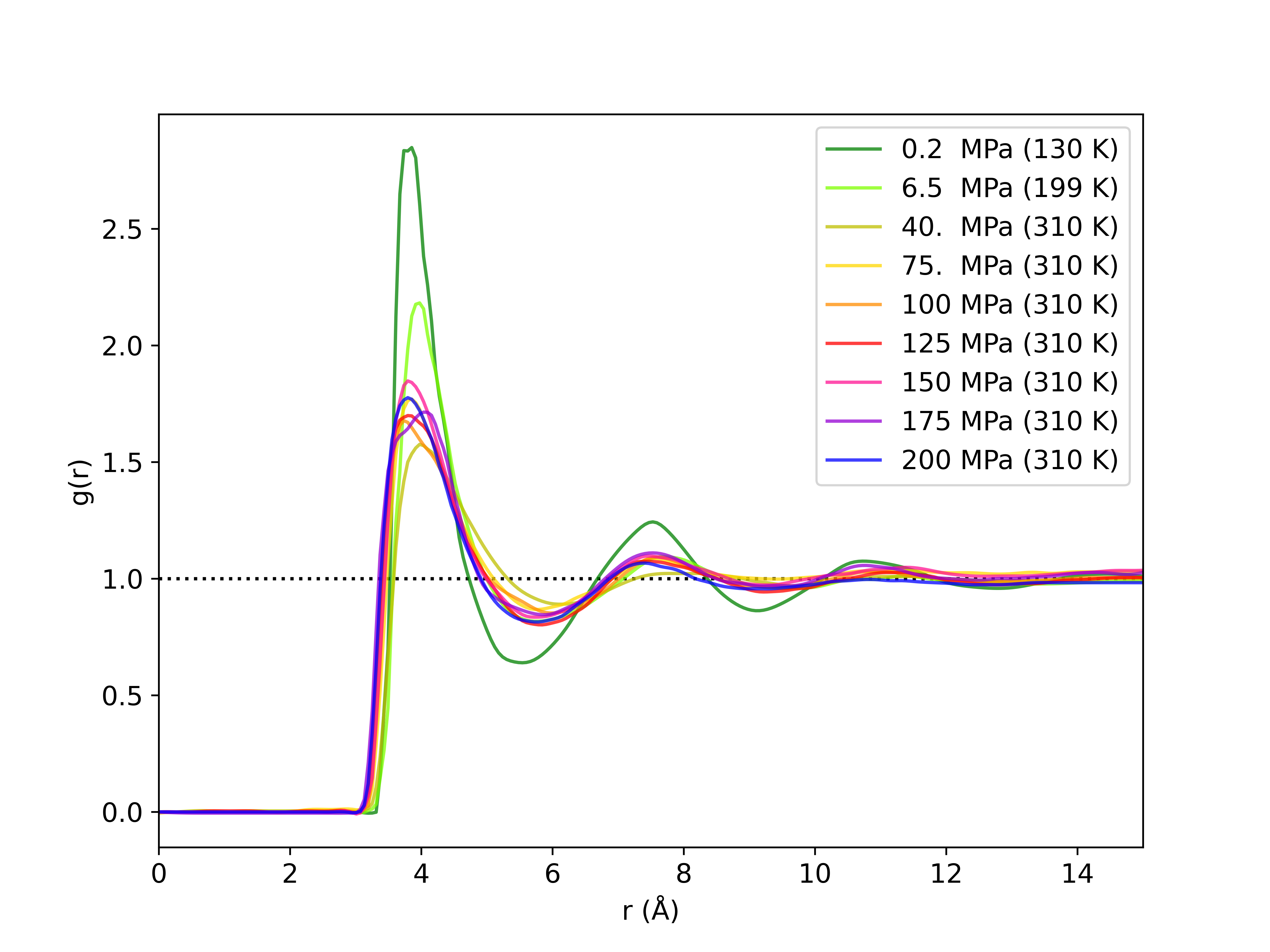}
    \caption{Available radial distribution functions derived from neutron scattering data for krypton.}
    \label{fig:krrdfs}
\end{figure}

Looking at Figure \ref{fig:krphase}, we can see that there are two data points (camo green and neon green) that fall to the left of the Fisher-Widom transition, while the rest fall to the right. According to Fisher and Widom's hypothesis, the oscillatory behavior of the radial distribution functions to the left of this transition should exhibit an always positive exponential decay whereas those to the right should be oscillatory \cite{carvalho_decay_1994}. Seeking the first experimental evidence for this transition, we computed the quantity $\log r h(r)$ from the neutron scattering derived radial distribution functions to see if this behavior is observed in Figure \ref{fig:fwidom}. 

\begin{figure}
    \centering
    \includegraphics[width = 14cm]{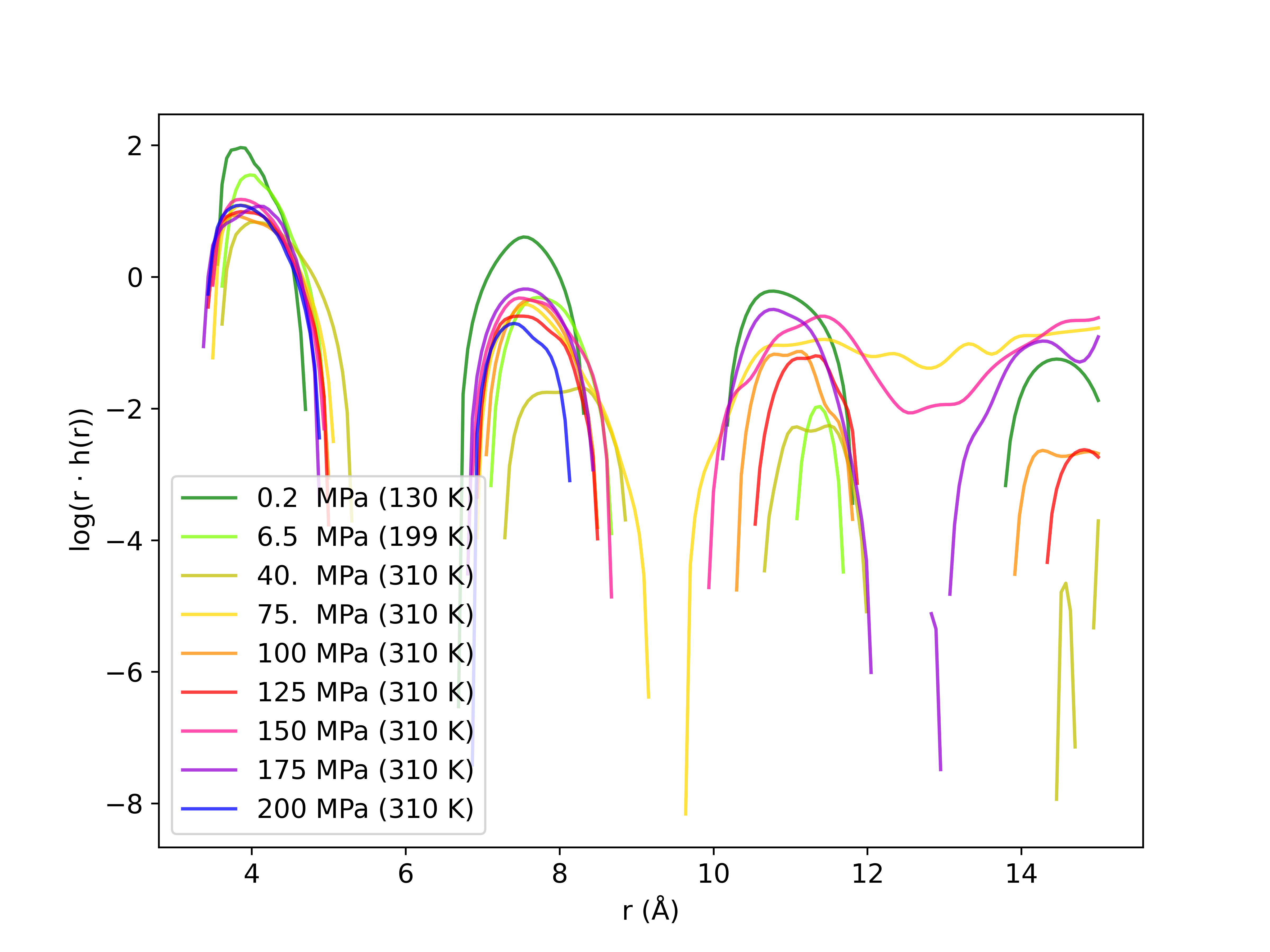}
    \caption{$\log r h(r)$ for available radial distribution function for krypton.}
    \label{fig:fwidom}
\end{figure}

Monotonic decay at long range is not observed for the two conditions left of the Fisher-Widom line but is observed for two conditions right of the transition. Therefore, it is concluded that the neutron data available does not support a supercritical transition across the Fisher-Widom line for krypton. However, it is notable that the neutron scattering measurements used to derive these radial distribution functions may not be accurate enough to observe this subtle effect. For instance, all of the krypton scattering measurements reported had random errors larger than those identified in Chapter 4 for structural inversion analysis and the 2022 dataset of high pressure supercritical appears to show systematic errors in the raw scattering data (see Supporting Information from Pruteanu \textit{et. al} \cite{pruteanu_krypton_2022}).

The next steps of this work are to apply structure-optimized potential refinement to quantify state-dependent many-body contributions to structure derived effective pair potentials. SOPR has been shown to predict transferable potentials in the subcritical region, challenging the established notion that state-dependent structure derived potentials are non-transferable to macroscopic thermodynamic property predictions. If the SOPR potentials do not significantly change along the 310 K isotherm and show no discernible change when crossing the Frenkel line, then we can conclude that there is no evidence of a transition in the interatomic forces across the proposed transition. Otherwise, we would be able to provide evidence of a transition in the interatomic forces that is independent of a thermodynamic phase transition.

\section{Statistical Mechanics of Liquid Phase Systems}

The following appendix comprises the key results used in my lectures on statistical mechanics and molecular simulations during the Fall, 2023 term at the University of Utah. The mathematical framework presented here forms the foundation of the statistical mechanical results presented in this thesis. The main references used in compiling these lectures were the following:

\begin{enumerate}
    \item Statistical Mechanics for Engineers - Isamu Kusaka \cite{kusaka_statistical_2015}
    \item Thermodynamics - Herbert Callen \cite{callen_thermodynamics_1998}
    \item Thermodynamics - Enrico Fermi \cite{fermi_thermodynamics_2012}
    \item The Collected Works of J. Willard Gibbs \cite{gibbs_collected_1928}
    \item Theory of Simple Liquids - Hansen and McDonald \cite{hansen_theory_2013}
\end{enumerate}

\subsection{Classical Thermodynamics}

An agglomerate of matter consists of an enormous amount of atoms. A single glass of water contains somewhere between $10^{24}$ and $10^{25}$ atoms alone! From a classical mechanics point of view, modeling the glass of water would require solving momentum equations for all $10^{25}$ atoms simultaneously. Clearly, this is beyond even the most powerful supercomputers that exist today. So, the question becomes, how can we study systems composed of an inconceivable number of particles without appealing to classical mechanics?

The characteristic time period of an atomic motion is on the order of $10^{-15}$ seconds. Therefore, even during a measurement of a system that is captured in a single microsecond, the atoms of a typical solid still go through ten million vibrations. This implies that a macroscopic measurement senses only averages of the atomic coordinates.

A map of how the atomic coordinates change is known as a normal mode. A normal mode is a coupled motion of atomic coordinates that include divergence (increase in particle density), convergence (decrease in particle density), and vibration. Some normal modes can be seen macroscopically, such as a change in volume or electric dipole. Others, like atomic vibrations, cannot be seen and are therefore "lost" in macroscopic observation. Classical thermodynamics is concerned with normal modes that are observable on a macroscopic scale.

By taking this macroscopic view, we lose a sense of how the motions of atomic coordinates can transfer energy. In Thermodynamics, we refer to this "invisible" mode of energy transfer as \textbf{heat}. From the Classical Mechanics perspective, heat is non-existent, since the energy of the system is completely characterized by generalized momentum and position for each particle. From this perspective, conservation of energy of a closed system requires that the change in energy be directly equal to the classical mechanical work done on the system, W'

\begin{equation}
    \Delta E = W'.
\end{equation}

To accommodate the fact that we cannot observe all forms of energy transfer macroscopically (what we will call work), we must split the W' into observable energy transfer, W, and unobservable energy transfer, Q (heat). This gives us the first law of thermodynamics

\begin{equation}
    \Delta E = W + Q.
\end{equation}

Now, in thermodynamics we are not typically concerned with the motion of a system in space or its change in position with respect to an external field, so we can simplify our energy conservation equation to simply include the internal energy, U,

\begin{equation}
    \Delta U = W + Q
\end{equation}

\noindent where W and Q are path functions, meaning that they depend on the exact way that changes are brought about by them. W is a path function because, in defining it, we have lost track of microscopic displacements, whereas classical mechanical work W' is a path-independent function. An infinitesimal view of this equation is shown below,

\begin{equation}
    dU = \dbar W + \dbar Q
\end{equation}

\noindent where $\; \dbar$ indicates an imperfect differential. The differential is imperfect since W and Q are path-dependent functions.

\subsubsection{Thermodynamic Postulates}

Thermodynamics is concerned with both reversible and irreversible processes, but for now let us consider equilibrium thermodynamics. An equilibrium state is a state of the system that, given a certain set of internal parameters U, V, and $N_i$ (i = 1, ..., n), the system tends to evolve towards. This leads us to the first postulate.

\begin{postulate}
There exist particular states, called equilibrium states, that are completely characterized by U, V, $N_i$ (i = 1, ..., n).
\end{postulate}

Note that in more complex systems, we require an inclusion of elastic strain parameters and electric dipole moment (also macroscopically measurable properties). A system at macroscopic equilibrium is a system where all representative atomic states of the system exist in the time scale of a macroscopic measurement.

\begin{postulate}
There exists a function, S, such that, S = S(U,V,N). Furthermore, the extensive parameters take values so that this function is maximized over the manifold of constrained equilibrium states.
\end{postulate}

This postulate applies only to equilibrium states, but in general not to non-equilibrium states. Essentially, we are postulating the existence of some function, called the entropy, that is maximized at equilibrium with respect to all of its dependent variables. The following two postulates apply to properties of the entropy and they are shown below.

\begin{postulate}
Entropy is additive over subsequent subsystems. Furthermore, S is a continuous, differentiable, and a monotonically increasing function of energy.
\end{postulate}

There are immediate consequences of this postulate which are listed below.

\begin{corollary}
\begin{equation}
S = \sum_{\alpha}S^{(\alpha)}
\end{equation}
\end{corollary}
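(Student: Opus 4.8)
The plan is to treat this corollary as a direct formalization of the additivity clause contained in the preceding postulate, promoting the qualitative statement that entropy is additive over subsystems into the explicit sum $S = \sum_{\alpha} S^{(\alpha)}$. The argument is essentially bookkeeping: the substantive physics has already been asserted in the postulate itself, so the corollary requires only that the notation be fixed and that pairwise additivity be extended to an arbitrary finite collection of parts.

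First I would partition the composite equilibrium system into disjoint, exhaustive subsystems indexed by $\alpha$, each characterized by its own extensive parameters $(U^{(\alpha)}, V^{(\alpha)}, N_i^{(\alpha)})$. By the postulate guaranteeing the existence of an entropy function, each such subsystem --- being itself an equilibrium state --- possesses a well-defined entropy $S^{(\alpha)} = S^{(\alpha)}(U^{(\alpha)}, V^{(\alpha)}, N_i^{(\alpha)})$. The additivity clause of the entropy postulate then supplies the entropy of the union of any two subsystems as the sum of their separate entropies.

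Next I would extend this pairwise additivity to the full collection by induction on the number of subsystems: grouping the first $k$ subsystems into a single composite whose entropy is $\sum_{\alpha=1}^{k} S^{(\alpha)}$ by the inductive hypothesis, then adjoining the $(k+1)$-th subsystem and applying pairwise additivity once more to obtain $\sum_{\alpha=1}^{k+1} S^{(\alpha)}$. Taking the collection to exhaust the entire partition gives $S = \sum_{\alpha} S^{(\alpha)}$.

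The main obstacle --- such as it is --- is conceptual rather than analytical: one must ensure that each subsystem is itself in an equilibrium state so that its entropy is defined through the existence postulate, and that the partition is genuinely disjoint and exhaustive so that no energy, volume, or particle content is double-counted or omitted. The induction step is routine and no estimate needs to be controlled; the only point warranting care is the tacit assumption that the extensive parameters of the whole decompose additively across the partition, which is precisely what licenses writing each $S^{(\alpha)}$ as a function of that subsystem's own variables alone.
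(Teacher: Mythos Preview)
Your proposal is correct and matches the paper's treatment: the paper offers no explicit proof at all, simply listing the equation as an ``immediate consequence'' of the postulate that entropy is additive over subsystems. Your induction argument is more detailed than anything the paper provides, but the underlying idea---that the corollary is nothing more than the additivity clause written out in summation notation---is exactly what the paper intends.
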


\begin{corollary}
The entropy of a simple system is a homogeneous, first-order function of the extensive parameters.

\begin{equation}
S(\lambda U, \lambda V, \lambda N) = \lambda S(U, V, N)
\end{equation}
\end{corollary}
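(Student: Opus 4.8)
The plan is to build the scaling relation upward from additivity, first for integer factors, then rationals, then all positive reals by continuity. The only physical input needed is the additivity postulate together with its immediate consequence $S = \sum_\alpha S^{(\alpha)}$: the entropy of a composite system equals the sum of the entropies of its parts. First I would treat an integer factor $\lambda = n$. Imagine $n$ exact replicas of a simple system, each in the same equilibrium state $(U,V,N)$ and each carrying entropy $S(U,V,N)$. Regarded as a single composite system, the replicas pool their extensive parameters to $(nU, nV, nN)$, so the composite entropy is $S(nU, nV, nN)$. Additivity forces the two descriptions to coincide, giving $S(nU, nV, nN) = n\,S(U,V,N)$ for every positive integer $n$.

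Next I would extend to positive rationals $\lambda = p/q$. Applying the integer identity to the state $(U/q, V/q, N/q)$ gives $S(U,V,N) = q\,S(U/q, V/q, N/q)$, hence $S(U/q, V/q, N/q) = q^{-1}\,S(U,V,N)$; scaling this by the integer $p$ then yields $S\!\left(\tfrac{p}{q}U, \tfrac{p}{q}V, \tfrac{p}{q}N\right) = \tfrac{p}{q}\,S(U,V,N)$. Finally, since the positive rationals are dense in $\mathbb{R}^+$ and $S$ is continuous in its extensive arguments — a property guaranteed by the additivity postulate, which explicitly asserts continuity and differentiability — the relation $S(\lambda U, \lambda V, \lambda N) = \lambda\,S(U,V,N)$ follows for all real $\lambda > 0$ by passing to the limit along rational sequences.

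The main obstacle is conceptual rather than computational: the replica argument delivers only integer scaling directly, so the homogeneity genuinely depends on the continuity clause of the postulate to interpolate across the gaps between integers. A secondary subtlety worth flagging is that the replica construction tacitly assumes subsystems can be composed without introducing surface or interaction contributions to the entropy — an idealization baked into the very notion of a \emph{simple system}. The first-order homogeneity is therefore best understood as a statement about the extensive, bulk limit, and the proof is clean precisely because that limit is assumed from the outset.
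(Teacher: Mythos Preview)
Your argument is correct and is precisely the standard Callen-style derivation: integer scaling from the replica construction, extension to rationals by division, and passage to all positive reals via the continuity clause of the postulate. The paper, however, does not actually supply a proof of this corollary; it simply lists the homogeneity property as one of several ``immediate consequences'' of the additivity postulate and moves on. So there is nothing in the paper to compare against beyond the bare assertion --- you have filled in the argument that the paper omits, and done so in the canonical way.
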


\begin{corollary}
The monotonic property implies that temperature is non-negative. In other words, 

\begin{equation}
\left(\frac{\partial S}{\partial U}\right )_{V,N} > 0.
\end{equation}

\end{corollary}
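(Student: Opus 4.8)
The plan is to connect the monotonicity clause of the entropy postulate directly to the thermodynamic definition of temperature. In the entropy representation, temperature is \emph{defined} as the intensive parameter conjugate to the internal energy, namely through the reciprocal relation $\frac{1}{T} = \left(\frac{\partial S}{\partial U}\right)_{V,N}$. With this definition in hand, the corollary reduces to establishing the sign of the entropy--energy derivative and then transferring that sign to $T$.

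First I would invoke the postulate stating that $S$ is a continuous, differentiable, and monotonically increasing function of the energy at fixed $V$ and $N$. Since $S$ is assumed differentiable, the elementary calculus characterization of a strictly increasing differentiable function applies: its derivative with respect to $U$ cannot be negative, and strict monotonicity upgrades this to strict positivity, giving
\begin{equation}
\left(\frac{\partial S}{\partial U}\right)_{V,N} > 0.
\end{equation}
Next I would substitute the definition of temperature to conclude $\frac{1}{T} > 0$, and hence $T > 0$. This is slightly stronger than the stated non-negativity, but strict monotonicity of the postulate legitimately yields strict positivity; the weaker claim in the corollary then follows \emph{a fortiori}.

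The main obstacle here is conceptual rather than technical: the result is essentially a restatement of the postulate once the definition of temperature is accepted, so the only real care needed is (i) to be explicit that $T$ enters through the reciprocal of the derivative rather than the derivative itself, and (ii) to distinguish the strict positivity that strict monotonicity supplies from the weaker non-negativity appearing in the statement. No limiting argument, inequality chain, or functional-analytic machinery is required — the entire content is bookkeeping between the postulate and the standard definition of the conjugate intensive variable. If one instead wished to avoid presupposing the reciprocal definition of $T$, the alternative would be to first establish the change-of-representation identity linking $\left(\partial S/\partial U\right)_{V,N}$ to $\left(\partial U/\partial S\right)_{V,N} \equiv T$ via the inverse function theorem, but this merely relocates the same monotonicity input and does not change the conclusion.
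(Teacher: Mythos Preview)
Your argument is correct and is exactly the intended content; the paper itself states this corollary without proof, treating it as an immediate consequence of the postulate, so there is no separate approach to compare against. One small mathematical caveat: strict monotonicity of a differentiable function does not by itself force the derivative to be strictly positive at every point (consider $U \mapsto U^3$), so strictly speaking the postulate is being read as asserting the positivity of $\left(\partial S/\partial U\right)_{V,N}$ directly rather than deducing it from monotonicity alone --- but this is the standard reading in the thermodynamic context and the paper makes the same identification.
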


\begin{corollary}
Entropy can be inverted with respect to energy because it is a single-valued, continuous, and differentiable function with respect to S, V, N.
\end{corollary}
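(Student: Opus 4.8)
The plan is to reduce the claim to a one-variable inversion performed at each fixed pair $(V, N)$ and then invoke the smoothness and strict monotonicity already secured by the thermodynamic postulates. First I would fix $V$ and $N$ and regard the entropy as a function of the single variable $U$, written $S = S(U; V, N)$. The postulate asserting that $S$ is continuous and differentiable makes this map smooth, while the corollary establishing non-negative temperature sharpens monotonicity to $\left(\partial S/\partial U\right)_{V,N} > 0$ at every admissible $U$, so the map is strictly increasing.

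Next I would use the elementary fact that a continuous, strictly increasing function on an interval is injective and hence invertible on its image. Because $S$ is continuous, the intermediate value theorem guarantees that its image over an interval of admissible energies is again an interval, furnishing a clean domain on which the inverse $U = U(S; V, N)$ is defined; continuity of this inverse is the standard property that the inverse of a continuous strictly monotone function is itself continuous.

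To promote continuity to differentiability I would appeal to the inverse function theorem: since $\partial S/\partial U$ is strictly positive it never vanishes, so the inverse is differentiable with $\left(\partial U/\partial S\right)_{V,N} = \left[\left(\partial S/\partial U\right)_{V,N}\right]^{-1}$. Carrying out this construction uniformly over all fixed $(V, N)$ then assembles the individual one-dimensional inverses into a single function $U = U(S, V, N)$, which is exactly the energy representation.

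The principal obstacle is not the scalar inversion, which is routine once $\partial S/\partial U > 0$ is in hand, but rather verifying that the assembled map inherits joint regularity in $(S, V, N)$ rather than merely being smooth in $S$ along each slice. I expect to resolve this by applying the multivariable inverse function theorem to the map $(U, V, N) \mapsto (S, V, N)$, whose Jacobian is triangular with determinant $\partial S/\partial U > 0$ and hence nonsingular, yielding a locally smooth inverse in all three variables simultaneously.
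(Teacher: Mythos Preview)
Your argument is correct and is precisely the standard justification one would give for this corollary. The paper itself does not supply a proof: the corollary is listed as an immediate consequence of Postulate~3 (that $S$ is continuous, differentiable, and monotonically increasing in $U$) together with the preceding corollary establishing $(\partial S/\partial U)_{V,N}>0$, and the reader is left to infer the inverse function theorem step. Your write-up makes that inference explicit, and your final remark about the triangular Jacobian of $(U,V,N)\mapsto(S,V,N)$ is the clean way to upgrade slice-wise smoothness to joint smoothness in all three variables. Nothing is missing.
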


\begin{postulate}
The entropy of any system vanishes in the state when T = 0.
\end{postulate}

\subsection{The 2nd Law of the Thermodynamics}

Entropy can change as a result of internal or external processes. We express the differential change in entropy as, 

\begin{equation}
dS = \dbar{S_e} + \dbar{S_i}.
\end{equation}

\noindent We now accept that the expression for $\dbar{S_e}$ is given by,

\begin{equation}
\dbar{S_e} = \frac{\dbar{Q}}{T}.
\end{equation}

\noindent On the other hand, an important consequence of of the second law is that, 

\begin{equation}
\dbar{S_i} \geq 0.
\end{equation}

Equality holds in the previous equation when the process is reversible. A reversible process is such that the sequence of states visited by the system can be traversed in the opposite direction by an infinitesimal change in the boundary conditions. According to the second law, processes resulting in a decrease of the entropy are impossible for an isolated system. In terms of statistical mechanics, this is not actually the case (as will be seen later).

\subsubsection{Violating the 2nd Law: Maxwell's Demon}

Maxwell's demon is a thought experiment which shows how the second law can be violated. In Maxwell's words, "... if we conceive of a being whose faculties are so sharpened that he can follow every molecule in its course, such a being, whose attributes are as essentially finite as our own, would be able to do what is impossible to us. For we have seen that molecules in a vessel full of air at uniform temperature are moving with velocities by no means uniform, though the mean velocity of any great number of them, arbitrarily selected, is almost exactly uniform. Now let us suppose that such a vessel is divided into two portions, A and B, by a division in which there is a small hole, and that a being, who can see the individual molecules, opens and closes this hole, so as to allow only the swifter molecules to pass from A to B, and only the slower molecules to pass from B to A. He will thus, without expenditure of work, raise the temperature of B and lower that of A, in contradiction to the second law of thermodynamics."

\subsubsection{The Fundamental Relation}

To this point we have discussed entropy as a function on internal energy, volume, and the number of moles of each chemical species. We know that based on the properties of this function that we can solve for internal energy directly and it will be a function of entropy, volume, and the number of moles of each chemical species. Thus,

\begin{equation}
U = \left[U(S, V, N_i) :  i = (1, ..., n)\right].
\end{equation}

To examine infinitesimal changes in the energy, we compute the first differential of U to obtain,

\begin{equation}
dU = \left(\frac{\partial U}{\partial S}\right )_{V,N}dS + \left(\frac{\partial U}{\partial V}\right )_{S,N}dV + \sum_{i}\left(\frac{\partial U}{\partial N_i}\right )_{S,V}dN_i.
\end{equation}

The previous partial derivatives are called intensive parameters, and are given the following definitions.

\begin{equation}
    \left(\frac{\partial U}{\partial S}\right )_{V,N} \equiv T
\end{equation}

\begin{equation}
 \left(\frac{\partial U}{\partial V}\right )_{S,N} \equiv -P 
\end{equation}

\begin{equation}
 \left(\frac{\partial U}{\partial N_i}\right )_{S,V} \equiv \mu_i.
\end{equation}

\noindent With these definitions, the first differential dU can be expressed as, 

\begin{equation}
dU = TdS - PdV + \sum_{i}\mu_i dN_i.
\end{equation}

It should now be clear that $\dbar{Q} = TdS$, $\dbar{W_m} = - PdV$ (as shown previously), and that we have an additional work term known as the quasi-static chemical work, $\dbar{W_c} = \sum_{i}\mu_i dN_i$. We should note, however, that this does not hold for irreversible processes since these are path-dependent functions.

\subsubsection{Equations of State}

The intensive parameters defined previously are functions of S,V,N since they are partial derivatives of a function of those variables. Therefore, 

\begin{equation}
    T = T(S,V,N_i)
    \end{equation}
\begin{equation}
    P = P(S,V,N_i) 
\end{equation}
\begin{equation}
    \mu_i = \mu_i(S,V,N_i).
\end{equation}

These relationships that express intensive parameters in terms of the extensive parameters, are known as equations of state. In addition, since the fundamental relation is homogeneous first order, equations of state are homogeneous zero order. In other words,

\begin{equation}
    T(\lambda S, \lambda V, \lambda N_i) = T(S,V,N_i).
\end{equation}

Physically, we can interpret this as the temperature of two subsystems of a composite system is not additive. In fact, at equilibrium, any number of subsystems chosen from a composite system will have the same temperature as the composite system as a whole.

\subsubsection{The Euler Form}

Recall the first order homogeneous property of the fundamental relation.

\begin{equation}
U(\lambda S, \lambda V, \lambda N) = \lambda U(S, V, N).
\end{equation}

Differentiating with respect to $\lambda$, we find, 

\begin{equation}
    U = \left(\frac{\partial U}{\partial (\lambda S)}\right )_{V,N}\frac{\partial (\lambda S)}{\partial \lambda} + \left(\frac{\partial U}{\partial (\lambda V)}\right )_{S,N}\frac{\partial (\lambda V)}{\partial \lambda} + \sum_{i}\left(\frac{\partial U}{\partial (\lambda N_i)}\right )_{S,V}\frac{\partial (\lambda N_i)}{\partial \lambda}.
\end{equation}

This is true for any value of $\lambda$. Simplifying this expression and taking $\lambda = 1$,

\begin{equation}
    U = TS - PV  + \sum_{i}\mu_iN_i.
\end{equation}

\subsubsection{The Gibbs-Duhem Relation}

Taking the Euler form of the fundamental relation and taking an infinitesimal variation gives,

\begin{equation}
    dU = TdS + SdT - PdV - VdP + \sum_{i}\mu_idN_i + \sum_{i}N_id\mu_i.
\end{equation}

However, we know that the proper form of the equation from the fundamental relation is, 

\begin{equation}
    dU = TdS - PdV + \sum_{i}\mu_idN_i.
\end{equation}

We then take the difference between these two relations to obtain the Gibbs-Duhem Relation,

\begin{equation}
    0 = SdT - VdP + \sum_{i}N_id\mu_i.
\end{equation}

This relation demonstrates the intensive parameters (T, P, and $\mu$), cannot be varied independently. In addition, this relation shows that in a system of n components, n + 1 intensive parameters can be varied independently. This is referred to as the thermodynamic degrees of freedom of the system.

\subsubsection{Gibbs Phase Rule}

The interface between coexisting phases may be regarded as a partition that is diathermal, movable, and permeable to all species. Thus, at equilibrium, there must be a consistency in temperature, pressure, and chemical potential across all M phases.

\begin{equation}
    T_1 = T_2 = ... = T_M
\end{equation}

\begin{equation}
    P_1 = P_2 = ... = P_M
\end{equation}

\begin{equation}
    \mu_1 = \mu_2 = ... = \mu_M.
\end{equation}

If we perturb the state of the system by increasing the temperature by some $dT$, then in order to preserve the M phases, each phase parameter must be preserved after this change as well.

\begin{equation}
    T_1 + dT_1 = T_2+ dT_2 = ... = T_M+ dT_M
\end{equation}

\begin{equation}
    P_1 + dP_1 = P_2 + dP_2 = ... = P_M + dP_M
\end{equation}

\begin{equation}
    \mu_1 + d\mu_1 = \mu_2 + d\mu_2 = ... = \mu_M + d\mu_M.
\end{equation}

\noindent Thus we find that, 

\begin{equation}
    dT_1 = dT_2 = ... = dT_M
\end{equation}

\begin{equation}
   dP_1 = dP_2 = ... = dP_M
\end{equation}

\begin{equation}
     d\mu_1 =  d\mu_2 = ... = d\mu_M.
\end{equation}

Now, these relationships must be consistent with the Gibbs-Duhem relation, namely, 

\begin{equation}
    0 = SdT - VdP + \sum_{i}^{n}N_id\mu_i.
\end{equation}

If we hold pressure constant and vary only temperature, we find that, 

\begin{equation}
    0 = S^{(j)}dT + \sum_{i}^{n}N^{(j)}_id\mu_i
\end{equation}

\noindent for all phases $1,...,M$. But this requires that $dT$ and $d\mu_i$ are zero. That is, if we want to perturb the state of the system while maintaining the phase coexistence, we cannot possibly hold P constant. We now have (n + 2) infinitesimal quantities with M equations. Therefore, to ensure the system has solutions, we can only specify (n + 2 - M) of those variables independently. This is known as the Gibbs phase rule.

\subsubsection{Free Energies and the General Legendre Transform}

The Legendre transform is a mathematical tool that allows us to replace the extensive parameters as independent variables with intensive parameters. In general, the fundamental relation may be expressed as, 

\begin{equation}
    Y = Y(X_0, X_1, ... X_t).
\end{equation}

This gives a hyper-surface in a (t+2)-dimensional space with Cartesian coordinates $Y,X_0, X_1, ... X_t$. The partial slope of this hyper-surface is given by,

\begin{equation}
    P_k = \frac{\partial Y}{\partial X_k}.
\end{equation}

Now, the hyper-surface may be defined by the fundamental relation, or by the envelope of the tangent hyper-planes. If we consider the intercept of the hyper-plane, $\psi$, as a function of the slopes, then we obtain a relation,

\begin{equation}
    P_k = \frac{Y - \psi}{X_k - 0}.
\end{equation}

This is simply the "rise over run" equation for the slope. Now, this means that for any k, 

\begin{equation}
   Y - P_kX_k = \psi.
\end{equation}

\noindent Performing this calculation for all (t+1) dimensions gives the relation, 

\begin{equation}
   Y - \sum_{0}^{t+1}P_kX_k = \psi.
\end{equation}

\noindent Thus, taking the partial derivative with respect to $P_k$ gives,

\begin{equation}
    -X_k = \frac{\partial \psi}{\partial P_k}.
\end{equation}

\noindent Legendre transformed fundamental relations are called thermodynamic potentials.

\subsubsection{The Helmholtz Potential}

Suppose that we take the fundamental relation in the energy representation, $U = U(S,V,N_i)$. Then the slope of the hyper-surface formed by this fundamental relation has the usual form, 

\begin{equation}
    P_k = \frac{\partial U}{\partial X_k}.
\end{equation}

\noindent Specifically, if we consider the entropy as an extensive variable we would like to remove, we consider, 

\begin{equation}
    T = \frac{\partial U}{\partial S}.
\end{equation}

Then, we consider the intercepts of the tangent hyper-surface of this variable; namely, 

\begin{equation}
    T = \frac{U - F}{S - 0}.
\end{equation}

\noindent where $F$ is the name of the function of intercepts with zero entropy. Rearranging this equation we find the Helmholtz potential, 

\begin{equation}
    F = U - TS.
\end{equation}

Now, we have found $F = F(T,V,N_i)$. Thus, we can compute the first differential to determine, 

\begin{equation}
    dF = \left(\frac{\partial F}{\partial T}\right )_{V,N}dT + \left(\frac{\partial F}{\partial V}\right )_{T,N}dV + \sum_{i}\left(\frac{\partial F}{\partial N_i}\right )_{T,V}dN_i.
\end{equation}

\noindent But, from the relation $F = U - TS$, taking the first differential gives,

\begin{equation}
    dF = dU - TdS - SdT.
\end{equation}

\noindent Recall that the internal energy is given by,

\begin{equation}
    dU = TdS - PdV + \sum_{i}\mu_idN_i.
\end{equation}

\noindent so we substitute this expression in and find, 

\begin{equation}
    dF = - SdT - PdV + \sum_{i}\mu_idN_i.
\end{equation}

\noindent This indicates the definitions of the partial derivatives of the Helmholtz potential.

\begin{equation}
    \left(\frac{\partial F}{\partial T}\right )_{V,N}=-S
\end{equation}

\begin{equation}
    \left(\frac{\partial F}{\partial V}\right )_{T,N} = -P
\end{equation}
    
\begin{equation}
    \sum_{i}\left(\frac{\partial F}{\partial N_i}\right )_{T,V} = \mu_i.
\end{equation}

\noindent The enthalpy is found by taking the legendre transform of U by replacing V with P, and the Gibbs potential is found by replacing S and V with T and P.

\subsubsection{The Maxwell Relations}

The Maxwell relations are simply a statement of the equality of the mixed partial derivatives of the fundamental relation.

\begin{equation}
  \frac{\partial^2 U}{\partial S \partial V} =  \frac{\partial^2 U}{\partial V \partial S}
\end{equation}

\begin{equation}
  - \left(\frac{\partial P}{\partial S}\right )_{V,N_i} =  \left(\frac{\partial T}{\partial V}\right )_{S,N_i}.
\end{equation}

Given a thermodynamic potential with $(t+1)$ natural variables, there are $\frac{t(t+1)}{2}$ separate pairs of mixed partial derivatives and thus that many Maxwell relations.

\subsubsection{Jacobian Transformations}

An excellent method for manipulation of thermodynamic derivatives is based on the mathematical properties of Jacobians. If $u,v. ...w$ are functions of $x,y,...z$, the Jacobian is defined as, 

\begin{equation}
   \frac{\partial (u,v,...,w)}{\partial (x,y,...,z)} \equiv 
   \begin{vmatrix}
   \frac{\partial u}{\partial x}&\frac{\partial u}{\partial y}&...&\frac{\partial u}{\partial z}\\
   \frac{\partial v}{\partial x}&\frac{\partial v}{\partial y}&...&...\\
   \frac{\partial w}{\partial x}&...&...&\frac{\partial w}{\partial z}
   \end{vmatrix}.
\end{equation}

The property that makes this incredibly useful in thermodynamics is the relationship,

\begin{equation}
    \left(\frac{\partial u}{\partial x}\right)_{y...z} = \frac{\partial (u,y,...,z)}{\partial (x,y,...,z)}
\end{equation}

Jacobians have the following nice properties.

\begin{equation}
\frac{\partial (u,v,...,w)}{\partial (x,y,...,z)}= -\frac{\partial (v,u,...,w)}{\partial (x,y,...,z)}
\end{equation}

\begin{equation}
\frac{\partial (u,v,...,w)}{\partial (x,y,...,z)}= \frac{\partial (u,v,...,w)}{\partial (r,s,...,t)}\frac{\partial (r,s,...,t)}{\partial (x,y,...,z)}
\end{equation}

\begin{equation}
\frac{\partial (u,v,...,w)}{\partial (x,y,...,z)}= 1 / \frac{\partial (x,y,...,z)}{\partial (u,v,...,w)}.
\end{equation}

\subsubsection{The Nernst Postulate}

The third law of thermodynamics, otherwise known as the Nernst postulate, states that,

\begin{postulate}
    The entropy of a system approaches a constant value at absolute zero temperature.
\end{postulate}

From a microstate perspective, at absolute zero temperature there is either only one unique microstate, known as the ground state, or there are a set of minimum energy microstates, which are finite. An equivalent statement of the nernst postulate is,

\begin{postulate}
    It is impossible by any procedure, no matter how idealized, to reduce the temperature of any closed system to zero temperature in a finite number of finite operations.
\end{postulate}

Suppose that the temperature of a substance can be reduced in an isentropic process by changing the parameter X from X2 to X1. One can think of a multistage nuclear demagnetization setup where a magnetic field is switched on and off in a controlled way. If there were an entropy difference at absolute zero, T = 0 could be reached in a finite number of steps. However, at T = 0 there is no entropy difference so an infinite number of steps would be needed.

\subsubsection{Stability of Thermodynamic Systems}

The requirement that the second derivative of energy be positive gives rise to some interesting findings. To see this, let's first construct an intrinsic system and a complimentary subsystem from a composite, isolated system. The complimentary subsystem is assumed to be much larger than the intrinsic subsystem.  The fundamental relation is given by,

\begin{equation}
    U' = X_t u (x_0, x_1, ..., x_{t-1}+ \tilde{X}_t  \tilde{u} (\tilde{x}_0, \tilde{x}_1, ..., \tilde{x}_{t-1})
\end{equation}

\noindent where $X_t$ represents some parameter of the fundamental relation that is being held constant. Furthermore, the smallness of the intrinsic system compared to the complementary system requires that,

\begin{equation}
    |d\tilde{x}_i| << |dx_i|.
\end{equation}

Now, any changes in the $x_i$ leads to a total change in energy. We can express the energy change in terms of a Taylor expansion such that, 

\begin{equation}
   \Delta U' = X_t[du + d^2u + ...] + \tilde{X}_t d \tilde{u}
\end{equation}

\noindent where,

\begin{equation}
    du = \sum_{i}\frac{\partial u}{\partial x_i}dx_i
\end{equation}

\begin{equation}
    d^2u = \frac{1}{2!}\sum_{j}\sum_{i}\frac{\partial^2 u}{\partial x_i\partial x_j}dx_idx_j.
\end{equation}

\noindent We can neglect higher order terms of $\tilde{u}$ since we are assuming that the changes in the independent variables are very small for the complimentary system. Now, since the composite system is isolated, by our usual formalism the change in internal energy for the composite system vanishes,

\begin{equation}
    0 = X_t\sum_{i}\frac{\partial u}{\partial x_i}dx_i + \tilde{X}_t\sum_{i}\frac{\partial \tilde{u}}{\partial \tilde{x}_i}d \tilde{x}_i.
\end{equation}

This leads us to the usual results that all of the intensive parameters (T, P, $\mu_i$) are the same in each subsystem at equilibrium. In other words, the composite system is homogeneous. However, we now turn our attention to the second requirement; namely, 

\begin{equation}
    d^2u = \frac{1}{2!}\left[\sum_{0}^{t-1}\sum_{0}^{t-1}\frac{\partial^2 u}{\partial x_i\partial x_j}dx_idx_j\right] > 0.
\end{equation}

The quantity in brackets is known as the homogeneous quadratic form. The condition that the quadratic form be positive for all combinations of variables is referred to mathematically as the condition that the quadratic form be positive definite.

Notice that we have numerous cross terms in the expression for $d^2u$. The presence of these cross terms make it difficult to determine what quantities must be positive, so we need to perform a linear transformation such that the quadratic form is a sum of squares. Sylvester's law of inertia guarantees that no matter which transformation we pick, the number of positive, negative, and zero coefficients will be the same.

For this discussion, we will proceed by considering terms that contain $dx_0$. We can thus express the equation as, 

\begin{equation}
    d^2u = \frac{1}{2}\left[u_{00}^2(dx_0)^2 + +2\sum_{1}^{t-1}\frac{\partial^2 u}{\partial x_0\partial x_k}dx_0dx_k+ \sum_{1}^{t-1}\sum_{1}^{t-1}\frac{\partial^2 u}{\partial x_i\partial x_j}dx_idx_j\right] > 0.
\end{equation}

We now eliminate the cross terms by introducing the new variable, $dP_0$.

\begin{equation}
    dP_0 = u_{00}dx_0 + \sum_{1}^{t-1}u_{0k}dx_k
\end{equation}

\begin{equation}
    (dP_0)^2 = u_{00}^2(dx_0)^2 + 2 u_{00}\sum_{1}^{t-1}u_{0k}dx_k+\sum_{1}^{t-1}\sum_{1}^{t-1}u_{0j}u_{0k}dx_jdx_k
\end{equation}

\noindent which allows us to express the second derivative of internal energy as, 

\begin{equation}
    d^2u = \frac{1}{2}\left[\frac{1}{u_{00}}(dP_0)^2+ \sum_{1}^{t-1}\sum_{1}^{t-1}(u_{jk}-\frac{u_{0j}u_{0k}}{u_{00}})dx_jdx_k\right] > 0.
\end{equation}

Now, notice that the previous equation can be rewritten as, 

\begin{equation}
    d^2u = \frac{1}{2}\left[\frac{1}{u_{00}}(dP_0)^2+ \sum_{1}^{t-1}\sum_{1}^{t-1}(u_{jk})_{P_0, x_1, x_2, ...}dx_jdx_k\right] > 0.
\end{equation}

We can rewrite this expression in a helpful way. Consider the following mathematical fact, 

\begin{equation}
    \bigg(u_{jk}-\frac{u_{0j}u_{0k}}{u_{00}}\bigg) = \bigg(\frac{\partial^2 (u - P_0x_0)}{\partial x_j \partial x_k}\bigg)_{P_0, x_1, ...} = \frac{\partial^2 \psi^{(0)}}{\partial x_j \partial x_k}.
\end{equation}

The function $\psi^{(0)}$ is the Legendre transform of u with respect to $P_0$. This allows us to now write the second derivative of u as, 

\begin{equation}
    d^2u = \frac{1}{2}\left[\frac{1}{u_{00}}(dP_0)^2+ \sum_{1}^{t-1}\sum_{1}^{t-1}\psi_{jk}^{(0)}dx_jdx_k\right] > 0.
\end{equation}

Proceeding in this way, for each of the $dx_i$, we arrive at a useful form given by,

\begin{equation}
    d^2u = \frac{1}{2}\sum_{0}^{t-1}\frac{1}{\psi_{jj}^{(j-1)}}(dP_j^{(j-1)})^2 > 0.
\end{equation}

Thus, we require that all of the $\psi_{jj}^{(j-1)}$ be positive such that,

\begin{equation}
    \psi_{jj}^{(j-1)} = \bigg(\frac{\partial P_j}{\partial x_j}\bigg)_{P_0, P_1 ... P_{j-1},x_{j+1},x_{j+2}...x_{t-1}} > 0. 
\end{equation}

So, what exactly does this mean? Common terms that must be positive according to this definition are,

\begin{equation}
    -\bigg(\frac{\partial P}{\partial V}\bigg)_T > 0
\end{equation}

\begin{equation}
     \bigg(\frac{\partial U}{\partial T}\bigg)_v = C_v > 0.
\end{equation}

Of course, depending on the complexity of the system, we can have many more relations from this procedure. However, in fluid phase transitions, we are typically most concerned with the first of the two explicit equations and the second is just assumed to be true.

\subsection{Statistical Mechanics}

\subsubsection{Dynamical Variable Measurement}

We have studied how particle positions and momenta evolve in time according to Hamilton's equations of motion. Of course, when dealing with real systems, we are almost never concerned with the motion of all of the particles in the system and we know that materials can be well-described by simple thermodynamic laws. However, up to this point, you may feel a bit uneasy that we have swept the dynamics of particles under the rug when we discussed thermodynamics. So far, we collected all of the microscopic modes of energy transfer into the heat term and introduced the entropy fundamental relation. From these very simple assumptions we have found that we can derive nearly all of thermodynamics. But what is heat? Or entropy? To answer these questions, we need to study statistical mechanics. We will see that statistical mechanics is a powerful tool to unite microscopic dynamics with macroscopic theory. Statistical mechanics holds whether the system is classical or quantum mechanical, and has been applied to study systems as diverse as molecular systems, black holes, and bird flocking.

Recall that a dynamical variable was some variable $A = A(q^f, p^f, t)$. Let's suppose that we want to measure the dynamic variable in the lab. How do we actually define what the true value is? Is it the value of $A$ at the specific instant in time we measure it? How do we accommodate for the fact that $A$ could be changing extremely fast due to atomic motions?

We can define the value of $A$ as its average over the amount of time that the variable is measured. If the time scale of the measurement is significantly larger than the time scale of molecular motion, the value of A that we measure will take the form,

\begin{equation}
    A_{exp} = \lim_{\tau \rightarrow \infty}\frac{1}{\tau}\int_{t}^{t+\tau}A(q^f(t'),p^f(t'),t')dt'
\end{equation}

\noindent which is just an integral relation for the time average of $A$ over a measurement time $\tau$. To see this, just consider that the integral is equivalent to adding up all of the contributions to $A$ over time and the $1/\tau$ term is dividing by the total time of the measurement. The limit ensures that the time measurement is long enough to accurately capture the average. This is analogous to taking an average of a set of measurements in which you add up all the measurement values and divide by the number of samples.

\subsubsection{Phase Space}

The mechanical state of a system is specified by 2f variables; namely, the momentum and generalized coordinates of the $f$ particles in a system. The mechanical state comprises some specific point in this space, and we refer to this point as the phase point. Over time, the phase point changes according to the equations of motion, and forms a phase trajectory.

\subsubsection{Ensemble Average}

Suppose that we allow some 1-D system to evolve over a time $\tau$. If we choose a volume element, then the probability that the system phase is in the volume element is given by,

\begin{equation}
    \rho dq^fdp^f = \lim_{\tau \rightarrow \infty}\frac{\Delta t(q^f,p^f,t)}{\tau}
\end{equation}

\noindent where $\Delta t$ is the total time that the particle spends in the volume element. We call $\rho = \rho(q^f, p^f, t)$ the probability density function. Now, we can combine the dynamical variable measurement for the volume element with our probability density definition to see that,

\begin{equation} \label{eq:dynamical}
    A_{exp}(t) = \int A(q^f,p^f,t)\rho (q^f,p^f,t)dq^fdp^f.
\end{equation}

\noindent Note that this integration occurs over the entire phase space. Furthermore, note that this equation only holds if explicit time dependence of A occurs sufficiently slowly over $\tau$.

Since $\rho$ is a probability density function, it has a few special properties: 

\begin{enumerate}
    \item $\rho$ is a probability distribution function, so it must normalize to 1. This means that,

    \begin{equation}
        \int \rho dq^fdp^f = 1
    \end{equation}

    \item $\rho$ must be non-negative so that, $\rho \leq 0$.
\end{enumerate}

It is very reasonable to question why we would rewrite our original expression with some new, seemingly not useful function $\rho$. We haven't really made too much progress computationally, as all we have done is rearranged the problem with some new definitions. However, the hope is that by defining $\rho$ in this way that we can find a $\rho$ to use in eq \eqref{eq:dynamical} that will allow us to solve for $ A_{exp}(t)$ without ever needing to solve the equations of motion directly. Furthermore, by reformulating the motion of many-particles with respect to this probability density function, perhaps we can gain novel insight that would otherwise be difficult to describe or notice.

\subsubsection{Statistical Equilibrium}

Statistical equilibrium occurs when the change in all measured dynamic variables with respect to time is zero, unless the dynamic variable depends explicitly on time. This implies that at equilibrium $A = A(q^f, p^f)$, where the explicit time-dependence is dropped. Thus,

\begin{equation}
    \frac{dA_{exp}}{dt} = \lim_{\Delta t \rightarrow 0}\frac{1}{\Delta t}\left[\int A(q^f, p^f)[\rho(q^f, p^f, t+\Delta t)-\rho(q^f, p^f, t)]\right]dq^fdp^f
\end{equation}

\noindent and applying a first-order Taylor expansion on the probability density function gives,

\begin{equation}
    \frac{dA_{exp}}{dt}= \lim_{\Delta t \rightarrow 0}\int A(q^f, p^f)\frac{\partial \rho}{\partial t}dq^fdp^f.
\end{equation}

Therefore, the condition of statistical equilibrium is that $\rho$ does not depend explicitly on time, or,

\begin{equation}
    \frac{\partial \rho}{\partial t} = 0.
\end{equation}

\noindent Since the previous expression must hold for any choice of $A$. 

\subsubsection{The Statistical Ensemble}

The concept of a statistical ensemble was introduced by Gibbs. Let's suppose that we make $\mathcal{N}$ identical copies of a system, and we completely transfer the explicit time dependence and Hamiltonian for each copy. This is known as the statistical ensemble. Now, the number of copies of the system that fall in the volume element $dq^fdp^f$ is given by,

\begin{equation}
    \mathcal{N}\rho(q^f,p^f,t)dq^fdp^f.
\end{equation}

\noindent This represents the number of copies in the phase space around the point $(q^f,p^f)$. Noting that copies of the system are neither generated nor destroyed in this process, the total copy density $N\rho$ is conserved.

\subsubsection{Louiville's Theorem and the Canonical Ensemble}

Recall the equation of continuity from continuum mechanics. The equation of continuity amounts to writing down a mass balance on a system noting that the total mass of a system is conserved. For some fixed region of space, referred to as a control volume, the number of particles in the control volume at any moment $t$ is given by,

\begin{equation}
    \int_V \rho(\mathbf{r}, t) d \mathbf{r}
\end{equation}

\noindent where $V$ denotes the control volume and $\rho(\mathbf{r}, t)$ is the number density of particles at position $\mathbf{r}$ at time $t$. If there are no chemical reactions, the rate of change of this integral with respect to time is related to the flux of atoms across the surface of the control volume so that,

\begin{equation}
    \frac{d}{dt} \int_V \rho(\mathbf{r}, t) d \mathbf{r} = \oint_S \rho(\mathbf{r}, t) v(\mathbf{r}, t) \cdot \mathbf{n}(\mathbf{r}) dS
\end{equation}

\noindent where $v$ is the average velocity of particles through the surface element $dS$ and $\mathbf{n}(\mathbf{r})$ is the outward unit normal to the control volume surface. We can rewrite the left hand side using the Leibniz integral rule (which can be proved by invoking the bounded convergence theorem) such that,

\begin{equation}
    \frac{d}{dt} \int^{b(\mathbf{r})}_{a(\mathbf{r})} \rho(\mathbf{r}, t) d \mathbf{r} = \rho(\mathbf{r}, b) \frac{d b(\mathbf{r})}{dt} -  \rho(\mathbf{r}, a) \frac{d a(\mathbf{r})}{dt} + \int^{b(\mathbf{r})}_{a(\mathbf{r})} \frac{\partial}{\partial t} \rho(\mathbf{r}, t) d \mathbf{r}
\end{equation}

\noindent where the first two terms go to zero since $a(\mathbf{r}), b(\mathbf{r})$ are fixed in time (control volume cannot deform). We then obtain,

\begin{equation}
    \frac{d}{dt} \int_V \rho(\mathbf{r}, t) d \mathbf{r} = \int_V \frac{\partial}{\partial t} \rho(\mathbf{r}, t) d \mathbf{r}.
\end{equation}

\noindent Now, the term on the right hand side can be rewritten according to the divergence theorem,

\begin{equation}
    \oint_S \rho(\mathbf{r}, t) v(\mathbf{r}, t) \cdot \mathbf{n}(\mathbf{r}) dS = \int_V \nabla \cdot (\rho \mathbf{v}) d\mathbf{r}
\end{equation}

\noindent which after plugging into our original expression and rearranging under the integral we obtain,

\begin{equation}
    \int_V \frac{\partial \rho}{\partial t} + \nabla \cdot (\rho \mathbf{v}) d\mathbf{r} = 0.
\end{equation}

\noindent But since this holds for any $V$, we must have, 

\begin{equation}
    \frac{\partial \rho}{\partial t} + \nabla \cdot (\rho \mathbf{v}) = 0.
\end{equation}

\noindent This is an expression for the equation of continuity in continuum mechanics. By analogy, in phase space we have a control volume in 2$f$-coordinates $(q_1, ..., q_f, p_1, ..., p_f)$ and velocities $(\dot{q}_1, ..., \dot{q}_f, \dot{p}_1, ..., \dot{p}_f)$ with a total number density of $\mathcal{N}\rho$. Plugging these into the equation of continuity gives,

\begin{equation}
    \frac{\partial \mathcal{N}\rho}{\partial t} + \sum_{i=1}^f \left[\frac{\partial (\mathcal{N}\rho \dot q_i)}{\partial q_i}+\frac{\partial (\mathcal{N}\rho \dot p_i)}{\partial p_i}\right] = 0.
\end{equation}

\noindent We note that $\mathcal{N}$ is a constant and apply Hamilton's equations of motion to find,

\begin{equation}
    \frac{\partial\rho}{\partial t} + \sum_{i=1}^f  \left[ \frac{\partial}{\partial q_i}\bigg(\rho\frac{\partial\mathcal{H}}{\partial p_i}\bigg) - \frac{\partial}{\partial p_i} \bigg(\rho \frac{\partial \mathcal{H}}{\partial q_i}\bigg)\right] = 0
\end{equation}

\noindent which implies that,

\begin{equation}
    \frac{d \rho}{d t} = \frac{\partial \rho}{\partial t} + \{\rho,H\} = 0
\end{equation}

\noindent or in other words, that $\rho$ is a constant of motion. This means that the number of copies in a statistical ensemble is conserved. This is a statement that holds whether or not the system is in statistical equilibrium or not. However, we now require that the probability be a constant of motion with respect to the Hamiltonian and that it not depend explicitly on time in order for the system to be in statistical equilibrium.

\subsubsection{An Expression for Density}

So now the question becomes: how do we express $\rho$ in a productive way? What functional form does it take? We first make an assumption that $\rho = \rho(H)$ only. Now, we proceed to hypothesize that $\rho$ takes the following form,

\begin{equation}
    \rho(q^f,p^f) = \frac{1}{C}e^{-\beta H(q^f,p^f)}
\end{equation}

\noindent where C is determined by the normalization condition such that,

\begin{equation}
    1 = \int \rho (q^f,p^f) dq^fdp^f = \int\frac{1}{C}e^{-\beta H(q^f,p^f)}dq^fdp^f.
\end{equation}

The statistical ensemble characterized by a $\rho$ of this form is referred to as the canonical ensemble. The ensemble average of the energy H is called the internal energy ($U$), and is given by,

\begin{equation}
    U \equiv \langle H \rangle = \int H \rho(q^f,p^f) dq^fdp^f
\end{equation}

\noindent where all we have done is used new notation for the same integral we introduced earlier when we discussed ensemble averages,

\begin{equation}
    \langle A \rangle = \int A(q^f, p^f) \rho(q^f, p^f) dq^f dp^f = \frac{1}{C}\int A(q^f, p^f) e^{-\beta H(q^f, p^f)} dq^f dp^f
\end{equation}

\noindent for the given form of $\rho$. We now have an expression for $U$ in terms of the microscopic quantities from statistical mechanics. However, we also have an expression for $U$ in thermodynamics; namely, 

\begin{equation}
    dU = TdS - PdV
\end{equation}

\noindent which will allow us to relate the thermodynamic concepts of temperature, entropy, and work to the microscopic expression for $U$. To do this, we let $\Theta = \beta^{-1}$ and rewrite the canonical ensemble probability density in terms of $H$,

\begin{equation}
    H = -\Theta log(\rho) - \Theta log(C).
\end{equation}

\noindent Taking the ensemble average of these quantities gives,

\begin{equation}
    U = \Theta \eta + \alpha
\end{equation}

\noindent where we have defined two new variables given by,

\begin{equation}
    \eta = -\langle log(\rho) \rangle
\end{equation}

\begin{equation}
    \alpha = -\Theta log(C).
\end{equation}

\noindent Taking the total derivative of $U$ we find that we can express dU as,

\begin{equation}
    dU = \eta d\Theta + \Theta d\eta + d\alpha
\end{equation}

\subsubsection{Intuition Using a Piston Cylinder}

We now need to expand the $d\alpha$ term in a systematic way so that we can compare this expression with the one we are familiar with from thermodynamics. To this end, we consider an example system of particles confined to a piston cylinder device. How do we write $H$ for this system? $H$ is a function of the particle velocities, interactions, and piston position (since it imparts an external force on the system), which we can write as,

\begin{equation}
    H = \sum_{i=1}^N \frac{||\mathbf{p_i}||^2}{2m_i} + \phi(\mathbf{r}^N) + \psi(\mathbf{r}^N, \lambda)
\end{equation}

\noindent where $\lambda$ is the position of the piston and is common to all particles in the system. The point here is that we can perform work on the system by changing $\lambda$ and that the Hamiltonian is a function of $\lambda$. From the normalization condition we have that $C=C(\Theta, \lambda)$. The total derivative of $C$ is then,

\begin{equation}
    dC = \bigg(\frac{\partial C}{\partial \Theta}\bigg)_\lambda d\Theta + \bigg(\frac{\partial C}{\partial \lambda}\bigg)_\Theta d\lambda
\end{equation}

\noindent and we also know from our definition of $\alpha$ that,

\begin{equation}
    d \alpha = -\Theta d \log{C} - \log{C} d\Theta = -\Theta \frac{dC}{C} - \log{C} d\Theta
\end{equation}

\noindent and rewriting this expression by plugging in for $dC$ and $\log{C} = -\alpha/\Theta$ gives,

\begin{equation}
    d\alpha = \bigg[\frac{\alpha}{\Theta} - \frac{\Theta}{C}\bigg(\frac{\partial C}{\partial \Theta}\bigg)_\lambda \bigg] d \Theta - \frac{\Theta}{C}\bigg(\frac{\partial C}{\partial \lambda}\bigg)_\Theta d\lambda.
\end{equation}

\noindent Now, let's rewrite these partial derivatives with respect to the integral of the probability density function. This gives,

\begin{equation}
    \frac{1}{C}\bigg(\frac{\partial C}{\partial \Theta}\bigg)_\lambda = \frac{1}{C} \frac{\partial}{\partial \Theta} \int e^{-H/\Theta}dq^fdp^f = \frac{1}{C} \int \frac{\partial}{\partial \Theta} e^{-H/\Theta}dq^fdp^f
\end{equation}

\noindent and taking the derivative of the inside gives,

\begin{equation}
    = \frac{1}{C} \int \frac{H}{\Theta^2} e^{-H/\Theta}dq^fdp^f = \frac{U}{\Theta^2}.
\end{equation}

\noindent Similarly,

\begin{equation}
    \frac{1}{C}\bigg(\frac{\partial C}{\partial \lambda}\bigg)_\Theta = \frac{1}{C} \frac{\partial}{\partial \lambda} \int e^{-H/\Theta}dq^fdp^f = \frac{1}{C} \int \frac{\partial}{\partial \lambda} e^{-H/\Theta}dq^fdp^f
\end{equation}

\noindent which is just,

\begin{equation}
    = \frac{1}{C} \int -\frac{1}{\Theta}\frac{\partial H}{\partial \lambda} e^{-H/\Theta}dq^fdp^f =  -\frac{1}{\Theta} \bigg\langle \frac{\partial H}{\partial \lambda} \bigg\rangle.
\end{equation}

\noindent Combining these expressions into our expression for $d\alpha$,

\begin{equation}
    d\alpha = \bigg[\frac{\alpha}{\Theta} - \Theta \frac{U}{\Theta^2} \bigg] d \Theta + \Theta \frac{1}{\Theta} \bigg\langle \frac{\partial H}{\partial \lambda} \bigg\rangle d\lambda = \frac{1}{\Theta}(\alpha - U)d\Theta + \bigg\langle \frac{\partial H}{\partial \lambda} \bigg\rangle d\lambda
\end{equation}

\noindent and finally substituting in our original expression for $U = \Theta \eta + \alpha$ we obtain,

\begin{equation}
    dU = \Theta d\eta +\bigg \langle \frac{\partial H}{\partial \lambda}\bigg \rangle d\lambda.
\end{equation}

\noindent Thus, we find that the first term is related to $TdS$ and the second term is related to the work done on the system. The fact that $TdS = \Theta d\eta$ means that,

\begin{equation}
    \Theta \propto T
\end{equation}

\begin{equation}
    d \eta \propto dS.
\end{equation}

\noindent We choose the constant of proportionality as the Boltzmann constant, $k_B$, so that,

\begin{equation}
    d \eta = d\bigg(\frac{S}{k_B}\bigg)
\end{equation}

\noindent which implies that a functional form for the entropy in terms of classical statistical mechanical variables is,

\begin{equation}
    S = -k_B\langle log(\rho)\rangle.
\end{equation}

\noindent Also note that we have the relation for $\alpha$ given by,

\begin{equation}
    \alpha = U - \Theta \eta = U - TS = A.
\end{equation}

\noindent This gives us an approximate form for the Helmholtz free energy as,

\begin{equation}
    A = -k_BTlog(C).
\end{equation}

\subsubsection{Motivation for the Canonical Ensemble}

\noindent The previous section gave a definition for $\beta$ as,

\begin{equation}
    \beta = \frac{1}{k_BT}.
\end{equation}

In the canonical ensemble definition for the probability of observing a particle in some volume element around the point $(q^f,p^f)$, we now have,

\begin{equation}
    \rho (q^f,p^f) = \frac{1}{C}e^{-H(q^f,p^f)/k_BT}.
\end{equation}

\noindent But this means that the temperature of the system must be specified to determine $\rho$. To do this in practice, we need to allow for transfer of energy with a system and its surroundings. In fact, we now show that the canonical ensemble is the only possible ensemble that can account for a system in thermal contact with its surroundings. Let $\mathcal{T}$ be an isolated system containing a subsystem $\mathcal{S}$ enclosed in a rigid, impermeable wall, surrounded by subsystem $\mathcal{L}$ called the surroundings. The Hamiltonian of the system can be written as,

\begin{equation}
    H_\mathcal{T}(q^{m+n},p^{m+n}) = H_\mathcal{S}(q^m,p^m) + H_\mathcal{L}(q^{n},p^{n}) + H_{int}(q^{m+n},p^{m+n}).
\end{equation}

\noindent The last term arises from the interactions between the subsystem $\mathcal{S}$ and the surroundings $\mathcal{L}$. m,n are the mechanical degrees of freedom of $\mathcal{S}$ and $\mathcal{L}$, respectively. We need to assume that the internal interaction Hamiltonian is negligible in magnitude compared to the subsystem and surroundings Hamiltonian's; meaning,

\begin{equation}
    H_\mathcal{T}(q^{m+n},p^{m+n}) \approx H_\mathcal{S}(q^m,p^m) + H_\mathcal{L}(q^{n},p^{n}).
\end{equation}

\noindent This implies that the interaction between $\mathcal{S}$ and $\mathcal{L}$ is weak, but still sufficient to allow for a transfer of energy between the two systems over a long period of time. This transfer of energy is sufficient to maintain a constant temperature without contributing significantly to the total Hamiltonian.

We now split the subsystem $\mathcal{S}$ into two sub-subsystems A and B, with mechanical degrees of freedom $m_A$ and $m_B$, such that $m =m_A+m_B$. We now suppose the interactions between these two are sufficiently weak so that,

\begin{equation}
    H_\mathcal{S}(q^{m},p^{m}) \approx H_A(q^{m_a},p^{m_a}) + H_B(q^{m_b},p^{m_b}).
\end{equation}

\noindent This approximate independence implies that,

\begin{equation}
    \rho_{\mathcal{S}} \approx \rho_A\rho_B
\end{equation}

\noindent since independent probabilities can be multiplied together to give the total probability. Taking the logarithm of both sides gives,

\begin{equation}
    log(\rho_\mathcal{S}) \approx log(\rho_A) + log(\rho_B).
\end{equation}

\noindent However, we know from Liouville's Theorem that $\rho$ is a function of constants of motion. We also now know that the logarithm of $\rho$ is additive. This necessarily implies that $log(\rho)$ is a linear function of constants of motion that are additive. If we assume that $\rho = \rho(H)$, we will obtain the following expression,

\begin{equation}
    log(\rho) = mH + b
\end{equation}

\begin{equation}
    \rho = e^be^{mH}
\end{equation}

\noindent which is precisely the form of the canonical distribution function supposed earlier. You may wonder why we can still apply Liouville's theorem here since the system in thermal contact is not isolated. The reason for this is that over a small time interval, if the interactions between systems are sufficiently weak, then the system will behave approximately isothermally over that time. If we stitch together a large number of these time intervals, we expect to find the canonical distribution.

\subsubsection{The Maxwell-Boltzmann Distribution}

One application is to find the distribution of momentum of a particle in a system of $N$ particles held in thermal equilibrium with its surroundings. In the absence of an external field, we can write the Hamiltonian for such as a system as,

\begin{equation}
    H(\mathbf{r}^N, \mathbf{p}^N) = \sum_{i = 1}^N \frac{||\mathbf{p}_i||^2}{2m_i} + \phi(\mathbf{r}^N)
\end{equation}

\noindent where $\phi(\mathbf{r}^N$) is recognized as the particle interaction potential energy contribution. To find the probability distribution on the velocity of a single particle, we can proceed to integrate out all position and momentum coordinates except for a single arbitray particle to obtain,

\begin{equation}
    \rho(\mathbf{p}_1)d\mathbf{p}_1
\end{equation}

\noindent which is just the momentum probability density function for a single particle. Proceeding in this way, let's start with the partition function in the canonical ensemble,

\begin{equation}
    \rho d\mathbf{r}^N d\mathbf{p}^N = \frac{1}{C}\int e^{-\beta H} d\mathbf{r}^N d\mathbf{p}^N
\end{equation}

\noindent and substituting $H$ into the equation we get,

\begin{equation}
    \rho d\mathbf{r}^N d\mathbf{p}^N = \frac{e^{-\beta \bigg(\sum_{i = 1}^N \frac{||\mathbf{p}_i||^2}{2m_i} + \phi(\mathbf{r}^N)\bigg)}}{\int e^{-\beta \bigg(\sum_{i = 1}^N \frac{||\mathbf{p}_i||^2}{2m_i} + \phi(\mathbf{r}^N)\bigg)} d\mathbf{r}^N d\mathbf{p}^N}.
\end{equation}

\noindent We can begin by solving for $C$ to obtain,

\begin{equation}
    C = \int e^{-\beta \bigg(\sum_{i = 1}^N \frac{||\mathbf{p}_i||^2}{2m_i} + \phi(\mathbf{r}^N)\bigg)} d\mathbf{r}^N d\mathbf{p}^N = \int \prod_{i=1}^N e^{-\beta \frac{||\mathbf{p}_i||^2}{2m_i}} d\mathbf{p}^N \int e^{-\beta \phi(\mathbf{r}^N)} d\mathbf{r}^N
\end{equation}

\noindent which is equal to,

\begin{equation}
    C = \prod_{i=1}^N \bigg(\frac{2 \pi m_i}{\beta}\bigg)^{3/2} \int e^{-\beta \phi(\mathbf{r}^N)} d\mathbf{r}^N.
\end{equation}

\noindent Recognizing that we now just need to take the integral over $\rho d\mathbf{r}^N d\mathbf{p}^N$ with respect to all variables but a single momentum, we obtain,

\begin{equation}
   C \int \rho d\mathbf{r}^N d\mathbf{p}^{N-1} = \int e^{-\beta \bigg(\sum_{i = 1}^N \frac{||\mathbf{p}_i||^2}{2m_i} + \phi(\mathbf{r}^N)\bigg)} d\mathbf{r}^N d\mathbf{p}^{N-1}
\end{equation}

\noindent which gives,

\begin{equation}
    C \rho(\mathbf{p}_1) d\mathbf{p}_1 = \prod_{i=1}^{N-1} \bigg(\frac{2 \pi m_i}{\beta}\bigg)^{3/2} e^{-\beta \frac{||\mathbf{p}_1||^2}{2m_1}} d \mathbf{p}_1 \int e^{-\beta \phi(\mathbf{r}^N)} d\mathbf{r}^N
\end{equation}

\noindent and finally dividing by $C$ on both sides,

\begin{equation}
    \rho(\mathbf{p}_1) d\mathbf{p}_1 = \frac{\prod_{i=1}^{N-1} \bigg(\frac{2 \pi m_i}{\beta}\bigg)^{3/2} e^{-\beta \frac{||\mathbf{p}_1||^2}{2m_1}} d \mathbf{p}_1 \int e^{-\beta \phi(\mathbf{r}^N)} d\mathbf{r}^N}{\prod_{i=1}^N \bigg(\frac{2 \pi m_i}{\beta}\bigg)^{3/2} \int e^{-\beta \phi(\mathbf{r}^N)} d\mathbf{r}^N}
\end{equation}

\noindent or equivalently,

\begin{equation}
    \rho(\mathbf{p}_1) d\mathbf{p}_1 = \bigg(\frac{1}{2 m_1 \pi k_B T}\bigg)^{3/2} e^{-\beta \frac{||\mathbf{p}_1||^2}{2m_1}} d \mathbf{p}_1.
\end{equation}

\noindent This is known as the Maxwell-Boltzmann distribution.

\subsubsection{The Equipartition Theorem}

You may have noticed in previous examples that each quadratic term in positions or momenta contribute a factor of $\beta^{-1/2}$ to the partition function. This implies that each quadratic term makes a contribution of $k_BT/2$ to the total energy. Let's generalize this observation here by writing our Hamiltonian in the following form,

\begin{equation}
    H(q^f,p^f) = aq_1^2 + h(q_2, ..., q_f, p_1, ..., p_f)
\end{equation}

\noindent and computing the canonical ensemble average of $aq_1^2$. We proceed by writing,

\begin{equation}
    \langle a q_1^2 \rangle = \frac{1}{C} \int a q_1^2 e^{-\beta a q_1^2} e^{-\beta h} d\mathbf{q}^f d\mathbf{p}^f = \frac{a}{C} \sqrt{\frac{\pi}{4(\beta a)^3}} \int e^{-\beta h} d\mathbf{q}^{f\neq 1} d\mathbf{p}^f
\end{equation}

\noindent but $C$ is equal to,

\begin{equation}
    C = \int e^{-\beta a q_1^2} e^{-\beta h} d\mathbf{q}^f d\mathbf{p}^f = \sqrt{\frac{\pi}{\beta a}} \int e^{-\beta h} d\mathbf{q}^{f \neq 1} d\mathbf{p}^f 
\end{equation}

\noindent which means that,

\begin{equation}
    \langle a q_1^2 \rangle = \frac{a \sqrt{\frac{\pi}{4(\beta a)^3}} \int e^{-\beta h} d\mathbf{q}^{f\neq 1} d\mathbf{p}^f}{\sqrt{\frac{\pi}{\beta a}} \int e^{-\beta h} d\mathbf{q}^{f \neq 1} d\mathbf{p}^f} = \frac{1}{2}\beta^{-1} = \frac{1}{2} k_BT.
\end{equation}

\noindent Now, this result can be trivially generalized to Hamiltonian's of the form,

\begin{equation}
    H(q^f, p^f) = \sum_{i=1}^f (a_iq_i^2 + b_ip_i^2)
\end{equation}

\noindent to show that $U = f k_B T$. This idea that the energy of a system is equally partitioned amongst each degree-of-freedom in the system is called the equipartition theorem.

\subsubsection{Corrections from Quantum Mechanics}

We derived an equation for the Helmholtz energy using classical statistical mechanics for a particle in a box. However, we need to extend this to a general argument and introduce quantum mechanical corrections to the determined equation.

Suppose we have N identical particles ($m_i = m_j$) in a rectangular box of dimension $L_x,L_y,L_z$. The Hamiltonian is given by,

\begin{equation}
    H(\mathbf{r^N},\mathbf{p^N}) = \sum_{i=1}^N \frac{||\mathbf{p_i}||^2}{2m} + \psi_w (\mathbf{r^N})
\end{equation}

\noindent such that $\psi_w$ represents the interactions of the particles with the walls of the box. We can determine C by integrating $e^{-\beta H}$ over all of phase space according to,

\begin{equation}
    C = \int_{-\infty}^{\infty} \int_{-\infty}^{\infty} ... \int_{-\infty}^{\infty} e^{-\beta H}dx_1dy_1dz_1dx_2...dz_Ndpx_1dpy_1...dpz_N
\end{equation}

\noindent where the integrals span the 6N-dimensional phase space. Solving this integral we find that,

\begin{equation}
    C = V^N\bigg(\frac{2\pi m}{\beta}\bigg)^{3N/2}.
\end{equation}

When we apply this to our equation for the Helmholtz energy, we find that,

\begin{equation}
    F = -k_BT\log(V^N\bigg(\frac{2\pi m}{\beta}\bigg)^{3N/2}) = -k_BT\bigg[N\log(V) + \frac{3N}{2}\log(2 \pi m k_B T)\bigg].
\end{equation}

Now, we need to look at this function and see if it makes sense according to classical thermodynamics. The first thing to check is if $F$ is extensive so that $F(T, \lambda V, \lambda N) = \lambda F(T,V,N)$. Begin by substituting $V$ and $N$ for $\lambda V$ and $\lambda N$ so that, 

\begin{equation}
    F = -k_BT\bigg[\lambda N \log(\lambda V) + \frac{3 \lambda N}{2}\log(2 \pi m k_B T)\bigg]
\end{equation}

\noindent which gives,

\begin{equation}
    F = -\lambda k_BT\bigg[ N \log(\lambda V) + \frac{3 N}{2}\log(2 \pi m k_B T)\bigg] \neq \lambda F(T,V,N).
\end{equation}

Therefore, we find that $F$ is not extensive! This means that the statistical mechanics that we have developed is inconsistent with thermodynamics. The problem isn't with what we did, but rather how we started; namely, with classical mechanics. Atoms and particles are governed by quantum mechanics, and therefore we must introduce some corrections.

\subsubsubsection{Indistinguishable Particles}

In classical mechanics, every particle is distinguishable from every other particle. In fact, we reasoned that we can track the position and velocity of every particle in a system and see how those variables evolve according to Newton's equations of motion. According to quantum mechanics, however, this can't be done even in principle! In quantum mechanics, identical particles are fundamentally indistinguishable, and therefore no computation or experiment can ever be devised to distinguish between them. This means that when we perform the integral over phase space when we calculate $C$ that we are actually over counting the number of states. To adjust for this, we just need to divide by the number of ways of labeling the particles, which is just related to the factorial operator. We therefore divide C by this degeneracy factor so that,

\begin{equation}
    C' = \frac{1}{N!}\int e^{-\beta H(q^f,p^f)}dq^fdp^f.
\end{equation}

Let's now check if $F$ is extensive. We know that $F = -k_BT\log C'$, which is given by,

\begin{equation}
    F = -k_BT\bigg[ N \log(V) + \frac{3 N}{2}\log(2 \pi m k_B T) - \log N! \bigg]
\end{equation}

\noindent for which we can apply Sterling's approximation, $\log N! = N \log N - N$ for sufficiently large $N$,

\begin{equation}
    F = -k_BT\bigg[ N \log(V) + \frac{3 N}{2}\log(2 \pi m k_B T) - N \log N + N \bigg].
\end{equation}

We now make the substitution $V$ and $N$ for $\lambda V$ and $\lambda N$ so that,

\begin{equation}
    F = -k_BT\bigg[ \lambda N \log(\lambda V) + \frac{3 \lambda N}{2}\log(2 \pi m k_B T) - \lambda N \log (\lambda N) + \lambda N \bigg]
\end{equation}

\noindent which is, 

\begin{equation}
    F = -\lambda k_BT\bigg[N \log(V) + \frac{3 N}{2}\log(2 \pi m k_B T) - N \log (N) + N + \cancel{(\log \lambda - \log \lambda)} \bigg]
\end{equation}

\noindent showing that, 

\begin{equation}
    F(T, \lambda V, \lambda N) = \lambda F(T,V,N)
\end{equation}

\noindent for our new expression $C'$.

\subsubsubsection{The Heisenberg Uncertainty Principle}

To this point, we have taken integrals over infinitesimally small regions of the phase space in position and momenta, which violates the uncertainty principle. The uncertainty principle demands that a simultaneous measurement of position and momentum satisfies,

\begin{equation}
    (\Delta q)(\Delta p) \geq h
\end{equation}

\noindent where $h$ is Planck's constant in units of action (distance times momentum or equivalently energy times time). To think about how the uncertainty principle influences our normalization factor $C$, consider what happens when we take a phase space integral in classical mechanics. The integral,

\begin{equation}
    C = \int_{p} \int_q e^{-\beta H} dq dp
\end{equation}

\noindent assumes that all states inside the infinitesimal box $dq dp$ are distinguishable states and they all contribute to $C$. In reality, we only have distinguishable states up to the phase space volume $h$, so the integral should be written as,

\begin{equation}
    C = \frac{1}{h}\int_{p} \int_q e^{-\beta H} dq dp. 
\end{equation}
    
\noindent For $f$ degrees-of-freedom, the integral then becomes,

\begin{equation}
    C = \frac{1}{h^f}\int_{p^f} \int_{q^f} e^{-\beta H} dq^f dp^f
\end{equation}

\noindent which for $N$ particles in a three-dimensional space gives,

\begin{equation}
    C = \frac{1}{h^{3N}}\int_{p^f} \int_{q^f} e^{-\beta H} dq^f dp^f.
\end{equation}

Finally, combining our results from particle indistinguishability and the uncertainty principle gives the correct definition for the canonical partition function, 

\begin{equation}
    Z = \frac{1}{h^{3N}N!} \int e^{-\beta H} dq^f dp^f.
\end{equation}

These corrections essentially amount to preventing over-counting of true quantum states with classical mechanical integrals. 

\subsubsection{Canonical Ensemble}

A canonical ensemble is a collection of systems that have the same values of $N$, $V$, and $T$ and evolve according to the same Hamiltonian. It therefore represents a system at thermal equilibrium with its surroundings. In the canonical ensemble, we then argued that the equilibrium probability density for a system of identical particles is, 

\begin{equation}
    \rho(\mathbf{q}^f, \mathbf{p}^f) = \frac{1}{h^{3N}N!}\frac{1}{Z} e^{-\beta H}
\end{equation}

\noindent where $Z$ is the canonical partition function given by,

\begin{equation}
    Z = \frac{1}{h^{3N}N!}\int e^{-\beta H} d\mathbf{q}^f d\mathbf{p}^f.
\end{equation}

We then found that we could derive the free energy (Helmholtz for an $NVT$ system) in the following way,

\begin{equation}
    F = U - TS = -k_B T \log Z
\end{equation}

\noindent and derive important relationships like, 

\begin{equation}
    \langle H \rangle = U = -\bigg(\frac{\partial \log Z}{\partial \beta}\bigg)_V
\end{equation}

\noindent or,

\begin{equation}
    P = k_B T \bigg(\frac{\partial \log Z}{\partial V}\bigg)_{T,N}.
\end{equation}

The canonical ensemble describes systems that can exchange heat with the surroundings but not volume or matter. However, the canonical ensemble is just one type of system that is relevant to real systems. It doesn't take that much imagination to consider a system which can not only exchange heat (and thus equilibrate temperature with its surroundings) but also exchange volume or particles. These ensembles will have different probability density functions, free energies, and mathematical relationships.

\subsubsection{The Isothermal-Isobaric Ensemble}

The isothermal-isobaric ensemble is the same as a canonical ensemble aside from the pressure being held constant instead of volume. Therefore, we refer to the isothermal-isobaric ensemble as the $NPT$ ensemble. The free energy for an $NPT$ system is the Gibbs free energy (as an exercise, perform a partial Legendre transform of internal energy $U$ to convert $S \to T$ and $V \to P$) such that,

\begin{equation}
    G = U - TS + PV
\end{equation}

\noindent which can be related to statistical mechanics through the probability distribution function,

\begin{equation}
    \rho(\mathbf{q}^f, \mathbf{p}^f; N) = \frac{1}{h^{3N}N!}\frac{1}{\Delta_N} e^{-\beta(H + PV)}
\end{equation}

\noindent where $\Delta_N$ is the isothermal-isobaric partition function,

\begin{equation}
    \Delta_N = \frac{1}{h^{3N}N!V_0}\int^\infty_0 dV \int e^{-\beta(H + PV)} d\mathbf{q}^f d\mathbf{p}^f
\end{equation}

\noindent which can be interpreted as the volume average of canonical ensembles with weight $e^{-\beta PV}$. The Gibbs free energy is then,

\begin{equation}
    G = -k_B T \log \Delta_N.
\end{equation}

The isothermal-isobaric ensemble describes many experiments that are performed at constant temperature and pressure and is useful for determining the equation of state of fluids near atmospheric conditions.

\subsubsection{The Grand Canonical Ensemble}

The thermodynamic state of a system that is open, or able to exchange particles with its surroundings, is characterized by the grand potential (as an exercise, try the partial Legendre transform $U$ exchanging $N \to \mu$ and $S \to T$),

\begin{equation}
    \Omega = U - TS - N\mu.
\end{equation}

An ensemble of systems having the same values of chemical potential, volume and temperature ($\mu VT$) is referred to as a grand canonical ensemble. The phase space of the grand canonical ensemble is essentially just the union of canonical ensembles for every possible value of $N$. In this way, the probability distribution function is written as,

\begin{equation}
    \rho(\mathbf{q}^f, \mathbf{p}^f; N) = \frac{1}{h^{3N}N!}\frac{1}{\Xi} e^{-\beta(H - N \mu)}
\end{equation}

\noindent where $\Xi$ is the grand canonical partition function,

\begin{equation}
    \Xi = \sum_{N = 0}^\infty \frac{e^{\beta N \mu}}{h^{3N}N!} \int e^{-\beta H} d\mathbf{q}^f d\mathbf{p}^f.
\end{equation}

The grand canonical ensemble is used in molecular simulations for vapor-liquid equilibria and adsorption processes. 

\subsubsection{General Rules of Statistical Ensembles}

The problem of statistical mechanics amounts to counting - or to cleverly avoid counting - the number of equally probable ways that a system can divide up its energy. The partition function is precisely this number, so if the partition function is known for a given ensemble, we can calculate the free energy and in principle have complete knowledge of system thermodynamics. Thus, if $Q$ is some partition function for a given ensemble, we can always compute the corresponding free energy using the relation,

\begin{equation}
    FE = -k_B T \log Q.
\end{equation}

Additionally, the probability distribution function for a given ensemble can allow us to calculate moments of some dynamic variable according to the equation,

\begin{equation}
    \langle A \rangle =\frac{1}{h^{3N}N!} \frac{1}{Q}\int \rho A d\mathbf{q}^f d\mathbf{p}^f
\end{equation}

\noindent regardless of the ensemble. Although we will not explore any other ensembles here, there are many described in the literature that must be considered in certain cases. It is therefore crucial to analyze the physical system that you want to model and determine whether or not a certain ensemble is appropriate for the target application. 

\subsubsection{Inverse Kirkwood-Buff Theory}

The Kirkwood-Buff solution theory was presented in a landmark paper in 1951. The theory relates particle number fluctuations in the grand canonical ensemble to integrals of the radial distribution function. In this short introduction to the topic, we will introduce the statistical mechanics required to understand Kirkwood-Buff solution theory as well as provide details on the numerical computation of the Kirkwood-Buff integrals from experimental data. 

An ensemble of systems that have constant chemical potential, volume, and temperature belong to the so-called grand canonical ensemble. Recall that in the canonical ensemble (constant number of particles, volume, and temperature), the probability distribution function is given by the Boltzmann distribution,

\begin{equation}
    p(\mathbf{r_N}, \mathbf{p_N}) = \frac{1}{h^{3N} N!} \frac{e^{-\beta \mathcal{H}}}{Z}
\end{equation}

\noindent where $Z$ is the canonical partition function,

\begin{equation}
    Z = \frac{1}{h^{3N} N!} \int \int \exp (-\beta \mathcal{H}) d\mathbf{r} d\mathbf{p}
\end{equation}

\noindent that can be directly related to classical thermodynamics by its relation to the Helmholtz free energy,

\begin{equation}
    F = -kT\log(Z).
\end{equation}

The grand canonical ensemble is just an expansion on the concept of the canonical ensemble; in fact, the grand canonical ensemble is just the union of canonical ensembles with different values for the number of particles $N$. Therefore the probability distribution function is just a Boltzmann distribution with an additional contribution from the number of particles and chemical potential,

\begin{equation}
    p(\mathbf{r_N}, \mathbf{p_N}, N) = \frac{e^{-\beta (\mathcal{H} - N \mu)}}{\Xi}
\end{equation}

\noindent where the $\Xi$ is the grand partition function,

\begin{equation}
    \Xi = \sum_{N = 0}^\infty \frac{\exp(N \beta \mu)}{h^{3N} N!} \int \int \exp(-\beta \mathcal{H}) d\mathbf{r}^N d\mathbf{p}^N
\end{equation}

\noindent where the sum spans over all possible numbers of particles that the constant $\mu V T$ system can exist in. Just as in the canonical ensemble, we can take averages of any quantity-of-interest with respect to the grand canonical ensemble by taking the product of the actual observable $A$ by its corresponding probability and integrating over the entire phase space,

\begin{equation}\label{eq:average}
    \langle A \rangle = \sum_{N = 0}^\infty \frac{1}{h^{3N} N!} \int \int A(\mathbf{r_N}, \mathbf{p_N}) p(\mathbf{r_N}, \mathbf{p_N}, N) d\mathbf{r}^N d\mathbf{p}^N.
\end{equation}

The final piece we need is a relation between the averages of the number of particles. Looking at the partition function $\Xi$, we can see that differentiation with respect to $\mu_i$ will give us the following expression,

\begin{equation}
    \frac{\partial \Xi}{\partial \mu_i} = \sum_{N = 0}^\infty N_i \frac{\exp(N \beta \mu)}{h^{3N} N!} \int \int \exp(-\beta \mathcal{H}) d\mathbf{r}^N d\mathbf{p}^N = \beta \Xi \langle N_i \rangle.
\end{equation}

\noindent Similarly,

\begin{equation}
    \frac{\partial^2 \Xi}{\partial \mu_i \partial \mu_j} = \sum_{N = 0}^\infty N_i N_j \frac{\exp(N \beta \mu)}{h^{3N} N!} \int \int \exp(-\beta \mathcal{H}) d\mathbf{r}^N d\mathbf{p}^N = \beta^2 \Xi \langle N_i N_j \rangle.
\end{equation}

But we can also consider the second derivative of the partition function in an equivalent way,

\begin{equation}
    \frac{\partial^2 \Xi}{\partial \mu_i \partial \mu_j} = \frac{\partial}{\partial \mu_j} \beta \Xi \langle N_i \rangle = \beta\bigg(\langle N_i \rangle\frac{\partial \Xi}{\partial \mu_j} + \Xi \frac{\partial \langle N_i \rangle}{\partial \mu_j}\bigg)
\end{equation}

\begin{equation}
    = \beta \bigg(\beta \langle N_i \rangle \langle N_j \rangle \Xi + \Xi \frac{\partial \langle N_i \rangle}{\partial \mu_j}\bigg). 
\end{equation}

Equating the two expressions gives,

\begin{equation}
    \beta^2 \Xi \langle N_i N_j \rangle = \beta \bigg(\beta \langle N_i \rangle \langle N_j \rangle \Xi + \Xi \frac{\partial \langle N_i \rangle}{\partial \mu_j}\bigg) 
\end{equation}

\begin{equation}\label{gcpd}
    \langle N_i N_j \rangle - \langle N_i \rangle \langle N_j \rangle  =  \beta^{-1} \frac{\partial \langle N_i \rangle}{\partial \mu_j}.
\end{equation}

\subsubsection{The Ornstein-Zernike Equation}

In Ornstein and Zernike's seminal work, \textit{Accidental deviations of density and opalescence at the critical point of a single substance, in: KNAW, Proceedings, 17 II, 1914, pp. 793-806}, it was shown that the total correlation between molecules in a statistical mechanical system must obey an integral equation of the form, 

\begin{equation}
    h(r_{12}) = c(r_{12}) + n\int d\mathbf{r_3} c(r_{13})h(r_{23})
\end{equation}

\noindent where $h(r_{ij})$ is the total correlation function between particles $i$ and $j$ (note that $h(r_{ij}) = g(r_{ij}) - 1$, where $g(r_{ij})$ is the more familiar radial distribution function) and $c(r_{ij})$ is the 'direct' correlation between particles $i$ and $j$. In this article, I will try and demystify this equation using physical arguments from Ornstein and Zernike's original paper as well as the more recent formulations described in \textit{Liquid State Theory} from Hansen and McDonald as well as Santos' \textit{A Concise Course on the Theory of Classical Liquids}.

The motivation for Ornstein and Zernike's work was from an original paper on critical behavior of statistical mechanical systems from Smoluchowski, with the main problem being that his volume elements, of which he used to partition up a system of particles, were assumed to be mutually independent of one another. At this stage, the whole premise might already sound a little fishy. We know that liquids interact with each other strongly, so of course this mutual independence assumption must be a problem. Why are we even talking about volume elements in the first place? And what is meant by their mutual independence? 

Ornstein and Zernike implore us to consider a statistical mechanical system composed of a finite set of volume elements containing a finite number of particles so that the total volume $V$ is,

\begin{equation}
    V = v_1 + v_2 + ...
\end{equation}

\noindent and the total number of particles $N$ is,

\begin{equation}
    N = n_1 + n_2 + ...
\end{equation}

\noindent Nice! Intuitively, all we have done is taken a system of particles and discretized the position space into finite volume elements which contain a finite set of particles. Of course, in an atomic systems the particles are always moving, and indeed, always passing from one volume element to another. In general, we will say that the mean square deviation of the number of particles in a given volume element $v_i$ based on the atomic motion is, $\overline{(n_i - \bar{n_i})^2}$. If the number of particles in the element of volume is large, then this mean square deviation is approximately equal to the variance of the number of particles in the volume element. If we consider that our atomic system is a subsystem of some larger collection of particles, then we can express the mean number of particles in our system $\bar{N}$ as the sum of the mean number of particles in each volume element so that,

\begin{equation}
    \bar{N} = \bar{n_1} + \bar{n_2} + ...
\end{equation}

\noindent and therefore the mean square deviation of the number of particles in our system is,

\begin{equation}
    \overline{(N-\bar{N})^2} = \overline{[n_1 + n_2 + ... - (\bar{n_1} + \bar{n_2} + ...)]^2} = \overline{[(n_1 - \bar{n_1}) + (n_2 - \bar{n_2}) + ...]^2}
\end{equation}

\noindent which has a complicated expansion given by,

\begin{equation}\label{expansion}
    = \overline{(n_1 - \bar{n_1})^2 + (n_2 - \bar{n_2})^2 + ... + (n_1 - \bar{n_1})(n_2 - \bar{n_2}) +  (n_1 - \bar{n_1})(n_3 - \bar{n_3}) + ...}
\end{equation}

\noindent and, assuming that the cross correlations are not related and that the number of particles are equal across all of the volume elements (in which we will say there are $p$ in total) gives,

\begin{equation}
    =p\overline{(n-\bar{n})^2}
\end{equation}

\noindent The biggest problem here is the cross terms which appear in our expansion (eq \eqref{expansion}), are at odds with a mutual independence approximation. Indeed, if the system were truly independent, then there would be no contribution to the mean square deviation from pairs of different volume elements. This forms the key difference between Ornstein and Zernike's formulation and that of Smulochowski.

\paragraph{The Ornstein-Zernike Equation}

Now, consider that we take our system and subdivide into an infinite number of volume elements and select one to be our reference, $dv_0$. We want to ask how fluctuations in our local volume element $dv_0$ influence the densities of all other volume elements in the system, which is of course related to how much the particles in our reference volume interact with all the other particles. On average, we claim that the average density in our volume element, $\bar{\mathbf{v}_0}$, is linearly related to the density in all other volume elements so that,

\begin{equation}\label{linear}
    \bar{\mathbf{v}_0} = f_1\mathbf{v}_1dv_1 + f_2\mathbf{v}_2dv_2 + ...
\end{equation}

\noindent where $f_i$ are coefficients of the linear equation relating the local density to that of the other volume elements. Generally, we could write this expression as a Taylor expansion over terms of increasing order, although Ornstein and Zernike claim that the linear term is sufficient for their purposes. Although we will table this for now, it is an interesting question as to whether such an expansion is truly sufficient for statistical mechanics of liquids and whether or not this causes issues with the OZ relation methods. Moving on, we assume that this function is continuous and well-behaved over the so-called 'sphere of attraction' of the system, so we can rewrite this sum as an integral of continuous functions such that,

\begin{equation}
     \bar{\mathbf{v}_0} = \int f(\mathbf{r})\mathbf{v}(\mathbf{r})d\mathbf{r}.
\end{equation}

\noindent where the integral is understood to go over all of $\mathbf{r}$. Note that the 'sphere of attraction' simply refers to the range of direct influence of the particles of $dv_0$ on the other particles in the system. We would expect such a sphere to be finite, since the direct influence of particles in our reference volume is an interatomic potential that is at its strongest Coulombic with $\propto 1/r$ dependence. However, note that this assumption leads to difficulties with the Fourier transform (see section \ref{FT}). More generally however, the function $\mathbf{v}(\mathbf{r})$ is not actually continuous if we consider that the particles are point particles, the reason being that if we create an infinite number of volume elements for the system that there will be some volume elements with particles and others with none. Such a problem is avoided by rewriting the expressions in terms of a new function which will be introduced later. However, in Ornstein and Zernike's original work, there is not a satisfactory discussion of why such a function should be continuous at all, which would require further nuance and discussion found in the free energy functional approach discussed later.

Let us now suppose that the density in our volume element is known to be $\mathbf{v_0}$ and write the average density at any other point in space as,

\begin{equation}
    \bar{\mathbf{v}}(\mathbf{r}) = g(\mathbf{r}, \mathbf{v_0}d\mathbf{r_0})
\end{equation}

\noindent the function $g$ being dependent on, of course, the position in the system and choice of reference. What then is $g$ given that we know $f$? Selecting a known value of $\mathbf{v}_1$ at $\mathbf{r_1}$ would give us a $g$ of,

\begin{equation}
     \bar{\mathbf{v}}(\mathbf{r}) = g(\mathbf{r} - \mathbf{r_1}, \mathbf{v_1}d\mathbf{r_1})
\end{equation}

\noindent which is just the definition of $g$ selected at a point shifted in space from our original reference. The average value at $\mathbf{r_0}$ is then simply,

\begin{equation}
    \bar{\mathbf{v_0}}(\mathbf{r}) = g(\mathbf{r_1}, \mathbf{v_1}d\mathbf{r_1})
\end{equation}

\noindent Applying the original integral with an assumed known $f$ is then,

\begin{equation}
     g(\mathbf{r_1}, \mathbf{v_1}d\mathbf{r_1}) = \int_{\mathbf{r} \neq \mathbf{r_1}} f(\mathbf{r})g(\mathbf{r} - \mathbf{r_1}, \mathbf{v_1}d\mathbf{r_1})d\mathbf{r} + f(\mathbf{r_1})\mathbf{v_1}d\mathbf{r_1}.
\end{equation}

\noindent where the integral contribution at the selected particle position $\mathbf{r_1}$ is written explicitly outside of the integral (known from the linear approximation to the Taylor expansion above) since $g$ is not defined there. Finally, note that this must be true for any choice of $\mathbf{v_1}d\mathbf{r_1}$ and that $g$ is linear in $\mathbf{v_i}d\mathbf{r_i}$ (eq \eqref{linear}), meaning that we can pull out a factor of $\mathbf{v_1}d\mathbf{r_1}$ for both instances of $g$ so that,

\begin{equation}
     g(\mathbf{r_1}) = \int_{\mathbf{r} \neq \mathbf{r_1}} f(\mathbf{r})g(\mathbf{r} - \mathbf{r_1})d\mathbf{r} + f(\mathbf{r_1}).
\end{equation}

In modern symbols, the $g$ is actually just the total correlation between densities in volume elements of the system, which we write as $h$, and $f$ represents the direct coupling between volume elements, which we write as $c$. In modern symbols, the Ornstein-Zernike equation emerges,

\begin{equation} \label{OZ}
     h(\mathbf{r_1}) = \int_{\mathbf{r} \neq \mathbf{r_1}} c(\mathbf{r})h(\mathbf{r} - \mathbf{r_1})d\mathbf{r} + c(\mathbf{r_1}).
\end{equation}

\paragraph{Physical Interpretation}

What exactly does eq \eqref{OZ} mean? Starting with the total correlation function $h(\mathbf{r_1})$, we notice that this function is simply the average particle density of the system at position $\mathbf{r_1}$ within a spherical coordinate system,

\begin{figure}[H]
    \centering
    \includegraphics[width=0.5\linewidth]{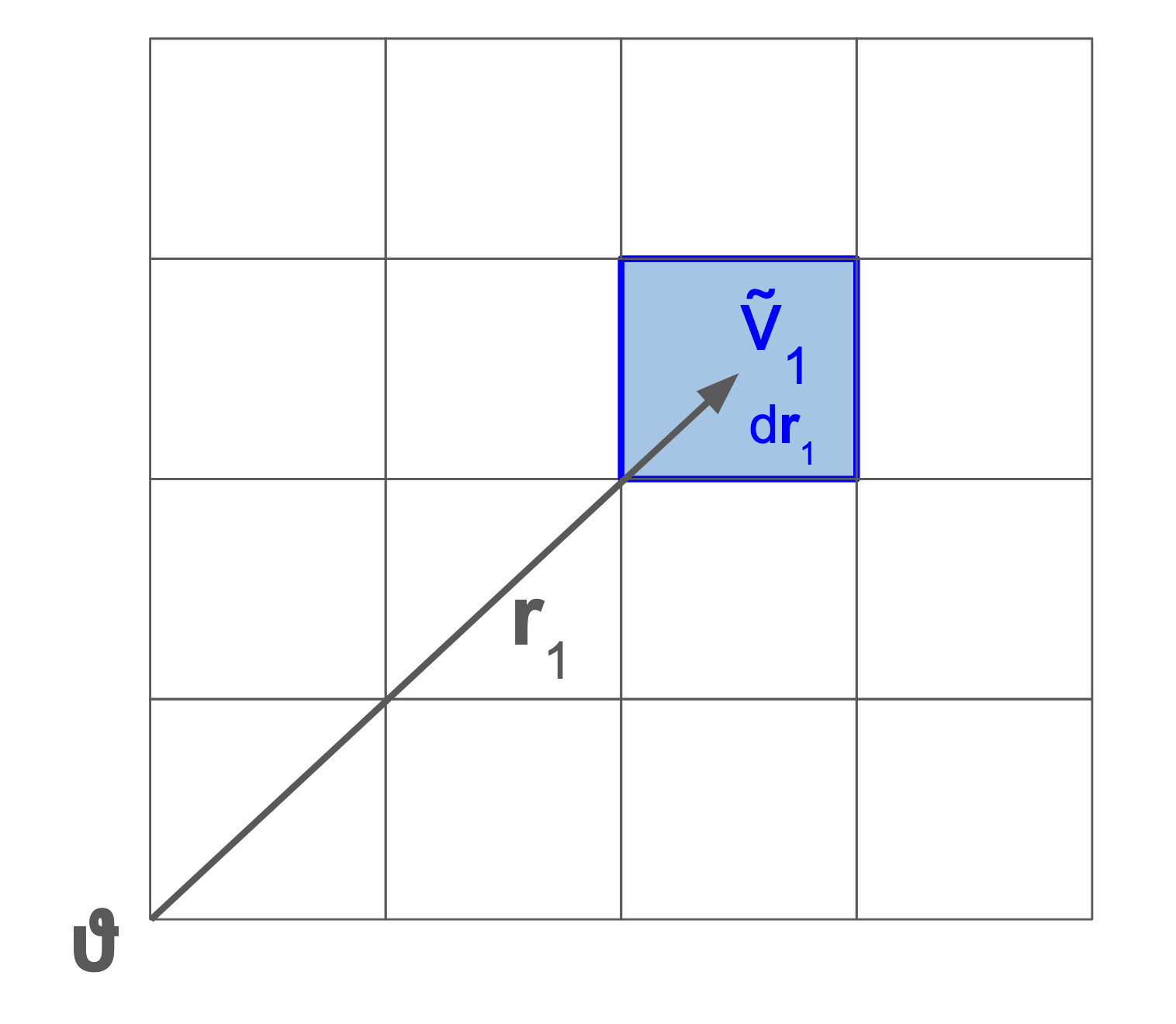}
    \caption{The total correlation function, $h(\mathbf{r_1})$, is just the average density at any point in space away from the origin, $\mathcal{O}$.}
    \label{fig:geometry}
\end{figure}

\noindent According to the Ornstein-Zernike equation, this total correlation function is fully described by three functions: the direct correlation function $c(\mathbf{r_1})$, the direct correlation function $c(\mathbf{r})$, and the total correlation function $h(\mathbf{r - r_1})$. These functions only make sense when we consider that the integral will go over all of the other volume elements of the system, specifically for all $\mathbf{r} \neq \mathbf{r_1}$. Our new picture is then,

\begin{figure}[H]
    \centering
    \includegraphics[width=0.5\linewidth]{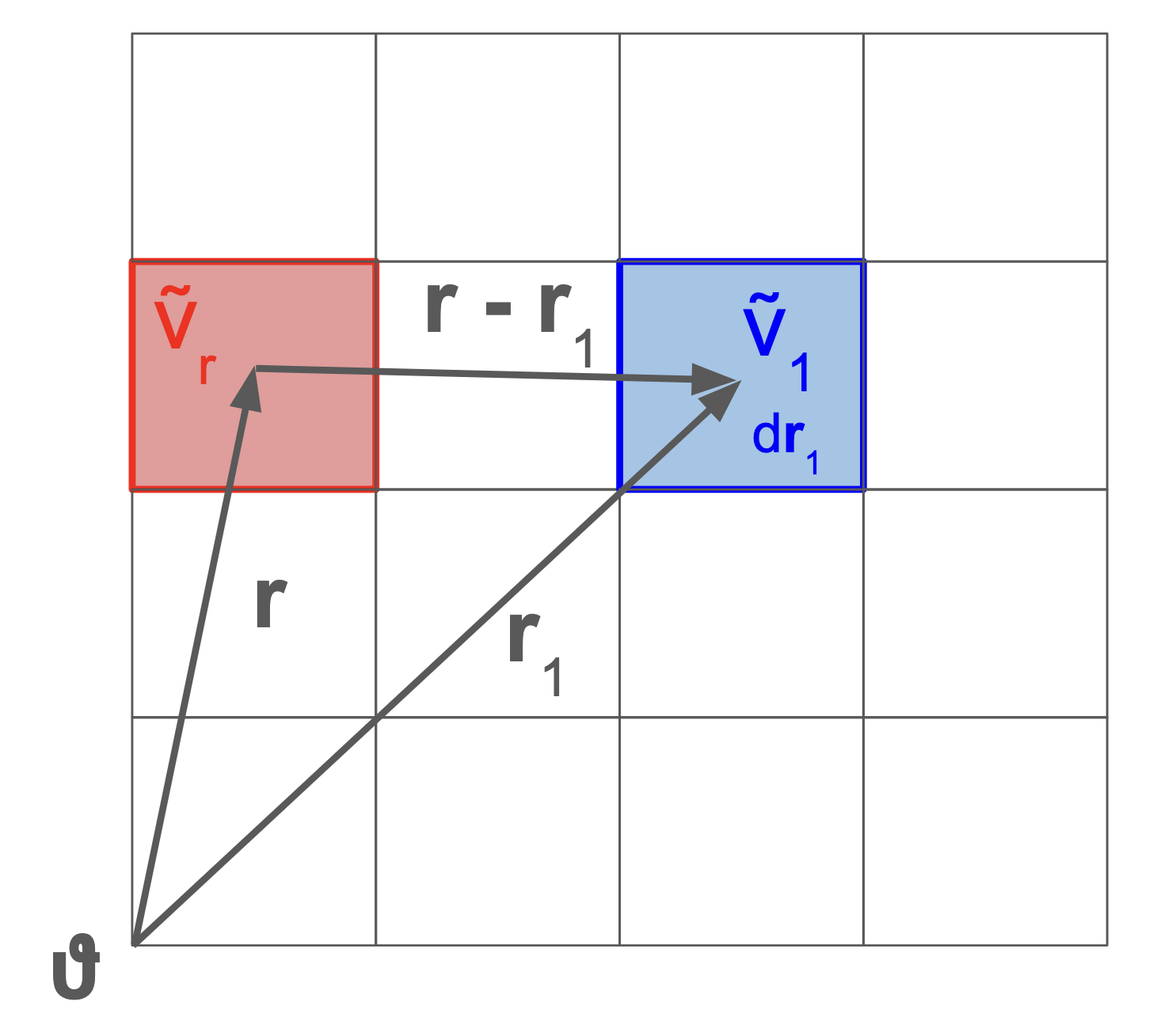}
    \caption{The new vector $\mathbf{r}$ is of course integrated over all values of $\mathbf{r}$.}
    \label{fig:integral}
\end{figure}

\noindent Now let's consider each of the terms. The direct correlation function $c(\mathbf{r_1})$ is a function of linear coupling constants which relate the particle density at the origin to all other volume elements in the system. There are no cross terms in the expansion, so this function only keeps track of the 'direct' coupling between volume elements, hence the name. $c(\mathbf{r_1})$ is therefore the direct coupling between the reference origin and some other volume element. The other direct correlation function $c(\mathbf{r})$ has the same interpretation, but now it represents the direct coupling between the reference and a different volume element not centered at $\mathbf{r_1}$. Finally, the total correlation $h(\mathbf{r - r_1})$ is just the density of a volume element from the reference volume element between the difference in the vectors $\mathbf{r - r_1}$. 

\subparagraph{Finite $f(\mathbf{r})$ Leads to Nonphysical Fourier Transforms}\label{FT}

There are some problems with the mathematical development presented by Ornstein and Zernike. In a short phone call with Harry W. Sullivan, who just traveled to University of Minnesota to start a doctoral program, it was noted that the assumption of a cut-off in $c(\mathbf{r})$ may actually be at odds with Fourier representations of the total correlation function. To see this, consider that the modified Henkel transform (or spherical Bessel transform) of the real-space total correlation function is directly related to the total structure factor, a quantity  which is observable experimentally via scattering experiments. Thus, assuming that the total correlation function is isotropic, we can write the modified Henkel transform of the total correlation function as, 

\begin{equation}
    \hat{h}(q) = \int_0^\infty h(r) J_r(qr)rdr
\end{equation}

\noindent where $J_r$ is the Bessel function of the first kind of order $r$. If there is a point at which the total correlation abruptly becomes zero at some $r_{max}$, then, expanding $h(r)$ in a Fourier Bessel series gives,

\begin{equation}
    \hat{h}(q) = \int_0^{r_{max}} \sum_{n=1}^\infty c_n J_r\bigg(\frac{U_{r,n}}{r_{max}r}\bigg) J_r(qr)rdr
\end{equation}

\noindent where $c_n$ are constants and $U_{r,n}$ is the $n^{th}$ root of the Bessel function. The sum and integral can be exchanged since they are linear, giving the following,

\begin{equation}
    \hat{h}(q) = \sum_{n=1}^\infty c_n \int_0^{r_{max}} J_r\bigg(\frac{U_{r,n}}{r_{max}r}\bigg) J_r(qr)rdr
\end{equation}

\noindent which has as a solution a delta-train,

\begin{equation}
    \hat{h}(q) = \sum_{n=1}^\infty c_n \delta\bigg(\frac{U_{r,n}}{r_{max}} - q\bigg)\frac{r_{max}}{U_{r,n}}
\end{equation}

\noindent This result, of course, is at odds with experimental measurement of the total structure factor and with physical intuition of the problem. Hence, the abrupt sending of the function $c(\mathbf{r})$ to zero is not a mathematically reasonable assumption to describe the theory of liquids.

\paragraph{Conclusions from Ornstein and Zernike's Original Work}

In summary, Ornstein and Zernike's original paper is a work of genius that resulted in an integral relation that has remained relevant to liquid state physics for over a century. Indeed, reading this manuscript one gets the sense that the authors did not understand how monumental this work would become in the development of modern physics. However, the mathematical development in their manuscript is not rigorous and therefore encounters some potential pitfalls, but nonetheless was later verified to be correct anyway without the shaky assumptions that derail the extensibility of their method. 

\paragraph{The Gibbs Interpretation}

One of the great issues with Ornstein and Zernike's original derivation is that the direct correlation function is constructed with non-rigorous assumptions and doesn't have a physical interpretation. However, these deficiencies can be addressed with the statistical mechanics methods of Gibbs. First, recall that the phase space probability density, $f^{[N]}(\mathbf{r^N}, \mathbf{p^N}; t)$, is a function so that $f^{[N]}(\mathbf{r^N}, \mathbf{p^N}) d\mathbf{r^N}\mathbf{p^N}$ is the probability at time $t$ that the system will be in the $6N$-dimensional phase space element $d\mathbf{r^N}\mathbf{p^N}$. By defining a specific physical system, we can ascribe a functional form to this probability distribution. For instance, if  we take a collection of particles at equilibrium with a heat bath of fixed temperature but that cannot exchange particles, we obtain the canonical probability distribution,

\begin{equation}
    f^{[N]}_0(\mathbf{r^N}, \mathbf{p^N}) \propto \frac{\exp(-\beta \mathcal{H})}{Q_N}
\end{equation}

\noindent where $Q_N$ is the familiar $N$-particle canonical partition function and the subscript of 0 for $f$ indicates that the system is at equilibrium and therefore does not depend on time. 

Generally, the $N$-particle phase space probability density function contains unnecessary information that can be integrated out to obtain so-called reduced phase space distribution functions in the following way,

\begin{equation}
    f^{[n]}(\mathbf{r^n}, \mathbf{p^n}; t) = \frac{N!}{(N-n)!}\iint f^{[N]}(\mathbf{r^N}, \mathbf{p^N}) d\mathbf{r^{(N-n)}}d\mathbf{p^{(N-n)}}
\end{equation}

\noindent which has an obvious physical interpretation: $f^{[n]}(\mathbf{r^n}, \mathbf{p^n}; t)d\mathbf{r^{(N-n)}}d\mathbf{p^{(N-n)}}$ gives the probability of finding the system in a reduced phase space element $d\mathbf{r^{(N-n)}}d\mathbf{p^{(N-n)}}$ irrespective of all other particle positions and momenta. Now, in general this reduced phase space probability density at equilibrium is separable into kinetic and potential parts ($\mathcal{H} = \mathcal{K} + \mathcal{P}$) so that,

\begin{equation}
    f^{[n]}(\mathbf{r^n}, \mathbf{p^n}) = \rho_N^{[n]}(\mathbf{r^n})f_M^{[n]}(\mathbf{p^n})
\end{equation}

\noindent where the $\rho_N^{[n]}(\mathbf{r^n})$ is the equilibrium $n$-particle density function so that  $\rho_N^{[n]}(\mathbf{r^n})d\mathbf{r^n}$ is the probability of finding a set of $n$ particles in reduced phase space volume element $d\mathbf{r^n}$ irrespective of the positions of all other particles and irrespective of all momenta. 

\subparagraph{Direct Correlations as Derivatives of Free Energy}

The Gibbs interpretation offers us more mathematical rigor and precision that resolves many of the issues we have seen in Ornstein and Zernike's original manuscript. First, we will discuss the physical interpretation of the total correlation function in the Gibbs interpretation. Intuitively, the total correlation function, $h^{(2)}(\mathbf{r,r'})$, which appears in the OZ equation, describes the ensemble average distribution function of any two particle densities in the system. However, this quantity is defined in terms of the equilibrium $n$-particle density function in the following way,

\begin{equation}
    g_N^{[n]}(\mathbf{r^N}) = \frac{\rho_N^{[n]}(\mathbf{r^n})}{\prod_{i=1}^n\rho_N^{[1]}(\mathbf{r_i})}
\end{equation}

\noindent which, for a reduced 2-particle density function, can be written as,

\begin{equation}
    g_N^{[2]}(\mathbf{r_1},\mathbf{r_2}) = \frac{\rho_N^{[2]}(\mathbf{r_1},\mathbf{r_2})}{\rho_N^{[1]}(\mathbf{r_1})\rho_N^{[1]}(\mathbf{r_2})}
\end{equation}

\noindent and finally as our function,

\begin{equation}
    h_N^{[2]}(\mathbf{r_1},\mathbf{r_2}) = g_N^{[2]}(\mathbf{r_1},\mathbf{r_2}) - 1
\end{equation}

\noindent Notice that this function is defined in terms of a 2-particle density function, which contains contributions from both individual particle densities and the intermediate particles as in our original picture from the Ornstein-Zernike manuscript. Notice we have completely avoided the difficulties of assuming that the density function is continuous (when in fact, it is not), since $h_N^{[2]}(\mathbf{r_1},\mathbf{r_2})$ is equally described by an ensemble average that is continuous for liquids at equilibrium.

So what of Ornstein and Zernike's coupling constant function, $f$, under the Gibbs interpretation? Perhaps surprisingly, $f$ is exactly equal to a second partial functional derivative of the excess free energy functional, $\mathcal{F}_{ex}[\rho^{(1)}]$ (recall that the excess part of the intrinsic free energy is just the contribution to the free energy due to particle interactions), so that,

\begin{equation}
    c(\mathbf{r},\mathbf{r'}) = -\beta \frac{\delta^2 \mathcal{F}_{ex}[\rho^{(1)}]}{\delta \rho^{(1)}(\mathbf{r}) \delta \rho^{(1)}(\mathbf{r'})}
\end{equation}

\noindent Such a physical interpretation of the direct correlation is extremely satisfying for the following reasons. First, we see that indeed the direct correlation does not involve reduced particle density distributions higher than order 1, meaning that there is no contribution from intermediate particles (we had the same situation in Ornstein and Zernike's picture previously). However, now $c(\mathbf{r},\mathbf{r'})$ can be thought of as a response of the excess intrinisic free energy to changes in the single particle density distributions, rather than just the linear coupling terms of a truncated Taylor expansion. Such a construction does not require $c(\mathbf{r},\mathbf{r'})$ to abruptly vanish, resolving the issue with the modified Henkel transform in section \ref{FT}. Finally, we now see how $c(\mathbf{r},\mathbf{r'})$ could even be used to approximate the intrinsic free energy functional and hence describe entirely the thermodynamics of a model fluid. 

\subparagraph{The OZ Equation in the Gibbs Interpretation}

See Section 3.5 of \textit{Theory of Simple Liquids} by Hansen and McDonald. 

\paragraph{Numerical Implementations}

\subparagraph{The Isotropic OZ Relation and the Bridge Function}

In practice, we usually consider liquids to be isotropic, allowing us to reduce the vector quantities introduced earlier into just the distance between particles, $r$, to give an isotropic Ornstein-Zernike relation,

\begin{equation}
    h(r) = c(r) + \rho \int c(|\mathbf{r} - \mathbf{r'}|)h(r')d\mathbf{r'}
\end{equation}

\noindent which you may notice is just a convolution of $h(r')$ with $c(|\mathbf{r} - \mathbf{r'}|)$,

\begin{equation}
    h(r) = c(r) + \rho [h(r')*c(|\mathbf{r} - \mathbf{r'}|)](\mathbf{r})
\end{equation}

\noindent which under Fourier transform is simply,

\begin{equation}
    \hat{h}(q) = \hat{c}(q) + \rho \hat{h}(k) \hat{c}(k)
\end{equation}

\noindent or rearranging, 

\begin{equation}\label{OZFT}
    \hat{h}(q) = \frac{\hat{c}(q)}{1-\rho \hat{c}(q)}
\end{equation}

Finally, diagrammatic techniques (see \textit{A Concise Course on the Theory of Classical Liquids} by Santos and Section) and the introduction of the indirect correlation function ($\gamma(r) = h(r) - c(r)$), allow us to write the radial distribution as,

\begin{equation}
    g(r) = \exp[-\beta u(r) + h(r) - c(r) + b(r)]
\end{equation}

\noindent where $b(r)$ is the so-called, and unknown to us, bridge function arising from elementary diagrams in the integral expansion of $g(r)$. Using the definition of the indirect correlation, we find that,

\begin{equation}
    g(r) = \exp[-\beta u(r) + \gamma(r) + b(r)]
\end{equation}

\noindent Noting that $h(r) = g(r) - 1$, we can subtract the direct correlation $c(r)$ from both sides of the equation to obtain,

\begin{equation}
    h(r) - c(r) = g(r) - 1 - c(r)
\end{equation}

\noindent and using the definition of the indirect correlation and the diagram expansion of $g(r)$,

\begin{equation}
    \gamma(r) = \exp[-\beta u(r) + \gamma(r) + b(r)] - 1 - c(r)
\end{equation}

\noindent which, upon rearranging, gives,

\begin{equation}\label{diagdirect}
    c(r) = \exp[-\beta u(r) + \gamma(r) + b(r)] - \gamma(r) - 1
\end{equation}

\noindent Here a closure relation is introduced to write $b(r)$ in terms of the indirect correlation function. Specific examples will be provided in section \ref{numerical}.

\paragraph{OZ Fourier Transform in Terms of the Indirect Correlation}

In terms of the indirect correlation, the OZ equation becomes,

\begin{equation}
    \gamma(r) = \rho [(\gamma(r) + c(r))*c(|\mathbf{r} - \mathbf{r'}|)](\mathbf{r})
\end{equation}

\noindent, which, upon taking the Fourier transform, gives,

\begin{equation}
    \hat{\gamma}(q) = \rho (\hat{\gamma}(q) + \hat{c}(q))(\hat{c}(q)) = \rho \hat{\gamma}(q) \hat{c}(q) + \rho \hat{c}^2(q)
\end{equation}

\noindent and rearranging,

\begin{equation}\label{indirectOZ}
    \hat{\gamma}(q) = \frac{\rho \hat{c}^2(q)}{1 - \rho \hat{c}(q)}
\end{equation}

\subparagraph{Iterative Numerical Solvers}\label{numerical}

We now have everything we need to solve for the pair correlation function $h(r)$, given a potential. The steps we use are:

\begin{enumerate}
    \item Guess the indirect correlation function $\gamma(r)$.
    \item Compute the direct correlation function, $c(r)$, using eq \eqref{diagdirect} and a suitable closure relation for the bridge function.
    \item Fourier transform $c(q)$.
    \item Compute Fourier transform of the indirect correlation, $\hat{\gamma}(q)$, using eq \eqref{indirectOZ}.
    \item Fourier transform $\hat{\gamma}(q)$ to get a new guess for $\gamma(r)$.
    \item Repeat until convergence.
\end{enumerate}

\noindent Now, what if we have the total correlation function from experiment but do not know the potential? In this case, we change the algorithm as follows:

\begin{enumerate}
    \item Guess the pair potential $\beta u^{(0)}(r)$.
    \item Compute the direct correlation function, $c(r)$, using eq \eqref{diagdirect} and a suitable closure relation for the bridge function.
    \item Fourier transform $c(r)$.
    \item Compute Fourier transform of the total correlation, $\hat{h}(q)$, using eq \eqref{OZFT}.
    \item Fourier transform $\hat{h}(q)$ to get $h(r)$.
    \item Perform iterative refinement scheme to update potential to $\beta u^{(1)}(r)$.
    \item Repeat until the computed $h(r)$ matches the experimental data.
\end{enumerate}

\subsubsection{Density Functionals}

Consider a region of some volume, $v$, which contains $N_1, ..., N_k$ molecules of $k$-species. First, define the single particle density functional for species $\alpha$ as,

\begin{equation}
    \rho_\alpha^{(1)}(\mathbf{r_1}) = \sum^{N_\alpha}_{i_\alpha = 1} \delta(\mathbf{r_{i_\alpha}} - \mathbf{r_1})
\end{equation}

\noindent where the $\delta$ is understood as the Dirac-$\delta$ function. Let's consider the content of this equation fully before proceeding. The term $\rho_\alpha^{(1)}(\mathbf{r_1})$ is explicitly referring to the number density (in atoms/volume units) as a function of position ($\mathbf{r_1}$, also known as configuration space), of a specific species $\alpha$. We can evaluate this functional in the following way. Suppose we are given some vector $\mathbf{r}$, defined with respect to some pre-defined (and arbitrary) coordinate system. Then we just check if that vector points to (or corresponds with) the position of a particle with label $\alpha$. If it does, the functional returns a $\delta$ distribution at that vector, and if not, a zero. Then, the integral of the single particle density functional over the entire volume is just exactly equal to the number of particles of species $\alpha$ such that,

\begin{equation}
    \int_v \rho_\alpha^{(1)}(\mathbf{r_1}) dv = N_\alpha
\end{equation}

\noindent which essentially just amounts to counting all of the atoms of species $\alpha$ in the given system.

Now that we have introduced the singlet particle density functional, we will proceed to the pair density functional, which by a similar definition is given as,

\begin{equation}
    \rho_{\alpha, \beta}^{(2)}(\mathbf{r_1}, \mathbf{r_2}) = \sum^{N_\alpha}_{i_\alpha = 1} \sum^{N_\beta}_{k_\beta = 1} \delta(\mathbf{r_{i_\alpha}} - \mathbf{r_1})\delta(\mathbf{r_{k_\beta}} - \mathbf{r_2})
\end{equation}

\noindent which can be understood in a similar way as the single particle density functional. First, give the functional two vectors and then determine if (1) vector 1 points to the position of a particle with label $\alpha$ and (2) vector 2 points to the position of a particle with label $\beta$. If both statements are true then the functional returns a $\delta$ distribution at that pair of vectors, and if not it returns a zero. As in the previous equation, the summation goes over all of the known positions of both particles. In this case, the integral over the entire space is,

\begin{equation}
    \int_{v_1} \int_{v_2} \rho_{\alpha, \beta}^{(2)}(\mathbf{r_1}, \mathbf{r_2}) dv_1 dv_2 = N_\alpha N_\beta - N_\alpha \delta_{\alpha  \beta}
\end{equation}

\noindent since in the first integral we will find all particles of label $\beta$ given a specific vector for label $\alpha$, then the second integral will find $N_\beta$ particles for all positions of particles $\alpha$. Thus, the total number of counts where $\rho_{\alpha, \beta}^{(2)}(\mathbf{r_1}, \mathbf{r_2})$ is non-zero is $N_\alpha N_\beta$. However, if $\alpha = \beta$ then we will double count the vectors $N_\alpha$ times, so we need to subtract $N_\alpha \delta_{\alpha \beta}$ in this case. 

Note that to this point we have simply looked a system of particles with fixed positions. Of course, in real physical systems the particles are always moving and we observe the averages of the motions. Therefore, we need to consider an ensemble of systems that represent the average behavior of the system, which amounts to taking the ensemble average of the density functionals.

We then need to evaluate the average of the density functionals in the grand canonical ensemble. Rather than write these explicitly, we just apply Equation \eqref{average} to the integrals of the singlet and pair density functionals to obtain,

\begin{equation}
    \hat{\rho}_\alpha^{(1)}(\mathbf{r_1}) = \langle \rho_\alpha^{(1)}(\mathbf{r_1}) \rangle
\end{equation}

\begin{equation}
    \hat{\rho}_{\alpha, \beta}^{(2)}(\mathbf{r_1}, \mathbf{r_2}) = \langle \rho_{\alpha, \beta}^{(2)}(\mathbf{r_1}, \mathbf{r_2}) \rangle
\end{equation}

\noindent which by linearity of the expectation gives,

\begin{equation}
    \int_v \hat{\rho}_\alpha^{(1)}(\mathbf{r_1}) dv = \langle N_\alpha \rangle
\end{equation}

\begin{equation}
    \int_{v_1} \int_{v_2} \hat{\rho}_{\alpha, \beta}^{(2)}(\mathbf{r_1}, \mathbf{r_2}) dv_1 dv_2 = \langle N_\alpha N_\beta \rangle - \langle N_\alpha \rangle \delta_{\alpha  \beta}.
\end{equation}

Furthermore, by linearity of the expectation we can combine these two equations in the following clever way,

\begin{equation}
\begin{split}
        \int_{v_1} \int_{v_2} [\hat{\rho}_{\alpha, \beta}^{(2)}(\mathbf{r_1}, \mathbf{r_2}) - \hat{\rho}_\alpha^{(1)}(\mathbf{r_1}) \hat{\rho}_\beta^{(1)}(\mathbf{r_2})] dv_1 dv_2 = & \\ [\langle N_\alpha N_\beta \rangle - \langle N_\alpha \rangle \langle N_\beta \rangle]  - \langle N_\alpha \rangle \delta_{\alpha  \beta}.
\end{split}
\end{equation}

We can further simplify this expression by noting that the means of the density functionals take on specific forms in fluids. For example, the mean of the single density functional of a species $\alpha$ is just the concentration (in atoms/volume) of that species $c_\alpha$. The mean of the pair density functional is given a special definition in terms of the radial distribution function, which is just,

\begin{equation}
    \hat{\rho}_{\alpha, \beta}^{(2)}(\mathbf{r_1}, \mathbf{r_2}) = c_\alpha c_\beta g_{\alpha, \beta}(r).
\end{equation}

Plugging these definitions into our integral equation, we obtain,

\begin{equation}\label{kbintegral}
\begin{split}
     \int_{v} [g_{\alpha, \beta}(r) - 1] dv = & \\ v\frac{\langle N_\alpha N_\beta \rangle - \langle N_\alpha \rangle \langle N_\beta \rangle}{\langle N_\alpha \rangle \langle N_\beta \rangle}  -  \frac{\delta_{\alpha  \beta}}{\langle N_\alpha \rangle}
\end{split}
\end{equation}

\noindent which is precisely the relationship needed to connect the integrals of the radial distribution function with thermodynamic properties from the grand canonical ensemble. Just take eq \eqref{kbintegral} and substitute in eq \eqref{gcpd} and we obtain,

\begin{equation}
    c_\alpha c_\beta G_{\alpha, \beta} + \delta_{\alpha, \beta}c_\alpha = \frac{\beta}{v}\bigg(\frac{\partial \mu_\alpha}{\partial N_\beta}\bigg)_{T, V, N_{i \neq \beta}}
\end{equation}

\noindent where we have defined,

\begin{equation}
    G_{\alpha, \beta} = \int_{v} [g_{\alpha, \beta}(r) - 1] dv.
\end{equation}

\section{Principles of Bayesian Statistics}

\subsection{Probability Theory}

Probability theory is the foundation of statistics, population modeling, and making inferences from experimental data. Here we begin our discussion with set theory, which is fundamental in the study of probability.

\subsection{Set Theory}

The possible outcomes from an experiment are known as a sample space. This sample space is represented by a set, $\mathcal{S}$. Any subset of $\mathcal{S}$ is known as an event including $\mathcal{S}$ itself and the number of unique elements in the set can either be countable or uncountable depending on the possible experimental outcomes.

\begin{theorem}
    For any three events $A$, $B$, and $C$ contained in $\mathcal{S}$, the following relationships are true:
    
    \begin{equation}
        A \cup B = B \cup A
    \end{equation}
    
    \begin{equation}
        A \cap B = B \cap A
    \end{equation}
    
    \begin{equation}
        A \cup (B \cup C) = (A \cup B) \cup C
    \end{equation}
    
    \begin{equation}
        A \cap (B \cap C) = (A \cap B) \cap C
    \end{equation}
    
    \begin{equation}
        A \cap (B \cup C) = (A \cap B) \cup (A \cap C)
    \end{equation}
    
    \begin{equation}
        A \cup (B \cap C) = (A \cup B) \cap (A \cup C)
    \end{equation}
    
    \begin{equation}
        (A \cup B)^c = A^c \cap B^c
    \end{equation}
    
    \begin{equation}
        (A \cap B)^c = A^c \cup B^c.
    \end{equation}
\end{theorem}

We define two events $A$ and $B$ to be mutually exclusive if $A \cap B = 0$ and a collection of events ${A_i}$ to form a partition if $\bigcup_i A_i = \mathcal{S}$ and the set of ${A_i}$ are mutually exclusive.

\subsubsection{Probability}

For every event $A$ we associate a number between 0 and 1 as the probability of event $A$ as $P(A)$. This will only hold if the subsets of $\mathcal{S}$ form a Borel field $\mathcal{B}$, defined so that the elements of the field are closed under the union operation, all compliments of a subset also exist in $\mathcal{B}$, and that the null set is contained in $\mathcal{B}$. We can then define a probability function $P(A)$ with Borel field domain $\mathcal{B}$ such that,

\begin{equation}
    P(A) \geq 0
\end{equation}

\begin{equation}
    P(\mathcal{S}) = 1
\end{equation}

\begin{equation}
    P(\bigcup_i A_i) = \sum_i P(A_i)
\end{equation}

\noindent where the last equality holds only if the set ${A_i}$ are mutually exclusive. These are referred to as Kolmogorov's Axioms and are fundamental axioms from which we will construct probability theory.

It is also useful to define probability in a functional analytic sense. A probability space is defined as a measure space $(\Omega, \mathcal{S}, P)$ where $\Omega$ is a non-empty set of all possible events, $\mathcal{S}$ is a $\sigma$-algebra, and $P$ is a probability measure such that $P(\Omega) = 1$ (\textit{c.f.} Appendix B).

\subsubsection{The Fundamental Theorem of Counting}

\begin{theorem}
    If a job consists of $k$ separate tasks, the ith of which can be done in $n_i$ ways, then the entire job can be done in $n_1 \times n_2 \times ... \times n_k$ ways.
\end{theorem}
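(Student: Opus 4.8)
The plan is to prove this by induction on the number of tasks $k$. The statement is the familiar product (or multiplication) rule of combinatorics, and induction is the natural vehicle since the claimed count $n_1 \times n_2 \times \cdots \times n_k$ is assembled one factor at a time.

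First I would establish the base case. For $k=1$ the job consists of a single task that can be performed in $n_1$ ways, so the entire job can trivially be done in $n_1$ ways, matching the formula. One could equivalently take $k=2$ as the base case and picture the joint outcomes arranged in a rectangular array with $n_1$ rows and $n_2$ columns, so that the number of cells is $n_1 n_2$; this gives a concrete anchor for the intuition behind the product.

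Next I would carry out the inductive step. Suppose the result holds for any job consisting of $k-1$ tasks, so that the first $k-1$ tasks can be completed in $n_1 \times \cdots \times n_{k-1}$ distinct ways. I would then regard the composite of these first $k-1$ tasks as a single \emph{super-task} with exactly that many outcomes, and append the $k$th task. For each fixed completion of the super-task there are $n_k$ ways to complete the $k$th task, and distinct pairs of (super-task outcome, $k$th-task outcome) yield distinct completions of the whole job. Counting these pairs, equivalently forming the Cartesian product of the outcome sets, gives
\begin{equation}
(n_1 \times \cdots \times n_{k-1}) \times n_k = n_1 \times n_2 \times \cdots \times n_k,
\end{equation}
which closes the induction.

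The hard part will not be the arithmetic but isolating and justifying the hypothesis that the count is genuinely multiplicative. The argument silently assumes that the number $n_i$ of ways to perform the $i$th task is the same no matter how tasks $1$ through $i-1$ were carried out; if the choices available for a later task depended on earlier outcomes, the product rule would fail and one would instead need a sum of branch-dependent counts. I would therefore state this independence assumption explicitly at the outset and phrase the inductive step in terms of a Cartesian product of outcome sets, in which the independence is built directly into the construction, so that the bijection between completions of the whole job and tuples in the product set is manifest.
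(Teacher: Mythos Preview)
Your induction argument is correct and is the standard proof of the multiplication rule; your care in flagging the independence assumption (that $n_i$ does not depend on the outcomes of earlier tasks) is exactly the right caveat. However, the paper does not actually prove this theorem at all: it merely states the Fundamental Theorem of Counting, remarks that it ``seems quite obvious but may be difficult to implement in practice,'' and then moves directly to worked examples of ordered/unordered counting with and without replacement. So there is nothing to compare against; your proposal supplies a proof where the paper offers none.
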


The Fundamental Theorem of Counting (FTC) seems quite obvious but may be difficult to implement in practice. For example, we may have situations where we cannot count the same thing twice, known as counting without replacement. Or, we may have situations where the order that we count matters, which is known as ordered counting.

Let's consider an example where a person can select six numbers from a set of 44 different numbers for a lottery ticket. We will examine each of the four different cases for counting the total number of possible ways for a ticket to be selected. \textit{Ordered, without replacement}: Using the FTC, we note that we can pick the first number in 44 ways, the second in 43 ways, etc, so that we have the total number of ways as $44 \times 43 \times 42 \times 41 \times 40 \times 39 = \frac{44!}{38!} = 5,082,517,440$ possible ways. \textit{Ordered, with replacement}: We now have 44 choices every time so the total number is $44 \times 44 \times 44 \times 44 \times 44 \times 44 = 44^6 = 7,256,313,856$ possible ways. \textit{Unordered, without replacement}: This case is interesting because there will be degenerate cases where the six chosen numbers are the same but selected in a different order. The FTC tells us that the six numbers can be arranged in $6!$ ways, which means the total number of unordered tickets are

\begin{equation}
    \frac{44 \times 43 \times 42 \times 41 \times 40 \times 39}{6!} = \frac{44!}{6!38!} = 7,059,052.
\end{equation}

\noindent We can define the previous operation as n choose r where 

\begin{equation}
    \binom{n}{r} = \frac{n!}{r!(n-r)!}
\end{equation}

\noindent which is a binomial coefficient. The unordered, without replacement case gave a total number of choices of $\binom{44}{6}$. \textit{Unordered, with replacement}: This is the most difficult case to count. We can think of this case as putting "markers" for the chosen numbers into 44 "bins" (which have 45 walls separating them). We should ignore the end walls since no matter how we rearrange the walls the ends will never contribute to the partition, so we have 43 total walls that can move around the 44 numbers. We also have 6 markers that can move around, leaving us with 49 different objects that can move around in $49!$ different ways. Since we are looking for the unordered ways to do this, we have to eliminate redundant orderings of the walls (43!) and the markers (6!), leaving us with a total count of,

\begin{equation}
    \binom{n+r-1}{r} = \frac{49!}{6!43!} = 13,983,816.
\end{equation}

\subsubsection{Conditional Probability}

The conditional probability is the probability of an event $A$ occurring given some other event(s) $B$ and is denoted by $P(A|B)$. In essence, the conditional probability takes the sample space into event(s) $B$ and then specifies the probability that $A$ is true,

\begin{equation}
    P(A|B) = \frac{P(A\cap B)}{P(B)}.
\end{equation}

The definition of conditional probability leads to Bayes' Rule, for which the proof is trivial and left as an exercise.

\begin{equation}
    P(A|B) = \frac{P(B|A) P(A)}{P(B)}.
\end{equation}

\subsubsection{Sample Spaces and Random Variables}

A mapping between a sample space to the real numbers is called a random variable. For example, if we roll two dice there are 36 possible outcomes for the two numbers that appear on the dice. However, we can define a random variable $X$ as the sum of the numbers on the two dice, leaving us with a variable that ranges from [2,12] in whole number increments. Every random variable has a cumulative distribution function which essentially describes the probability of obtaining a result that is less than or equal to some specified value of the random variable. We write

\begin{equation}
    F_X(x) = P_X(X \leq x)
\end{equation}

\noindent for all $x \in X$. This means that at the right most limit of our sample space (12 for the sum of two dice random variable) that $F_X(x) = 1$ and the left most limit that $F_X(x) = 0$. As we move from the left limit to the right limit, $F_X(x)$ cannot decrease, because it represents a sum of the probabilities of all the conditions before it. We can define our random variable as continuous or discontinuous based on whether or not $F_X(x)$ is continuous or discontinuous, respectively. Random variables $X,Y$ are said to be identically distributed if $F_X(x) = F_Y(x)$, which just means that their probability distributions are the same as long as they belong to the same subset of a Borel field.

\subsubsection{Probability Mass and Density Functions}

The probability mass and density functions are probability distributions on a random variable in the discrete and continuous cases respectively. They can be defined from the cumulative density distribution such that

\begin{equation}
    F_X(b) = \sum_{k=1}^b f_X(k)
\end{equation}

\noindent in the discrete case and

\begin{equation}
    F_X(x) = \int_{-\infty}^xf_X(t) dt
\end{equation}

\noindent in the continuous case. All this means is that the pmf and pdf are functions that, when summed up from left to right, give the cumulative distribution function. Moving away from the mathematical definition, the pmf(x) gives the probability of random variable $x$ and the integral of pdf(x + dx) in the neighborhood of random variable $x$ gives the probability of observing the random variable in the interval $[x,x+dx]$. PDFs must have finite integrals over the interval $(-\infty,\infty)$ which means that the right and left limits must converge to zero.

\subsection{Expectation Values and Moments}

Suppose we construct a map between our random variable $X$ to a new space $Y$ that conserves our distribution on $x$, named $g(x)$. The expected value is the mean or average value of a distribution in a sample space that is given by $Eg(x)$. Defined in terms of the cumulative density function we have for continuous distributions,

\begin{equation}
    Eg(x) = \int_{-\infty}^{\infty}g(x)f_X(x) dx
\end{equation}

\noindent and for discontinuous distributions

\begin{equation}
    Eg(x) = \sum_{x \in X} g(x) f_X(x).
\end{equation}

\noindent The expectation value can be generalized to moments by the following definition

\begin{equation}
    \mu_n' = EX^n
\end{equation}

\noindent where $\mu_n'$ is known as the $n^{th}$ moment of $X$. If we centralize the $n^{th}$ moment by the mean $\mu$, we obtain what are known as the $n^{th}$ central moments 

\begin{equation}
    \mu_n = E(X - \mu)^n
\end{equation}

\noindent where $\mu = EX$. The second central moment is commonly known as the variance and is important to describing the distribution of a random variable.

\subsubsection{Exponential Distribution Example}

Suppose that $X$ has an exponential distribution such that

\begin{equation}
    f_X(x) = \frac{1}{\lambda}e^{-x/\lambda}
\end{equation}

\noindent where $0 \leq x \leq \infty$ and $\lambda > 0$. We can calculate the mean of the distribution $EX$ as

\begin{equation}
    EX = \int_0^\infty x \frac{1}{\lambda}e^{-x/\lambda}dx = \int_0^\infty e^{-x/\lambda}dx - xe^{-x/\lambda}|^\infty_0 = \int_0^\infty e^{-x/\lambda} dx = \lambda.
\end{equation}

\noindent And now calculate the variance $E(X-\mu)^2 = E(X - \lambda)^2$

\begin{equation}
    E(X-\lambda)^2 = \int_0^\infty (x-\lambda)^2\frac{1}{\lambda}e^{-x/\lambda}dx = \int_0^\infty (x^2 - 2x\lambda + \lambda^2)\frac{1}{\lambda}e^{-x/\lambda}dx.
\end{equation}

\noindent We can just integrate each term at a time, starting with the $\lambda^2$ term

\begin{equation}
    \int_0^\infty \frac{\lambda^2}{\lambda}e^{-x/\lambda} dx = \lambda \int_0^\infty e^{-x/\lambda} dx = \lambda^2
\end{equation}

\begin{equation}
    -2\lambda \int_0^\infty x\frac{1}{\lambda}e^{-x/\lambda} dx = -2 \lambda^2
\end{equation}

\begin{equation}
    \int_0^\infty x^2 \frac{1}{\lambda}e^{-x/\lambda} dx = -x^2 e^{-x/\lambda}|^\infty_0 + 2\lambda \int_0^\infty x\frac{1}{\lambda} e^{-x/\lambda}dx = 2\lambda^2
\end{equation}

\noindent which gives a total variance of 

\begin{equation}
     E(X-\lambda)^2 = \int_0^\infty (x^2 - 2x\lambda + \lambda^2)\frac{1}{\lambda}e^{-x/\lambda}dx = \lambda^2 - 2\lambda^2 + 2\lambda^2 = \lambda^2.
\end{equation}

\subsubsection{Moment Generating Functions}

Moment generating functions are general functions that can generate the $n^{th}$ moments of a distribution. The moment generating function only exists if the expectation of a function exists in a neighborhood of zero. We define the moment generating function as

\begin{equation}
    M_X(t) = Ee^{tx}.
\end{equation}

\noindent The $n^{th}$ derivative of the moment generating function at $t=0$ is the $n^th$ moment. We can see this by considering the following argument

\begin{equation}
    \frac{d}{dt}M_X(t) = \frac{d}{dt}\int_{-\infty}^\infty e^{tx}f_X(x) dx = \int_{-\infty}^\infty xe^{tx}f_X(x) dx = EXe^{tx}
\end{equation}

\noindent assuming that we can move the derivative into the integral sign, which is a special case of the Leibniz rule that applies in this case. Now, notice that at $t=0$ we have

\begin{equation}
    \frac{d}{dt}M_X(t)|_{t=0} = EXe^{tx}|_{t=0} = EX.
\end{equation}

\noindent Now suppose we take $n$ partial derivatives of $M_X(t)$, We obtain something that looks like

\begin{equation}
    \frac{d^n}{dt^n}M_X(t) = \frac{d}{dt}\int_{-\infty}^\infty e^{tx}f_X(x) dx = \int_{-\infty}^\infty x^ne^{tx}f_X(x) dx = EX^n e^{tx}
\end{equation}

\noindent which gives us the same thing as before when evaluated at $t=0$ but for the $n^{th}$ moment

\begin{equation}
    \frac{d^n}{dt^n}M_X(t)|_{t=0} = EX^n e^{tx}|_{t=0} = EX^n.
\end{equation}

\subsection{Common Families and Distributions}

Families and distributions represent functional forms that can be used to model population distributions of random variables. In this section, we explore common distributions and their properties.

\subsubsection{Discrete Distributions}

The uniform discrete distribution is a simple distribution of discrete values with probability distribution given by

\begin{equation}
    P(X = x | N) = \frac{1}{N}.
\end{equation}

The hypergeometric distribution is best conceptualized by considering the case where you are looking for the probability of selecting $x$ red balls in $k$ choices from a jar filled with $M$ red balls and $N-M$ green balls. The fundamental theorem of counting gives us the probability distribution function,

\begin{equation}
    P(X=x|N,M,k) = \frac{\binom{M}{k}\binom{N-M}{k-x}}{\binom{N}{k}}.
\end{equation}

The binomial distribution is a distribution of independent Bernoulli trials that have only two possible outcomes that occur with probability $p$ and $1-p$ respectively. A conceptual example here is to consider the case where we want to know the probability of an event occurring $y$ times in $n$ trials, such as the number of heads in a coin flip experiment. Since the events are independent, we have a probability given by,

\begin{equation}
    P(Y=y|n,p) = \binom{n}{y}p^y(1-p)^{n-y}.
\end{equation}

\noindent Poisson distributions are useful in counting experiments where the probability of observing an event increases with time (which is intuitive for most cases). Poisson distributions have the form,

\begin{equation}
    P(X=x|\lambda) = \frac{e^{-\lambda}\lambda^x}{x!}.
\end{equation}

\subsubsection{Continuous Distributions}

The first common continuous distribution is the gamma distribution given by,

\begin{equation}
    f(t) = \frac{t^{\alpha-1}e^{-t}}{\Gamma(\alpha)}
\end{equation}

\begin{equation}
    \Gamma(\alpha) = \int_0^\infty t^{\alpha-1}e^{-t} dt.
\end{equation}

\noindent The normal distribution has probability distribution function given by,

\begin{equation}
    f(x|\mu, \sigma^2) = \frac{1}{\sqrt{2\pi} \sigma}e^{-(x - \mu^2)/(2\sigma^2)}.
\end{equation}

\noindent The mean and standard deviation give complete information about the shape and distribution of the random value and therefore belong to a special class of distributions known as the location-scale families. 

\subsubsection{Normal Variable Theorems}

The first important result to consider is the normal linear transform theorem which states that,

\begin{equation}
    \alpha + \beta \mathcal{N}(m, a^2) = \mathcal{N}(\alpha + \beta m, \beta^2a^2)
\end{equation}

\noindent which can be proved by finding the moment generating function of both sides of the equation. The second important theorem is the normal sum theorem which states that,

\begin{equation}
    \mathcal{N}(m_1 + m_2, a_1^2 + a_2^2) = \mathcal{N}(m_1, a_1^2) + \mathcal{N}(m_2, a_2^2)
\end{equation}

\noindent as long as the two variables on the right hand side are statistically independent. This can also be proved by showing that the moment generating functions on the right and left hand sides are equal.

\subsection{Bayesian Inference}

In Bayesian inference over a model $M$, we are interested in the posterior probability distribution of the model parameters $\theta$ given some observed data, which we denote $p(\theta, M | D)$. According to Bayes' theorem, this quantity can be expressed as,

\begin{equation}
    p(\theta, M | D) = \frac{p(D|\theta, M) p(\theta, M)}{p(D)}
\end{equation}

\noindent where $p(D|\theta, M)$ is the likelihood that data $D$ is explained by the model and its parameters, $p(\theta, M)$ is our prior knowledge (a preselected parameter distribution based on expert beliefs), and $p(D)$ is the probability of observing data $D$ at all. The latter quantity need not be computed since we can often normalize the probability distribution \textit{post hoc}, allowing us to compute only the quantity,

\begin{equation}
    p(\theta, M | D) \propto p(D|\theta, M) p(\theta, M).
\end{equation}

In practice, then, all we need to do is choose prior distributions and a likelihood equation to perform Bayesian inference. If the solution to the problem is not analytical, we can then compute the posterior by sampling this distribution with Markov chain Monte Carlo (MCMC). 

\subsubsection{Bayesian Model Averaging}

What happens if we have multiple models that provide information about the same quantity-of-interest and we want to combine those predictions? Bayesian model averaging allows us to infer the probability distribution function on a set of model parameters by combining results from different models in a rigorous way. The idea is basically to compute an average of the parameter probability distributions calculated from Bayesian inference weighted by the model probability and model evidence. Note that in the case that you are estimating a model parameter that the parameter must have the exact same interpretation in both models. If the parameter has a different interpretation, then Bayesian model averaging will not apply.

In Bayesian model averaging, the posterior distribution for the model parameters $\theta$ given experimental data $D$, $p(\theta|D)$, is equal to a weighted sum of posterior distributions over a model set $M = [M_1, ..., M_n]$ such that,

\begin{equation}\label{eq:bma}
    p(\theta|D) = \sum_M p(M|D) p(\theta|M, D)
\end{equation}

\noindent where $p(\theta|M, D)$ is the posterior distribution of model parameters for model $M$ and $p(M|D)$ is the posterior model probability computed via,

\begin{equation}\label{eq:postmodelprob}
    p(M|D) = \frac{p(D|M)p(M)}{\sum_{M'} p(D|M') p (M')}.
\end{equation}

The quantity $p(D|M)$ is known as the marginal likelihood of model $M$ and $p(M)$ is the prior model probability.

Assuming that the probability of each model is the same leads to a simplification of eq \eqref{eq:postmodelprob} such that,

\begin{equation}\label{eq:postmodelsimplified}
    p(M|D) = \frac{p(D|M)}{\sum_{M'} p(D|M')}.
\end{equation}

The tricky step is determining the marginal likelihood of model $M$, $p(D|M)$. However, notice that we can compute this quantity using the likelihood computed for that model and its parameters in the following way,

\begin{equation}
    p(D|M) = \int p(D|\theta, M) p(\theta, M) d\theta
\end{equation}

\noindent which is just the integral over the parameter space of the likelihood multiplied by the prior for model $M$. This is why $p(D|M)$ is known as a marginal likelihood, since it is a marginal distribution of the likelihood without the model parameters.

At this stage, we simply need to run Bayesian inference over both models, compute the marginal likelihood of each model, plug those results into eq \eqref{eq:postmodelsimplified} to obtain the posterior model probabilities, and finally compute the Bayesian model average in eq \eqref{eq:bma} using the Bayesian posterior probabilities computed over all models weighted by the posterior model probabilities. The result will be a probability distribution on the model parameters that takes into account the fit quality of both models in a rigorous way.

\subsection{Bayesian Nonparametrics}

Bayesian nonparametrics is the application of Bayesian inference to models with no fixed number of parameters \cite{hjort_bayesian_2010}. For example, Bayesian nonparametrics can be used to learn functional distributions for infinite dimensional continuous and differentiable functions defined on the real line or even multidimensional fields. The main idea is that we are extending the earlier notion of Bayesian inference over a set of model parameters to an entire distribution of functions. Detailed and mathematically rigorous formulations can be found in Lemm's \textit{Bayesian Field Theory} \cite{lemm_bayesian_2003}.

\subsubsection{Gaussian Processes}

Of the various nonparameteric Bayesian methods, Gaussian processes are the most widely used for physics based applications. A Gaussian process is a stochastic process such that every finite set of random variables (position, time, etc) has a multivariate normal distribution. In other words, suppose we take some real world process represented as a function $f(\mathbf{x})$. We say that $f(\mathbf{x})$ is distributed as a Gaussian process with mean $\mu(\mathbf{x})$ and covariance $k(\mathbf{x}, \mathbf{x'})$ by writing,

\begin{equation}
    f(\mathbf{x}) \sim \mathcal{GP}(\mu(\mathbf{x}), k(\mathbf{x}, \mathbf{x'}))
\end{equation}

\noindent where the $\sim$ symbol can be interpreted as 'distributed as'. The predictive mean of this distribution can be computed using the complete-the-square trick to obtain,

\begin{equation}
    \bar{\mathbf{f_*}} = \mu(X_*) + K(X,X_*)K_y^{-1}[\mathbf{y} - \mu(X)]
\end{equation}

\noindent with covariance,

\begin{equation}
    \text{cov}(\mathbf{f_*}) = K(X_*,X_*) - K(X,X_*)[K(X,X) +\sigma_n^2I]^{-1}K(X,X_*)
\end{equation}

\noindent where $K$ is the covariance matrix constructed from $k(\mathbf{x}, \mathbf{x'})$, $K_y = K(X,X) - \sigma_n^2I$, $I$ is the identity matrix, $\mathbf{y}$ is a given set of training observations at training matrix inputs $X$ and model output matrix $X_*$. Often, the mean function is chosen to be zero, but choosing a non-zero mean function can be beneficial to enforce physics based behavior in the resulting function and/or to enhance interpretability (\textit{c.f.} Chapter 3).

Note that Gaussian processes are also used for classification problems \cite{rasmussen_gaussian_2006}, including one of the authors recent papers on tree species classification from airborne imaging spectroscopy data \cite{seeley_classifying_2023}.

\subsubsection{Physics-Informed Kernel Design}

Kernel design is arguably the most important aspect of Gaussian process regression and classification. The kernel, along with the Gaussian process prior mean, specify a prior state-of-knowledge that can be leveraged to improve the physical reliability of the Gaussian process prediction. Kernel selection can enforce highly general notions such as continuity and differentiability, function behavior including periodicity, and also non-stationary covariances through methods such as the Gibbs kernel \cite{gibbs_bayesian_1997,heinonen_non-stationary_2016}. Note that knowing how to construct physics-informed kernels for a learning problem of interest can be extremely valuable and restrict the function space to only physically realizable predictions.

From here, we can use thermodynamic relationships to derive a number of properties of multi-component systems in terms of the Kirkwood-Buff integrals since we know the relationship between thermodynamic derivatives and measurable thermodynamic properties.

\section{Introduction to Functional Analysis}

Quantum theory developed around the same time as the theory of structural correlations in condensed matter systems. Unlike classical mechanics, which was used to construct much of the known statistical theory of liquids, quantum mechanics is a deeper and more fundamental physical description of the behavior of atoms. Early work by Heisenberg, Schrodinger, and Planck was based on evidence from famous experiments from the early 20$^{th}$ century; however, a satisfactory description of quantum theory was not presented until John von-Neumann, a chemical engineer by training and a mathematician at heart, proposed that quantum mechanics could be described by an emerging mathematical field: functional analysis. The content of this appendix is necessary to understand the notation and mathematical objects that describe the functional analysis interpretation of the Henderson inverse theorem for Lennard-Jones type fluids as well as Bayesian field theory.

\subsection{Introduction to Functional Analysis}

Functional analysis will provide us with the mathematical rigor to describe physical observables such as the momentum, spin, or position with both discrete and continuous spectra. This section uses rigorous mathematical notation to properly define the quantum theory within the context of the mathematics of infinite dimensional vector spaces and is based on the excellent lecture series from Frederich Schuller and the book \textit{Mathematical Foundations of Quantum Mechanics} by John von Neumann.

\subsubsection{Banach and Separable Hilbert Spaces}

A \textbf{Banach space} is a vector space $(V, +, \cdot, ||\cdot||)$ equipped with a non-negative, linear norm that satisfies the triangle inequality and that is complete with respect to this norm, meaning that any Cauchy sequence in $V$ converges to an element in $V$. A sequence $\{f_i\}$ for $f_i \in V$ is Cauchy if,

\begin{equation}
    \forall \epsilon > 0 \;  \exists N \in \mathbb{N} \; \ni n,m \geq N \; \norm{f_n - f_m} < \epsilon
\end{equation}

\noindent which means that for any real number $\epsilon$ greater than 0 (including those that are arbitrarily small) that there exists some natural number $N$ such that for $n,m$ larger than $N$ that the norm of $f_n - f_m$ is smaller than $\epsilon$. A sequence $\{f_i\} \subseteq V$ is said to strongly converge to $f \in V$ if,

\begin{equation}
     \forall \epsilon > 0 \; \exists n \in \mathbb{N} \ni \forall n>N \; \norm{f_n - f} < \epsilon
\end{equation}

\noindent and converge weakly if,

\begin{equation}
    \forall \ell \in V^*
\end{equation}

\noindent that $\ell(f_i)$ strongly converges to $\ell(f)$. Here $V^*$ is defined as the dual space of a normed space $V$, which is the space of bounded linear maps, $\mathcal{L}(V,\mathbb{C})$, equipped with the norm $\norm{}_\mathcal{L}$.

Now, a bounded map $A: V \rightarrow W$ that takes an element of a normed (and not necessarily complete vector space) to a complete Banach space satisfies,

\begin{equation}
    \norm{A} = \sup_{\norm{f}_V = 1} \norm{Af}_W < \infty
\end{equation}

\noindent where $\norm{A}$ is called the \textbf{operator norm}. Hence, the operator norm measures the "size" of $A$ by determining the biggest change in the $\norm{Af}_W$ and clearly depends on $f$ and the norms in the $V$ and $W$ spaces.

\begin{theorem}
    The set of all linear and bounded maps defined as, \newline  $\mathcal{L}(V,W) := \{A:V \rightarrow W\}$ with $A$ linear and bounded, is a Banach space if equipped with pointwise addition and the operator norm $\norm{\cdot}_{\mathcal{L}(V,W)}$.
\end{theorem}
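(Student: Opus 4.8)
The plan is to establish the three defining properties of a Banach space in turn: that $\mathcal{L}(V,W)$ is a vector space, that the operator norm is genuinely a norm, and — the substantive part — that the space is complete with respect to that norm. I would assume throughout that $W$ is itself a Banach space, as required by the definition of the operator norm given above; the normed space $V$ need not be complete.

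First I would dispatch the algebraic and norm axioms, which are routine. Pointwise addition $(A+B)f := Af + Bf$ and scalar multiplication $(\lambda A)f := \lambda(Af)$ inherit the vector-space structure from $W$, and one checks directly that sums and scalar multiples of bounded linear maps remain bounded and linear, so the set is closed under these operations. For the norm axioms, non-negativity and homogeneity follow immediately from the supremum in the definition of $\norm{\cdot}$, positive-definiteness holds because $\norm{A} = 0$ forces $Af = 0$ on every unit vector and hence on all of $V$ by homogeneity, and the triangle inequality $\norm{A+B} \leq \norm{A} + \norm{B}$ descends from the triangle inequality in $W$ applied underneath the supremum.

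The core of the argument is completeness, which I would carry out in four steps. Given a Cauchy sequence $\{A_n\}$ in $\mathcal{L}(V,W)$, I would first fix an arbitrary $f \in V$ and use the fundamental estimate $\norm{A_n f - A_m f}_W \leq \norm{A_n - A_m}\,\norm{f}_V$ to conclude that $\{A_n f\}$ is Cauchy in $W$; because $W$ is complete, this sequence converges, and I would \emph{define} $Af$ to be its limit. Second, I would verify that the resulting map $A : f \mapsto Af$ is linear, which follows from the linearity of each $A_n$ together with the continuity of addition and scalar multiplication in $W$. Third, I would show $A$ is bounded: since a Cauchy sequence is norm-bounded, there is an $M$ with $\norm{A_n} \leq M$ for all $n$, whence $\norm{Af}_W = \lim_n \norm{A_n f}_W \leq M\,\norm{f}_V$, so $A \in \mathcal{L}(V,W)$.

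The step I expect to be the main obstacle — and the one deserving the most care — is upgrading the pointwise convergence $A_n f \to Af$ to convergence in the operator norm, since these are a priori distinct notions. The idea is to exploit that the Cauchy condition is uniform in $f$: given $\epsilon > 0$, choose $N$ so that $\norm{A_n - A_m} < \epsilon$ for all $n,m \geq N$, which forces $\norm{A_n f - A_m f}_W < \epsilon$ for every unit vector $f$ simultaneously. Holding $n \geq N$ fixed and letting $m \to \infty$, the continuity of the norm in $W$ gives $\norm{A_n f - Af}_W \leq \epsilon$ uniformly over $\norm{f}_V = 1$; taking the supremum then yields $\norm{A_n - A} \leq \epsilon$. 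This establishes $A_n \to A$ in $\mathcal{L}(V,W)$ and, combined with the boundedness of $A$ from the previous step, shows every Cauchy sequence converges within the space, completing the proof.
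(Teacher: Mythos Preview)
Your proposal is correct and follows essentially the same route as the paper: construct the candidate limit $A$ pointwise via completeness of $W$, then use the Cauchy estimate $\norm{(A_n - A_m)f}_W \leq \norm{A_n - A_m}\,\norm{f}_V$ and pass $m\to\infty$ to obtain a bound uniform over unit vectors, yielding operator-norm convergence. If anything, your write-up is more careful than the paper's---you explicitly verify boundedness of $A$ via the norm-boundedness of Cauchy sequences, whereas the paper asserts this in passing.
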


\begin{proof}
We begin by defining addition and scalar multiplication on $\mathcal{L}(V,W)$ so that,

\begin{equation}
    +_{\mathcal{L}}: \mathcal{L} \times \mathcal{L} \rightarrow \mathcal{L}
\end{equation}

\begin{equation}
    \cdot_{\mathcal{L}}: \mathbb{C} \times \mathcal{L} \rightarrow \mathcal{L} \; \forall \alpha \in \mathbb{C}.
\end{equation}

\noindent Commutativity, associativity, the neutral element and an inverse element of the $+_{\mathcal{L}}$ and associativity, distributivity, and the scalar identity are given since $V$ is a vector space and $\mathcal{L}$ are linear maps. Since $\mathcal{L}$ is bounded and W is a Banach space, we must have,

\begin{equation}
    \norm{\cdot}_{\mathcal{L}(V,W)} = \sup_{f \in V: \norm{f}_V = 1} \norm{Af}_W < \infty.
\end{equation}

\noindent It suffices to show that $(\mathcal{L}(V,W), +_{\mathcal{L}}, \cdot_{\mathcal{L}}, \norm{\cdot}_{\mathcal{L}})$ is complete with respect to the operator norm. Choose a candidate map as the limit of a Cauchy sequence in $\mathcal{L}$ such that,

\begin{equation}
    A: V \rightarrow W
\end{equation}

\begin{equation}
    f \in V \rightarrow \sum^\infty_{n=1} A_nf \in W.
\end{equation}

\noindent $A$ is linear and bounded since $A_n$ is linear and Cauchy in $\mathcal{L}$. Now, we just need to show that $A_n$ converges against $A$. Since $A_n$ is Cauchy,

\begin{equation}
    \forall \epsilon > 0 \; \exists N \in \mathbb{N} \; \ni n,m \geq N \; \norm{A_n - A_m} < \epsilon.
\end{equation}

\noindent By the definition of the operator norm, we have the estimate,

\begin{equation}
     \forall f \in V \; \frac{\norm{(A_n - A_m)f}_W}{\norm{f}_V} \leq \norm{A_n - A_m} < \epsilon.
\end{equation}

\noindent Noting the linearity of addition and taking the limit as $m \to \infty$ gives,

\begin{equation}
    \lim_{m \to \infty} \norm{A_nf - A_mf} = \norm{A_nf - \lim_{m \to \infty} A_mf} = \norm{A_nf - Af} < \epsilon \norm{f}_V.
\end{equation}

\noindent Thus,

\begin{equation}
     \forall f \in V \; \forall \epsilon > 0 \; \frac{\norm{A_nf - Af}_W}{\norm{f}_V} < \epsilon.
\end{equation}

\noindent But since this is true for all $f \in V$, it is certainly true for $f$ that give the supremum of this quantity,

\begin{equation}
    \forall f \in V \; \forall \epsilon > 0 \; \exists N \in \mathbb{N} \; \ni n \geq N \; \norm{A_n - A} \leq \epsilon.
\end{equation}

\noindent Hence, $A_n$ converges against $A \in \mathcal{L}(V,W)$ and therefore $\mathcal{L}(V,W)$ is complete.
\end{proof}

\begin{theorem}
    The Bounded Linear Transformation (BLT) Theorem: There is a unique extension, $\hat{A}$, of a bounded linear map,
    
    \begin{equation}
        A: \mathcal{D}_A \subseteq V \rightarrow W \ni \overline{\mathcal{D}_A} = V
    \end{equation}
    
    \noindent where $\mathcal{D}_A$ is densely defined subset of a normed vector space $V$. This extension is defined as,
    
    \begin{equation}
        \hat{A}: V \rightarrow W \ni \hat{A}(\alpha) \; \forall \alpha \in \mathcal{D}_A \implies \hat{A}(\alpha) = A(\alpha).
    \end{equation}
\end{theorem}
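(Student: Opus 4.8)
The plan is to construct the extension $\hat{A}$ by continuity, exploiting the density of $\mathcal{D}_A$ in $V$ together with the completeness of $W$ (which is a Banach space, as established earlier in the excerpt), and then to verify that the construction is well-defined, linear, bounded with $\norm{\hat{A}} = \norm{A}$, and unique. First I would fix an arbitrary $x \in V$. Since $\overline{\mathcal{D}_A} = V$, I can select a sequence $\{x_n\} \subseteq \mathcal{D}_A$ with $x_n \to x$ in the $V$-norm, and define $\hat{A}(x) := \lim_{n \to \infty} A(x_n)$. The immediate task is to argue that this limit exists: because $A$ is bounded, the estimate $\norm{A(x_n) - A(x_m)}_W \leq \norm{A}\,\norm{x_n - x_m}_V$ transfers the Cauchy property of $\{x_n\}$ (a convergent sequence is Cauchy) to the image sequence $\{A(x_n)\}$ in $W$. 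Invoking completeness of $W$, this Cauchy image sequence converges to a genuine limit, so $\hat{A}(x)$ is a bona fide element of $W$.

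Next I would establish that the value is independent of the approximating sequence. Given two sequences $\{x_n\}, \{y_n\} \subseteq \mathcal{D}_A$ both converging to $x$, the same boundedness estimate yields $\norm{A(x_n) - A(y_n)}_W \leq \norm{A}\,\norm{x_n - y_n}_V \to 0$, so the two limits coincide and $\hat{A}(x)$ depends only on $x$. Choosing the constant sequence $x_n = x$ for $x \in \mathcal{D}_A$ shows that $\hat{A}$ restricts to $A$ on $\mathcal{D}_A$, which is the defining extension property. Linearity of $\hat{A}$ then follows from the linearity of $A$ and the continuity of the vector-space operations under limits, by combining approximating sequences for $x$ and $y$ additively and under scalar multiplication.

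For boundedness I would pass the estimate $\norm{A(x_n)}_W \leq \norm{A}\,\norm{x_n}_V$ to the limit; continuity of the norm gives $\norm{\hat{A}(x)}_W \leq \norm{A}\,\norm{x}_V$, hence $\norm{\hat{A}}_{\mathcal{L}(V,W)} \leq \norm{A}$. The reverse inequality is immediate, since $\hat{A}$ agrees with $A$ on $\mathcal{D}_A$ and the supremum defining $\norm{\hat{A}}$ ranges over a superset of that defining $\norm{A}$, so $\norm{\hat{A}} = \norm{A}$. Uniqueness would then follow from a standard density argument: if $B \in \mathcal{L}(V,W)$ is any bounded, and therefore continuous, extension of $A$, then $B$ and $\hat{A}$ agree on the dense subset $\mathcal{D}_A$, and two continuous maps that agree on a dense set agree everywhere.

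The hard part will be the construction step, namely the existence and well-definedness of the limit, because this is exactly where all three hypotheses do their work simultaneously: density guarantees that approximating sequences exist, boundedness converts $V$-Cauchy sequences into $W$-Cauchy image sequences, and completeness of $W$ guarantees that those image sequences actually converge. Were $W$ merely normed rather than Banach, the construction would break down, so the care lies in verifying the Cauchy estimate cleanly and invoking completeness at precisely the right point.
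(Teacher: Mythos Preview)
Your proposal is correct and follows essentially the same approach as the paper: construct $\hat{A}$ via limits of approximating sequences from the dense domain, using boundedness to transfer the Cauchy property and completeness of $W$ to secure convergence. If anything, your treatment is more thorough than the paper's, since you explicitly verify well-definedness (independence of the chosen sequence), establish $\norm{\hat{A}} = \norm{A}$, and give a complete uniqueness argument, whereas the paper's proof is terser on these points.
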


\begin{proof}

\noindent Let $f \in V$ and consider a sequence $\{f_n\} \subseteq \mathcal{D}_A$ such that $\lim_{n \to \infty} f_n = f$. Now, this means that $f_n$ and $Af_n$ is Cauchy since $A$ is bounded and further that $\lim_{n \to \infty} A f_n \in W$ because $W$ is Banach. We can then propose an extension,

\begin{equation}
    \hat{A}:V \rightarrow W
\end{equation}

\begin{equation}
    f \in V \rightarrow \hat{A}f := \lim_{n \to \infty} A f_n
\end{equation}

\noindent which is clearly an extension since, if we let $f \in \mathcal{D}_A \subseteq V$ such that $\lim_{n \to \infty} f_n = f$,

\begin{equation}
    \hat{A}(f) = \lim_{n \to \infty}Af_n = A(f).
\end{equation}

\noindent Now, the extension $\hat{A}$ is linear due to the continuity of addition and scalar multiplication. We now just need to show that $\hat{A}$ is unique. Suppose that there is a second extension $\hat{B}$ and consider $\hat{A} - \hat{B}$ acting on a Cauchy sequence $f_n$,

\begin{equation}
    (\hat{A} - \hat{B})(f_n) = 0.
\end{equation}

\end{proof}

To understand the definition of a separable Hilbert space, we must first consider a few important theorems and definitions of bases in infinite dimensional vector spaces. A Hilbert space is a $\mathbb{C}$-vector space ($\mathcal{H}$, +, $\cdot$) equipped with a sesqui-linear inner product that induces a norm, with respect to which ($\mathcal{H}$, +, $\cdot$) is complete. A \textbf{sesqui-linear inner product} is an inner product $(\cdot, \cdot): \mathcal{H} \times \mathcal{H} \rightarrow \mathbb{C}$ that is Hermitian,

\begin{equation}
    (\phi, \psi) = \overline{(\psi, \phi)}
\end{equation}

\noindent and linear in the second argument,

\begin{equation}
    (\phi, \psi_1 + \alpha \psi_2) =  (\phi, \psi_1) + \alpha(\phi, \psi_2)
\end{equation}

\noindent with the additional condition that $(\psi, \psi) \geq 0$ with equality when $\psi = 0 \in \mathcal{H}$.

\begin{theorem}
    A norm $\norm{\cdot}: V \rightarrow \mathbb{R}$ is induced by a sesqui-linear inner product if and only if the parallelogram identity holds,
    
    \begin{equation}
        \forall f,g \in V \; \norm{f+g}^2 + \norm{f-g}^2 = 2\norm{f} + 2\norm{g}.
    \end{equation}
\end{theorem}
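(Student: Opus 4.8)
The plan is to prove both implications of the biconditional, treating the forward direction (an inner product induces a norm obeying the identity) as routine and reserving the real effort for the converse construction. Throughout I work with the sesqui-linear convention fixed earlier in the excerpt: the inner product is conjugate-linear in the first slot, linear in the second, and Hermitian, $(\phi,\psi) = \overline{(\psi,\phi)}$.

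First I would dispatch the forward direction. Assuming $\norm{f}^2 = (f,f)$, I expand $\norm{f+g}^2 = (f+g,f+g)$ and $\norm{f-g}^2 = (f-g,f-g)$ using linearity and Hermitian symmetry. The cross terms $(f,g)+(g,f)$ appear with opposite signs in the two expansions and cancel upon addition, leaving $2(f,f)+2(g,g) = 2\norm{f}^2 + 2\norm{g}^2$. I note in passing that the right-hand side of the stated identity should read $2\norm{f}^2 + 2\norm{g}^2$; I would flag this typo and work with the squared norms throughout.

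The substance lies in the converse. Given only that the norm satisfies the parallelogram identity, I would build a candidate inner product by polarization. I define the real part $B(f,g) := \tfrac{1}{4}\bigl(\norm{f+g}^2 - \norm{f-g}^2\bigr)$ and assemble the full form as $(f,g) := B(f,g) - i\,B(f,ig)$, with the sign fixed by the Hermitian convention. Checking $(f,f) = \norm{f}^2$ and Hermitian symmetry is a short computation, and positive-definiteness, $(\psi,\psi) = \norm{\psi}^2 \geq 0$ with equality iff $\psi = 0$, is inherited directly from the norm axioms.

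The hard part will be establishing linearity of $B$ in its second argument. I would first prove additivity, $B(f,g_1) + B(f,g_2) = B(f, g_1+g_2)$, by applying the parallelogram identity twice — once to the pair $(f+g_1,\, f+g_2)$ and once to $(f-g_1,\, f-g_2)$ — and subtracting, which collapses the resulting norm terms into the combination defining $B$ after invoking the auxiliary scaling relation $B(f,2g) = 2B(f,g)$ (itself the additivity relation specialized at $g_2 = 0$, using $B(f,0)=0$). From additivity, homogeneity $B(f, qg) = q\,B(f,g)$ follows for integer then rational $q$, and extends to all real scalars by the continuity of the norm, hence of $B$, in its argument. Promotion to complex scalars then uses the explicit $i$ in the definition together with the real-homogeneity already secured, so that $(f, \lambda g) = \lambda (f,g)$ for $\lambda \in \mathbb{C}$. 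Once $B$ is shown $\mathbb{R}$-linear and the imaginary term is handled, the assembled form is sesqui-linear and induces $\norm{\cdot}$, completing the converse and the theorem.
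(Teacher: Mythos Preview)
Your proposal is correct and substantially more complete than the paper's own proof. The paper attempts only the forward direction, and does so in a confused way: it writes $\norm{f+g}^2 + \norm{f-g}^2$ as a supremum over $v \in V$ of ratios $\abs{f(v)+g(v)}^2/\norm{v}^2$, which is the operator-norm formula rather than anything meaningful for a general norm on an abstract vector space, and then collapses it to $2\norm{f} + 2\norm{g}$ without justification. The converse --- the actual content of the Jordan--von Neumann theorem --- is not addressed at all.

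Your plan follows the standard and rigorous route: the forward direction by direct expansion of $(f\pm g, f\pm g)$, and the converse by defining the candidate inner product via polarization, establishing additivity of the real part $B$ from two applications of the parallelogram law, bootstrapping to rational then real homogeneity by continuity, and finally promoting to complex scalars. You also correctly flag the typo in the statement (the right-hand side should be $2\norm{f}^2 + 2\norm{g}^2$). What your approach buys is an actual proof of the nontrivial direction; the paper's version offers nothing there.
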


\begin{proof}

By the definition of the norm we have,

\begin{equation}
    \norm{f+g}^2 + \norm{f-g}^2 = \sup_{v \in V} \bigg(\frac{\abs{f(v)+g(v)}^2}{\norm{v}^2} + \frac{\abs{f(v)+g(v)}^2}{\norm{v}^2}\bigg)
\end{equation}
  
\begin{equation}
    \sup_{v \in V} \frac{2f\bar{f} + 2g\bar{g}}{\norm{v}} = 2\norm{f} + 2\norm{g}.
\end{equation}

\end{proof}

The notion of a basis set for a naked vector space $(V, +, \cdot)$ is familiar from linear algebra in the form of a Hamel basis. A subset $B \subseteq V$ is called a \textbf{Hamel basis} if,

\begin{enumerate}
    \item Any finite subset $\{\mathbf{e}_1, \mathbf{e}_2, ..., \mathbf{e}_n\} \subseteq B$ is linearly independent. This means that,
    
    \begin{equation}
        \sum_{i = 1}^n \lambda_i \mathbf{e}_i = 0
    \end{equation}
    
    \noindent implies that $\lambda_1 = \lambda_2 = ... = \lambda_n = 0$.
    
    \item For any $v \in V$ there exists a finite subset $\{\mathbf{e}_1, \mathbf{e}_2, ..., \mathbf{e}_n\} \subseteq B$ and complex numbers $v_1, v_2, ..., v_n\}$ so that,
    
    \begin{equation}
        v = \sum_{i=1}^n v_i \mathbf{e}_i.
    \end{equation}
\end{enumerate}

\noindent The dimension of the vector space is then defined based on the number of elements of the basis set (it can be finite or infinite). In quantum mechanics, the notion of a basis is actually different since we are working with Hilbert spaces. The \textbf{Schauder basis} $S \subseteq \mathcal{H}$ on a Hilbert space ($\mathcal{H}, +, \cdot, \bra{\cdot}\ket{\cdot})$ is defined as,

\begin{enumerate}
    \item Same condition as (1) for a Hamel basis.
    
    \item For any $\psi \in \mathcal{H}$ there exists a unique, countable subset $\{\mathbf{e}_1, \mathbf{e}_2, ...\} \subseteq S$ and a unique sequence $\{\psi_i\}_{i = \{1,2,...\}} \in \mathbb{C}$ such that $\psi$ can be written like (2) for finite dimensions and if $|S| = \infty$,
    
    \begin{equation}
        \psi = \sum_{i=1}^\infty \psi_i \mathbf{e}_i.
    \end{equation}
\end{enumerate}

A Schauder basis $S$ is a more powerful means to generate elements of a Hilbert space, since in general we need less elements to write any element of the Hilbert space than in the more familiar Hamel basis, $B$. A Hilbert space is called separable if it has a countable and orthonormal Schauder basis. Surprisingly, all infinite dimensional separable Hilbert spaces are unitarily equivalent to the set of all square-summable sequences in the complex numbers. The definition of a unitary map $\mathcal{U} \in \mathcal{L}(\mathcal{H}, \mathcal{G})$ between two Hilbert spaces $\mathcal{H}$ and $\mathcal{G}$ is that the map preserves the inner product such that,

\begin{equation}
    \forall \phi,\psi \in \mathcal{H} \; (\mathcal{U} \psi, \mathcal{U} \phi)_{\mathcal{G}} = (\psi, \phi)_{\mathcal{H}}.
\end{equation}

\noindent By the polarization formula, it suffices to check whether $\mathcal{U} \in \mathcal{L}(\mathcal{H}, \mathcal{G})$ preserves norms.

\begin{theorem}
    $(\mathcal{H}, +, \cdot, (\cdot,\cdot))$ is unitarily equivalent to $\ell^2(\mathbb{N},(\cdot,\cdot))$.
\end{theorem}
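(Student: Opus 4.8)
The plan is to produce the unitary map explicitly from the countable orthonormal Schauder basis guaranteed by separability. First I would invoke the definition of separability to fix an orthonormal Schauder basis $\{e_i\}_{i=1}^\infty \subseteq \mathcal{H}$, so that every $\psi \in \mathcal{H}$ admits the unique expansion $\psi = \sum_{i=1}^\infty \psi_i e_i$ with coefficients $\psi_i = (e_i, \psi)$ recovered by taking inner products against basis vectors (orthonormality annihilates all but one term). The candidate unitary is then the coordinate map
\begin{equation}
    \mathcal{U}: \mathcal{H} \to \ell^2(\mathbb{N}), \quad \mathcal{U}\psi := \{(e_i, \psi)\}_{i=1}^\infty,
\end{equation}
whose linearity is immediate from linearity of the inner product in its second argument.

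Next I would verify that $\mathcal{U}$ actually lands in $\ell^2(\mathbb{N})$ and preserves the norm. Expanding the norm of a partial sum $\norm{\sum_{i=1}^N \psi_i e_i}^2 = \sum_{i=1}^N \abs{\psi_i}^2$ using orthonormality, together with the fact that this partial sum is the orthogonal projection of $\psi$ onto the span of $\{e_1,\dots,e_N\}$, yields Bessel's inequality $\sum_{i=1}^\infty \abs{\psi_i}^2 \leq \norm{\psi}^2$, establishing square-summability. Because $\{e_i\}$ is a basis (and not merely an orthonormal set), the expansion converges to $\psi$ itself, upgrading Bessel to Parseval's identity $\sum_{i=1}^\infty \abs{\psi_i}^2 = \norm{\psi}^2$, that is, $\norm{\mathcal{U}\psi}_{\ell^2} = \norm{\psi}_{\mathcal{H}}$. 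By the polarization formula already noted in the text, norm preservation is sufficient to conclude that $\mathcal{U}$ preserves the full sesqui-linear inner product.

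It remains to show $\mathcal{U}$ is a bijection. Injectivity is immediate, since norm preservation forces $\ker \mathcal{U} = \{0\}$. Surjectivity is where the real work lies, and I expect it to be the main obstacle, because it is the only step that genuinely exploits completeness of $\mathcal{H}$ rather than just the inner-product structure. Given an arbitrary square-summable sequence $\{c_i\} \in \ell^2(\mathbb{N})$, I would form the partial sums $s_N = \sum_{i=1}^N c_i e_i$ and show $\{s_N\}$ is Cauchy in $\mathcal{H}$ by the estimate $\norm{s_N - s_M}^2 = \sum_{i=M+1}^N \abs{c_i}^2$, which is controlled by the Cauchy tails of the convergent series $\sum_i \abs{c_i}^2$. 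Completeness of $\mathcal{H}$ then supplies a limit $\psi \in \mathcal{H}$, and continuity of the inner product gives $(e_i, \psi) = c_i$ for each $i$, so that $\mathcal{U}\psi = \{c_i\}$. Having exhibited $\mathcal{U}$ as a linear, inner-product-preserving bijection, it is by definition a unitary equivalence, which completes the argument.
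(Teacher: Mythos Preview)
Your proposal is correct and follows essentially the same approach as the paper: both fix an orthonormal Schauder basis, define the coordinate map $\mathcal{U}\psi = \{(e_i,\psi)\}_{i\in\mathbb{N}}$, and verify norm preservation via Parseval. Your treatment is in fact more complete than the paper's, which stops after norm preservation and waves off the rest as trivial; you correctly identify surjectivity as the step requiring completeness of $\mathcal{H}$ and supply the Cauchy-tails argument the paper omits.
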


\begin{proof}
We need to show that there exists a bounded linear map that preserves norms such that,

\begin{equation}
    \norm{\mathcal{U} \psi}_{\ell^2} = \norm{\psi}_\mathcal{H}.
\end{equation}

\noindent Let's choose an orthonormal Schauder basis on $\mathcal{H}$ and construct a map that takes an element of $\mathcal{H}$ into $\ell^2(\mathbb{N})$ such that $\psi \rightarrow \{(\mathbf{e}_n, \psi)\}_{n \in \mathbb{N}}$. Since $\{\mathbf{e}_i\}$ is a Schauder basis we have,

\begin{equation}
    \psi = \sum_{i=1}^n (\mathbf{e}_i, \psi) \mathbf{e}_i
\end{equation}

\noindent and $\norm{\psi}^2_\mathcal{H} = \sum_{n=1}^\infty |(\mathbf{e}_n, \psi)|^2 = \norm{\mathcal{U}\psi}^2_{\ell^2(\mathbb{N})}$. We can show that this map is linear and bounded trivially, completing the proof.

\end{proof}

In this section, we have introduced the definitions of separable Hilbert spaces and proved two very important results in quantum theory. The first is the Bounded Linear Transformation theorem which shows that for a bounded linear map $A$ defined on a dense subset of a normed vector space $V$ that there exists a unique extension $\hat{A}$ defined on $V$. The second is that all infinite-dimensional separable Hilbert spaces are unitarily equivalent to the set of square-summable sequences in the complex numbers. 

\subsubsection{Measure Theory}

Here a short introduction to measure theory is provided without proofs since (1) the spectral theorem for self-adjoint operators requires the notion of mathematical objects called projection-valued measures and (2) the most common $\infty$-dimensional Hilbert space in quantum mechanics is the $L^2(\mathbb{R}^d)$ space equipped with the Lebesgue measure.

Let $M$ be a non-empty set. Then the collection of subsets $\sigma \subseteq \mathcal{P}(M)$ (the power set of $M$) is called a $\sigma$-algebra for $M$ if,

\begin{enumerate}
    \item $M \in \sigma$
    \item For $A \in \sigma$, then $M \backslash A \in \sigma$
    \item For a sequence $A_1, A_2, ... \in \sigma$, then $\bigcup_{n \geq 1} A_n \in \sigma$
\end{enumerate}

\noindent A set $A \in \sigma$ is called \textbf{measurable} since it can be assigned a "volume". A \textbf{measurable space} is the original set $M$ taken with the additional structure of the $\sigma$-algebra, $(M, \sigma)$. However, at this point it is not clear how to actually measure the "volume" of elements in the set, so we add an additional map known as a measure that takes an element of the $\sigma$-algebra to an element of the extended, positive real numbers,

\begin{equation}
    \mu: \sigma \rightarrow \bar{\mathbb{R}^+_0}
\end{equation}

\noindent that satisfies the following properties:

\begin{enumerate}
    \item $\mu(\emptyset) = 0$
    \item For a pairwise disjunct sequence $A_1, A_2, ... \in \sigma$ such that $A_i \cap A_j = \emptyset$ if $i \neq j$ then,
    
    \begin{equation}
        \mu\bigg(\bigcup_{n \geq 1} A_n\bigg) = \sum_{n \geq 1}\mu(A_n)
    \end{equation}
\end{enumerate}

\noindent Of course, if the sequence in (2) above is not pairwise disjunct we have the estimate,

\begin{equation}
    \mu\bigg(\bigcup_{n \geq 1} A_n\bigg) \leq \sum_{n \geq 1}\mu(A_n).
\end{equation}

\noindent The \textbf{measure} is the map that takes an element of the $\sigma$-algebra and "measures" it, or actually assigns it a "volume". Taking the set $M$ along with its $\sigma$-algebra and a measure defines a \textbf{measure space}, $(M, \sigma, \mu)$. 

Up to this point, the definition of a measure space has been quite abstract, with no clear prescription for how to choose the $\sigma$-algebra or measure. In quantum mechanics, there are actually very few $\sigma$-algebras and measures that will be important to our study of the spectral theorem of self-adjoint operators, but we will first need to introduce the concept of a topology to motivate our choices. A \textbf{topology} on $M$ is the collection of subsets $\mathcal{O}\subseteq P(M)$ with the following properties:

\begin{enumerate}
    \item $\emptyset, M \in \mathcal{O}$
    \item Closure under countable unions of $u_i \in \mathcal{O}$ implies that,
    
    \begin{equation}
        \bigcup_{i = 1}^\infty u_i \in \mathcal{O}.
    \end{equation}
    
    \item Closure under finite intersections such that if $u_i \in \mathcal{O}$ then,
    
    \begin{equation}
        \bigg(\bigcap_{i=1}^\infty u_i \bigg) \in \mathcal{O}.
    \end{equation}
\end{enumerate}

\noindent and a space taken with its topology is termed a topological space, $(M, \mathcal{O})$. The open sets on a topological space $(M,\mathcal{O})$ are an interesting generating set for a $\sigma$-algebra known as the Borel-$\sigma$-algebra. When $M = \mathbb{R}$, the Borel-$\sigma$-algebra generates the "standard topology" such that,

\begin{equation}
    \forall a,b \in \overline{\mathbb{R}}, \; (a,b) \in \sigma
\end{equation}

\noindent which can then be equipped with a special measure known as the \textbf{Lebesgue measure},

\begin{equation}
    \mu_L = \lambda^d:\sigma(\mathcal{O}) \rightarrow \overline{\mathbb{R}^+_0}
\end{equation}

\noindent where,

\begin{equation}
    \lambda^d([a_1, b_1) \times [a_2, b_2) \times ... \times [a_d, b_d)) = (b_1-a_1)...(b_d-a_d)
\end{equation}

\noindent which just defines volume as the product of the lengths of the sides of a d-dimensional rectangular prism in $\mathbb{R}^d$. In quantum theory, we will almost always take the vector space of real numbers, $\mathbb{R}^d$, equipped with Borel-$\sigma$-algebra, $\sigma(\mathcal{O})$, and Lebesgue measure, $\lambda^d$, as our measure space, $(\mathbb{R}^d, \sigma(\mathcal{O}), \lambda^d)$.

\subsubsubsection{The Lebesgue Integral}

By definition, a map $f:M \rightarrow N$ is called measurable (w.r.t. $\sigma$-algebras on each set) if $\forall A \in \sigma_N: \text{preim}_f(A) \in \sigma_m$. Conceptually, all this means is that a map between two measurable spaces is measurable if the preimage of any element of the target space $\sigma$-algebra is in the original space $\sigma$-algebra. Clearly, we then have that any continuous map is measurable w.r.t. the Borel-$\sigma$-algebra and any monotonous map is measurable.  

The \textbf{push-forward} of a measure is a way that a measurable space can inherit a measure from a measure space. Let $(M, \sigma, \mu)$ be a measure space, $(N, \tau)$ be a measurable space, and let $f:M \rightarrow N$ be a measurable map so that $\forall A \in \tau: \text{preim}_f(A) \in \sigma$. The so-called push-foward $f_*\mu$ is a measure on $(N, \tau)$ such that,

\begin{equation}
    f_*\mu: \tau \rightarrow \bar{\mathbb{R}}_0^+
\end{equation}

\begin{equation}
    \forall A \in \tau: \; f_*\mu(A) = \mu(\text{preim}_f(A))
\end{equation}

\noindent which means that the push-forward allows us to construct a measure space with measure $f_*\mu$; namely, $(N, \tau, f_*\mu)$.

We begin our discussion of Lebesgue integrals by studying only non-negative measurable functions. We will see that in general measurable functions can be integrated so long as the integral of the absolute value of the function is finite, but it will require decomposing the general measurable function into two (or four in the case of complex numbers) integrals over non-negative measurable functions. 

We will begin by defining a \textbf{simple function}, which is a measurable function $s:M \rightarrow \mathbb{R_0^+}$ that satisfies,

\begin{equation}
    s(M) = \{s_1, ..., s_N\} \; \text{for some} \; N\in\mathbb{N}
\end{equation}

\noindent which by definition means that $M$ and $\mathbb{R_0^+}$ are measurable spaces, ($M, \sigma(\mathcal{O_M})$) and ($\mathbb{R_0^+}, \sigma(\mathcal{O})$). Hence, we know that,

\begin{equation}
    \text{preim}_s(\{s_i\}) \in \sigma(\mathcal{O_M})
\end{equation}

\noindent and,

\begin{equation}
    s = \sum_{z \in s(M)} z \cdot \chi_{\text{preim}_s(\{z\})}
\end{equation}

\noindent where the $\chi_{\text{preim}_s(\{z\})}$ is a map $\chi_A:M \rightarrow \mathbb{R}$ such that,

\begin{equation}
    \chi_A(m \in M) = \left\{
        \begin{array}{ll}
            1 & \quad m \in A \\
            0 & \quad m \notin A
        \end{array}
    \right.
\end{equation}

\noindent known as the \textbf{characteristic function}. Therefore, the characteristic function $\chi_{\text{preim}_s(\{z\})}$ is a map from $M$ to $\mathbb{R}$ that is zero unless it is evaluated for some $m \in M$ where $\{z\} = s(m)$ and the preimage of $s(m)$ is in the $\sigma$-algebra of $M$. 

A non-negative measurable function is a measurable function $f:M \rightarrow \bar{\mathbb{R}}$ such that,

\begin{enumerate}
    \item $\forall m \in M$: $f(m) \geq 0$
    \item f is measurable on $\sigma(\mathcal{O}_{\bar{\mathbb{R}}})$
\end{enumerate}

\noindent Now, the \textbf{Lebesgue integral} of a non-negative measurable function $f:(M, \sigma_M, \mu) \rightarrow \bar{\mathbb{R}}$ is given by,

\begin{equation}
    \int f(x) \mu(dx) = \int f d\mu := \sup{\bigg[\sum_{z \in s(M)} z \cdot_{\bar{\mathbb{R}}} \mu(\text{preim}_f(\{z\})\bigg]}
\end{equation}

\noindent where $s(M) = \{s_1, s_2, ..., s_n\}$ is a finite sequence of values from simple measurable functions. Notice we do not require a measure on the target space, but only on the domain (owing to the use of the supremum). The Lebesgue integral of a non-negative and measurable function has the following nice properties:

\begin{enumerate}
    \item Markov inequality: $z \in \mathbb{R}_0^+$ then $\int f d\mu \geq z \cdot \mu(\text{preim}_f\{z\})$.
    \item $f \leq g$ implies $\int f d\mu \leq \int g d\mu$.
    \item $f =_{a.e.} g$ implies $\int f d\mu = \int g d\mu$.
    \item $\int f d\mu = 0$ implies $f =_{a.e.} 0$.
    \item $\int f d\mu < \infty$ implies $f <_{a.e.} \infty$.
\end{enumerate}

\noindent where the abbreviation a.e. stands for "almost everywhere". If an expression holds \textbf{almost everywhere} then it holds on the measure space for all elements where the measure is non-zero, $\mu(A) \neq 0$ for $A \in \sigma$. The following three theorems will be stated without proof.

\begin{theorem}
    Theorem of Monotone Convergence: Let $0 \leq f_1 \leq f_2 ...$ be a sequence of measurable functions such that $f_n:M \rightarrow \bar{\mathbb{R}}$. Provided that $f := \sup_{n \geq 1}{f_n}$ pointwise, then we have,
    
    \begin{equation}
        \lim_{n \to \infty} \int f_n d\mu = \int f d\mu.
    \end{equation}
\end{theorem}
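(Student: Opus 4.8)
The plan is to establish the identity by proving the two inequalities $\lim_{n\to\infty}\int f_n\, d\mu \leq \int f\, d\mu$ and $\lim_{n\to\infty}\int f_n\, d\mu \geq \int f\, d\mu$ separately. First I would record that $f = \sup_{n\geq 1} f_n$ is itself a non-negative measurable function: for every $a$ one has $\{x : f(x) > a\} = \bigcup_{n\geq 1}\{x : f_n(x) > a\}$, a countable union of sets in $\sigma$, so the integral $\int f\, d\mu$ is defined in the sense of the excerpt. For the easy direction, since $f_n \leq f_{n+1} \leq f$ pointwise, the monotonicity property of the Lebesgue integral (property 2) shows that $\int f_n\, d\mu$ is a non-decreasing sequence bounded above by $\int f\, d\mu$. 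Hence the limit $L := \lim_{n\to\infty}\int f_n\, d\mu$ exists in $\bar{\mathbb{R}}_0^+$ and satisfies $L \leq \int f\, d\mu$.

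The substance of the argument is the reverse inequality. Because $\int f\, d\mu$ is by definition the supremum of $\int s\, d\mu$ over simple measurable $s$ with $0 \leq s \leq f$, it suffices to show $L \geq \int s\, d\mu$ for each such $s$. I would fix $s$ and a constant $c \in (0,1)$ and introduce the measurable sets
$$
E_n = \{x \in M : f_n(x) \geq c\, s(x)\},
$$
which increase ($E_n \subseteq E_{n+1}$, since the $f_n$ increase) and exhaust $M$: at points with $s(x)=0$ one has $x \in E_1$, while at points with $s(x)>0$ one has $f(x) \geq s(x) > c\, s(x)$, so some $f_n(x)$ eventually exceeds $c\, s(x)$. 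Monotonicity then gives
$$
\int f_n\, d\mu \;\geq\; \int f_n\, \chi_{E_n}\, d\mu \;\geq\; c \int s\, \chi_{E_n}\, d\mu .
$$

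The hard part will be passing to the limit on the right, i.e. showing $\int s\, \chi_{E_n}\, d\mu \to \int s\, d\mu$. My approach is to prove that the set function $\nu(A) := \int s\, \chi_A\, d\mu$ is a measure on $(M,\sigma)$. Writing $s = \sum_{z \in s(M)} z\, \chi_{\text{preim}_s(\{z\})}$, one gets $\nu(A) = \sum_{z} z\, \mu\!\left(A \cap \text{preim}_s(\{z\})\right)$, a finite non-negative combination of the measures $A \mapsto \mu(A \cap \text{preim}_s(\{z\}))$, and countable additivity of $\nu$ follows from that of $\mu$. Disjointifying the increasing sequence $E_n$ and invoking countable additivity of $\nu$ yields continuity from below, $\nu(E_n) \to \nu\!\left(\bigcup_n E_n\right) = \nu(M) = \int s\, d\mu$. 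This continuity-from-below step is where the real work lies, since the integral properties quoted earlier concern a fixed integrand rather than the behaviour of the integral as a function of the domain set.

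Assembling the pieces, I obtain $L \geq c \int s\, d\mu$ for every $c \in (0,1)$; letting $c \to 1$ gives $L \geq \int s\, d\mu$, and taking the supremum over all simple $s$ with $0 \leq s \leq f$ gives $L \geq \int f\, d\mu$. Combined with the easy direction this forces $L = \int f\, d\mu$, as claimed. I expect the verification that $\nu$ is a measure and the ensuing continuity-from-below argument to be the only genuinely delicate point; the remaining steps are direct applications of monotonicity and of the definition of the integral as a supremum over simple functions.
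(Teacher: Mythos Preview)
The paper explicitly states this theorem without proof: immediately before it, the text reads ``The following three theorems will be stated without proof.'' There is therefore no argument in the paper to compare your proposal against.

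For what it is worth, your proof is the standard textbook argument and is correct. The easy inequality follows from monotonicity of the integral, and for the reverse inequality you use the classical device of fixing a simple $0 \leq s \leq f$ and a constant $c \in (0,1)$, introducing the increasing sets $E_n = \{f_n \geq c\,s\}$ that exhaust $M$, and appealing to continuity from below of the set function $\nu(A) = \int s\,\chi_A\, d\mu$. Your identification of the verification that $\nu$ is a measure (via the finite decomposition of $s$) as the only genuinely nontrivial step is accurate; everything else is a direct consequence of the definition of $\int f\, d\mu$ as a supremum over simple functions and of the monotonicity property the paper records just above the theorem.
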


\begin{theorem}
    Let $f,g \geq 0$ be measurable functions and $\alpha \in \mathbb{R}_0^+$. Then,
    
    \begin{equation}
        \int (f + \alpha g) d\mu = \int f d\mu + \alpha \int g d\mu.
    \end{equation}
\end{theorem}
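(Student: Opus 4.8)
The plan is to prove linearity of the Lebesgue integral by the standard two-stage argument tailored to the definition given above: first establish the identity when $f$ and $g$ are simple functions, then bootstrap to arbitrary non-negative measurable functions using the Theorem of Monotone Convergence stated immediately beforehand. Since the integral is defined as a supremum over simple functions, there is no way to avoid treating the simple case first; a direct attack on the supremum definition yields superadditivity easily but the reverse inequality requires monotone convergence (or simple-function approximation) anyway, so the staged approach is the efficient route.

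First I would handle the case where $f$ and $g$ are themselves simple, writing $f = \sum_i a_i \chi_{A_i}$ and $g = \sum_j b_j \chi_{B_j}$ via their preimage decompositions, where $A_i = \text{preim}_f(\{a_i\})$ and $B_j = \text{preim}_g(\{b_j\})$. Here I would first note that for a simple function the defining supremum collapses to the obvious value $\int s\, d\mu = \sum_{z \in s(M)} z\, \mu(\text{preim}_s(\{z\}))$. The difficulty is that the $A_i$ and $B_j$ need not be disjoint across the two functions, so their representations cannot simply be concatenated. I would pass to the common refinement $\{A_i \cap B_j\}$, a measurable partition of $M$ on which $f$ takes value $a_i$ and $\alpha g$ takes value $\alpha b_j$, so that $f + \alpha g$ is simple with value $a_i + \alpha b_j$ on $A_i \cap B_j$. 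Finite additivity of $\mu$ over this disjoint refinement then gives $\int (f + \alpha g)\, d\mu = \int f\, d\mu + \alpha \int g\, d\mu$ by direct computation (with the convention $0 \cdot \infty = 0$ handling any infinite values).

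Next, for general non-negative measurable $f$ and $g$, I would invoke the standard approximation result that every non-negative measurable function is the pointwise increasing limit of simple functions, built by dyadic truncation. Choosing increasing sequences $s_n \uparrow f$ and $t_n \uparrow g$ of simple functions, the sequence $s_n + \alpha t_n$ is itself an increasing sequence of simple functions converging pointwise to $f + \alpha g$. I would then apply the Theorem of Monotone Convergence three times, to $\{s_n\}$, $\{t_n\}$, and $\{s_n + \alpha t_n\}$, and combine with the simple-function identity, which holds for each fixed $n$; passing to the limit yields the claimed equality.

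The main obstacle I anticipate is concentrated in the first stage together with the approximation lemma: the refinement bookkeeping for overlapping preimage sets is where the real content lives, and verifying that the supremum definition agrees with the elementary value on simple functions is the subtle consistency check. If the dyadic approximation of non-negative measurable functions has not been established earlier in the text, it would need to be constructed explicitly, which is routine but requires confirming the measurability and monotonicity of the truncating simple functions before the Monotone Convergence Theorem can be brought to bear.
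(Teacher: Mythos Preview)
Your approach is correct and is, in fact, the standard proof of linearity for the Lebesgue integral of non-negative measurable functions: establish the identity on simple functions via a common refinement, then pass to the general case using monotone approximation by simple functions together with the Theorem of Monotone Convergence.

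There is nothing to compare against here: the paper explicitly states this theorem \emph{without proof}. It appears in a block introduced by ``The following three theorems will be stated without proof,'' sandwiched between the Monotone Convergence Theorem and the countable-sum interchange result. So your proposal supplies an argument where the paper supplies none. Your anticipated obstacles are accurately identified: the consistency of the supremum definition with the canonical value on simple functions, and the existence of increasing simple approximants (the dyadic truncation), are the two ingredients not proven elsewhere in the appendix and would need to be filled in for a self-contained account.
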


\begin{theorem}
    For any sequence $f_n: M \rightarrow \bar{\mathbb{R}}^+$ we have,
    
    \begin{equation}
        \int \bigg(\sum_{n = 1}^\infty f_n \bigg) d\mu = \sum_{n = 1}^\infty \int f_n d\mu.
    \end{equation}
\end{theorem}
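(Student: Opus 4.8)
The plan is to derive this identity from the two results that immediately precede it in the text, namely the finite additivity of the Lebesgue integral for non-negative measurable functions and the Theorem of Monotone Convergence. The bridge between them is the sequence of partial sums $g_N := \sum_{n=1}^N f_n$. Because each $f_n$ is non-negative and measurable, each $g_N$ is again a non-negative measurable function (a finite sum of such), and the sequence satisfies $0 \leq g_1 \leq g_2 \leq \cdots$ pointwise precisely because every $f_n \geq 0$. Moreover $\sup_{N \geq 1} g_N = \sum_{n=1}^\infty f_n$ pointwise, so the partial sums furnish exactly the kind of monotone increasing sequence to which the Theorem of Monotone Convergence applies.

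First I would establish finite additivity of the integral over $N$ summands by induction on the two-function additivity result $\int (f + \alpha g)\, d\mu = \int f\, d\mu + \alpha \int g\, d\mu$ with $\alpha = 1$: the base case is immediate, and the inductive step writes $g_{N+1} = g_N + f_{N+1}$ and applies the additivity result once. This yields $\int g_N\, d\mu = \sum_{n=1}^N \int f_n\, d\mu$ for every finite $N$. Next I would invoke the Theorem of Monotone Convergence with $f := \sup_{N} g_N = \sum_{n=1}^\infty f_n$ to obtain $\lim_{N \to \infty} \int g_N\, d\mu = \int f\, d\mu$. Finally, chaining these two facts gives $\int \left(\sum_{n=1}^\infty f_n\right) d\mu = \lim_{N\to\infty} \int g_N\, d\mu = \lim_{N\to\infty} \sum_{n=1}^N \int f_n\, d\mu = \sum_{n=1}^\infty \int f_n\, d\mu$, where the last equality is simply the definition of the series of non-negative terms $\int f_n\, d\mu$.

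I expect the main obstacle to be bookkeeping rather than anything conceptual: carefully verifying that the hypotheses of monotone convergence genuinely hold, that is, that the $g_N$ are measurable, non-negative, monotone increasing, and that their pointwise supremum coincides with the infinite sum. A secondary point to handle cleanly is that every quantity lives in the extended non-negative reals $\bar{\mathbb{R}}_0^+$, so the identity must be read as an equality of extended reals in which divergence to $+\infty$ is permitted on both sides; no integrability or finiteness hypothesis is needed precisely because both the additivity theorem and the monotone convergence theorem are stated for non-negative functions without such restrictions, which is exactly why this corollary follows so directly.
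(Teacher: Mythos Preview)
Your argument is correct and is the standard derivation: partial sums $g_N$ are non-negative, measurable, and increase pointwise to $\sum_{n=1}^\infty f_n$, so finite additivity gives $\int g_N\,d\mu = \sum_{n=1}^N \int f_n\,d\mu$ and monotone convergence passes to the limit. Note, however, that the paper does not actually supply a proof here: it explicitly groups this theorem together with the Theorem of Monotone Convergence and the finite-additivity result under the remark ``the following three theorems will be stated without proof.'' So there is no paper proof to compare against; your proposal simply fills in the omitted argument by deducing the third theorem from the first two, which is precisely how one would expect the dependency to run.
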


Now, extending the previous notions to general functions $f:M \rightarrow \bar{\mathbb{R}}$ we define that the function is \textbf{integrable} if:

\begin{enumerate}
    \item $f$ is measurable 
    \item $\int |f| d\mu$ is finite
\end{enumerate}

\noindent Remark: Condition 2 is equivalent to requiring that the decomposition of $f = f^+ - f^-$ into $f^+ := \max (f,0)$ and $f^- := \max (-f,0)$ gives integrals over $f^+$ and $f^-$ that are finite. The integral of an integrable function is then defined as (with a visualization provided in Figure \ref{fig:lebintegral_1}),

\begin{equation}
    \int f d\mu := \int f^+ d\mu - \int f^- d\mu
\end{equation}

\noindent Note that this is easily generalized into complex spaces since we can write,

\begin{equation}
    \int f d\mu := \int \Re (f) d\mu + i\int \Im (f) d\mu 
\end{equation}

\noindent for both $f^+$ and $f^-$. 

The Lebesgue integral of integrable functions $f,g$ has the following properties:

\begin{enumerate}
    \item $f \leq_{a.e.} g$ implies $\int f d \mu \leq \int g d\mu$
    \item $\forall \alpha \in \mathbb{R}$: $\int(f+\alpha g) d\mu = \int f d \mu +\alpha \int g d\mu$
\end{enumerate}

\begin{theorem}
    Theorem of Dominant Convergence: Let $f_1, f_2, ...$ be a sequence of non-negative functions such that $f \rightarrow_{a.e.} f$ pointwise for some function $f$. Let $g$ be a non-negative, measurable function with $\int g d\mu < \infty$ s.t. $\forall n \in \mathbb{N}$: $|f_n| \leq_{a.e.} g$. Then the following properties hold:
    
    \begin{enumerate}
        \item $f$ and all $f_i$ are integrable including the limit function.
        \item $\lim_{n \to \infty} \int |f_n - f| d\mu = 0$
        \item $\lim_{n \to \infty} \int f_n d\mu = \int f d\mu$
    \end{enumerate}

    \noindent A visualization of the theorem of dominant convergence is provided in Figure \ref{fig:domconverge_1}.
\end{theorem}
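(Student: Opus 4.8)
The plan is to reduce everything to the Theorem of Monotone Convergence, which is the only convergence result available, by routing through Fatou's Lemma. First I would dispatch the integrability claim (1): since $|f_n| \leq_{a.e.} g$ and the integral is monotone (property 2 of the Lebesgue integral for non-negative functions), each $f_n$ satisfies $\int |f_n|\,d\mu \leq \int g\,d\mu < \infty$, so every $f_n$ is integrable. Because $f_n \to f$ almost everywhere, $f$ is measurable as an almost-everywhere pointwise limit of measurable functions, and the domination bound passes to the limit to give $|f| \leq_{a.e.} g$; the same monotonicity argument then makes $f$ integrable. Throughout, I would keep in mind that altering the functions on the null set where convergence or domination fails changes no integral (property 3), so I may assume the relevant inequalities hold everywhere.

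Since Fatou's Lemma is not among the stated results, the next step is to derive it from Monotone Convergence. Given non-negative measurable $h_n$, I would set $\phi_k := \inf_{n \geq k} h_n$; these are measurable, non-negative, and increase pointwise to $\liminf_n h_n$. Monotone Convergence then yields $\int \liminf_n h_n \, d\mu = \lim_k \int \phi_k \, d\mu$, while $\phi_k \leq h_n$ for all $n \geq k$ combined with monotonicity gives $\int \phi_k \, d\mu \leq \inf_{n \geq k} \int h_n \, d\mu$; letting $k \to \infty$ produces the inequality $\int \liminf_n h_n \, d\mu \leq \liminf_n \int h_n \, d\mu$.

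The heart of the argument is to apply this to $h_n := 2g - |f_n - f|$. The triangle inequality and domination give $|f_n - f| \leq |f_n| + |f| \leq 2g$ almost everywhere, so $h_n \geq 0$ and Fatou applies. Because $|f_n - f| \to 0$ almost everywhere, $\liminf_n h_n = 2g$, so the left side of Fatou is $\int 2g\,d\mu$. On the right, linearity of the integral for integrable functions (legitimate since $2g$ and each $|f_n - f|$ are integrable) lets me write $\int h_n\,d\mu = \int 2g\,d\mu - \int |f_n - f|\,d\mu$, whence $\liminf_n \int h_n\,d\mu = \int 2g\,d\mu - \limsup_n \int |f_n - f|\,d\mu$. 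Fatou now reads $\int 2g\,d\mu \leq \int 2g\,d\mu - \limsup_n \int |f_n - f|\,d\mu$, and here the hypothesis $\int g\,d\mu < \infty$ is essential: it permits subtracting the finite quantity $\int 2g\,d\mu$ from both sides to conclude $\limsup_n \int |f_n - f|\,d\mu \leq 0$. Since the integrand is non-negative this forces $\lim_n \int |f_n - f|\,d\mu = 0$, which is claim (2). Claim (3) then follows at once, since $\left| \int f_n\,d\mu - \int f\,d\mu \right| = \left| \int (f_n - f)\,d\mu \right| \leq \int |f_n - f|\,d\mu \to 0$.

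I expect the main obstacle to be the cancellation step rather than any deep estimate: it is legitimate only because $\int 2g\,d\mu$ is a finite real number, and a careless version that tried to cancel an infinite quantity would collapse. A secondary subtlety is the elementary but load-bearing identity $\liminf_n(-a_n) = -\limsup_n a_n$ applied to $a_n = \int |f_n - f|\,d\mu$, which must be invoked explicitly to convert the Fatou inequality into a statement about $\limsup$. The remaining work — measurability of $f$, integrability of $|f_n - f|$, and the careful restriction to the full-measure set on which all pointwise hypotheses hold — is routine bookkeeping with the stated properties.
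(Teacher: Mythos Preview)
The paper does not actually prove this theorem: it is stated without proof, accompanied only by a visualization figure, in the same spirit as the three preceding measure-theoretic results that the author explicitly says ``will be stated without proof.'' There is therefore nothing in the paper to compare against.

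Your argument is the standard and correct route: integrability from domination and monotonicity, Fatou's Lemma bootstrapped from the Monotone Convergence Theorem via $\phi_k = \inf_{n\geq k} h_n$, and then the classical trick of applying Fatou to $h_n = 2g - |f_n - f|$ so that the finite quantity $\int 2g\,d\mu$ can be cancelled. You have correctly flagged the two genuine subtleties --- the finiteness of $\int g\,d\mu$ needed for the cancellation, and the $\liminf(-a_n) = -\limsup a_n$ identity --- and the bookkeeping about null sets is handled appropriately. Nothing is missing.
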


\begin{figure}
    \centering
    \includegraphics{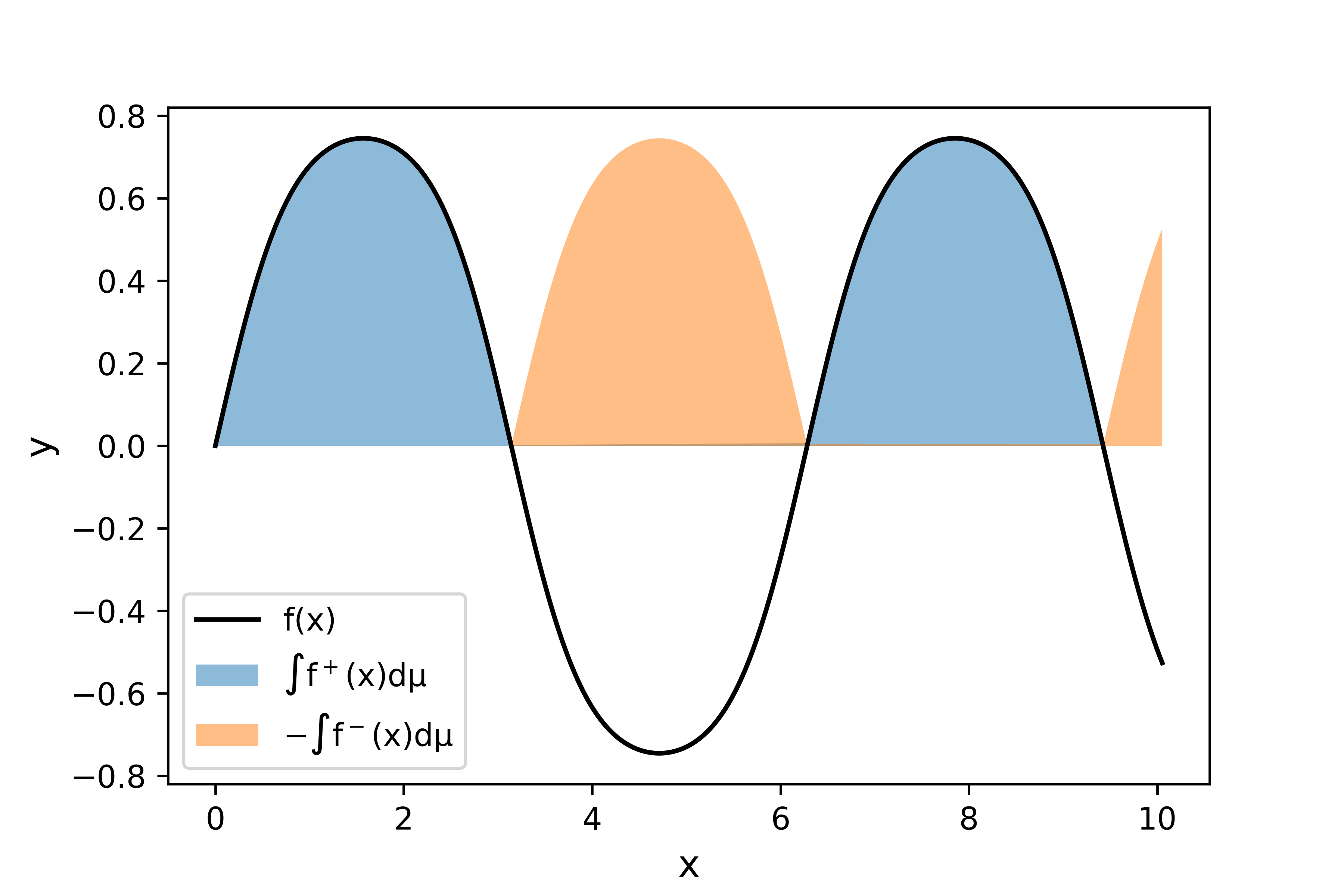}
    \caption{Example of a decomposition of a bounded, measurable function into $f^+$ and $f^-$. The Lebesgue integral is then just the integral of the blue region minus the integral of the orange region.}
    \label{fig:lebintegral_1}
\end{figure}

\begin{figure}
    \centering
    \includegraphics{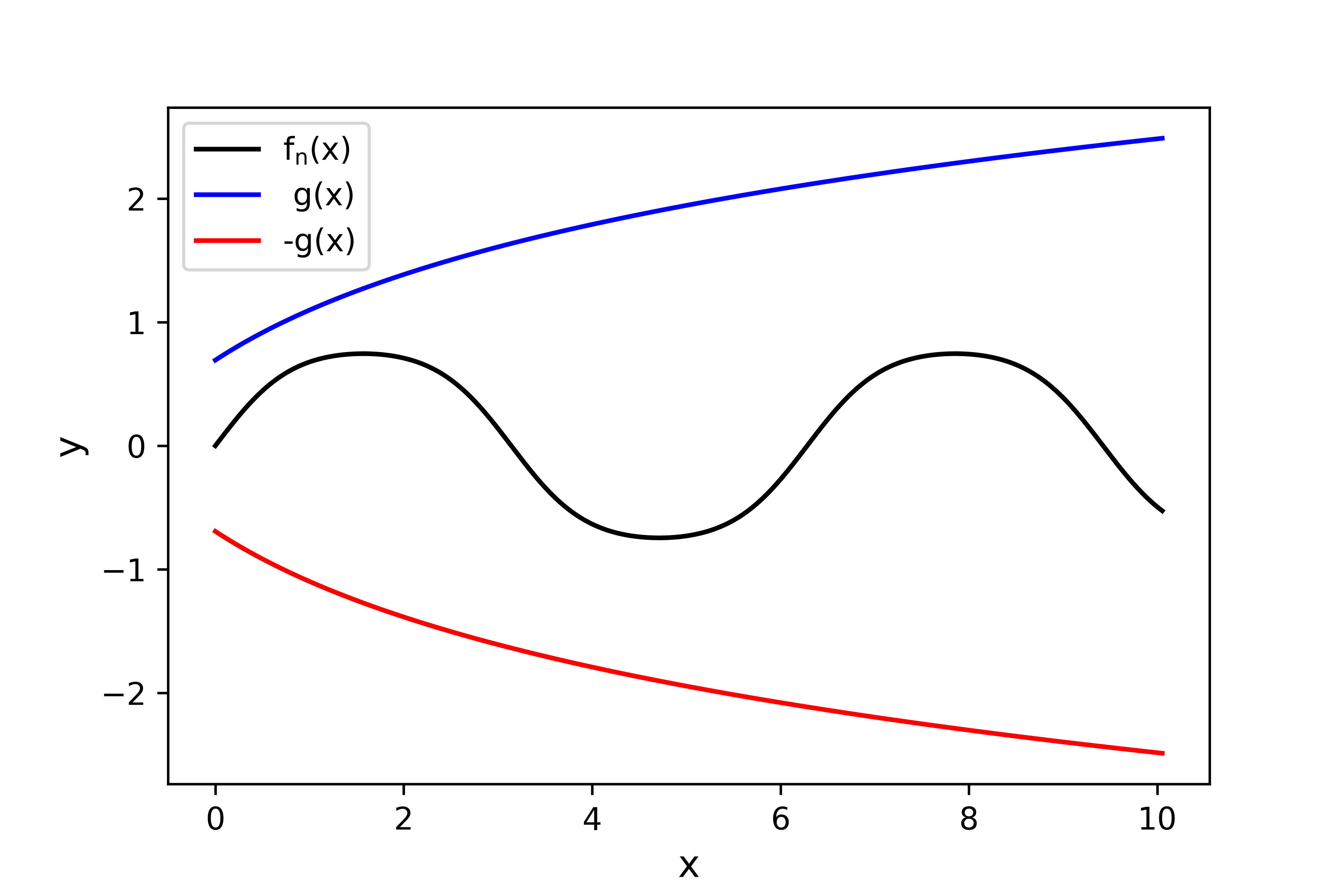}
    \caption{If $g$ is a non-negative, measurable function such that $\int g d\mu < \infty$ and $g$ dominates $f_n \rightarrow f$ pointwise almost everywhere ($\abs{f_n} \leq_{a.e.} g$), then $f$ and even $f_i$ are integrable.}
    \label{fig:domconverge_1}
\end{figure}

\subsubsubsection{$L^P (M, \sigma, \mu)$ Function Spaces}

Define the set, $\mathcal{L}^P := \{f:M \rightarrow \mathbb{C} | f \; \text{measurable}, \int |f|^p d\mu < \infty \}$. If we equip $\mathcal{L}^P$ with pointwise addition and scalar multiplication then clearly $\mathcal{L}^P$ is a $\mathbb{C}$ vector space. One can then construct a map $\{\{\cdot \}\}_p: \mathcal{L}^P \rightarrow \mathbb{R}$ such that,

\begin{equation}
    \{\{f \}\}_p = \bigg(\int |f|^p d\mu \bigg)^{1/p}
\end{equation}

\noindent where $p \in \mathbb{R}$ and $1 \leq p < \infty$ that is a \textbf{semi-norm} with the following properties:

\begin{enumerate}
    \item $\forall \alpha \in \mathbb{C}$ then $\{\{\alpha f\}\}_p = |\alpha| \cdot_\mathbb{C} \{\{f \}\}_p$
    \item $\forall f,g \in \mathcal{L}^P$ then $\{\{f +_\mathcal{L} g \}\}_p \leq \{\{f\}\}_p +_\mathbb{C} \{\{g\}\}_p$
    \item $\{\{f\}\}_p \geq 0$ and $\{\{f\}\}_p = 0$ implies $f =_{a.e.} 0$
\end{enumerate}

\noindent Notice that for a norm, we must have that the last condition is satisfied everywhere (not almost everywhere). Therefore, $\mathcal{L}^P$ is not even a normed space. However, we can use \textbf{equivalence classes} to remove troublesome functions in $\mathcal{L}^P$ that are zero almost everywhere while refusing to be zero everywhere. First, we can define the equivalence relation between $f,g \in \mathcal{L}^P$ as equivalent if,

\begin{equation} \label{equivalence}
    f \sim g :<=> f =_{a.e.} g.
\end{equation}

\noindent Equivalence relations are interesting since they allow us to split up a space according to equivalent classes of objects within the set. We define the \textbf{quotient space} of a set, $M/ \sim$ to be the set of all $m \in M$ for which $m_i \sim m_j$. If we take an equivalence class on $\mathcal{L}^P$ to be those functions satisfying Equation \eqref{equivalence}, then we define a new space $L^P$ such that,

\begin{equation}
    L^P = \mathcal{L}^P/\sim = \{ [f]_\sim | f \in \mathcal{L}^P \}.
\end{equation}

\noindent Although we will not explicitly prove this, $L^P$ inherits a norm from $\mathcal{L}^P$ such that,

\begin{equation}
    \norm{\cdot}_p:L^P \rightarrow \mathbb{R}
\end{equation}

\begin{equation}
    [f]_\sim \rightarrow \norm{[f]_\sim}_p := \{\{f\}\}_p.
\end{equation}

\noindent It can the be shown that $L^P$ is in fact a Banach space, and by the Holder inequality it is even a Hilbert space for $p = 2$ (the inner product is sesqui-linear). In fact, all of quantum mechanics of a particle in three dimensions rests on the $L^2(\mathbb{R}^3, \sigma(\mathcal{O}_{\mathbb{R}}), \lambda^3)$ Hilbert space.

\subsubsection{Adjoints}

The \textbf{adjoint} $A^*: \mathcal{D}_{A^*} \rightarrow \mathcal{H}$ of a densely defined operator $A: \mathcal{D}_{A} \rightarrow \mathcal{H}$ is defined such that,

\begin{equation}
    \mathcal{D}_{A^*} := \{\psi \in \mathcal{H} | \forall \alpha \in \mathcal{D}_A \; \exists \eta \in \mathcal{H}: \; (\psi, A \alpha) = (\eta, \alpha)\}
\end{equation}

\noindent where the operator $A^* \psi = \eta$. Prove that $A \subseteq B$ implies that $B^* \subseteq A^*$.
    
    \begin{proof}
        $A \subseteq B$ implies that $\mathcal{D}_A \subseteq \mathcal{D}_B$ and $\forall \alpha \in \mathcal{D}_A$ $B$ acts on $\alpha$ in the same manner: $A\alpha = B\alpha$. Let $\psi \in \mathcal{D}_{B^*}$. Then $\forall \alpha \in \mathcal{D}_{A} \subseteq \mathcal{D}_{B}$ we have,
        
        \begin{equation}
            (\psi, B\alpha) = (\psi, A\alpha).
        \end{equation}
        
        \noindent But by the definition of $\mathcal{D}_{A^*}$,
        
        \begin{equation}
            \mathcal{D}_{A^*} := \{\psi \in \mathcal{H} | \forall \alpha \in \mathcal{D}_A \; \exists \eta \in \mathcal{H}: \; (\psi, A \alpha) = (\eta, \alpha)\}
        \end{equation}
        
        \noindent we must have that $\psi \in \mathcal{D}_{A^*}$. This implies that $\mathcal{D}_{B^*} \subseteq \mathcal{D}_{A^*}$ and $B^* \subseteq A^*$.
    \end{proof}
    
\noindent A densely defined operator is called \textbf{symmetric} (in physics literature, Hermitian) if $\forall \alpha, \beta \in \mathcal{D}_{A}: \; (\alpha, A\beta) = (A\alpha, \beta)$. We can easily show that if a densely defined operator $A$ is symmetric that its adjoint is an extension $A \subseteq A^*$.
    
    \begin{proof}
        Let $A: \mathcal{D}_{A} \rightarrow \mathcal{H}$ be a symmetric opertor such that, $\forall \alpha, \beta \in \mathcal{D}_{A}: \; (\alpha, A\beta) = (A\alpha, \beta)$. Then let $\psi \in \mathcal{D}_A$. By the definition of $\mathcal{D}_{A^*}$,
        
        \begin{equation}
            \mathcal{D}_{A^*} := \{\psi \in \mathcal{H} | \forall \alpha \in \mathcal{D}_A \; \exists \eta \in \mathcal{H}: \; (\psi, A \alpha) = (\eta, \alpha)\}
        \end{equation}
        
        \noindent we must have that $\psi \in \mathcal{D}_{A^*}$ so that $\mathcal{D}_A \subseteq \mathcal{D}_{A^*}$. Since $A$ is symmetric, we also have that $A^*\psi = A\psi$ so that $A \subseteq A^*$.
    \end{proof}
    
A \textbf{self-adjoint} operator is an operator that is equal to its adjoint, $A = A^*$. An interesting property of self-adjoint operators are that they cannot be further extended to some other operator. To see this, 
    
    \begin{proof}
        Assume that $B$ is a self-adjoint extension of $A$. Then,
        
        \begin{equation}
            A \subseteq B = B^* \subseteq A^* = A.
        \end{equation}
    \end{proof}
    
The notion of \textbf{closable}, \textbf{closure}, and \textbf{closed} are important to understanding the behavior of operators. We introduce the following three definitions:

\begin{enumerate}
    \item Closable: A densely defined operator is closeable if its adjoint $A^*$ is also densely defined.
    \item The closure of a closeable operator $A$ is $A^{**}$.
    \item An operator $A$ is closed if $A = A^{**}$.
\end{enumerate}

\noindent Show that if $A$ is symmetric that $A^{**} \subseteq A^*$.

    \begin{proof}
        A symmetric means that $A \subseteq A^*$. But by our previous result we know that $A^{**} \subseteq A^*$ and we are done.
    \end{proof}
    
\noindent A symmetric operator $A$ is called \textbf{essentially self-adjoint} if its closure $A^{**}$ is self-adjoint. An interesting result is that for an essentially self-adjoint operator $A$ the closure $A^{**}$ is the unique self-adjoint extension of $A$.

    \begin{proof}
        We know that $A \subseteq A^{**}$ by our previous theorem. Assume there is some other self-adjoint extension $B$ so that $A \subseteq B = B^*$. But then $B^{**} \subseteq A^*$ or equivalently $A^{**} \subseteq B^{***} = B$. But since $A^{**}$ cannot be further extended we must have that $B = A^{**}$.
    \end{proof}

\noindent We can quickly check if a symmetric operator $A$ has self-adjoint extensions by evaluating its \textbf{defect indices}. If the defect indices coincide, then there are self-adjoint extensions of $A$, and if they are different there are none. For the special case where both defect indices are zero, there exists a unique such extension.

    \begin{equation}
        d_+ = \text{dim}(\text{ker}(A^* - i))
    \end{equation}
    
    \begin{equation}
        d_- = \text{dim}(\text{ker}(A^* + i)).
    \end{equation}
    
\subsubsection{The Spectral Theorem}

As we will see in the axioms of quantum mechanics, the possible measurement values are those in the spectrum, $\sigma(A)$ of an observable $A$. A common task in any quantum mechanical calculation is to determine the spectra of one or several self-adjoint operators. In this section, we will first introduce important definitions related to the spectra of operators and then proceed to the spectral theorem.

\subsubsubsection{The Resolvent Set and Spectrum}

First, define the \textbf{resolvent map} of a closed operator $A = A^{**}$ as the map,

\begin{equation}
    R_A: \rho(A) \rightarrow \mathcal{L}(\mathcal{H}, \mathcal{H})
\end{equation}

\begin{equation}
    z \in \mathbb{C} \longmapsto (A - z \cdot id_{\mathcal{D}_A})^{-1}
\end{equation}

\noindent where $\rho(A) \subseteq \mathbb{C}$ is known as the \textbf{resolvent set}. Hence, the resolvent set is all the complex numbers where $(A - z \cdot id_{\mathcal{D}_A})^{-1}$ exists and is a bounded linear map from $\mathcal{H} \rightarrow \mathcal{H}$. The complement of the resolvent set is called the \textbf{spectrum}, $\sigma(A) = \mathbb{C} \backslash \rho(A)$. This means that the spectrum is just the set of complex numbers where $(A - z \cdot id_{\mathcal{D}_A})^{-1}$ does not exist. The spectrum of an operator on a finite dimensional vector space is precisely equal to the set of eigenvalues, but for infinite dimension spaces this is not necessarily the case. All that we know is that if the eigenvalues of the self-adjoint operator exist, they must be in the spectrum. 

We define the total spectra $\sigma(A)$ as the union of pure point, point imbedded in a continuum and purely continuous spectra such that,

\begin{equation}
    \sigma(A) = \sigma_{pp} \cup \sigma_{pic} \cup \sigma_{pc}
\end{equation}

\noindent with precise definitions given by,

\begin{enumerate}
    \item $\sigma_{pp} := \{z \in \mathbb{C} | ran(A-z) = \overline{ran(A-z)} \neq \mathcal{H} \}$
    \item $\sigma_{pic} := \{z \in \mathbb{C} | ran(A-z) \neq \overline{ran(A-z)} \neq \mathcal{H} \}$
    \item $\sigma_{pc} := \{z \in \mathbb{C} | ran(A-z) \neq \overline{ran(A-z)} = \mathcal{H} \}$
\end{enumerate} 

\noindent We now arrive at an important theorem.

\begin{theorem}
    The point spectrum of a self-adjoint operator $A$ has as its elements precisely the eigenvalues of $A$.
    
    \begin{proof}
        First, show that if $\lambda$ is an eigenvalue of $A$ then $\lambda \in \mathbb{R}$. Consider, for $A\psi = \lambda \psi$, that,
        
        \begin{equation}
            \lambda (\psi, \psi) = (\psi, \lambda \psi) = (\psi, A\psi).
        \end{equation}
        
        \noindent But if $A$ is self-adjoint, we have,
        
        \begin{equation}
            (\psi, A\psi) = (A\psi, \psi) = \overline{(\psi, A\psi)} = \overline{(\psi, \lambda\psi)} = \overline{\lambda} \overline{(\psi, \psi)} = \overline{\lambda}(\psi, \psi).
        \end{equation}
        
        \noindent Of course, since $\psi \neq 0$ we must have $\lambda = \overline{\lambda}$ which implies $\lambda \in \mathbb{R}$. Now suppose that $\lambda$ is indeed an eigenvalue of $A$. Then,
        
        \begin{equation}
            \text{ker}(A-\lambda) \neq \{0\}_{\mathcal{H}}.
        \end{equation}
        
        \noindent But then,
        
        \begin{equation}
            \text{ker}(A-\lambda) = \text{ker}(A^*-\lambda) = \text{ker}((A-\overline{\lambda})^*) = \text{ran}(A-\overline{\lambda})^\perp = \text{ran}(A-\lambda)^\perp
        \end{equation}
        
        \noindent which implies that,
        
        \begin{equation}
            (\text{ran}(A-\lambda)^\perp)^\perp = \overline{\text{ran}(A-\lambda)} \neq \{0\}^\perp_{\mathcal{H}} = \mathcal{H}  
        \end{equation}
        
        \noindent which by definition means that $\lambda \in \sigma_{pp}(A)$. Conversely, suppose that $\lambda \in \mathbb{C}$ is not an eigenvalue of $A$. Then $\overline{\lambda}$ is not an eigenvalue either. By a manner similar to the previous statement we obtain,
        
        \begin{equation}
            \text{ker}(A-\lambda) = \{0\}_{\mathcal{H}}
        \end{equation}
        
        \begin{equation}
            (\text{ran}(A-\lambda)^\perp)^\perp = \overline{\text{ran}(A-\lambda)} = \{0\}^\perp_{\mathcal{H}} = \mathcal{H}  
        \end{equation}
        
        \noindent which means that $\lambda \notin \sigma_{pp}(A)$ and we are done.
    \end{proof}
\end{theorem}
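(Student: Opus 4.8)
The plan is to prove the two inclusions separately, establishing that $\lambda$ is an eigenvalue of $A$ if and only if $\lambda \in \sigma_{pp}(A)$. The whole argument rests on a single functional-analytic identity relating the kernel of an operator to the orthogonal complement of the range of its adjoint, namely $\ker(B^*) = \text{ran}(B)^\perp$, which I would take as a standard consequence of the inner-product structure, together with the double-complement fact $\text{ran}(B)^{\perp\perp} = \overline{\text{ran}(B)}$.

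First I would show that every eigenvalue of a self-adjoint operator is real, since reality is what lets me identify $(A - \overline{\lambda})^* = A - \lambda$ later on. Assuming $A\psi = \lambda\psi$ for some nonzero $\psi \in \mathcal{D}_A$, I compute $\lambda(\psi,\psi) = (\psi, A\psi)$ using linearity in the second slot, then invoke self-adjointness and the Hermitian property to get $(\psi, A\psi) = \overline{(\psi, A\psi)} = \overline{\lambda}(\psi,\psi)$. Because $(\psi,\psi) > 0$, this forces $\lambda = \overline{\lambda}$, so $\lambda \in \mathbb{R}$.

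For the forward direction, suppose $\lambda$ is an eigenvalue, so $\ker(A-\lambda) \neq \{0\}$. Using $A^* = A$ and reality of $\lambda$, I write $\ker(A-\lambda) = \ker\bigl((A-\overline{\lambda})^*\bigr) = \text{ran}(A-\overline{\lambda})^\perp = \text{ran}(A-\lambda)^\perp$, and taking complements gives $\overline{\text{ran}(A-\lambda)} = \ker(A-\lambda)^\perp \neq \mathcal{H}$, since a nonzero subspace has a proper orthogonal complement. By the defining condition this puts $\lambda$ in the point spectrum. For the converse I run the same chain backward: if $\lambda$ is not an eigenvalue then $\ker(A-\lambda) = \{0\}$, hence $\overline{\text{ran}(A-\lambda)} = \{0\}^\perp = \mathcal{H}$, so the defining inequality $\overline{\text{ran}} \neq \mathcal{H}$ fails and $\lambda \notin \sigma_{pp}(A)$.

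The main obstacle I anticipate lies in the closedness clause separating $\sigma_{pp}$ from $\sigma_{pic}$: the definition of the point spectrum demands $\text{ran}(A-\lambda) = \overline{\text{ran}(A-\lambda)}$, whereas the kernel-range identity only controls the closure $\overline{\text{ran}(A-\lambda)}$. The orthogonal-complement manipulation therefore cleanly delivers the equivalence between being an eigenvalue and having $\overline{\text{ran}(A-\lambda)} \neq \mathcal{H}$, which is the heart of the statement, but an eigenvalue whose range fails to be closed would formally land in $\sigma_{pic}$ rather than $\sigma_{pp}$. I would address this either by restricting to isolated eigenvalues, where the spectral projection renders the range closed, or by reading the theorem as the claim that eigenvalues are exactly the points of non-dense range, leaving the closed-versus-non-closed refinement to the finer classification of the spectrum.
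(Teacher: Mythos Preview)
Your proposal follows essentially the same route as the paper: prove reality of eigenvalues, then use the identity $\ker(A-\lambda) = \text{ran}(A-\overline{\lambda})^\perp$ together with the double-complement formula to show that $\overline{\text{ran}(A-\lambda)} \neq \mathcal{H}$ holds precisely when $\lambda$ is an eigenvalue. The closedness concern you raise in your final paragraph is well spotted and is in fact glossed over in the paper's own argument, which likewise only controls $\overline{\text{ran}(A-\lambda)}$ and never verifies the clause $\text{ran}(A-\lambda) = \overline{\text{ran}(A-\lambda)}$ required by the stated definition of $\sigma_{pp}$.
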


\subsubsubsection{Projection-Valued Measures}

A projection-valued measure (PVM) is a map from the Borel $\sigma$-algebra on the reals to the bounded linear maps on a Hilbert space, $P:\sigma(\mathcal{O}_{\mathbb{R}}) \rightarrow \mathcal{L}(\mathcal{H}, \mathcal{H})$, that satisfies the following properties:

\begin{enumerate}
    \item $\forall \Omega \in \sigma(\mathcal{O}_{\mathbb{R}})$, $P(\Omega)^* = P(\Omega)$
    \item $\forall \Omega \in \sigma(\mathcal{O}_{\mathbb{R}})$, $P(\Omega) \circ P(\Omega) = P(\Omega)$
    \item $P(\mathbb{R}) = id_{\mathcal{H}}$
    \item $\forall \Omega \in \sigma(\mathcal{O}_{\mathbb{R}}) \ni \Omega = \sqcup_{n \geq 1} \Omega_n$ implies $\forall \psi \in \mathcal{H}: \sum_{n \geq 1}(P(\Omega_n)\psi) = P(\Omega)\psi$
\end{enumerate}

\noindent Many of the useful properties of PVMs are listed below:

\begin{enumerate}
    \item $P(\emptyset) = 0_\mathcal{H}$ where $0_\mathcal{H}: \mathcal{H} \rightarrow \mathcal{H}$ which takes an element in the Hilbert space and returns the 0 on $\mathcal{H}$.
    \item $P(\mathbb{R}\backslash\Omega) = id_{\mathcal{H}} - P(\Omega)$
    \item $P(\Omega_1 \cup \Omega_2) + P(\Omega_1 \cap \Omega_2) = P(\Omega_1) + P(\Omega_2)$
    \item $P(\Omega_1 \cap \Omega_2) = P(\Omega_1) \circ P(\Omega_2)$
    \item $\Omega_1 \subseteq \Omega_2$ implies $\text{ran}(P(\Omega_1)) \subseteq \text{ran}(P(\Omega_2))$
\end{enumerate}

\noindent Finally, one piece of notation that we will find useful is that $P(\lambda) := P((-\infty, \lambda])$. Recall that $P$ takes an element of the Borel $\sigma$-algebra which is generated by the half open intervals $(-\infty, \lambda]$.

Now, we introduce the concept that a PVM can be used to define complex- and real-valued Borel measures. For instance, a complex-valued Borel measure can be induced such that,

\begin{equation}
    \forall \psi,\phi \in \mathcal{H}, \; \mu_{\psi, \phi}:\sigma(\mathcal{O}_{\mathbb{R}}) \rightarrow \mathbb{C}
\end{equation}

\noindent where $\mu_{\psi, \phi}$ maps an element $\Omega$ of the Borel-$\sigma$-algebra into the inner product $(\psi, P(\Omega)\phi)$,

\begin{equation}
    \Omega \rightarrow \mu_{\psi, \phi}:= (\psi, P(\Omega)\phi).
\end{equation}

\noindent Trivially, a real-valued Borel measure can be induced $\forall \psi \in \mathcal{H}$ so that, $\mu_{\psi} := \mu_{\psi, \psi}$.

We now proceed to discuss integration of PVMs for simple, bounded and unbounded measurable functions $f:\mathbb{R} \rightarrow \mathbb{C}$.

\textit{Integration of Simple Functions}: Recall that $f:\mathbb{R} \rightarrow \mathbb{C}$ is called simple if it has a decomposition $f(\mathbb{R}) =\{f_1, ..., f_n\} \subseteq \mathbb{C}$ such that,

\begin{equation}
    f = \sum_{n=1}^N f_n\cdot\chi_{preim_f}(\{f_n\})
\end{equation}

\noindent where clearly the function $f$ can be written as a sum of complex-values by their respective characteristic functions on the pre-image of $f$. The integral of such a function with respect to a PVM is defined as,

\begin{equation}
    \int_{\mathbb{R}}fdP:=\sum_{n=1}^N f_n\cdot P(\text{preim}_f(\{f_n\}))
\end{equation}

\noindent which is easily observed to have the following properties:

\begin{enumerate}
    \item For any $\chi_{\Omega}$ with $\Omega \in \sigma(\mathcal{O}_{\mathbb{R}})$ the integral over $\chi_{\Omega}$ is just the PVM acting on $\Omega$ so that,
    
    \begin{equation}
        \int_{\mathbb{R}}\chi_{\Omega}dP = P(\Omega).
    \end{equation}
    
    \item $\forall \psi, \phi \in \mathcal{H}$,
    
    \begin{equation}
        \begin{split}
            (\psi, \bigg(\int_{\mathbb{R}}fdP\bigg) \phi) = & (\psi, \sum_{n=1}^N f_n \cdot P(\Omega_n)\phi) \\
            = & \sum_{n=1}^N f_n (\psi, P(\Omega_n)\phi) \\
            = & \sum_{n=1}^N f_n \cdot \mu_{\psi,\phi}(\Omega_n) \\
            = & \int_{\mathbb{R}} f d\mu_{\psi,\phi}.
        \end{split}
    \end{equation}
\end{enumerate}

\textit{Integration of Bounded Borel Functions}: Consider the Banach space of measurable functions $B(\mathbb{R}):=\{f:\mathbb{R}\rightarrow \mathbb{C}| \norm{f}_\infty < \infty\}$. Note that the space of simple measurable functions from the real to complex numbers is a dense subset of $B(\mathbb{R})$, which means that there exists a unique extension of $\int dP$ to $B(\mathbb{R})$ by the Bounded Linear Transformation theorem,

\begin{equation}
    \int dP: \overbrace{B(\mathbb{R})}^{\text{Banach}} \rightarrow \overbrace{\mathcal{L}(\mathcal{H}, \mathcal{H})}^{\text{Banach}}
\end{equation}

\noindent with the following properties:

\begin{enumerate}
    \item $\int_{\mathbb{R}}1dP = id_{\mathcal{H}}$
    \item $\int_{\mathbb{R}}(f \cdot g)dP = \int_{\mathbb{R}}fdP \circ \int_{\mathbb{R}}gdP$
    \item $\int_{\mathbb{R}}\overline{f}dP = (\int_{\mathbb{R}}fdP)^*$
\end{enumerate}

\noindent The first two properties can be called a $C^*$-algebra homomorphism.

\textit{Integration of Unbounded Borel Functions}: The integral over the PVM has been shown to send a bounded Borel function to a bounded, linear operator $\int f dP$. However, we will now define a construction that also applies to unbounded Borel functions. Let $f:\mathbb{R}\rightarrow\mathbb{C}$ be a measureable, unbounded Borel function. Define the linear map,

\begin{equation}
    \int f dP: \mathcal{D}_{\int_{\mathbb{R}fdP}} \subseteq \mathcal{H} \rightarrow \mathcal{H}
\end{equation}

\noindent be defined on the dense linear subset $\mathcal{D}_{\int_{\mathbb{R}fdP}}:= \{\psi \in \mathcal{H}|\int \abs{f}^2d\mu_\psi < \infty\}$ which takes an element $\psi \in \mathcal{H}$ to its integral,

\begin{equation}
    \psi \mapsto \bigg(\int f dP\bigg)\psi := \lim_{n \to \infty}\bigg[ \bigg(\int_{\mathbb{R}}f_ndP\bigg) \psi\bigg]
\end{equation}

\noindent where the $f_n:= \chi_{\{x\in\mathbb{R}|\abs{f(x)}\leq n\}}f$. Since the integral is Cauchy in $\mathcal{H}$, the integral has the following properties:

\begin{enumerate}
    \item $\bigg(\int f dP\bigg)^* = \int \overline{f} dP$.
    \item $\forall \alpha \in \mathbb{C}$, we have,
    
    \begin{equation}
        \alpha \int f dP + \int g dP \subseteq \int (\alpha f + g) dP
    \end{equation}
    
    \noindent with equality when $f,g$ are bounded. If $f,g$ are unbounded then the right-hand side is an extension of the left-hand side.
    
    \item $(\int f dP) \circ (\int g dP) \subseteq \int(f\circ g)dP$ with equality when $f,g$ are bounded. 
\end{enumerate}

\textit{The Inverse Spectral Theorem}: The inverse spectral theorem can now be stated as a precursor to the proof of the spectral theorem. Given a PVM, $P$, we can construct a self-adjoint operator,

\begin{equation}
    A_P := \int id_{\mathbb{R}}dP
\end{equation}

\noindent since,

\begin{equation}
    A_P^* = \bigg(\int id_{\mathbb{R}}dP\bigg)^* = \int \overline{id_{\mathbb{R}}}dP = \int id_{\mathbb{R}}dP = A_P.
\end{equation}

Now we want to go backwards from the previous result; namely, we want to actually find this $P$ given some self-adjoint operator $A$. First we introduce the concept of spectrally decomposable self-adjoint operators. $A$ self-adjoint is called spectrally decomposable if there exists a projection-valued measure $P$ such that,

\begin{equation}
    A = \int id_{\mathbb{R}}dP
\end{equation}

\noindent like what we have in the inverse spectral theorem. Then for any measurable $f:\mathbb{R} \rightarrow \mathbb{R}$ we define the application of the real-valued function to the spectrally decomposable self-adjoint operator as,

\begin{equation}
    f(A) := \int f(\lambda)P(d\lambda) = \int f \circ id_\mathbb{R} dP
\end{equation}

\noindent with,

\begin{equation}
    f(A): \mathcal{D}_{\int f dP} \rightarrow \mathcal{H}
\end{equation}

\noindent where the domain is given by,

\begin{equation}
    \mathcal{D}_{\int f dP} = \{\psi \in \mathcal{H} | \int |f|^2 d\mu_\psi\}
\end{equation}

\noindent and $\mu_\psi$ is the real-valued Borel measure induced by $(\psi, P(\Omega) \psi)$. 

But how do we reconstruct a projection-valued measure from a spectrally decomposed self-adjoint operator? Let $A = \int_\mathbb{R} \lambda P(d\lambda)$ for some projection-valued measure $P$. Then the resolvent $R_A(z)$ for $z \in \rho(A) \subseteq \mathbb{C}$ can be written as, 

\begin{equation}
    R_A(z) = (A - z id_\mathcal{H})^{-1} = r_z(A) = \int_\mathbb{R} r_z dP
\end{equation}

\noindent where $r_z(A)$ is a function from $r_z(A):\mathbb{C} \to \mathbb{C}$ takes some complex number $\lambda \to \frac{1}{\lambda - z}$. Then $\forall \psi \in \mathcal{H}$, we can write,

\begin{equation}
    (\psi, R_A(z) \psi)  = \int_\mathbb{R} r_z d \mu_\psi = \int_\mathbb{R} \frac{1}{\lambda - z} d \mu_\psi.
\end{equation}

\noindent Such a function has special properties since it is a Herglotz function, that is, that it maps the upper half of the complex plane into itself,

\begin{equation}
    ( \psi, R_A(z) \psi): \mathbb{C}^+ \to \mathbb{C}^+
\end{equation}

\noindent where $\mathbb{C}^+ := \{z \in \mathbb{C} | \Im(z) \geq 0\}$. This allows us to reconstruct a real-valued Borel measure from the object $(\psi, R_A(z) \psi) \geq 0$. To see this, consider that,

\begin{equation}
   \lim_{\epsilon \to 0^+} \frac{1}{\pi} \int_{t_1}^{t_2} dt \Im ( \psi, R_A(t + i\epsilon) \psi) = \lim_{\epsilon \to 0^+} \frac{1}{\pi} \int_{t_1}^{t_2} dt \int_\mathbb{R} \frac{\epsilon}{\abs{\lambda - t - i\epsilon}^2} \mu_\psi (d\lambda)
\end{equation}

\noindent where Fubini's theorem allows us to exchange the integrals so that,

\begin{equation}
    = \lim_{\epsilon \to 0^+} \frac{1}{\pi} \int_\mathbb{R}\int_{t_1}^{t_2} dt  \frac{\epsilon}{\abs{(\lambda - t)^2 +\epsilon^2}} \mu_\psi (d\lambda) = \lim_{\epsilon \to 0^+} \int_\mathbb{R}\frac{1}{\pi} \arctan{\bigg(\frac{t - \lambda}{\epsilon}}\bigg)\bigg|^{t_2}_{t_1} \mu_\psi (d\lambda)
\end{equation}

\noindent which in the limit as $\epsilon \to 0$ is simply the sum of two characteristic functions 
(by virtue of the two competing $\arctan$ terms for $t_1$ and $t_2$),

\begin{equation}
   \lim_{\epsilon \to 0^+} \frac{1}{\pi} \int_{t_1}^{t_2} dt \Im ( \psi, R_A(t + i\epsilon) \psi)  = \int_\mathbb{R} \frac{1}{2}(\chi_{(t_1, t_2)} + \chi_{[t1,t2]})\mu_\psi (d\lambda).
\end{equation}

\noindent Thus, we have the following theorem:

\begin{theorem}
    Stieltjes Inversion Formula: For Herglotz functions $\mu_\psi:\sigma(\mathcal{O}_\mathbb{R}) \to \mathbb{R}^+_0$, we can recover the real Borel measure from the resolvent set using,

    \begin{equation}
        \mu((-\infty, \lambda]) = \lim_{\delta \to 0^+} \lim_{\epsilon \to 0^+} \frac{1}{\pi} \int_{-\infty}^{\lambda + \delta} dt \Im ( \psi, R_A(t + i\epsilon) \psi).
    \end{equation}

\begin{proof}
    The right hand side is just,

    \begin{equation}
        = \lim_{\delta \to 0^+} \int_\mathbb{R} \frac{1}{2}(\chi_{(-\infty, \lambda + \delta)} + \chi_{[-\infty,\lambda + \delta]})\mu_\psi (d\lambda)
    \end{equation}

    \noindent which by the theorem of dominant convergence is just,

    \begin{equation}
        = \int_\mathbb{R} (\chi_{(-\infty, \lambda]})\mu_\psi (d\lambda)
    \end{equation}

    \noindent just gives us $\mu((-\infty, \lambda])$ by definition.
\end{proof}
\end{theorem}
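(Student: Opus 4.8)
The plan is to recover the measure $\mu_\psi$ from the boundary behaviour of its Cauchy (Stieltjes) transform, exploiting the fact that the diagonal resolvent matrix element is exactly such a transform. First I would record the starting identity already available from the spectral decomposition $A = \int \mathrm{id}_\mathbb{R}\, dP$ together with the functional calculus, namely
\begin{equation}
    (\psi, R_A(z)\psi) = \int_\mathbb{R} \frac{1}{\lambda - z}\, d\mu_\psi(\lambda), \qquad z \in \mathbb{C}^+,
\end{equation}
where $\mu_\psi(\Omega) = (\psi, P(\Omega)\psi)$ is a finite positive Borel measure with total mass $\mu_\psi(\mathbb{R}) = \norm{\psi}^2$. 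Setting $z = t + i\epsilon$ and taking the imaginary part produces the nonnegative Poisson kernel $\Im\frac{1}{\lambda - t - i\epsilon} = \frac{\epsilon}{(\lambda - t)^2 + \epsilon^2}$, which is the analytic object driving the whole argument.

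Next I would integrate over $t$ and interchange the order of integration. Because the Poisson kernel is nonnegative, Tonelli's theorem legitimises the exchange with no integrability worry, and the inner $t$-integral evaluates in closed form to an arctangent difference, $\frac{1}{\pi}[\arctan\frac{t_2 - \lambda}{\epsilon} - \arctan\frac{t_1 - \lambda}{\epsilon}]$. This reproduces the intermediate identity stated just above the theorem: the $\epsilon \to 0^+$ pointwise limit of the arctangent difference equals $1$ on the open interval $(t_1, t_2)$, equals $\tfrac12$ at each endpoint, and vanishes outside $[t_1, t_2]$, so the symmetric combination $\tfrac12(\chi_{(t_1,t_2)} + \chi_{[t_1,t_2]})$ records exactly this behaviour. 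Passing the $\epsilon \to 0^+$ limit through the $\lambda$-integral is justified by dominated convergence, the dominating function being the constant $1$ against the finite measure $\mu_\psi$.

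Finally, to reach the half-line result I would specialise $t_1 \to -\infty$ and $t_2 = \lambda + \delta$ and then send $\delta \to 0^+$. The intermediate identity then reads as an integral of $\tfrac12(\chi_{(-\infty,\lambda+\delta)} + \chi_{(-\infty,\lambda+\delta]})$ against $\mu_\psi$, and a second application of dominated convergence collapses this to $\int_\mathbb{R} \chi_{(-\infty,\lambda]}\, d\mu_\psi = \mu_\psi((-\infty,\lambda])$, which is the claim.

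The step I expect to be the genuine obstacle is the endpoint bookkeeping governed by the outer $\delta \to 0^+$ limit. The purpose of the $\delta$-regularisation is to control how mass sitting precisely at the right endpoint is counted: the averaged kernel assigns weight $\tfrac12$ to an atom located exactly at the moving endpoint $\lambda + \delta$, and only by first pushing the endpoint slightly to the right and then retracting it does one guarantee that a possible atom of $\mu_\psi$ at $\lambda$ is captured with full weight, yielding the left-closed interval $(-\infty,\lambda]$ rather than $(-\infty,\lambda)$. Making this precise amounts to verifying that $\delta \mapsto \mu_\psi((-\infty,\lambda+\delta])$ is right-continuous, which follows from countable additivity of $\mu_\psi$, and confirming that the two nested limits may legitimately be taken in the stated order. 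The remaining ingredients — the Cauchy-transform representation, the Poisson-kernel identity, and the two convergence arguments — are then routine once the finiteness $\mu_\psi(\mathbb{R}) = \norm{\psi}^2$ is in hand.
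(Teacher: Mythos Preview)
Your proposal is correct and follows essentially the same route as the paper: the paper's proof simply invokes the Fubini/arctangent computation carried out immediately before the theorem to obtain the averaged-characteristic-function expression, and then applies dominated convergence to collapse the $\delta\to 0^+$ limit, exactly as you outline. Your version is more explicit about the justifications (Tonelli for the swap, the constant dominating function, the endpoint bookkeeping for atoms at $\lambda$), but the underlying argument is identical.
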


Up to this point, we know that given some spectrally decomposable, self-adjoint operator $A = \int_\mathbb{R} id_{\mathbb{R}} dP$, that $P$ can be recovered from $A$ by virtue of $(\psi, P(\Omega) \phi) = \int \chi_{\Omega} d \mu_{\psi,\phi}$ where $\mu_{\psi,\phi}$ is the complex Borel measure.

\begin{theorem}

The Spectral Theorem: For any self-adjoint operator $A:\mathcal{D}_A \to \mathcal{H}$, there is a unique projection-valued measure $P_A:\sigma(\mathcal{O}_\mathbb{R}) \to \mathcal{L}(\mathcal{H} \to \mathcal{H})$, such that $A = \int_\mathbb{R} id_{\mathcal{H}} dP_A$.

\begin{proof}

To construct a projection-valued measure from a self-adjoint operator, it thus remains to show that $( \psi, R_A(\cdot) \psi )$ is Herglotz for any self-adjoint operator $A$, that $( \psi, P(\Omega) \phi ) := \int \chi_\Omega d \mu_{\psi,\phi}$ is indeed a projection-valued measure, and that this projection-valued measure is unique. We will first begin by proving a few lemmas that will be useful later in the proof.

\begin{lemma}
\label{lemma:Herglotz}
    The First Resolvent Formula: For any operator $A:\mathcal{D}_A \to \mathcal{H}$ and $a,b \in \mathbb{C} \in \rho(A)$, we have,

    \begin{equation}
        R_A(a) - R_B(b) = (a-b)R_A(a) \circ R_A(b)
    \end{equation}

    \noindent or equivalently,

    \begin{equation}
        R_A(a) - R_B(b) = (a-b)R_A(b) \circ R_A(a).
    \end{equation}

\begin{proof}

    Note that,

    \begin{equation}
        (A-a)^{-1} - (a-b)(A-a)^{-1}(A-b)^{-1} = (A-a)^{-1}(id_{\mathcal{H}} - (a-b)(A-b)^{-1})
    \end{equation}

    \noindent which is of course,

    \begin{equation}
        (A-a)^{-1}(id_{\mathcal{H}} - (a + A - A -b)(A-b)^{-1}) = (A-a)^{-1}(A-a)(A-b)^{-1} = (A-b)^{-1}
    \end{equation}

    \noindent which proves this first assertion known as the "first resolvent formula".
    
\end{proof}

\end{lemma}

\begin{lemma}
\label{lemma:uniqueness}
    $(\psi, R_A(\cdot) \psi )$ is Herglotz for any self-adjoint operator $A$.

    \begin{proof}
        Let $z \in \mathbb{C}^+$ and $F(z) = (\psi, R_A(\cdot) \psi ) = \int_\mathbb{R} \frac{1}{\lambda - z} d \mu_\psi$. Then just notice that,

        \begin{equation}
            \Im F(z) = \Im z \int_\mathbb{R} \frac{1}{\abs{\lambda - z}^2} d \mu_\psi \in \mathbb{C}^+.
        \end{equation}
    \end{proof}
\end{lemma}

\begin{lemma}
    $(\psi, P(\Omega) \phi ) := \int \chi_\Omega d \mu_{\psi,\phi}$ is a unique projection-valued measure constructed from $A$.

    \begin{proof}
        The Stieltjes inversion formula can provide a unique real Borel measure from the resolvent set, from which $(\psi, P(\Omega) \phi )$ is uniquely determined. 
    \end{proof}
\end{lemma}

Putting all of this together, we can prove the spectral theorem in its entirety. First, we can define a real-valued Borel measure $\forall \psi \in \mathcal{H}$ such that,

\begin{equation}
    \mu_\psi^A: \sigma(\mathcal{O}_\mathbb{R}) \to \mathbb{R}
\end{equation}

\begin{equation}
    \mu_\psi^A((-\infty, \lambda]) := \lim_{\delta \to 0^+} \lim_{\epsilon \to 0^+} \frac{1}{\pi} \int_{-\infty}^{\lambda + \delta} dt \Im ( \psi, R_A(t + i\epsilon) \psi)
\end{equation}

\noindent since $\mu_\psi^A$ is Herglotz (Lemma \ref{lemma:Herglotz}) and where $R_A$ is the resolvent map which takes an complex element of the resolvent set to a bounded linear map on $\mathcal{H}$ in the following way,

\begin{equation}
    R_A:\rho(A) \to \mathcal{L}(\mathcal{H})
\end{equation}

\begin{equation}
    R_A(z) := (A-z id_{\mathcal{H}})^{-1}.
\end{equation}

Then, we construct a complex-valued Borel measure via the polarization formula such that $\forall \psi,\phi \in \mathcal{H}$ and all Borel sets $\Omega \in \sigma(\mathcal{O}_\mathbb{R})$,

\begin{equation}
    \mu_{\psi,\phi}: \sigma(\mathcal{O}_\mathbb{R}) \to \mathbb{C}
\end{equation}

\begin{equation}
    \mu_{\psi,\phi}(\Omega) := \frac{1}{4}[\mu_{\psi+\phi}(\Omega) - \mu_{\psi-\phi}(\Omega) + i\mu_{\psi-i\phi}(\Omega) - i\mu_{\psi+i\phi}(\Omega)].
\end{equation}

\noindent Finally, we can define the unique projection-valued measure (Lemma \ref{lemma:uniqueness})  such that $\forall \psi, \phi \in H$,

\begin{equation}
    P_A:\sigma(\mathcal{O}_\mathbb{R}) \to \mathcal{L}(\mathcal{H})
\end{equation}

\begin{equation}
    (\psi, P_A(\Omega) \phi):= \int \chi_{\Omega} d \mu_{\psi,\phi}
\end{equation}

\noindent where $\chi$ is the characteristic function. Thus, we have constructed the unique projection-valued measure from a self-adjoint operator and we are done.

\end{proof}

\end{theorem}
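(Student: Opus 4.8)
The plan is a constructive proof that builds the projection-valued measure (PVM) $P_A$ directly out of the resolvent of $A$ and then verifies both the reconstruction $A=\int_\mathbb{R} id_\mathcal{H}\,dP_A$ and uniqueness. The starting analytic observation is that, for a self-adjoint $A$, the whole of $\mathbb{C}\setminus\mathbb{R}$ lies in the resolvent set, so $R_A(z)=(A-z\,id_\mathcal{H})^{-1}$ is a bounded operator for every non-real $z$. First I would fix $\psi\in\mathcal{H}$ and show that the scalar function $F_\psi(z)=(\psi,R_A(z)\psi)$ is Herglotz. Setting $\phi=R_A(z)\psi$ and using self-adjointness, one computes
\begin{equation}
\Im F_\psi(z)=\Im(z)\,\norm{R_A(z)\psi}^2,
\end{equation}
which is non-negative for $\Im(z)>0$, so $F_\psi$ maps the upper half-plane into itself.

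Next I would invoke the Stieltjes inversion formula established above: since $F_\psi$ is Herglotz, the boundary limit of $\Im F_\psi(t+i\epsilon)$ recovers a unique positive real Borel measure $\mu_\psi^A$ on $\mathbb{R}$ through $\mu_\psi^A((-\infty,\lambda])=\lim_{\delta\to 0^+}\lim_{\epsilon\to 0^+}\tfrac{1}{\pi}\int_{-\infty}^{\lambda+\delta}\Im F_\psi(t+i\epsilon)\,dt$. Running the polarization identity over these diagonal measures then produces a complex Borel measure $\mu_{\psi,\phi}$ for every pair $\psi,\phi\in\mathcal{H}$, depending sesquilinearly on its arguments.

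With these measures in hand I would define the candidate PVM by the sesquilinear form $(\psi,P_A(\Omega)\phi):=\int\chi_\Omega\,d\mu_{\psi,\phi}$ for each Borel set $\Omega$, using boundedness of the form together with Riesz representation to extract the operator $P_A(\Omega)$. The bulk of the verification is then to confirm the PVM axioms: self-adjointness and idempotency of each $P_A(\Omega)$, normalization $P_A(\mathbb{R})=id_\mathcal{H}$, and countable additivity. The multiplicative/idempotent property is where the first resolvent formula is the essential tool. Finally I would show that the reconstructed operator $\int_\mathbb{R} id_\mathcal{H}\,dP_A$ has exactly the same resolvent as $A$, hence equals $A$ on the correct domain, and obtain uniqueness by noting that any competing PVM reproducing $A$ yields the same $F_\psi$ and therefore, by the uniqueness in the Stieltjes inversion, the same measures and the same PVM.

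The main obstacle will be the PVM-verification step, namely establishing that the form-defined family $\{P_A(\Omega)\}$ is a genuine projection-valued measure rather than merely a measure-valued assignment of bounded operators — in particular proving $P_A(\Omega_1\cap\Omega_2)=P_A(\Omega_1)\circ P_A(\Omega_2)$ and that each $P_A(\Omega)$ is an orthogonal projection. This is exactly where the analytic content of the first resolvent formula must be converted into an algebraic relation among the measures, and where care with the domain of the possibly unbounded $A$ is required. The Herglotz and Stieltjes machinery and the uniqueness argument are comparatively routine given the lemmas already proved.
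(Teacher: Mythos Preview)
Your proposal is correct and follows essentially the same route as the paper: establish that $(\psi,R_A(\cdot)\psi)$ is Herglotz, apply the Stieltjes inversion formula to produce the diagonal measures $\mu_\psi^A$, polarize to obtain $\mu_{\psi,\phi}$, define $P_A$ through the sesquilinear form, and derive uniqueness from the uniqueness in Stieltjes inversion. If anything, your outline is more complete than the paper's, since you explicitly flag the need to verify the PVM axioms (idempotency via the first resolvent formula) and to check that $\int id_{\mathcal{H}}\,dP_A$ actually reproduces $A$---steps the paper states as goals but does not carry out in detail.
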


The consequences of the spectral theorem to quantum theory are immensely important. First, we will take as an axiom that the observables of a quantum mechanical system are given by the self-adjoint linear maps on the Hilbert space associated with a quantum system. Then, the probability of the system being in some state $\rho$ is defined as the trace of the unique projection-valued measure $P_A$ induced by self-adjoint operator $A$ composed with $\rho$. The only remaining axioms related to unitary and projective quantum dynamics serve to evolve the quantum system in time and "collapse" the system into some observation at measurement time $t_m$, respectively. 

\subsection{An Axiomatic Approach to Quantum Mechanics}

\begin{axiom}
    With every quantum system there is associated a complex Hilbert space $(\mathcal{H}, +, \cdot, (\cdot, \cdot))$, with states that are all positive, trace-class, linear maps $\rho: \mathcal{H} \rightarrow \mathcal{H}$ for which $Tr (\rho) = 1$
\end{axiom}

The Hilbert space is equipped with addition, scalar multiplication, and a sesqui-linear inner product (Hermitian, linear, and non-negative) and is complete.

\begin{axiom}
    The observables of a quantum system are the self-adjoint linear maps $A:\mathcal{D}_A \rightarrow \mathcal{H}$ where $\mathcal{D}_A$ is some densely defined subset of $\mathcal{H}$. Self-adjointness is defined so that the map $A$ coincides with its adjoint map $A^*$ so that,

\begin{equation}
    \mathcal{D}_{A^*} := \{\psi \in \mathcal{H} | \forall \alpha \in \mathcal{D}_A \; \exists \nu \in \mathcal{H}: \; (\psi, A \alpha) = (\nu, \alpha)\}
\end{equation}

\begin{equation}
    A^*(\psi) = \nu.
\end{equation}
\end{axiom}

\begin{axiom}
    The probability that a measurement of an observable A on a system is in the state $\rho$ yields the result in the Borel set $E \subset \mathbb{R}$ is given by, 
    
    \begin{equation}
        \mu_\rho^A(E) := Tr(P_A(E) \circ \rho)
    \end{equation}
    
    \noindent where $P_A(E)$ is the unique projection-valued measure that is associated with a self-adjoint linear map $A$ according to the spectral theorem. Furthermore, the composition $P_A(E) \circ \rho$ is again trace-class. 
\end{axiom}

\begin{axiom}
    Unitary dynamics occur during time intervals $(t_1, t_2)$ during which no measurement occurs such that,
    
    \begin{equation}
        \rho(t_2) = \mathcal{U}(t_2 - t_1) \rho(t_1) \mathcal{U}^{-1}(t_2 - t_1)
    \end{equation}
    
    \noindent where the time evolution operator is written as $\mathcal{U}(t) = e^{\frac{-i\mathcal{H}t}{\hbar}}$.
\end{axiom}

\begin{axiom}
    Projective dynamics occur when a measurement is made at some time, $t_m$. Then the state immediately after the measurement of an observable $A$ is,
    
    \begin{equation}
        \rho_{after} := \frac{P_A(E) \rho_{before} P_A(E)}{Tr(P_A(E) \rho_{before} P_A(E))}.
    \end{equation}
\end{axiom}

The idea is just to find some self-adjoint operator that corresponds to a physical observable, use the spectral theorem to find a unique projection-valued measure, and then calculate the probability of the observation, time-evolution of the system, and projective dynamics of the system according to these axioms. The content of quantum theory then is to construct self-adjoint operators that correspond to physical observables / experimental measurements. This task is non-trivial and forms the basis of all of quantum theory.

\printbibliography

\typeout{get arXiv to do 4 passes: Label(s) may have changed. Rerun}

\end{document}